\pgfplotsset{compat=1.7}
\newcommand{\punt}[1]{}
\newcommand{\cmnt}[1]{}
\newcommand{\ignore}[1]{}
\newtheorem{theorem}{Theorem}
\newtheorem{lemma}[theorem]{Lemma}
\newtheorem{corollary}[theorem]{Corollary}
\newtheorem{property}[theorem]{Property}
\newtheorem{observation}[theorem]{Observation}
\newtheorem{result}[theorem]{Result}
\newtheorem{definition}{Definition}
\newcounter{history}
\newtheorem{assumption}{Assumption}
\newenvironment{proof}[1][Proof]{\noindent\textbf{#1.} }{} 
\newcommand{\secref}[1]{Section~\ref{sec:#1}}
\newcommand{\figref}[1]{Fig~\ref{fig:#1}}
\newcommand{\stref}[1]{Step~\ref{step:#1}}
\newcommand{\csref}[1]{Case~\ref{case:#1}}
\newcommand{\thmref}[1]{Theorem~\ref{thm:#1}}
\newcommand{\lemref}[1]{Lemma~\ref{lem:#1}}
\newcommand{\corref}[1]{Corollary~\ref{cor:#1}}
\newcommand{\eqnref}[1]{Eq.(\ref{eq:#1})}
\newcommand{\propref}[1]{Property~\ref{prop:#1}}
\newcommand{\obsref}[1]{Observation~\ref{obs:#1}}
\newcommand{\asmref}[1]{Assumption~\ref{asm:#1}}
\newcommand{\resref}[1]{Result~\ref{res:#1}}
\newcommand{\Lineref}[1]{Line~\ref{lin:#1}}
\newcommand{\algoref}[1]{Algorithm~\ref{algo:#1}}
\newcommand{\subsecref}[1]{SubSection~\ref{subsec:#1}}
\newcommand{\apnref}[1]{Appendix~\ref{apn:#1}}
\newcommand{\linereff}[1]{Line~\ref{lin:#1}}
\newcommand*{\affaddr}[1]{#1} 
\newcommand*{\affmark}[1][*]{\textsuperscript{#1}}
\newcommand{\Wset}{\textit{Wset}}
\newcommand{\tobj} {t-object\xspace}
\newcommand{\txns}[1] {#1.txns}
\newcommand {\comm}[1] {#1.committed}
\newcommand {\aborted}[1] {#1.aborted}
\newcommand {\incomp}[1] {#1.incomp}
\newcommand {\live}[1] {#1.live}
\newcommand{\commit}{\mathcal{C}}
\newcommand{\abort}{\mathcal{A}}
\newcommand{\shist}[2] {#2.subhist(#1)\xspace}
\newcommand{\subhist} {subhist\xspace}
\newcommand{\shset}[1] {#1.subhistSet\xspace}
\newcommand{\op} {operation\xspace}
\newcommand{\mth} {method\xspace}
\newcommand{\termop} {terminal operation\xspace}
\newcommand{\term} {term\text{-}op\xspace}
\newcommand{\termed}[1] {#1.terminated\xspace}
\newcommand{\cc} {correctness-criterion\xspace}
\newcommand{\ccs} {correctness-criteria\xspace}
\newcommand{\gen}[1] {gen(#1)}
\newcommand{\evts}[1] {#1.evts}
\newcommand{\ssch} {sub-history\xspace}
\newcommand{\tseq} {t-sequential\xspace}
\newcommand{\rt} {real-time\xspace}
\newcommand{\stfdm} {starvation\text{-}freedom\xspace}
\newcommand{\stf} {starvation\text{-}free\xspace}
\newcommand{\cts} {CTS\xspace}
\newcommand{\its} {ITS\xspace}
\newcommand{\wts} {WTS\xspace}
\newcommand{\ltl} {tltl\xspace}
\newcommand{\utl} {tutl\xspace}
\newcommand{\lock} {lock\xspace}
\newcommand{\val} {valid\xspace}
\newcommand{\stat} {state\xspace}
\newcommand{\gtcnt} {G\_tCntr\xspace}
\newcommand{\tcntr} {tCntr\xspace}
\newcommand{\incv} {incrVal\xspace}
\newcommand{\glock} {G\_lock\xspace}
\newcommand{\gval} {G\_valid\xspace}
\newcommand{\gstat} {G\_state\xspace}
\newcommand{\gits} {G\_its\xspace}
\newcommand{\gcts} {G\_cts\xspace}
\newcommand{\gwts} {G\_wts\xspace}
\newcommand{\tltl} {G\_tltl\xspace}
\newcommand{\tutl} {G\_tutl\xspace}
\newcommand{\htlock}[2] {#2.lock_#1\xspace}
\newcommand{\htval}[2] {#2.valid_#1\xspace}
\newcommand{\htstat}[2] {#2.state_#1\xspace}
\newcommand{\htits}[2] {#2.its_#1\xspace}
\newcommand{\htcts}[2] {#2.cts_#1\xspace}
\newcommand{\htwts}[2] {#2.wts_#1\xspace}
\newcommand{\htltl}[2] {#2.tltl_#1\xspace}
\newcommand{\htutl}[2] {#2.tutl_#1\xspace}
\newcommand{\getl} {getLar}
\newcommand{\getsm} {getSm}
\newcommand{\tcts}[1] {cts_#1\xspace}
\newcommand{\tits}[1] {its_#1\xspace}
\newcommand{\twts}[1] {wts_#1\xspace}
\newcommand{\ttltl}[1] {tltl_#1\xspace}
\newcommand{\ttutl}[1] {tutl_#1\xspace}
\newcommand{\tlock}[1] {lock_#1\xspace}
\newcommand{\tval}[1] {valid_#1\xspace}
\newcommand{\tstat}[1] {state_#1\xspace}
\newcommand{\syst} {sys\text{-}time\xspace}
\newcommand{\hsyst}[1] {#1.sys\text{-}time\xspace}
\newcommand {\rab} {relAbort\xspace}
\newcommand{\lastw} {lastWrite}
\newcommand{\lwrite}[2] {#2.lastWrite(#1)}
\newcommand{\valid} {valid\xspace}
\newcommand{\legal} {legal\xspace}
\newcommand{\gc} {garbage-collection\xspace}
\newcommand{\mvstm} {MVSTM\xspace}
\newcommand{\rwph} {read/local-write phase\xspace}
\newcommand{\tryph} {try-Commit phase\xspace}
\newcommand{\focc} {FOCC\xspace}
\newcommand{\mvto} {MVTO\xspace}
\newcommand{\sfmv} {SFMVTO\xspace}
\newcommand{\sfkv} {SFKTO\xspace}
\newcommand{\sftm} {\textit{SV\text{-}SFTM}\xspace}
\newcommand{\svsftm} {\textit{SV-SFTM}\xspace}
\newcommand{\mvsftm} {\textit{UVSFTM}\xspace}
\newcommand{\mvsftmgc} {\textit{UVSFTM-GC}\xspace}
\newcommand{\ksftm} {\textit{KSFTM}\xspace}
\newcommand{\pmvto} {\textit{PMVTO}\xspace}
\newcommand{\pmvtogc} {\textit{PMVTO-GC}\xspace}
\newcommand{\pkto} {\textit{PKTO}\xspace}
\newcommand{\vlist} {vlist\xspace}
\newcommand{\rlist} {read\text{-}list\xspace}
\newcommand{\init} {init\xspace}
\newcommand{\begt} {stm\text{-}begin\xspace}
\newcommand{\tread} {stm\text{-}read\xspace}
\newcommand{\twrite} {stm\text{-}write\xspace}
\newcommand{\tryc} {stm\text{-}tryC\xspace}
\newcommand{\trya} {stm\text{-}tryA\xspace}
\newcommand{\findls} {findLTS\xspace}
\newcommand{\findsl} {findSTL\xspace}
\newcommand{\isab} {isAborted\xspace}
\newcommand{\lowp} {lowPriAbt\xspace}
\newcommand{\relll} {relLL\xspace}
\newcommand{\allrl} {allRL\xspace}
\newcommand{\srl} {smallRL\xspace}
\newcommand{\lrl} {largeRL\xspace}
\newcommand{\pvl} {prevVL\xspace}
\newcommand{\nvl} {nextVL\xspace}
\newcommand{\prevv} {prevVer\xspace}
\newcommand{\nextv} {nextVer\xspace}
\newcommand{\abl} {abortRL\xspace}
\newcommand{\ct} {comTime\xspace}
\newcommand{\livel} {\emph{live-list}\xspace}
\newcommand{\rs}{rset\xspace}
\newcommand{\ws}{wset\xspace}
\newcommand{\rset}[1] {rset_{#1}}
\newcommand{\wset}[1] {wset_{#1}}
\newcommand{\vl} {\texttt{vl}\xspace}
\newcommand{\ts} {\texttt{ts}\xspace}
\newcommand{\rl} {\texttt{rl}\xspace}
\newcommand{\vltl} {\texttt{vrt}\xspace}
\newcommand{\vutl} {\texttt{vutl}\xspace}
\newcommand{\vt} {\texttt{vrt}\xspace}
\newcommand{\vtup} {vTuple\xspace}
\newcommand{\dtup} {dTuple\xspace}
\newcommand{\opq} {opaque\xspace}
\newcommand{\opty} {opacity\xspace}
\newcommand{\slbty} {serializability\xspace}
\newcommand{\stsbty} {strict\text{-}serializability\xspace}
\newcommand{\stsble} {strict\text{-}serializable\xspace}
\newcommand{\lble} {linearizable\xspace}
\newcommand{\lo} {LO\xspace}
\newcommand{\lopty} {local opacity\xspace}
\newcommand{\lopq} {locally\text{-}opaque\xspace}
\newcommand {\wct} {max\text{-}time\xspace}
\newcommand {\inc} {incarnation\xspace}
\newcommand {\incn} {incNum\xspace}
\newcommand {\inum}[1] {T_#1.incNum\xspace}
\newcommand {\ninc} {nextInc\xspace}
\newcommand {\nexti}[1] {T_#1.nextInc\xspace}
\newcommand {\incs}[2] {#2.incarSet(T_#1)\xspace}
\newcommand {\incset} {incarSet\xspace}
\newcommand {\inct}[2] {#2.incarCt(T_#1)\xspace}
\newcommand {\incct} {incarCt\xspace}
\newcommand {\aptr} {application-transaction\xspace}
\newcommand {\cdset} {cds\xspace}
\newcommand {\hcds}[2] {#2.cds(T_#1)\xspace}
\newcommand {\haffset}[2] {#2.affectSet(T_#1)\xspace}
\newcommand {\affset} {affectSet\xspace}
\newcommand {\hmaxwts}[2] {#2.maxWTS(T_#1)\xspace}
\newcommand {\maxwts} {maxWTS\xspace}
\newcommand {\haffwts}[2] {#2.affWTS(T_#1)\xspace}
\newcommand {\affwts} {affWTS\xspace}
\newcommand {\cis} {cis\xspace}
\newcommand {\hcis}[2] {#2.cis(T_#1)\xspace}
\newcommand {\depits} {depIts\xspace}
\newcommand {\hdep}[2] {#2.depIts(T_#1)\xspace}
\newcommand {\pawts} {partAffWTS\xspace}
\newcommand {\hpawts}[2] {#2.partAffWTS(T_#1)\xspace}
\newcommand {\itsen} {itsEnabled\xspace}
\newcommand {\itsenb}[2] {#2.itsEnabled(T_#1)\xspace}
\newcommand {\cdsen} {cdsEnabled\xspace}
\newcommand {\cdsenb}[2] {#2.cdsEnabled(T_#1)\xspace}
\newcommand {\finen} {finEnabled\xspace}
\newcommand {\finenb}[2] {#2.finEnabled(T_#1)\xspace}
\newcommand {\enbd} {finEnabled\xspace}
\newcommand{\opg}[2] {OPG(#1, #2)}
\newcommand{\mv} {mv}
\newcommand{\rf} {rf}
\newcommand{\rtx} {rt}
\algrenewcommand{\algorithmiccomment}[1]{/* #1 */}
\newcommand{\schdr}{scheduler\xspace}
\newcommand{\bdtm}{bounded\text{-}termination\xspace}
\author{
Ved Prakash Chaudary \\
cs14mtech11019@iith.ac.in \\
CSE Department \\
IIT Hyderabad \\
India
\and 
Raj Kripal Danday \\
es12b1007@iith.ac.in \\
CSE Department \\
IIT Hyderabad \\
India
\and 
Hima Varsha Dureddy \\
cs12b1042@iith.ac.in \\
CSE Department \\
IIT Hyderabad \\
India
\and 
Sweta Kumari \\
cs15resch01004@iith.ac.in \\
CSE Department \\
IIT Hyderabad \\
India
\and 
Sathya Peri \\
sathya\_p@iith.ac.in \\
CSE Department \\
IIT Hyderabad \\
India  
}
\title{Achieving Starvation-Freedom in Multi-Version Transactional Memory Systems \thanks{A preliminary version of this work was accepted in AADDA 2017 as \textbf{work in progress}. }} 
\author[1]{Ved Prakash Chaudhary \footnote{A part of this work was submitted towards the fulfillment of M.Tech thesis requirement by the author.}}
\author[1]{Chirag Juyal} 
\author[2]{Sandeep Kulkarni}
\author[1]{Sweta Kumari}
\author[1]{Sathya Peri\footnote{Author sequence follows a lexical order of last names.}}
\affil[1]{Department of Computer Science \& Engineering, IIT Hyderabad, India \\
	\texttt{\{cs14mtech11019, cs17mtech11014, cs15resch01004, sathya\_p\}@iith.ac.in}}
\affil[2]{Department of Computer Science, Michigan State University, USA\\
	\texttt{sandeep@cse.msu.edu}}
\author{
	Ved Prakash Chaudhary \affmark[1], Raj Kripal Danday\affmark[1], Hima Varsha Dureddy\affmark[1], Sandeep Kulkarni\affmark[2], Sweta Kumari\affmark[1], Sathya Peri \affmark[1]\\
	\email{\affmark[1]\{cs14mtech11019,es12b1007,cs12b1042\}@iith.ac.in}, \email{\affmark[2]sandeep@cse.msu.edu}, \email{\affmark[1]\{cs15resch01004,sathya\_p\}@iith.ac.in}\\
	\affaddr{\affmark[1]Department of Computer Science \& Engineering, IIT Hyderabad}\\
	\affaddr{\affmark[2]Department of Computer Science, Michigan State University}\\
}
\author[1]{Ved Prakash Chaudhary \thanks{cs14mtech11019@iith.ac.in}}
\author[1]{Raj Kripal Danday \thanks{es12b1007@iith.ac.in}}
\author[1]{Hima Varsha Dureddy \thanks{cs12b1042@iith.ac.in}}
\author[2]{Sandeep Kulkarni \thanks{sandeep@cse.msu.edu}}
\author[1]{Sweta Kumari \thanks{cs15resch01004@iith.ac.in}}
\author[1]{Sathya Peri \thanks{sathya\_p@iith.ac.in}}
\affil[1]{Department of Computer Science \& Engineering, IIT Hyderabad}
\affil[2]{Department of Computer Science, Michigan State University}
\date{}
\begin{document}

\maketitle              
\thispagestyle{empty}

\cmnt {
\begingroup
\centering
{\vspace{-1.6cm} \large Ved Prakash Chaudhary, Raj Kripal Danday, Hima Varsha Dureddy, Sweta Kumari, Sathya Peri} \\[0.3em]
\url{(cs14mtech11019, es12b1007, cs12b1042, cs15resch01004, sathya_p)@iith.ac.in}
\\[0.2em]
\hspace{4.4cm} {\large CSE Department, IIT Hyderabad, India}
\endgroup
}

\begin{abstract}
Software Transactional Memory systems (STMs) have garnered significant interest as an elegant alternative for addressing synchronization and concurrency issues with multi-threaded programming in multi-core systems. For STMs to be efficient, they must guarantee some progress properties. This work explores the notion of one of the progress property, i.e., \emph{\stfdm}, in STMs. An STM system is said to be \stf if every thread invoking a transaction gets the opportunity to take a step (due to the presence of a fair scheduler) such that the transaction eventually commits.

A few \emph{\stf} algorithms have been proposed in the literature in context of single-version STMs. These algorithms are priority based i.e. if two transactions are in conflict, then the transaction with lower priority will abort. A transaction running for a long time will eventually have the highest priority and hence commit. But the drawback with this approach is that if a set of high-priority transactions become slow, then they can cause several other transactions to abort.\cmnt{ In that case, this approach becomes similar to pessimistic lock-based approach.} So, we propose multi-version \emph{\stf} STM system which addresses this issue.

Multi-version STMs maintain multiple-versions for each transactional object. By storing multiple versions, these systems can achieve greater concurrency. In this paper, we propose multi-version \emph{\stf} STM, \ksftm, which as the name suggests achieves \stfdm while storing $K$-$versions$ of each \tobj. Here $K$ is an input parameter fixed by the application programmer depending on the requirement. Our algorithm is dynamic which can support different values of $K$ ranging from one to infinity. If $K$ is infinite, then there is no limit on the number of versions. But a separate garbage-collection mechanism is required to collect unwanted versions. On the other hand, when $K$ is one, it becomes the same as a single-version \emph{\stf} STM system. We prove the correctness and \emph{\stfdm} property of the \ksftm algorithm. 

To the best of our knowledge, this is the first multi-version STM system that satisfies \emph{\stfdm}. We implement \ksftm and compare its performance with single-version \emph{\stf} STM system (\sftm) which works on the priority principle. Our experiments show that \ksftm gives an average speedup on the worst-case time to commit of a transaction by a factor of 1.22, 1.89, 23.26 and 13.12 times over \pkto, \svsftm, NOrec STM and ESTM respectively for counter application. \ksftm performs 1.5 and 1.44 times better than \pkto and \svsftm but 1.09 times worse than NOrec for low contention KMEANS application of STAMP benchmark whereas \ksftm performs 1.14, 1.4 and 2.63 times better than \pkto, \svsftm and NOrec for LABYRINTH application of STAMP benchmark which has high contention with long-running transactions.
\end{abstract}

\cmnt{
\keywords
{Concurrency, Correctness, Atomic operation, Opacity, Local opacity, Multiversion Conflict, Software Transactional Memory, Timestamp.}}

\section{Introduction}
\label{sec:intro}

 STMs \cite{HerlMoss:1993:SigArch,ShavTou:1995:PODC} are a convenient programming interface for a programmer to access shared memory without worrying about consistency issues. STMs often use an optimistic approach for concurrent execution of \emph{transactions} (a piece of code invoked by a thread). In optimistic execution, each transaction reads from the shared memory, but all write updates are performed on local memory. On completion, the STM system \textit{validates} the reads and writes of the transaction. If any inconsistency is found, the transaction is \emph{aborted}, and its local writes are discarded. Otherwise, the transaction is committed, and its local writes are transferred to the shared memory. A transaction that has begun but has not yet committed/aborted is referred to as \emph{live}.

A typical STM is a library which exports the following methods: \emph{\begt} which begins a transaction, \textit{\tread} which reads a \emph{transactional object} or \emph{\tobj}, \textit{\twrite} which writes to a \emph{\tobj}, \textit{\tryc} which tries to commit the transaction. Typical code for using STMs is as shown in \algoref{sfdemo} which shows how an insert of a concurrent linked-list library is implemented using STMs. 

\ignore{
\color{blue}
\color{red}
A typical code using STMs is as shown in \algoref{sfdemo}. It shows the overview of a concurrent \emph{insert} \mth which inserts an element $e$ into a linked-list $LL$. It consists of a loop where the thread creates a transaction. This transaction executes the code to insert an element $e$ in a linked-list $LL$ using $\tread$ and $\twrite$ operations. (The result of $\twrite$ operation are stored locally.) At the end of the transaction, the thread calls \textit{\tryc}. At this point, the STM checks if the given transaction can be committed while satisfying the required safety properties (e.g., \slbty \cite{Papad:1979:JACM}, \opty \cite{GuerKap:2008:PPoPP}). If yes, then the transaction is committed. At this time, any updates done by the transaction are reflected in the shared memory. Otherwise, it is aborted. In this case, all the updates made by the transaction are discarded. If the given transaction is aborted, then the invoking thread may retry that transaction again like \linereff{retry} in \algoref{sfdemo}. 
\color{black}
}

\noindent
\textbf{Correctness:} 
\ignore{
\color{red}
By committing/aborting the transactions, the STM system ensures atomicity and consistency of transactions. Thus, an important requirement of STMs is to precisely identify the criterion as to when a transaction should be \emph{aborted/committed}, referred to as \emph{\cc}. \color{black}
}
Several \emph{\ccs} have been proposed for STMs such as opacity \cite{GuerKap:2008:PPoPP}, local opacity \cite{KuzSat:NI:ICDCN:2014,KuzSat:NI:TCS:2016}. All these \emph{\ccs} require that all the transactions including aborted ones appear to execute sequentially in an order that agrees with the order of non-overlapping transactions. Unlike the \ccs for traditional databases, such as serializability,  strict-serializability \cite{Papad:1979:JACM}, the \ccs for STMs ensure that even aborted transactions read correct values. This ensures that programmers do not see any undesirable side-effects due to the reads by transaction that get aborted later such as divide-by-zero, infinite-loops, crashes etc. in the application due to concurrent executions. This additional requirement on aborted transactions is a fundamental requirement of STMs which differentiates STMs from databases as observed by Guerraoui \& Kapalka \cite{GuerKap:2008:PPoPP}. Thus in this paper, we focus on optimistic executions with the \emph{\cc} being \emph{\lopty} \cite{KuzSat:NI:TCS:2016}. 

\ignore{
\todo{This para could be dropped. We have already said optimistic execution before}
To ensure correctness such as \opty, most STMs execute optimistically. With this approach, a transaction only reads values written by other committed transactions. \textbf{To achieve this, all writes are written to local memory first. They are added to the shared memory only when the transaction commits.} This (combined with required validation) can, in turn, ensure that the reads of the transaction are consistent as required by the \emph{\cc}. Thus in this paper, we focus only on optimistic executions with the \emph{\cc} being \emph{\lopty} \cite{KuzSat:NI:TCS:2016} (explained in \secref{model}).

\color{black}
}

\vspace{1mm}
\noindent
\textbf{Starvation Freedom:} In the execution shown in \algoref{sfdemo}, there is a possibility that the transaction which a thread tries to execute gets aborted again and again. Every time, it executes the transaction, say $T_i$, $T_i$ conflicts with some other transaction and hence gets aborted. In other words, the thread is effectively starving because it is not able to commit $T_i$ successfully. 

A well known blocking progress condition associated with concurrent programming is \stfdm \cite[chap 2]{HerlihyShavit:AMP:Book:2012}, \cite{HerlihyShavit:Progress:Opodis:2011}. In the context of STMs, \stfdm ensures that every aborted transaction that is retried infinitely often eventually commits. It can be defined as: an STM system is said to be \emph{\stf} if a thread invoking a transaction $T_i$ gets the opportunity to retry $T_i$ on every abort (due to the presence of a fair underlying scheduler with bounded termination) and $T_i$ is not \emph{parasitic}, i.e., $T_i$ will try to commit given a chance then $T_i$ will eventually commit. Parasitic transactions \cite{Bushkov+:Live-TM:PODC:2012} will not commit even when given a chance to commit possibly because they are caught in an infinite loop or some other error. 

\setlength{\intextsep}{0pt}
\vspace{1mm}
\setlength{\textfloatsep}{0pt}
\begin{algorithm}
	\caption{Insert($LL, e$): Invoked by a thread to insert an element $e$ into a linked-list $LL$. This method is implemented using transactions.} \label{algo:sfdemo} 
	\begin{algorithmic}[1]
		\State $retry$ = 0; 
		\While {$(true)$} \label{lin:wstart1}
		\State $id$ = \begt($retry$); \label{lin:beg-illus}
		\State ...
		\State ...
		\State $v$ = $\tread(id, x)$; \Comment{reads the value of $x$ as $v$}
		\State ...
		\State ...
		\State $\twrite(id, x, v')$; \Comment{writes a value $v'$ to $x$}
		\State ...
		\State ...
		\State $ret$ = $\tryc(id)$; \Comment{$\tryc$ can return $commit$ or $abort$}
		\If {($ret == commit$)}
		\State break;
		\Else 
		\State $retry$++; \label{lin:retry}
		\EndIf
		\EndWhile \label{lin:wend1}
	\end{algorithmic}

\end{algorithm}

\emph{Wait-freedom} is another interesting progress condition for STMs in which every transaction commits regardless of the nature of concurrent transactions and the underlying scheduler \cite{HerlihyShavit:Progress:Opodis:2011}. But it was shown by Guerraoui and  Kapalka \cite{Bushkov+:Live-TM:PODC:2012} that it is not possible to achieve \emph{wait-freedom} in dynamic STMs in which data sets of transactions are not known in advance. So in this paper, we explore the weaker progress condition of \emph{\stfdm} for transactional memories while assuming that the data sets of the transactions are \textit{not} known in advance. 


\ignore {
\color{red}
With a \stf STM, the thread invoking insert in \algoref{sfdemo} will eventually be able to complete. Otherwise, every transaction invoked by the thread could potentially abort and the \mth will never complete. 
\color{black}

}

\vspace{1mm}
\noindent
\textbf{Related work on the \stf STMs:} Starvation-freedom in STMs has been explored by a few researchers in literature such as Gramoli et al. \cite{Gramoli+:TM2C:Eurosys:2012}, Waliullah and Stenstrom \cite{WaliSten:Starve:CC:2009}, Spear et al. \cite{Spear+:CCM:PPoPP:2009}. Most of these systems work by assigning priorities to transactions. In case of a conflict between two transactions, the transaction with lower priority is aborted. They ensure that every aborted transaction, on being retried a sufficient number of times, will eventually have the highest priority and hence will commit. We denote such an algorithm as \emph{single-version \stf STM} or \emph{\svsftm}. 

Although \svsftm guarantees \stfdm, it can still abort many transactions spuriously. Consider the case where a transaction $T_i$ has the highest priority. Hence, as per \svsftm, $T_i$ cannot be aborted. But if it is slow (for some reason), then it can cause several other conflicting transactions to abort and hence, bring down the efficiency and progress of the entire system.

\vspace{1mm}

\figref{svsf-draw} illustrates this problem. Consider the execution: $r_1(x,0) r_1(y,0) w_2(x,10) w_2(z,10) w_3(y,15) w_1(z,7)$. It has three transactions $T_1$, $T_2$ and $T_3$. Let $T_1$ has the highest priority. After reading $y$, suppose $T_1$ becomes slow. Next $T_2$ and $T_3$ want to write to $x, z$ and $y$ respectively and \emph{commit}. But $T_2$ and $T_3$'s write \op{s} are in conflict with $T_1$'s read operations. Since $T_1$ has higher priority and has not committed yet, $T_2$ and $T_3$ have to \emph{abort}. If these transactions are retried and again conflict with $T_1$ (while it is still live), they will have to \emph{abort} again. Thus, any transaction with the priority lower than $T_1$ and conflicts with it has to abort. It is as if $T_1$ has locked the \tobj{s} $x, y$ and does not allow any other transaction, write to these \tobj{s} and to \emph{commit}.

\ignore{
	\color{red} This text should be in another section: 
	
	To illustrate the need for starvation freedom, consider a transaction, say $T1$, that reads all t-objects to obtain a consistent state of the system. This can be achieved by having the transaction read all t-objects with the highest timestamp that is less than $\cts$ value. However, such a transaction may abort because by the time the transaction reads the a $\tobj$, it may be deleted due to the fact that $K$ additional versions are created. Furthermore, transactions that deleted those versions were not aware of the interest of $T1$ in reading those \tobj{s}. However, in our protocol, if this transaction is aborted several times eventually $\wts_{T1}$ would be high thereby preventing it from being aborted due to unavailability of the version. 
	
	\color{blue}
}

\vspace{1mm}
\begin{figure}
	\center
	\scalebox{0.45}{\import{figs/}{dsftm.pdf_t}}
	\captionsetup{justification=centering}
	\caption{Limitation of Single-version Starvation Free Algorithm} 
	\label{fig:svsf-draw}
\end{figure}
\vspace{1mm}
\noindent
\textbf{Multi-version \stf STM:} A key limitation of single-version STMs is limited concurrency. As shown above, it is possible that one long transaction conflicts with several transactions causing them to abort. This limitation can be overcome by using multi-version STMs where we store multiple versions of the data item (either unbounded versions with garbage collection, or bounded versions where the oldest version is replaced when the number of versions exceeds the bound).


Several multi-version STMs have been proposed in the literature \cite{Kumar+:MVTO:ICDCN:2014,LuScott:GMV:DISC:2013,Fern+:LSMVSTM:PPoPP:2011,Perel+:2011:SMV:DISC} that provide increased concurrency. But none of them provide \stfdm. Furthermore, achieving \stfdm while using only bounded versions is especially challenging given that a transaction may rely on the oldest version that is removed. In that case, it would be necessary to abort that transaction, making it harder to achieve \stfdm. 

A typical code using STMs is as shown in \algoref{sfdemo}. It shows the overview of a concurrent \emph{insert} \mth which inserts an element $e$ into a linked-list $LL$. It consists of a loop where the thread creates a transaction. This transaction executes the code to insert an element $e$ in a linked-list $LL$ using $\tread$ and $\twrite$ operations. (The result of $\twrite$ operation are stored locally.) At the end of the transaction, the thread calls \textit{\tryc}. At this point, the STM checks if the given transaction can be \emph{committed} while satisfying the required safety properties (e.g., \slbty \cite{Papad:1979:JACM}, \opty \cite{GuerKap:2008:PPoPP}). If yes, then the transaction is \emph{committed}. At this time, any updates done by the transaction are reflected in the shared memory. Otherwise, it is aborted. In this case, all the updates made by the transaction are discarded. If the given transaction is aborted, then the invoking thread may retry that transaction again like \linereff{retry} in \algoref{sfdemo}. 

The advantage of multi-version STMs, is that they allow greater concurrency by allowing more transactions to commit. Consider the execution shown in \figref{svsf-draw}. Suppose this execution used multiple versions for each \tobj. Then it is possible for all the three transactions to commit. Transactions $T_2$ and $T_3$ create a new version corresponding to each \tobj $x$, $z$ and $y$ and return commit. Since multiple versions are being used, $T_1$ need not abort as well. $T_1$ reads the initial value of $z$, and returns commit. So, by maintaining multiple versions all the transactions $T_1$, $T_2$, and $T_3$ can commit with equivalent serial history as $T_1 T_2 T_3$ or $T_1 T_3 T_2$. Thus multiple versions can help with \stfdm without sacrificing on concurrency. This motivated us to develop a multi-version \stf STM system.

Although multi-version STMs provide greater concurrency, they suffer from the cost of garbage collection. One way to avoid this is to use bounded-multi-version STMs, where the number of versions is bounded to be at most $K$. Thus, when $(K+1)^{th}$ version is created, the oldest version is removed. Bounding the number of versions can hinder with starvation freedom: a transaction needing to read a version that is currently removed must be aborted.

\ignore{
\color{red}
To address this limitation, in this paper, we focus on developing  \emph{starvation-free} algorithms STMs using multiple versions. Many STMs have been proposed which uses the idea of multiple versions \cite{Kumar+:MVTO:ICDCN:2014,LuScott:GMV:DISC:2013,Fern+:LSMVSTM:PPoPP:2011,Perel+:2011:SMV:DISC}. Multi-version STMs (\mvstm{s}), by virtue of multiple versions, can ensure that more \mth{s} succeed \cite{Kumar+:MVTO:ICDCN:2014}. Hence, multi-version STMs can achieve greater concurrency.



Suppose the execution shown in \figref{svsf-draw} uses multiple versions for each \tobj. Then both $T_2$ and $T_3$ create a new version corresponding to each \tobj $x$, $z$ and $y$ and return commit while not causing $T_1$ to abort as well. $T_1$ reads the initial value of $z$, and returns commit. So, by maintaining multiple versions all the transactions $T_1$, $T_2$, and $T_3$ can commit with equivalent serial history as $T_1 T_2 T_3$ or $T_1 T_3 T_2$. Thus multiple versions can help with \stfdm without sacrificing on concurrency. This motivated us to develop a multi-version \stf STM system.

Although multi-version STMs provide greater concurrency, they suffer from the cost of garbage collection. One way to avoid this is to use bounded-multi-version STMs, where the number of versions is bounded to be at most $K$. Thus, when $(K+1)^{th}$ version is created, the oldest version is removed. Bounding the number of versions can hinder with starvation freedom: a transaction needing to read a version that is currently removed must be aborted.

\color{black}
}

This paper addresses this gap by developing a starvation-free algorithm for bounded \mvstm{s}. Our approach is different from  the approach used in \svsftm to provide starvation-freedom in single version STMs (the policy of aborting lower priority transactions in case of conflict) as it does not work for \mvstm{s}. As part of the derivation of our final \stf algorithm, we consider an algorithm (\pkto) that considers this approach and show that it is insufficient to provide starvation freedom.

\color{black}

\color{black}
\vspace{1mm}
\noindent
\textbf{Contributions of the paper:}

\begin{itemize}
\item We propose a multi-version \stf STM system as \emph{K-version \stf STM} or \emph{\ksftm} for a given parameter $K$. Here $K$ is the number of versions of each \tobj and can range from 1 to $\infty$. To the best of our knowledge, this is the first \stf \mvstm. We develop \ksftm algorithm in a step-wise manner starting from \mvto \cite{Kumar+:MVTO:ICDCN:2014} as follows:  
\begin{itemize}
\item First, in \subsecref{mvto}, we use the standard idea to provide higher priority to older transactions. Specifically, we propose priority-based $K$-version STM algorithm \emph{Priority-based $K$-version MVTO} or \emph{\pkto}. This algorithm guarantees the safety properties of \stsbty and \lopty. However, it is not \stf. 
\item We analyze \pkto to identify the characteristics that will help us to achieve preventing a transaction from getting aborted forever. This analysis leads us to the development of \emph{\stf K-version TO} or \emph{\sfkv}  (\subsecref{sfmvto}), a multi-version starvation-free STM obtained by revising \pkto. But \sfkv does not satisfy correctness, i.e., \stsbty, and \lopty.
\item Finally, we extend \sfkv to develop \ksftm (\subsecref{ksftm}) that preserves the \stfdm, strict-serializability, and \lopty. Our algorithm works on the assumption that any transaction that is not deadlocked, terminates (commits or aborts) in a bounded time. 
\end{itemize}
\item Our experiments (\secref{exp}) show that \ksftm gives an average speedup on the worst-case time to commit of a transaction by a factor of 1.22, 1.89, 23.26 and 13.12 times over \pkto, \svsftm, NOrec STM \cite{Dalessandro:2010:NSS:1693453.1693464} and ESTM \cite{Felber:2017:jpdc} respectively for counter application. \ksftm performs 1.5 and 1.44 times better than \pkto and \svsftm but 1.09 times worse than NOrec for low contention KMEANS application of STAMP \cite{stamp123} benchmark whereas \ksftm performs 1.14, 1.4 and 2.63 times better than \pkto, \svsftm and NOrec for LABYRINTH application of STAMP benchmark which has high contention with long-running transactions.

\end{itemize}

\ignore{
\begin {table}
\captionsetup{justification=centering}	
\caption{Comparison of the various Algorithms Developed}
\label{tab:our-algos}
\begin{center}
\begin{tabular}{|c|c|c|c|}
	\hline
	Algorithm & Starvation Freedom & Correct & Sub-Section\\
	\hline
	\pkto & No & Yes & \ref{subsec:mvto} \\
	\hline
	\sfkv & Yes & No & \ref{subsec:sfmvto} \\
	\hline
	\ksftm & Yes & Yes& \ref{subsec:ksftm} \\
	\hline	
\end{tabular}
\end{center}
\end{table}
}

\ignore{
\todo{Outline of paper?}	
\textbf{Organization of the paper. }
\todo{Rel Work} 
\color{blue}

\color{red}
In this paper, we explore the problem of achieving starvation-freedom in \mvstm system. We have proposed a $K$ version starvation-free \mvstm system, \emph{\ksftm}. To explain this algorithm, we start with the \mvto algorithm \cite{BernGood:1983:MCC:TDS, Kumar+:MVTO:ICDCN:2014} and then walk through a series of changes to reach the final \stf algorithm. \ksftm maintains $K$ versions where $K$ can range from one to infinity. When $K$ is one then this it boils down to a single-version STM system. If $K$ is infinity, then it is similar to existing MVSTMs which do not maintain an upper bound on the number of versions. In this case, a separate \gc thread will be required. When $K$ is finite, then no separate \gc thread will be required to remove the older versions. For correctness, we show \ksftm{} satisfies \stsbty \cite{Papad:1979:JACM} and local-opacity \cite{KuzSat:NI:ICDCN:2014,KuzSat:NI:TCS:2016}. To the best of our knowledge, this is the first work to explore \stfdm with \mvstm{s}. We have also implemented \ksftm and compared its performance with a single version starvation free STM system (\sftm) which works on the priority principle. Our experiments show that \ksftm achieves more than two-fold speedup over \sftm for lookup intensive workload (90\% read, 10\% write). \ksftm shows significant  performance  gain,  almost  two  to  fifteen times  better  than  existing  non-starvation-free state-of-the-art  STMs (ESTM, NOrec, and MVTO) on various workloads.
}

\section{System Model and Preliminaries}
\label{sec:model}

Following~\cite{tm-book,KuzSat:NI:TCS:2016}, we assume a system of $n$ processes/threads, $p_1,\ldots,p_n$ that access a collection of \emph{transactional objects} (or \emph{{\tobj}s}) via atomic \emph{transactions}. Each transaction has a unique identifier. Within a transaction, processes can perform \emph{transactional operations or \mth{s}}: $\begt{}$ that begins a transaction, \textit{\twrite}$(x,v)$ operation that updates a \tobj $x$ with value $v$ in its local memory, the \textit{\tread}$(x)$ operation tries to read  $x$, \textit{\tryc}$()$ that tries to commit the transaction and returns $commit$ if it succeeds, and \textit{\trya}$()$ that aborts the transaction and returns $\abort$. For the sake of presentation simplicity, we assume that the values taken as arguments by \textit{\twrite} operations are unique. 

Operations \textit{\tread} and \textit{\tryc}$()$ may return $\abort$, in which case we say that the operations \emph{forcefully abort}. Otherwise, we say that the operations have \emph{successfully} executed.  Each operation is equipped with a unique transaction identifier. A transaction $T_i$ starts with the first operation and completes when any of its operations return $\abort$ or $\commit$. We denote any \op{} that returns $\abort$ or $\commit$ as \emph{\termop{s}}. Hence, \op{s} $\tryc$$()$ and $\trya$$()$ are \termop{s}. A transaction does not invoke any further \op{s} after \termop{s}. 

For a transaction $T_k$, we denote all the \tobj{s} accessed by its read \op{s} as $\rs_k$ and \tobj{s} accessed by its write operations as $\ws_k$.  We denote all the \op{s} of a transaction $T_k$ as $\evts{T_k}$ or $evts_k$. 

\noindent
\textbf{History:}
A \emph{history} is a sequence of \emph{events}, i.e., a sequence of 
invocations and responses of transactional operations. The collection of events 
is denoted as $\evts{H}$. For simplicity, we only consider \emph{sequential} 
histories here: the invocation of each transactional operation is immediately 
followed by a matching response. Therefore, we treat each transactional 
operation as one atomic event, and let $<_H$ denote the total order on the 
transactional operations incurred by $H$. With this assumption, the only 
relevant events of a transaction $T_k$ is of the types: $r_k(x,v)$, 
$r_k(x,\abort)$, $w_k(x, v)$, $\tryc_k(\commit)$ (or $c_k$ for short), 
$\tryc_k(\abort)$, $\trya_k(\abort)$ (or $a_k$ for short).  We identify a 
history $H$ as tuple $\langle \evts{H},<_H \rangle$. 


Let $H|T$ denote the history consisting of events of $T$ in $H$, and $H|p_i$ denote the history consisting of events of $p_i$ in $H$. We only consider \emph{well-formed} histories here, i.e., no transaction of a process begins before the previous transaction invocation has completed (either $commits$ or $aborts$). We also assume that every history has an initial \emph{committed} transaction $T_0$ that initializes all the t-objects with value $0$. 

The set of transactions that appear in $H$ is denoted by $\txns{H}$. The set of \emph{committed} (resp., \emph{aborted}) transactions in $H$ is denoted by $\comm{H}$ (resp., $\aborted{H}$). The set of \emph{incomplete} or \emph{live} transactions in $H$ is denoted by $\incomp{H} = \live{H} = (\txns{H}-\comm{H}-\aborted{H})$. 

For a history $H$, we construct the \emph{completion} of $H$, denoted as 
$\overline{H}$, by inserting $\trya_k(\abort)$ immediately after the last event 
of every transaction $T_k\in \live{H}$. But for $\tryc_i$ of transaction $T_i$, if it released the lock 
on first \tobj successfully that means updates made by $T_i$ is consistent so, $T_i$ will immediately return commit.

\noindent
\textbf{Transaction orders:} For two transactions $T_k,T_m \in \txns{H}$, we say that  $T_k$ \emph{precedes} $T_m$ in the \emph{real-time order} of $H$, denote $T_k\prec_H^{RT} T_m$, if $T_k$
is complete in $H$ and the last event of $T_k$ precedes the first event of $T_m$ in $H$. If neither $T_k \prec_H^{RT} T_m$ nor $T_m \prec_H^{RT} T_k$, then $T_k$ and $T_m$ \emph{overlap} in $H$. We say that a history is \emph{\tseq} if all the transactions are ordered by this real-time order. Note that from our earlier assumption all the transactions of a single process are ordered by real-time.

\ignore{
We say that $T_k, T_m$ are in conflict, if (1) (tryc-tryc) $\tryc_k(C)<_H \tryc_m(C)$ and $Wset(T_k) \cap Wset(T_m) \neq\emptyset$; (2) (tryc-r) $\tryc_k(C)<_H r_m(x,v)$, $x \in Wset(T_k)$ and $v \neq A$; (3) (r-tryc) $r_k(x,v)<_H \tryc_m(C)$, $x\in Wset(T_m)$ and $v \neq A$. 

Thus, it can be seen that the conflict order is defined only on \op{s} that have successfully executed. We denote the corresponding \op{s} as conflicting. 
}


\noindent
\textbf{Sub-history:} A \textit{sub-history} ($SH$) of a history ($H$) 
denoted as the tuple $\langle \evts{SH},$ $<_{SH}\rangle$ and is defined as: 
(1) $<_{SH} \subseteq <_{H}$; (2) $\evts{SH} \subseteq \evts{H}$; (3) If an 
event of a transaction $T_k\in\txns{H}$ is in $SH$ then all the events of $T_k$ 
in $H$ should also be in $SH$. 

For a history $H$, let $R$ be a subset of $\txns{H}$. Then $\shist{R}{H}$ denotes  the \ssch{} of $H$ that is formed  from the \op{s} in $R$. 

\noindent
\textbf{Valid and legal history:}
%
A successful read $r_k(x, v)$ (i.e., $v \neq \abort$)  in a history $H$ is said to be \emph{\valid} if there exist a transaction $T_j$ that wrote $v$ to $x$ and \emph{committed} before $r_k(x,v)$. Formally, $\langle r_k(x, v)$  is \valid{} $\Leftrightarrow \exists T_j: (c_j <_{H} r_k(x, v)) \land (w_j(x, v) \in \evts{T_j}) \land (v \neq \abort) \rangle$.  The history $H$ is \valid{} 
if all its successful read \op{s} are \valid. 

We define $r_k(x, v)$'s \textit{\lastw{}} as the latest commit event $c_i$ preceding $r_k(x, v)$ in $H$ such that $x\in wset_i$ ($T_i$ can also be $T_0$). A successful read \op{} $r_k(x, v)$, is
said to be \emph{\legal{}}  if the transaction containing $r_k$'s \lastw{} also writes $v$ onto $x$:  $\langle r_k(x, v)$ \text{is \legal{}} $\Leftrightarrow (v \neq \abort) \land (\lwrite{r_k(x, v)}{H} = c_i) \land (w_i(x,v) \in \evts{T_i})\rangle$.  The history $H$ is \legal{} if all its successful read \op{s} are \legal.  From the definitions we get that if $H$ is \legal{} then it is also \valid.


\noindent
\textbf{Opacity and Strict Serializability:} We say that two histories 
$H$ and $H'$ are \emph{equivalent} if they have the same set of events. Now a 
history $H$ is said to be \textit{opaque} \cite{GuerKap:2008:PPoPP,tm-book} if 
it is \valid{} and there exists a t-sequential legal history $S$ such that (1) 
$S$ is equivalent to $\overline{H}$ and (2) $S$ respects $\prec_{H}^{RT}$, i.e., 
$\prec_{H}^{RT} \subset \prec_{S}^{RT}$.  By requiring $S$ being equivalent to 
$\overline{H}$, opacity treats all the incomplete transactions as aborted. We 
call $S$ an (opaque) \emph{serialization} of $H$. 

Along same lines, a \valid{} history $H$ is said to be \textit{strictly serializable} if $\shist{\comm{H}}{H}$ is opaque. Unlike opacity, strict serializability does not include aborted or incomplete transactions in the global serialization order. An opaque history $H$ is also strictly serializable: a serialization of $\shist{\comm{H}}{H}$ is simply the subsequence of a serialization of $H$ that only contains transactions in $\comm{H}$. 

Serializability is commonly used criterion in databases. But it is not suitable for STMs as it does not consider the correctness of \emph{aborted} transactions as shown by Guerraoui \& Kapalka \cite{GuerKap:2008:PPoPP}. Opacity, on the other hand, considers the correctness of \emph{aborted} transactions as well. Similarly, \lopty (described below) is another \cc for STMs but is not as restrictive as \opty. 

\noindent
\textbf{Local opacity:} For a history H, we define a set of sub-histories, denoted as $\shset{H}$ as follows: (1) For each aborted transaction $T_i$, we consider a $\subhist$ consisting of \op{s} from all previously \emph{committed} transactions and including all successful \op{s} of $T_i$ (i.e., \op{s} which did not return $\mathcal{A}$) while immediately putting commit after last successful operation of $T_i$; (2) for last \emph{committed} transaction $T_l$ considers all the previously \emph{committed} transactions including  $T_l$.

A history H is said to be \emph{\lopq} \cite{KuzSat:NI:ICDCN:2014,KuzSat:NI:TCS:2016} if all the sub-histories in \shset{H} are opaque. It must be seen that in the construction of sub-history of an aborted transaction $T_i$, the $\subhist$ will contain \op{s} from only one aborted transaction which is $T_i$ itself and no other live/aborted transactions. Similarly, the sub-history of \emph{committed} transaction $T_l$ has no \op{s} of aborted and live transactions. Thus in \lopty, no aborted or live transaction can cause another transaction to abort. It was shown that \lopty 
\cite{KuzSat:NI:ICDCN:2014,KuzSat:NI:TCS:2016} allows greater concurrency than \opty. Any history that is \opq is also \lopq but not necessarily the vice-versa. On the other hand, a history that is \lopq is also \stsble, but the vice-versa need not be true.\\
\noindent
\textbf{Graph Characterization of Local Opacity:}
To prove correctness of STM systems, it is useful to consider graph characterization of histories. In this section, we describe the graph characterization developed by Kumar et al \cite{Kumar+:MVTO:ICDCN:2014} for proving \opty which is based on characterization by Bernstein and Goodman \cite{BernGood:1983:MCC:TDS}.  We extend this characterization for \lo. 

Consider a history $H$ which consists of multiple versions for each \tobj. The graph characterization uses the notion of \textit{version order}. Given $H$ and a \tobj{} $x$, we define a version order for $x$ as any (non-reflexive) total order on all the versions of $x$ ever created by committed transactions in $H$. It must be noted that the version order may or may not be the same as the actual order in which the version of $x$ are generated in $H$. A version order of $H$, denoted as $\ll_H$ is the union of the version orders of all the \tobj{s} in $H$. 

Consider the history $H2: r_1(x, 0) r_2(x, 0) r_1(y, 0) r_3(z, 0) w_1(x, 5) w_3(y, 15)  w_2(y, 10) w_1(z, 10)
c_1 c_2 r_4(x, 5) \\r_4(y, 10) w_3(z, 15) c_3 r_4(z, 10)$. Using the notation that a committed transaction $T_i$ writing to $x$ creates a version $x_i$, a possible version order for $H2$ $\ll_{H2}$ is: $\langle x_0 \ll x_1 \rangle, \langle y_0 \ll y_2 \ll y_3 \rangle, \langle z_0 \ll z_1 \ll z_3 \rangle $. 

We define the graph characterization based on a given version order. Consider a history $H$ and a version order $\ll$. We then define a graph (called opacity graph) on $H$ using $\ll$, denoted as $\opg{H}{\ll} = (V, E)$. The vertex set $V$ consists of a vertex for each transaction $T_i$ in $\overline{H}$. The edges of the graph are of three kinds and are defined as follows:

\begin{enumerate}	
	\item \textit{\rt}(real-time) edges: If $T_i$ commits before $T_j$ starts in $H$, then there is an edge from $v_i$ to $v_j$. This set of edges are referred to as $\rtx(H)$.
	
	\item \textit{\rf}(reads-from) edges: If $T_j$ reads $x$ from $T_i$ in $H$, then there is an edge from $v_i$ to $v_j$. Note that in order for this to happen, $T_i$ must have committed before $T_j$ and $c_i <_H r_j(x)$. This set of edges are referred to as $\rf(H)$.
	
	\item \textit{\mv}(multiversion) edges: The \mv{} edges capture the multiversion relations and is based on the version order. Consider a successful read \op{} $r_k(x,v)$ and the write \op{} $w_j(x,v)$ belonging to transaction $T_j$ such that $r_k(x,v)$ reads $x$ from $w_j(x,v)$ (it must be noted $T_j$ is a committed transaction and $c_j <_H r_k$). Consider a committed transaction $T_i$ which writes to $x$, $w_i(x, u)$ where $u \neq v$. Thus the versions created $x_i, x_j$ are related by $\ll$. Then, if $x_i \ll x_j$ we add an edge from $v_i$ to $v_j$. Otherwise ($x_j \ll x_i$), we add an edge from $v_k$ to $v_i$. This set of edges are referred to as $\mv(H, \ll)$.
\end{enumerate}

Using the construction, the $\opg{H2}{\ll_{H2}}$ for history $H2$ and $\ll_{H2}$ is shown in \figref{opg}. The edges are annotated. The only \mv{} edge from $T4$ to $T3$ is because of \tobj{s} $y, z$. $T4$ reads value 5 for $z$ from $T1$ whereas $T3$ also writes 15 to $z$ and commits before $r_4(z)$. 

\begin{figure}[tbph]
	\centerline{\scalebox{0.7}{\input{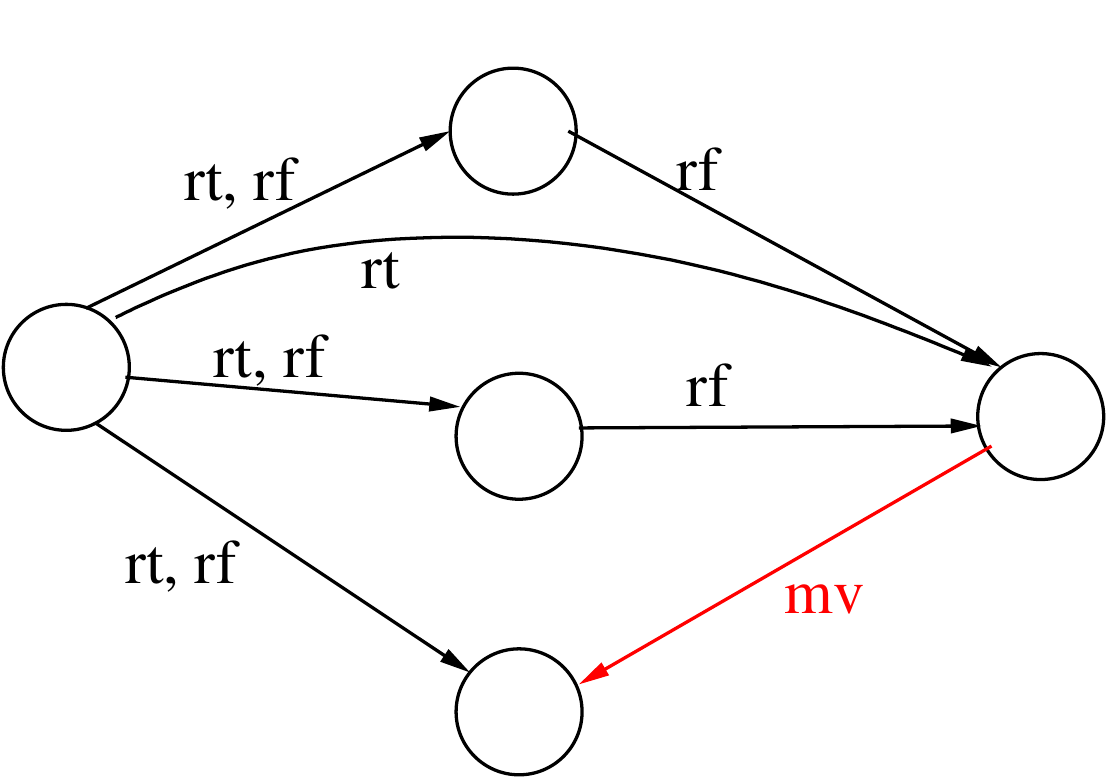_t}}}
	\captionsetup{justification=centering}
	\caption{$\opg{H2}{\ll_{H2}}$}
	\label{fig:opg}
\end{figure}

Kumar et al \cite{Kumar+:MVTO:ICDCN:2014} showed that if a version order $\ll$ exists for a history $H$ such that $\opg{H}{\ll_H}$ is acyclic, then $H$ is \opq. This is captured in the following result.

\ignore{
	\begin{result}
		\label{res:main-opg}
		A \valid{} history $H$ is opaque iff there exists a version order $\ll_H$ such that $\opg{H}{\ll_H}$ is acyclic.
	\end{result}
	
	\noindent This result can be extended to characterize \lo using graphs with the following theorem. The proof is in Appendix \thmref{log}. 
	
	\begin{theorem}
		\label{thm:main-log}
		A \valid{} history $H$ is \lopq iff for each sub-history $sh$ in $\shset{H}$ there exists a version order $\ll_{sh}$ such that $\opg{sh}{\ll_{sh}}$ is acyclic. Formally, $\langle (H \text{ is \lopq}) \Leftrightarrow (\forall sh \in \shset{H}, \exists \ll_{sh}: \opg{sh}{\ll_{sh}} \text{ is acyclic}) \rangle$. 
	\end{theorem}
}

\begin{result}
	\label{res:opg}
	A \valid{} history $H$ is opaque iff there exists a version order $\ll_H$ such that $\opg{H}{\ll_H}$ is acyclic.
\end{result}

\noindent This result can be easily extended to prove \lo as follows

\begin{theorem}
	\label{thm:log}
	A \valid{} history $H$ is \lopq iff for each sub-history $sh$ in $\shset{H}$ there exists a version order $\ll_{sh}$ such that $\opg{sh}{\ll_{sh}}$ is acyclic. Formally, $\langle (H \text{ is \lopq}) \Leftrightarrow (\forall sh \in \shset{H}, \exists \ll_{sh}: \opg{sh}{\ll_{sh}} \text{ is acyclic}) \rangle$. 
\end{theorem}

\begin{proof}
	To prove this theorem, we have to show that each sub-history $sh$ in $\shset{H}$ is \valid. Then the rest follows from \resref{opg}. Now consider a sub-history $sh$. Consider any read \op $r_i(x, v)$ of a transaction $T_i$. It is clear that $T_i$ must have read a version of $x$ created by a previously committed transaction. From the construction of $sh$, we get that all the transaction that committed before $r_i$ are also in $sh$. Hence $sh$ is also \valid. 
	
	Now, proving $sh$ to be \opq iff there exists a version order $\ll_{sh}$ such that $\opg{sh}{\ll_{sh}}$ is acyclic follows from \resref{opg}. 
\end{proof} 


\section{The Working of \ksftm Algorithm}
\label{sec:idea}

In this section, we propose \emph{K-version \stf STM} or \emph{\ksftm} for a given parameter $K$. Here $K$ is the number of versions of each \tobj and can range from 1 to $\infty$. When $K$ is 1, it boils down to single-version \emph{\stf} STM. If $K$ is $\infty$, then \ksftm uses unbounded versions and needs a separate garbage collection mechanism to delete old versions like other \mvstm{s} proposed in the literature \cite{Kumar+:MVTO:ICDCN:2014,LuScott:GMV:DISC:2013}. We denote \ksftm using unbounded versions as \emph{\mvsftm} and \mvsftm with garbage collection as \emph{\mvsftmgc}. 

Next, we describe some \emph{\stfdm} preliminaries in \subsecref{prelim} to explain the working of \ksftm algorithm. To explain the intuition behind the \ksftm algorithm, we start with the modification of \mvto \cite{BernGood:1983:MCC:TDS,Kumar+:MVTO:ICDCN:2014} algorithm in \subsecref{mvto}. We then make a sequence of modifications to it to arrive at \ksftm algorithm. 

\subsection{\emph{Starvation-Freedom} Preliminaries}
\label{subsec:prelim}

In this section, we start with the definition of \emph{\stfdm}. Then we describe the invocation of transactions by the application. Next, we describe the data structures used by the algorithms. 


\begin{definition}
	\label{defn:stf}
	\textbf{Starvation-Freedom:} A STM system is said to be \stf if a thread invoking a non-parasitic transaction $T_i$ gets the opportunity to retry $T_i$ on every abort, due to the presence of a fair scheduler, then $T_i$ will eventually commit.
\end{definition}

As explained by Herlihy \& Shavit \cite{HerlihyShavit:Progress:Opodis:2011}, a fair scheduler implies that no thread is forever delayed or crashed. Hence with a fair scheduler, we get that if a thread acquires locks then it will eventually release the locks. Thus a thread cannot block out other threads from progressing. 

\ignore{
	\noindent
	\textbf{\stfdm:} A STM system is said to be \stf if a thread invoking a transaction $T_i$ gets the opportunity to retry $T_i$ on every abort due to the presence of a fair scheduler then $T_i$ will eventually commit (as described in \secref{intro}). As explained by Herlihy \& Shavit \cite{HerlihyShavit:AMP:Book:2012}, a fair scheduler implies that no thread is forever delayed or crashed. Hence with this assumption of a fair scheduler, we get that if in the course of execution a thread acquires locks then it will eventually release the locks. Since it is not forever delayed, it cannot block out other threads from progressing. 
}

\vspace{1mm}
\noindent
\textbf{Assumption about Scheduler: }
In order for \stf algorithm \ksftm (described in \subsecref{ksftm}) to work correctly, we make the following assumption about the fair scheduler: 
\begin{assumption}
	\label{asm:bdtm}
	\textbf{Bounded-Termination}: For any transaction $T_i$, invoked by a thread $Th_x$, the fair system \schdr ensures, in the absence of deadlocks, $Th_x$ is given sufficient time on a CPU (and memory etc.) such that $T_i$ terminates (either commits or aborts) in bounded time.
\end{assumption}
While the bound for each transaction may be different, we use $L$ to denote the maximum bound. In other words, in time $L$, every transaction will either abort or commit due to the absence of deadlocks. 

There are different ways to satisfy the scheduler requirement. For example, a round-robin scheduler which provides each thread equal amount of time in any window satisfies this requirement as long as the number of threads is bounded. In a system with two threads, even if a scheduler provides one thread 1\% of CPU and another thread 99\% of the CPU, it satisfies the above requirement. On the other hand, a scheduler that schedules the threads as `$T_1, T_2, T_1, T_2, T_2, T_1, T_2$, $T_2, T_2, T_2$, $T_1, T_2, T_2$, $T_2, T_2, T_2, T_2, T_2, T_2, T_1, T_2 (16 times) $' does not satisfy the above requirement. This is due to the fact that over time, thread 1 gets infinitesimally smaller portion of the CPU and, hence, the time required for it to complete (commit or abort) will continue to increase over time. 

In our algorithm, we will ensure that it is deadlock free using standard techniques from the literature. In other words, each thread is in a position to make progress. We assume that the scheduler provides sufficient CPU time to complete (either commit or abort) within a bounded time.

As explained by Herlihy \& Shavit \cite{HerlihyShavit:Progress:Opodis:2011}, a fair scheduler implies that no thread is forever delayed or crashed. Hence with a fair scheduler, we get that if a thread acquires locks then it will eventually release the locks. Thus a thread cannot block out other threads from progressing. 

\ignore{
\noindent
\textbf{\stfdm:} A STM system is said to be \emph{\stf} if a thread invoking a transaction $T_i$ gets the opportunity to retry $T_i$ on every abort due to the presence of a fair scheduler then $T_i$ will eventually commit (as described in \secref{intro}). As explained by Herlihy \& Shavit \cite{HerlihyShavit:AMP:Book:2012}, a fair scheduler implies that no thread is forever delayed or crashed. Hence with this assumption of a fair scheduler, we get that if in the course of execution a thread acquires locks then it will eventually release the locks. Since it is not forever delayed, it cannot block out other threads from progressing. 
}

\noindent
\textbf{Transaction Invocation:} Transactions are invoked by threads. Suppose a thread $Th_x$ invokes a transaction $T_i$. If this transaction $T_i$ gets \emph{aborted}, $Th_x$ will reissue it, as a new incarnation of $T_i$, say $T_j$. 
The thread $Th_x$ will continue to invoke new \inc{s} of $T_i$ until an \inc commits. 


When the thread $Th_x$ invokes a transaction, say $T_i$, for the first time then the STM system assigns $T_i$ a unique timestamp called \emph{current timestamp or \cts}. If it aborts and retries again as $T_j$, then its \cts will change. However, in this case, the thread $Th_x$ will also pass the \cts value of the first incarnation ($T_i$) to $T_j$. By this, $Th_x$ informs the STM system that, $T_j$ is not a new invocation but is an \inc of $T_i$.



We denote the \cts of $T_i$ (first \inc) as \emph{Initial Timestamp or \its} for all the \inc{s} of $T_i$. Thus, the invoking thread $Th_x$ passes $\tcts{i}$ to all the \inc{s} of $T_i$ (including $T_j$). Thus for $T_j$, $\tits{j} = \tcts{i}$. The transaction $T_j$ is associated with the timestamps: $\langle \tits{j}, \tcts{j} \rangle$. For $T_i$, which is the initial \inc, its \its and \cts are the same, i.e., $\tits{i} = \tcts{i}$. For simplicity, we use the notation that for transaction $T_j$, $j$ is its \cts, i.e., $\tcts{j} = j$.
\begin{figure}[H]
	\centerline{\scalebox{0.50}{\input{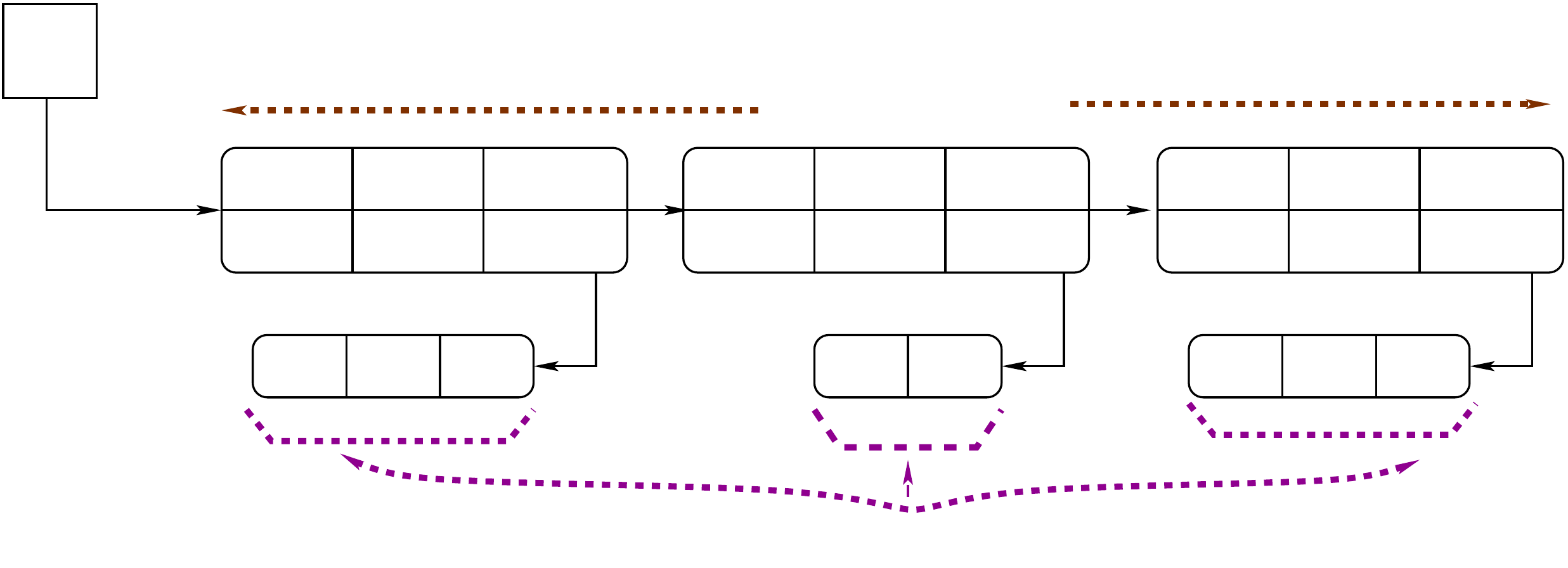_t}}}
	\captionsetup{justification=centering}	
	\caption{Data Structures for Maintaining Versions}
	\label{fig:tobj}
\end{figure}

We also assume that in the absence of other concurrent conflicting transactions, every transaction will commit. In other words, if a transaction is executing in a system where other concurrent conflicting transactions are not present then it will not self-abort. If transactions can self-abort then providing \emph{\stfdm} is impossible.

\ignore{
\color{red}
We assume that a transaction $T_i$ on getting \emph{aborted} will be retried by the invoking thread $Th_x$, as a new transaction, say $T_j$. $T_i$ and $T_j$ are said to \emph{incarnations} of each other. The thread $Th_x$ will continue to invoke a new \inc{s} of $T_i$ until an \inc commits. Thus, we can order all the \inc{s} of $T_i$ based on the order in which they have been invoked. 

Whenever a transaction $T_i$ begins, it is assigned a unique timestamp by the STM system called as current timestamp or \emph{\cts}. $T_i$ on getting \emph{aborted} will be retried by the invoking thread $Th_x$, as a new transaction, say $T_j$. While invoking $T_j$, we assume that $Th_x$ will pass the \cts of the first \inc of $T_j$, which in this case is suppose $T_i$, to the STM system. By this, $Th_x$ informs the STM system this $T_j$ is not the initial invocation and is an \inc of $T_i$. In this case, we denote the \cts of $T_i$ as \emph{Initial Timestamp} or \emph{\its} for all the \inc{s} of $T_i$. Any \inc of $T_i$, $T_j$ is characterized by : $\langle \its_j, \cts_j \rangle$. For $T_i$ both its \its and \cts are the same.

\color{black}
}

\noindent
\textbf{Common Data Structures and STM Methods:} Here we describe the common data structures used by all the algorithms proposed in this section. For each \tobj, the algorithms maintain multiple versions in $version$-$list$ (or \emph{\vlist}) using list. Similar to versions in \mvto \cite{Kumar+:MVTO:ICDCN:2014}, each version of a \tobj{} is a tuple denoted as \emph{\vtup} and consists of three fields: (1) timestamp, (or $ts$) of the transaction that created this version which normally is the \cts; (2) the value (or $val$) of the version; (3) a list, called \rlist{} (or $rl$), consisting of transactions ids (can be \cts as well) that read from this version. The \rlist of a version is initially empty. \figref{tobj} illustrates this structure. For a \tobj $x$, we use the notation $x[t]$ to access the version with timestamp $t$. Depending on the algorithm considered, the fields change of this structure. 

The algorithms have access to a global atomic counter, $\gtcnt$ used for generating timestamps in the various transactional \mth{s}. We assume that the STM system exports the following \mth{s} for a transaction $T_i$: (1) $\begt(t)$ where $t$ is provided by the invoking thread, $Th_x$. From our earlier assumption, it is the \cts of the first \inc. In case $Th_x$ is invoking this transaction for the first time, then $t$ is $null$. This \mth returns a unique timestamp to $Th_x$ which is the \cts/id of the transaction. (2) $\tread_i(x)$ tries to read  \tobj $x$. It returns either value $v$ or $\abort$. (3) $\twrite_i(x,v)$ operation that updates a \tobj $x$ with value $v$ locally. It returns $ok$. (4) $\tryc_i()$ tries to commit the transaction and returns $\commit$ if it succeeds. Otherwise, it returns $\abort$. 

\ignore{
In addition to these global structures, for each transaction $T_i$, \pkto maintains structure that are local to $T_i$:
\begin{itemize}
	\item $\rs_i$(read-set): It is a list of data tuples or \emph{\dtup} of the form $\langle x, val \rangle$, where $x$ is the t-object and $v$ is the value read by the transaction $T_i$. We refer to a tuple in $T_i$'s read-set by $\rs_i[x]$.
	\item $\ws_i$(write-set): It is a list of \dtup{s} where $x$ is the \tobj{} to which transaction $T_i$ writes the value $val$. Similarly, we refer to a tuple in $T_i$'s write-set by $\ws_i[x]$.
\end{itemize}
\noindent Next, we describe the \mth{s} exported by \pkto. Here in this discussion, we use the notation that a transaction $T_i$ has its \cts as $i$. 
}

\noindent
\textbf{Correctness Criteria:} 
For ease of exposition, we initially consider \stsbty as \emph{\cc} to illustrate the correctness of the algorithms. But \stsbty does not consider the correctness of \emph{aborted} transactions and as a result not a suitable \emph{\cc} for STMs. Finally, we show that the proposed STM algorithm \ksftm satisfies \lopty, a \emph{\cc} for STMs (described in \secref{model}).  We denote the set of histories generated by an STM algorithm, say $A$, as $\gen{A}$.

\subsection{Motivation for Starvation Freedom in Multi-Version Systems
}
\label{sec:sftm}

In this section, first we describe the starvation freedom solution used for single version i.e. \sftm algorithm and then the drawback of it. 
\subsubsection{Illustration of \sftm}
\label{subsec:sftm-main}

Forward-oriented optimistic concurrency control protocol (\focc), is a commonly used optimistic algorithm in databases \cite[Chap 4]{WeiVoss:2002:Morg}. In fact, several STM Systems are also based on this idea. In a typical STM system (also in database optimistic concurrency control algorithms), a transaction execution is divided can be two phases - a \emph{\rwph} and \emph{\tryph} (also referred to as validation phase in databases). The various algorithms differ in how the \tryph{} executes. Let the write-set or \ws{} and read-set or \rs{} of a $t_i$ denotes the set of \tobj{s} written \& read by $t_i$. In \focc{} a transaction $t_i$ in its \tryph{} is validated against all live transactions that are in their \rwph{} as follows: $\langle \ws(t_i) \cap (\forall t_j: \rs^{n}(t_j)) = \Phi \rangle$. This implies that the \ws{} of $t_i$ can not have any conflict with the current \rs{} of any transaction $t_j$ in its \rwph. Here $\rs^{n}(t_j)$ implies the \rs{} of $t_j$ till the point of validation of $t_i$. If there is a conflict, then either $t_i$ or $t_j$ (all transactions conflicting with $t_i$) is aborted. A commonly used approach in databases is to abort $t_i$, the validating transaction. 

In \sftm{} we use \emph{\ts{s}} which are monotonically in increasing order. We implement the \ts{s} using atomic counters. Each transaction $t_i$ has two time-stamps: (i) \emph{current time-stamp or \cts}: this is a unique \ts{} alloted to $t_i$ when it begins; (ii) \emph{initial time-stamp or \its}: this is same as \cts{} when a transaction $t_i$ starts for the first time. When $t_i$ aborts and re-starts later, it gets a new \cts. But it retains its original \cts{} as \its. The value of \its{} is retained across aborts. For achieving starvation freedom, \sftm{} uses \its{} with a modification to \focc{} as follows: a transaction $t_i$ in \tryph{} is validated against all other conflicting transactions, say $t_j$ which are in  their \rwph. The \its{} of $t_i$ is compared with the \its{} of any such transaction $t_j$. If \its{} of $t_i$ is smaller than \its{} of all such $t_j$, then all such $t_j$ are aborted while $t_i$ is committed. Otherwise, $t_i$ is aborted. We show that \sftm{} satisfies \opty{} and \stf. 

\begin{theorem}
	\label{thm:sftm-correct}
	Any history generated by \sftm{} is \opq.
\end{theorem}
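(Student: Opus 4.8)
The plan is to invoke the graph characterization of opacity, \resref{opg}: it suffices to produce, for every history $H$ generated by \sftm{}, a version order $\ll_H$ for which $\opg{H}{\ll_H}$ is acyclic. I would first dispense with \validity. Since \sftm{} executes optimistically over a single version, a transaction's writes become visible only at its commit, and each \tread{} returns the value of the current (i.e.\ latest committed) version of the \tobj. Hence every successful read reads from a committed transaction, so $H$ is \legal{} and therefore \valid. Because \sftm{} keeps a single version per \tobj, the natural version order is by commit time: for each \tobj $x$, set $x_i \ll_H x_j$ iff $c_i <_H c_j$, where $T_i,T_j$ are the committed writers of $x$.

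To show $\opg{H}{\ll_H}$ is acyclic I would proceed in two stages. First I would check that the subgraph induced by committed transactions respects commit order. An $\rtx$ edge $v_i\to v_j$ holds when $T_i$ completes before $T_j$ begins, so $c_i<_H c_j$; an $\rf$ edge $v_i\to v_j$ holds when $c_i<_H r_j$, so again $c_i<_H c_j$; and an $\mv$ edge $v_i\to v_j$ in the case $x_i\ll_H x_j$ means $c_i<_H c_j$ by definition of $\ll_H$. Thus every such edge increases commit time, and the committed subgraph is acyclic.

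The crux — and the main obstacle — is the remaining $\mv$ case, where the forward-validation rule of \sftm{} must be used. Consider $r_k(x,v)$ reading version $x_j$ from committed $T_j$, together with another committed writer $T_i$ of $x$ with $x_j\ll_H x_i$ (so $c_j<_H c_i$); here the edge is $v_k\to v_i$, and I must show $T_k$ is ordered before $c_i$. Because reads return the latest committed version and $T_k$ read $x_j$ while $c_j<_H c_i$, necessarily $r_k(x)<_H c_i$ (otherwise $T_k$ would have read $x_i$ or a later version). Hence when $T_i$ entered its \tryph{}, the live transaction $T_k$ already had $x\in\ws(T_i)\cap\rs(T_k)$, a conflict detected against transactions in their \rwph. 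By \sftm's forward validation together with its \its{} comparison rule, $T_i$ could commit only if its \its{} was smaller than that of every conflicting live reader, in which case all such readers are aborted. Therefore a committed $T_k$ must have finished earlier, i.e.\ $c_k<_H c_i$, and an aborted $T_k$ is cut off no later than $c_i$ with all its reads preceding $c_i$; either way the $\mv$ edge $v_k\to v_i$ is consistent with the commit-time ordering.

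Finally I would incorporate the aborted and incomplete transactions of $\overline{H}$, which (since they have no effective writes, and $\rtx$ edges emanate only from committed transactions) carry only outgoing $\mv$ edges into committed writers and incoming $\rf$/$\rtx$ edges from committed transactions. I would assign each such $T_k$ a serialization point just after the commit of the latest transaction it reads from; the argument above shows this point lies strictly before the commit of any $T_i$ to which $T_k$ has an $\mv$ edge, so no cycle can pass through $T_k$. The one delicate point, where I expect to spend the most care, is verifying that a consistent such point genuinely exists: this requires that no committed transaction overwrites a \tobj in $T_k$'s read set during $T_k$'s lifetime, which is exactly what the \its{} abort rule enforces by aborting any live reader whose read object is being overwritten. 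With every edge of $\opg{H}{\ll_H}$ increasing the assigned order, the graph is acyclic, and \resref{opg} yields that $H$ is \opq.
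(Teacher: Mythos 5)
Your proof is correct in substance and belongs to the same family as the paper's argument, but it takes a more concrete route: the paper disposes of Theorem~\ref{thm:sftm-correct} with a one-sentence sketch, asserting that the (single-version) conflict graph of any \sftm{} history is acyclic, whereas you instantiate the paper's own multi-version characterization, \resref{opg}, with the commit-order version order and check acyclicity edge type by edge type. Since \sftm{} keeps one version per \tobj, the two graphs essentially coincide, but your route buys two things the sketch leaves implicit: it reuses a result the paper already states, and it explicitly serializes aborted and incomplete transactions, which \opty{} demands and which bare conflict-serializability of committed transactions does not give. Two small repairs are worth recording. First, in the \sftm{} pseudocode a committing writer resets $x.rl$, so a reader $T_k$ of version $x_j$ need not be visible to a later writer $T_i$ at $T_i$'s validation if another writer of $x$ committed in between; your claim that a committed $T_k$ satisfies $c_k <_H c_i$ (and that a live $T_k$ is invalidated before $c_i$) should be argued against the \emph{first} committed overwriter of $x_j$ and then propagated forward, which works because that first overwriter also commits before $c_i$. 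Second, with your chosen serialization point for an aborted $T_k$ (just after its latest reads-from commit), it is not literally true that every edge of $\opg{H}{\ll_H}$ increases the assigned order: an \rtx{} edge into $T_k$ may originate at a transaction that commits after $T_k$'s latest reads-from source. What your argument does establish --- and what suffices, since aborted vertices have no edges between one another --- is that the source of every incoming edge of $T_k$ commits before the target of every outgoing \mv{} edge of $T_k$ (both precede $c_i$ because all of $T_k$'s successful reads precede its invalidation, which precedes $c_i$), so no cycle can pass through an aborted vertex.
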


\begin{theorem}
	\label{thm:sftm-stf}
	\sftm{} ensure \stfdm.
\end{theorem}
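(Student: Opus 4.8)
\textbf{Proof proposal for \thmref{sftm-stf}.} The plan is to exploit the monotonicity of the \its{} counter together with the \bdtm{} assumption (\asmref{bdtm}) to show that every non-parasitic transaction that is retried on each abort eventually wins all its conflicts and commits. The crucial observation is that distinct application-transactions receive distinct \its{} values: since \its{} equals the \cts{} of the first incarnation and \cts{}s are drawn from the monotonically increasing atomic counter $\gtcnt$, while an incarnation \emph{retains} its \its{} across aborts. Consequently, once a transaction $T_i$ has obtained $\tits{i}$, every transaction whose first incarnation begins later is assigned a strictly larger \its; no transaction can ever ``sneak in'' with a smaller \its{} than one already present in the system.

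First I would record two facts. (i) The validation rule of \sftm{} never lets a committing transaction abort a conflicting transaction in its \rwph{} that has a smaller \its: a transaction in its \tryph{} commits (aborting the conflicting read-phase transactions) only when its \its{} is smaller than that of all such transactions, and otherwise aborts itself. Hence a transaction that currently has the smallest \its{} among all live conflicting transactions can be aborted neither while it is in its \rwph{} (any committer conflicting with it has a larger \its{} and therefore aborts itself) nor while it is in its \tryph{} (no conflicting read-phase transaction has a smaller \its, so it commits). (ii) By \asmref{bdtm}, and because a non-parasitic transaction does not self-abort in the absence of conflict-induced aborts, such a smallest-\its{} transaction reaches its \tryph{} and commits within the bounded time $L$.

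The main argument is then a strong induction on the \its{} value. Fix the target transaction $T_i$ with value $\tits{i}$, and assume as induction hypothesis that every application-transaction with \its{} strictly smaller than $\tits{i}$ eventually commits. There are only finitely many such transactions, because only finitely many counter values lie below $\tits{i}$ and each \its{} identifies a unique application-transaction; hence there is a finite time $\tau$ by which all of them have committed, and (being committed) they never reappear. After $\tau$, every transaction conflicting with $T_i$ has \its{} larger than $\tits{i}$: those with smaller \its{} have departed, and by monotonicity no newly started transaction can have a smaller \its. Therefore, from $\tau$ onward $T_i$ is the smallest-\its{} transaction among its live conflicting transactions, so by facts (i)--(ii) its next incarnation that begins after $\tau$ is never aborted and commits within time $L$. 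This closes the induction (the base case, the globally smallest \its, has a vacuous hypothesis), and applying it to $T_i$ yields the claim of Definition~\ref{defn:stf}.

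The step I expect to be the main obstacle is making fact (i) airtight against the concurrency of the \tryph. One must argue that the commit/abort decisions are effectively serialized (e.g.\ by the locks the algorithm acquires during \tryph) so that ``the smallest \its{} among live conflicting transactions'' is well defined at the validation point, and that no two committers can abort each other. A secondary care point is confirming that no conflicting transaction with \its{} smaller than $\tits{i}$ can be present after $\tau$; this is exactly where the monotonicity of \its{} and its retention across incarnations is indispensable, and it is what distinguishes \its-based priority from \cts-based priority, the latter of which would let a freshly restarted transaction acquire higher priority and thereby break the induction.
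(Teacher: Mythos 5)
Your proposal is correct and takes essentially the same route as the paper: the paper's own justification of \thmref{sftm-stf} is a one-sentence sketch stating that \stfdm{} follows because for each transaction there eventually exists a global state in which it has the smallest \its{} (and such a transaction cannot be aborted), which is exactly the core of your facts (i)--(ii) combined with your induction. Your write-up simply supplies the details the paper leaves implicit --- monotonicity and retention of \its{}, finiteness of the set of smaller-\its{} transactions, and \asmref{bdtm} to guarantee bounded termination of each incarnation.
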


We prove the correctness by showing that the conflict graph \cite[Chap 3]{WeiVoss:2002:Morg}, \cite{KuzSat:NI:ICDCN:2014} of any history generated by \sftm{} is acyclic. We show \stfdm{} by showing that for each transaction $t_i$ there eventually exists a global state in which it has the smallest \its. 

\cmnt{
\subsection{Illustration of \sftm}
\label{sec:sftm-illus}
}
\figref{sftmex} shows the a sample execution of \sftm. It compares the execution of \focc{} with \sftm. The execution on the left corresponds to \focc, while the execution one the right is of \sftm{} for the same input. It can be seen that each transaction has two \ts{s} in \sftm. They correspond to \cts, \its{} respectively. Thus, transaction $T_{1,1}$ implies that \cts{} and \its{} are $1$. In this execution, transaction $T_3$ executes the read \op{} $r_3(z)$ and is aborted due to conflict with $T_2$. The same happens with $T_{3,3}$. Transaction $T_5$ is re-execution of $T_3$. With \focc{} $T_5$ again aborts due to conflict with $T_4$. In case of \sftm{}, $T_{5,3}$ which is re-execution of $T_{3,3}$ has the same \its{} $3$. Hence, when $T_{4,4}$ validates in \sftm, it aborts as $T_{5,3}$ has lower \its. Later $T_{5,3}$ commits. 

It can be seen that \its{s} prioritizes the transactions under conflict and the transaction with lower \its{} is given higher priority. 

\begin{figure}[tbph]
\centerline{\scalebox{0.5}{\input{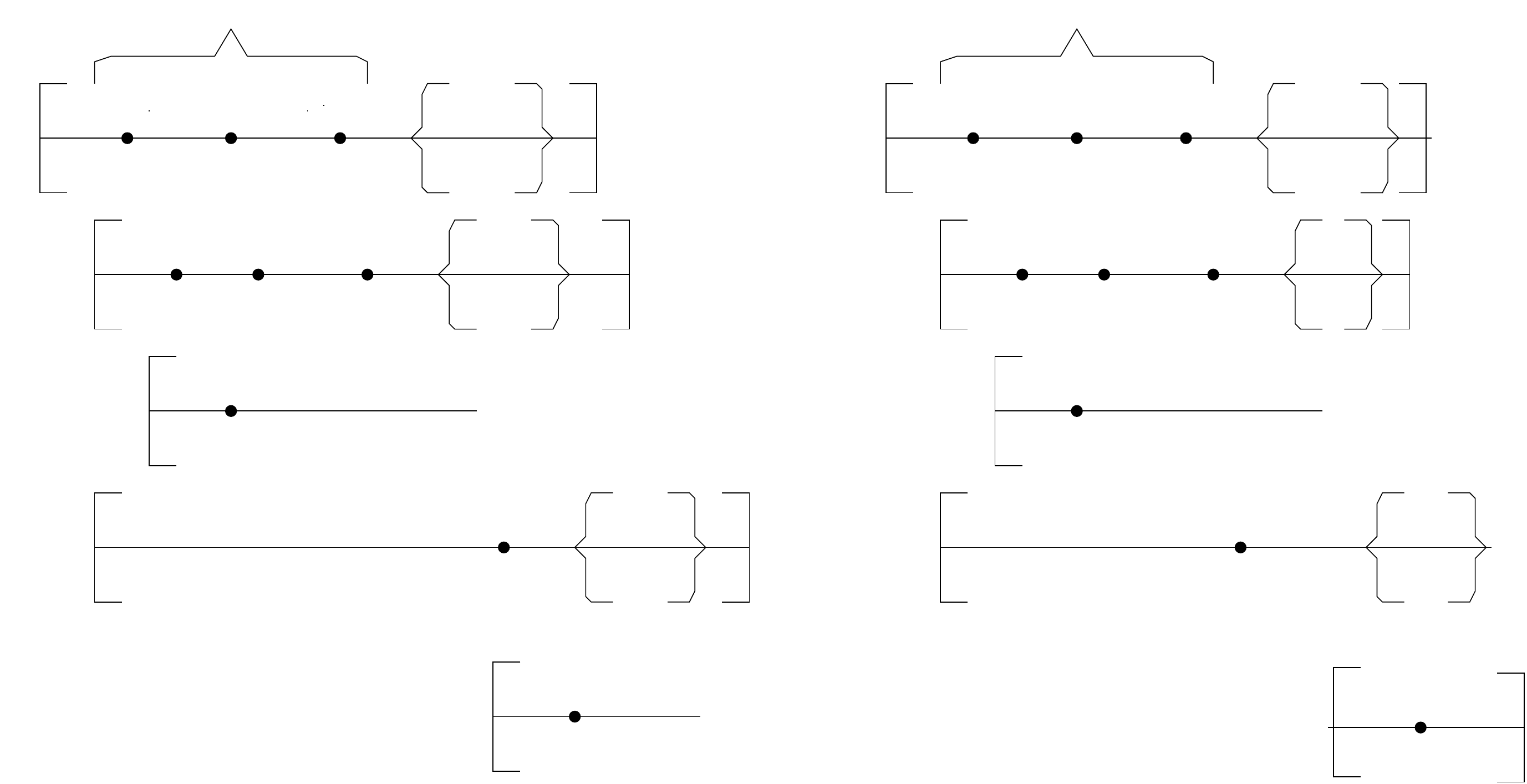_t}}}
\caption{Sample execution of \sftm}
\label{fig:sftmex}
\end{figure}


\subsubsection{Drawback of \sftm}
\label{subsec:sftm-drawback}

Figure \ref{fig:stmvl1} is representing history H: $r_1(x,0)r_1(y,0) 
w_2(x,10)w_3(y,15)a_2 a_3 c_1$ It has three transactions $T_1$, $T_2$ and 
$T_3$. $T_1$ is having lowest time stamp and after reading it became slow. 
$T_2$ and $T_3$ wants to write to $x$ and $y$ respectively but when it came 
into validation phase, due to $r_1(x)$, $r_1(y)$ and not committed yet, $T_2$ 
and $T_3$ gets aborted. However, when we are using multiple version $T_2$ and 
$T_3$ both can commit and $T_1$ can also read from $T_0$. The equivalent serial 
history is $T_1 T_2 T_3$.

\begin{figure}[H]
	\center
    \scalebox{0.55}{\input{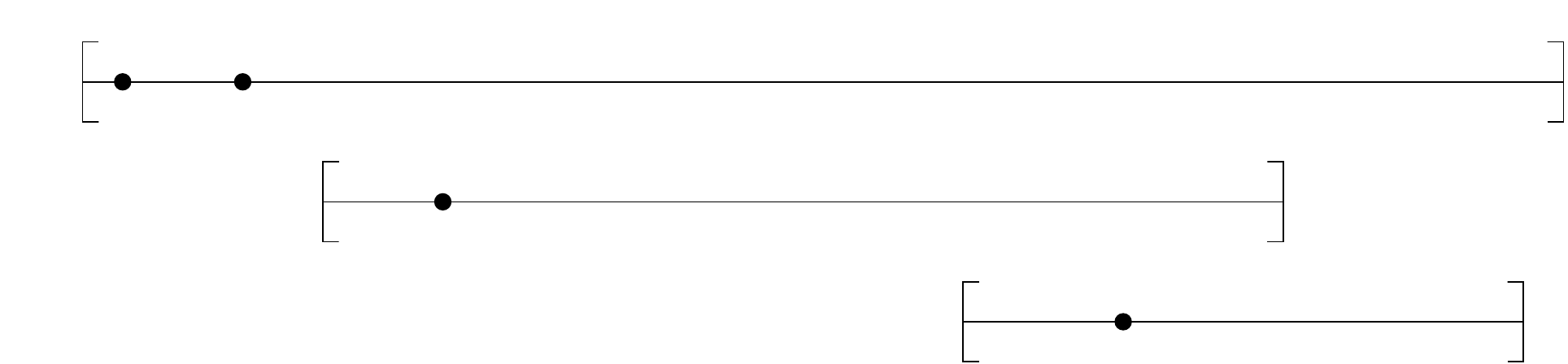_t}}
	\caption{Pictorial representation of execution under SFTM}
	\label{fig:stmvl1}
\end{figure}
\subsubsection{Data Structures and Pseudocode of \sftm}
\label{apn:SFTM}

We start with data-structures that are local to each transaction. For each transaction $T_i$:

\begin{itemize}
	\item $\rs_i$(read-set): It is a list of data tuples ($d\_tuples$) of the form $\langle x, val \rangle$, where $x$ is the t-object and $v$ is the value read by the transaction $T_i$. We refer to a tuple in $T_i$'s read-set by $\rs_i[x]$.
	
	\item $\ws_i$(write-set): It is a list of ($d\_tuples$) of the form $\langle x, val \rangle$, where $x$ is the \tobj{} to which transaction $T_i$ writes the value $val$. Similarly, we refer to a tuple in $T_i$'s write-set by $\ws_i[x]$.
\end{itemize}

In addition to these local structures, the following shared global structures are maintained that are shared across transactions (and hence, threads). We name all the shared variable starting with `G'. 

\begin{itemize}
	\item $\gtcnt$ (counter): This a numerical valued counter that is incremented when a transaction begins.

\end{itemize}

\noindent For each transaction $T_i$ we maintain the following shared time-stamps:
\begin{itemize}
	\item $\glock_i$: A lock for accessing all the shared variables of $T_i$.
	
	\item $\gits_i$ (initial timestamp): It is a time-stamp assigned to $T_i$ when it was invoked for the first time.
	
	\item $\gcts_i$ (current timestamp): It is a time-stamp when $T_i$ is invoked again at a later time. When $T_i$ is created for the first time, then its $\gcts$ is same as its $its$.

	\item $\gval_i$: This is a boolean variable which is initially true ($T$). If it becomes false ($F$) then $T_i$ has to be aborted.
	
	\item $\gstat_i$: This is a variable which states the current value of $T_i$. It has three states: \texttt{live}, \texttt{commit} or \texttt{abort}. 
	
\end{itemize}

\noindent For each data item $x$ in history $H$, we maintain:
\begin{itemize}
	\item $x.val$ (value): It is the successful previous closest value written by any transaction.
	
	\item $x.rl$ (readList): It is the read list consists of all the transactions that have read $x$. 
	
\end{itemize}

\begin{algorithm} [H]
	\caption{STM $\init()$: Invoked at the start of the STM system. Initializes all the data items used by the STM System} \label{alg:init} 
	\begin{algorithmic}[1]
		\State $\gtcnt$ = 1;
		\ForAll {data item $x$ used by the STM System}
		
		\State add $\langle 0, nil \rangle$ to $x.val$;\Comment{ $T_0$ is initializing $x$} \label{lin:t0-init} 
		\EndFor;
	\end{algorithmic}
\end{algorithm}

\begin{algorithm} 
	\label{alg:begin} 
	\caption{STM $\begt(its)$: Invoked by a thread to start a new transaction  $T_i$. Thread can pass a parameter $its$ which is the initial timestamp when  this transaction was invoked for the first time. If this is the first invocation then $its$ is $nil$. It returns the tuple $\langle id, \gcts 
		\rangle$}
	\begin{algorithmic}[1]
		\State $i$ = unique-id;
        \Comment{An unique id to identify this transaction. It could be same as $\gcts$.} 
		\If {($its == nil$)} 
		\State $\gits_i = \gcts_i = \gtcnt.get\&Inc()$; 
		\State \Comment{$\gtcnt.get\&Inc()$ returns the current value of $\gtcnt$ and atomically increments it by 1.}
		
		\Else 
		\State $\gits_i = its$;
		\State $\gcts_i = \gtcnt.get\&Inc()$; 
		\EndIf 
		\State $\rs_i = \ws_i = null$;
		\State $\gstat_i$ = \texttt{live};
		\State $\gval_i = T$;
		\State return $\langle i, \gcts_i\rangle$
	\end{algorithmic}
\end{algorithm}

\begin{algorithm}
	\label{alg:read} 
	\caption{STM $read(i, x)$: Invoked by a transaction $T_i$ to read $x$. It returns either the value of $x$ or $\mathcal{A}$}
	\begin{algorithmic}[1]  
		\If {($x \in \ws_i$)} \Comment{Check if $x$ is in $\ws_i$}
				\State return $\ws_i[x].val$;
		\ElsIf {($x \in \rs_i$)} \Comment{Check if $x$ is in $\rs_i$}
				\State return $\rs_i[x].val$;
		\Else \Comment{$x$ is not in $\rs_i$ and $\ws_i$}         
		\State lock $x$; 
		
		\State lock $\glock_i$;
		\If {$(\gval_i == F)$} 
		\State return $abort(i)$; \label{lin:rabort}
		\EndIf    
		
		
		\cmnt
		{
			\State /* \findsl: From $x.\vl$, returns the smallest \ts value greater 
			than $\gwts_i$. If no such version exists, it returns $nil$ */
			\State $nextVer  = \findsl(\gwts_i,x)$;    
			\If {$(nextVer \neq nil)$}
			\State \Comment{Ensure that $\tutl_i$ remains smaller than $nextVer$'s \vltl}
			\State $\tutl_i = min(\tutl_i, x[nextVer].vltl-1)$; 
			\EndIf
			
			\State \Comment{$\tltl_i$ should be greater than $x[curVer].\vltl$}
			\State $\tltl_i = max(\tltl_i, x[curVer].\vltl + 1)$;
			\If {($\tltl_i > \tutl_i$)} \Comment{If the limits have crossed each other, then $T_i$ is aborted}
			\State return abort(i);
			\EndIf    
		}
		
		\State $val = x.val$; 
		\State add $T_i$ to $x.rl$; 
		\State unlock $\glock_i$; 
		\State unlock $x$;
		\State return $val$;
		\EndIf
	\end{algorithmic}
\end{algorithm}

\begin{algorithm}[H]
	\label{alg:write} 
	\caption{STM $write_i(x,val)$: A Transaction $T_i$ writes into local memory}
	\begin{algorithmic}[1]
		\State Append the $d\_tuple \langle x,val \rangle$ to 
		$\ws_i$.\Comment{If same dataitem then overwrite the tuple }
		\State return $ok$;
	\end{algorithmic}
\end{algorithm}

\begin{algorithm}[H]
	\label{alg:llts} 
	\caption{STM $findLLTS(TSet)$:  Find the lowest $its$ value among all the live trasactions in $TSet$.}
	\begin{algorithmic}[1]
		\State $min\_its$  = $\infty$
		\ForAll {( $T_j \in TSet$)}
		
		\If {(($\gits_j$ $< min\_its)$ \&\& $(\gstat_j == \texttt{live})$)} 
 				\State $min\_its$ = $\gits_j$;		
		\EndIf
		\EndFor
		\State return $min\_its$;
		
	\end{algorithmic}
\end{algorithm}
\begin{algorithm}[H]  
	\label{alg:tryc} 
	\caption{STM $\tryc()$: Returns $\mathcal{C}$ on commit else return Abort $\mathcal{A}$}
	\begin{algorithmic}[1]
		\State lock $\glock_i$
		\If {$(\gval_i == F)$} return $abort(i)$;
		\EndIf
		\State $TSet = null$ {} \Comment{$TSet$ storing 
		transaction Ids}
		\ForAll {($x \in \ws_i$)}
		\State lock $x$ in pre-defined order;
		\ForAll {($T_j \in x.rl$)}
		\State $TSet$ = $TSet$ $\cup$ \{$T_j$\}
		\EndFor
			 
		\EndFor \Comment{$x \in \ws_i$}  
		\State $TSet$ = $TSet$ $\cup$ \{$T_i$\} \Comment{Add current transaction $T_i$ into $TSet$}
		\ForAll {( $T_k \in TSet$)}
		\State lock $\glock_k$ in pre-defined order; 
												\Comment{Note: Since $T_i$ is also in 
												$TSet$, $\glock_i$ is also locked}
		\EndFor
		\If {$(\gval_i == F)$} return $abort(i)$;
		\Else
		\If {($\gits_i == findLLTS(TSet$))} \Comment{Check if $T_i$ has lowest $its$ among all \texttt{live} transactions in $TSet$}
		\ForAll {($T_j \in TSet$)}  \Comment{ ($T_i \neq T_j$)} 
		\State $G\_valid_j = F$
		\State unlock $\glock_j$;					
		\EndFor
		\Else
		\State return $abort(i)$;
		\EndIf
		\EndIf
		
		\cmnt {
			\algstore{tryc-break}
		\end{algorithmic}
	\end{algorithm}
	
	\begin{algorithm}  
		\label{alg:tryc-cont} 
		\caption{STM $\tryc()$: Continued}
		
		\begin{algorithmic}[1]
			\algrestore{tryc-break}

			\State \Comment{Ensure that $\vutl_i$ is less than \vltl of versions in $\nvl$} 
			\ForAll {$(ver \in \nvl)$}
			\State $x$ = \tobj of $ver$;
			\State $\tutl_i = min(\tutl_i, x[ver].\vltl - 1)$;
			\EndFor 
		}

\ForAll {($x \in \ws_i$)}
		\State replace the old value in $x.val$ with $newValue$;
		\State $x.rl$ = null;
		\EndFor
		\State $\gstat_i$ = \texttt{commit};
		\State unlock all variables locked by $T_i$;
		\State return $\mathcal{C}$;
		
	\end{algorithmic}
\end{algorithm}

\cmnt {
	\begin{algorithm}  
		\label{alg:lowPri} 
		\caption{$\lowp(T_k, T_i)$: Verifies if $T_k$ has lower priority than $T_i$ and is not already committed}
		\begin{algorithmic}[1]
			
			\If {$(\gits_i < \gits_k) \land (\gstat_k == \texttt{live})$}
			\State return $T$;
			\Else
			\State return $F$;
			\EndIf
		\end{algorithmic}
	\end{algorithm}
	
	\begin{algorithm}  
		\label{alg:\lowp} 
		\caption{$\lowp(T_k, T_i)$: Aborts lower priority transaction among $T_k$ and $T_i$}
		\begin{algorithmic}[1]
			
			\If {$(\gits_i < \gits_k)$}
			\State $\abl = \abl \cup T_k$; \Comment{Store lower priority $T_k$ in \abl}
			\Else
			\State return abort(i); \Comment{Abort $T_i$}
			\EndIf
		\end{algorithmic}
	\end{algorithm}
}

\begin{algorithm}[H] 
	\label{alg:abort} 
	\caption{$abort(i)$: Invoked by various STM methods to abort transaction $T_i$. It returns $\mathcal{A}$}
	\begin{algorithmic}[1]
		\State $\gval_i = F$;
		\State $\gstat_i$ = \texttt{abort};
		\State unlock all variables locked by $T_i$;      
		\State return $\mathcal{A}$;
	\end{algorithmic}
\end{algorithm}
\cmnt{
\vspace{1mm}
\noindent
\textbf{Common Data-Structures \& STM Methods:} Here we describe the common data-structures used by all the algorithms proposed in this section. For each \tobj, the algorithm(s) maintains multiple version as a linked-list, \emph{\vlist}. Similar to versions in \mvto \cite{Kumar+:MVTO:ICDCN:2014}, each version of a \tobj{} is a tuple denoted as \emph{\vtup} and consists of three fields: (1) timestamp, $ts$ of the transaction that created this version which normally is the \cts; (2) the value of the version; (3) a list, called \rlist{}, consisting of transactions ids (could be \cts as well) that read from this version. The \rlist of a version is initially empty. \figref{tobj} illustrates this structure. For a \tobj $x$, we use the notation $x[t]$ to access the version with timestamp $t$. Depending on the algorithm considered, the fields change of this structure. 
\begin{figure} [h]
	\centerline{\scalebox{0.50}{\input{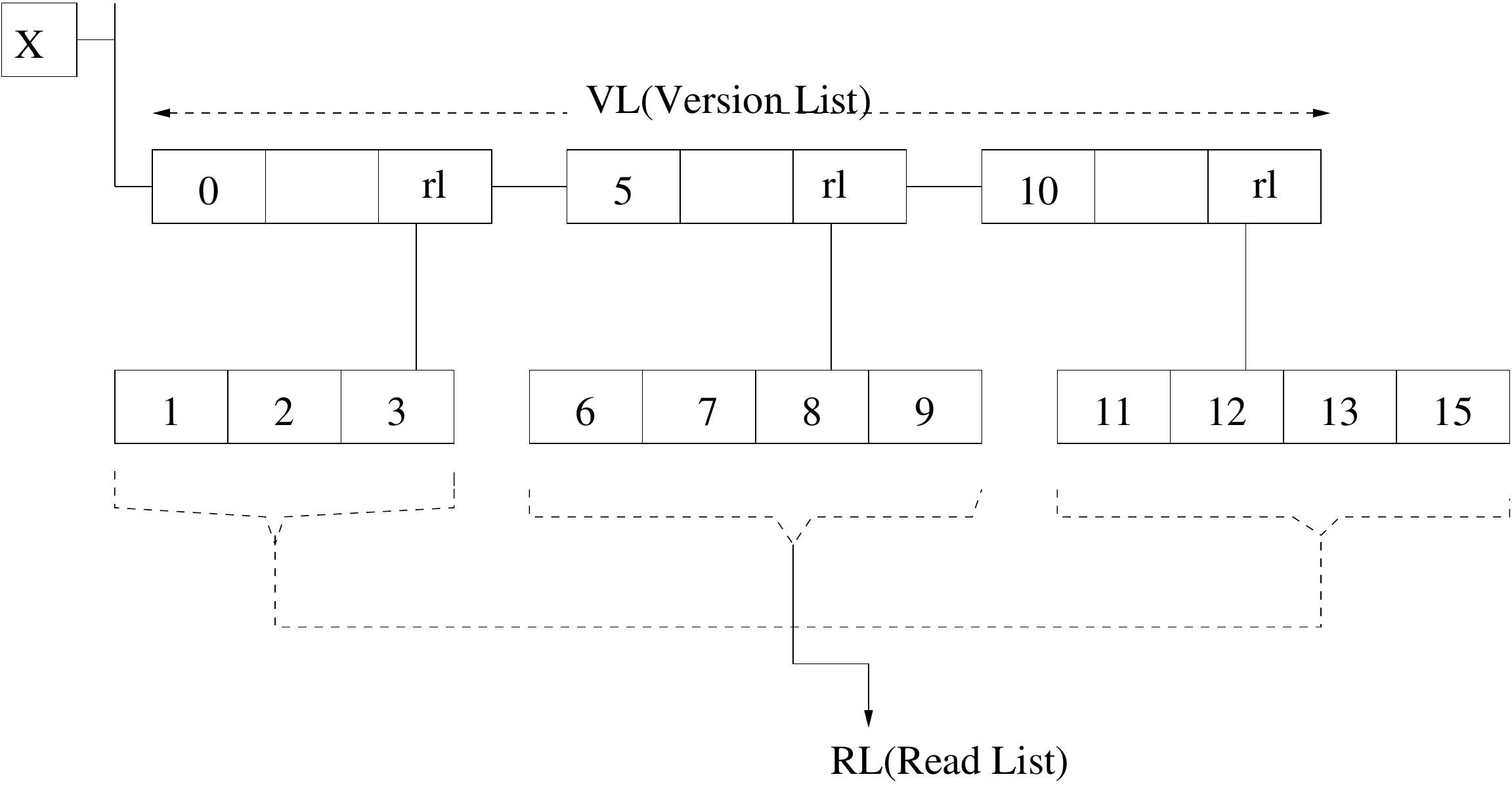_t}}}
	\captionsetup{justification=centering}	
	\caption{Data Structures for Maintaining Versions}
	\label{fig:tobj}
\end{figure}


The algorithms have access to a global atomic counter, $\gtcnt$ used for generating timestamps in the various transactional \mth{s}. We assume that the STM system exports the following \mth{s} for a transaction $T_i$: (1) $\begt(t)$ where $t$ is provided by the invoking thread. From our earlier assumption, it is the \cts of the first \inc. This \mth returns a unique timestamp to the invoking thread which is the \cts/id of the transaction. (2) $\tread_i(x)$ tries to read  \tobj $x$. It returns either value $v$ or $\abort$. (3) $\twrite_i(x,v)$ operation that tries to update a \tobj $x$ with value $v$. It returns $ok$. (4) $\tryc_i()$ that tries to commit the transaction and returns $ok$ if it succeeds. Otherwise, it returns $\abort$. (5) $\trya()$ that aborts the transaction and returns $\abort$.}

\ignore{
In addition to these global structures, for each transaction $T_i$, \pmvto maintains structure that are local to $T_i$:
\begin{itemize}
	\item $\rs_i$(read-set): It is a list of data tuples or \emph{\dtup} of the form $\langle x, val \rangle$, where $x$ is the t-object and $v$ is the value read by the transaction $T_i$. We refer to a tuple in $T_i$'s read-set by $\rs_i[x]$.
	\item $\ws_i$(write-set): It is a list of \dtup{s} where $x$ is the \tobj{} to which transaction $T_i$ writes the value $val$. Similarly, we refer to a tuple in $T_i$'s write-set by $\ws_i[x]$.
\end{itemize}
\noindent Next, we describe the \mth{s} exported by \pmvto. Here in this discussion, we use the notation that a transaction $T_i$ has its \cts as $i$. 
}

\vspace{1mm}
\noindent
\textbf{Simplifying Assumptions:} We next describe the main idea behind the starvation-free STM algorithm \ksftm through a sequence of algorithms. For ease of exposition, we make two simplifying assumptions (1) We assume that in the absence of other concurrent conflicting transactions, every transaction will commit. In other words, if a transaction is executed in a system by itself, it will not self-abort. (2) We initially consider \stsbty as \cc to illustrate the correctness of the algorithms. But \stsbty does not consider the correctness of aborted transactions and as a result not a suitable \cc for STMs. Finally, we show that the proposed STM algorithm \ksftm satisfies \lopty, a \cc for STMs. 

\noindent We denote the set of histories generated by an STM algorithm, say $A$, as $\gen{A}$.

\cmnt {
	A STM system exports the functions:\begtrans{}, $\read_i, \writei_i$ and $\tryci_i$. A thread invokes a transaction with a \begtrans{} function which returns a unique transaction id which is the timestamp of the transactions. This timestamp is numerically greater than the timestamps of all the transactions invoked so far. This thread invokes future functions of the transaction using this timestamp. We use the notation $T_i$ to denote a transaction where $i$ is the timestamp of $T_i$. 
}

\subsection{Priority-based MVTO Algorithm}
\label{subsec:mvto}

In this subsection, we describe a modification to the multi-version timestamp ordering (\mvto) algorithm \cite{BernGood:1983:MCC:TDS,Kumar+:MVTO:ICDCN:2014} to ensure that it provides preference to transactions that have low \its, i.e., transactions that have been in the system for a longer time. We denote the basic algorithm which maintains unbounded versions as \emph{Priority-based MVTO} or \emph{\pmvto} (akin to the original \mvto). We denote the variant of \pmvto that maintains $K$ versions as \emph{\pkto} and the unbounded versions variant with garbage collection as \emph{\pmvtogc}. In this sub-section, we specifically describe \pkto. But most of these properties apply to \pmvto and \pmvtogc as well. 

\ignore{
\color{blue}
\color{red} REMOVE 
We have not described the details of locking of data-items by transactions.
\color{black}
}

\vspace{1mm}
\noindent
\textbf{$\begt(t)$:} A unique timestamp $ts$ is allocated to $T_i$ which is its \cts ($i$ from our assumption). The timestamp $ts$ is generated by atomically incrementing the global counter $\gtcnt$. If the input $t$ is null, then $\tcts{i} = \tits{i} = ts$ as this is the first \inc of this transaction. Otherwise, the non-null value of $t$ is assigned as $\tits{i}$. 

\noindent
\textbf{$\tread(x)$:} Transaction $T_i$ reads from a version of $x$ in the shared memory (if $x$ does not exist in $T_i$'s local buffer) with timestamp $j$ such that $j$ is the largest timestamp less than $i$ (among the versions $x$), i.e., there exists no version of $x$ with timestamp $k$ such that $j<k<i$. After reading this version of $x$, $T_i$ is stored in $x[j]$'s \rlist. If no such version exists then $T_i$ is \emph{aborted}. 

\noindent
\textbf{$\twrite(x,v)$:} $T_i$ stores this write to value $x$ locally in its $\ws_i$. If $T_i$ ever reads $x$ again, this value will be returned.

\noindent
\textbf{$\tryc:$} This \op{} consists of three steps. In \stref{val}, it checks whether $T_i$ can be \emph{committed}. In \stref{updt}, it performs the necessary tasks to mark $T_i$ as a \emph{committed} transaction and in \stref{commit}, $T_i$ return commits.  

\begin{enumerate}
	\item Before $T_i$ can commit, it needs to verify that any version it creates does not violate consistency. Suppose $T_i$ creates a new version of $x$ with timestamp $i$. Let $j$ be the largest timestamp smaller than $i$ for which version of $x$ exists. Let this version be $x[j]$. Now, $T_i$ needs to make sure that any transaction that has read $x[j]$ is not affected by the new version created by $T_i$. There are two possibilities of concern: \label{step:val}

  
	\begin{enumerate}		        
		\item Let $T_k$ be some transaction that has read $x[j]$ and $k > i$ ($k$ = \cts of $T_k$). In this scenario, the value read by $T_k$ would be incorrect (w.r.t \stsbty) if $T_i$ is allowed to create a new version. In this case, we say that the transactions $T_i$ and $T_k$ are in \emph{conflict}. So, we do the following: \\(i) if $T_k$ has already \emph{committed} then $T_i$ is \emph{aborted}; \\(ii) if $T_k$ is live and $\tits{k}$ is less than $\tits{i}$. Then again $T_i$ is \emph{aborted}; \\(iii) If $T_k$ is still live with $its_i$ less than $its_k$ then $T_k$ is \emph{aborted}. \label{step:verify}		
		
		\item The previous version $x[j]$ does not exist. This happens when the previous version $x[j]$ has been overwritten. In this case, $T_i$ is \emph{aborted} since \pkto does not know if $T_i$ conflicts with any other transaction $T_k$ that has read the previous version. \label{step:notfound}        	        		        
	\end{enumerate}   
	

	\item After \stref{val}, we have verified that it is ok for $T_i$ to commit. Now, we have to create a version of each \tobj $x$ in the $\ws$ of $T_i$. This is achieved as follows: \label{step:updt}
	\begin{enumerate}
		\item $T_i$ creates a $\vtup$ $\langle i, \wset{i}.x.v, null \rangle$. In this tuple, $i$ (\cts of $T_i$) is the timestamp of the new version; $\wset{i}.x.v$ is the value of $x$ is in $T_i$'s $\ws$, and the \rlist of the $\vtup$ is $null$.
		\item Suppose the total number of versions of $x$ is $K$. Then among all the versions of $x$, $T_i$ replaces the version with the smallest timestamp with $\vtup$ $\langle i, \wset{i}.x.v, null \rangle$. Otherwise, the $\vtup$ is added to $x$'s $\vlist$. 
	\end{enumerate}
	\item Transaction $T_i$ is then \emph{committed}. \label{step:commit}
\end{enumerate}

The algorithm described here is only the main idea. The actual implementation will use locks to ensure that each of these \mth{s} are \lble \cite{HerlWing:1990:TPLS}. It can be seen that \pkto gives preference to the transaction having lower \its in \stref{verify}. Transactions having lower \its have been in the system for a longer time. Hence, \pkto gives preference to them. 

\subsection{Pseudocode of \pkto}
\label{apn:pcode}
\begin{algorithm}[H] 
\caption{STM $\init()$: Invoked at the start of the STM system. Initializes all the \tobj{s} used by the STM System} \label{alg:init} 
\begin{algorithmic}[1]
\State $\gtcnt$ = 1;
\ForAll {$x$ in $\mathcal{T}$} \Comment{All the \tobj{s} used by the STM System}
 
  \State add $\langle 0, 0, nil \rangle$ to $x.\vl$;  \Comment { $T_0$ is initializing $x$} \label{lin:t0-init} 
\EndFor;
\end{algorithmic}
\end{algorithm}

\begin{algorithm}[H] 
\label{alg:ap-begin} 
\caption{STM $\begt(its)$: Invoked by a thread to start a new transaction 
$T_i$. Thread can pass a parameter $its$ which is the initial timestamp when 
this transaction was invoked for the first time. If this is the first 
invocation then $its$ is $nil$. It returns the tuple $\langle id, \gcts 
\rangle$}
\begin{algorithmic}[1]
  \State $i$ = unique-id; \Comment{An unique id to identify this transaction. It could be same as \gcts}
  \State \Comment{Initialize transaction specific local and global variables}  
  \If {($its == nil$)}
    \State \Comment{$\gtcnt.get\&Inc()$ returns the current value of \gtcnt and atomically increments it}   
    \State $\gits_i = \gcts_i = \gtcnt.get\&Inc()$; 
  \Else 
    \State $\gits_i = its$;
    \State $\gcts_i = \gtcnt.get\&Inc()$; 
  \EndIf 
  \State $\rs_i = \ws_i = null$;
  \State $\gstat_i$ = \texttt{live}; 
  \State $\gval_i = T$;
    \State return $\langle i, \gcts_i\rangle$
\end{algorithmic}
\end{algorithm}

\begin{algorithm}[H] 
\label{alg:ap-read} 
\caption{STM $read(i, x)$: Invoked by a transaction $T_i$ to read \tobj{} $x$. It returns either the value of $x$ or $\mathcal{A}$}
\begin{algorithmic}[1]  
  \If {($x \in \rs_i$)} \Comment{Check if the \tobj{} $x$ is in $\rs_i$}
    \State return $\rs_i[x].val$;
  \ElsIf {($x \in \ws_i$)} \Comment{Check if the \tobj{} $x$ is in $\ws_i$}
    \State return $\ws_i[x].val$;
  \Else \Comment{\tobj{} $x$ is not in $\rs_i$ and $\ws_i$}         
    \State lock $x$; lock $\glock_i$;
    \If {$(\gval_i == F)$} return abort(i); \label{lin:rabort}
    \EndIf    
  
    \State \Comment{ \findls: From $x.\vl$, returns the largest \ts value less than 
    $\gcts_i$. If no such version exists, it returns $nil$ }
    \State $curVer  = \findls(\gcts_i,x)$;

    \If {$(curVer == nil)$} return abort(i); \Comment{Proceed only if $curVer$ is not nil}
    \EndIf
    \cmnt
    {
    \State /* \findsl: From $x.\vl$, returns the smallest \ts value greater 
    than $\gwts_i$. If no such version exists, it returns $nil$ */
    \State $nextVer  = \findsl(\gwts_i,x)$;    
    \If {$(nextVer \neq nil)$}
      \State \Comment{Ensure that $\tutl_i$ remains smaller than $nextVer$'s \vltl}
      \State $\tutl_i = min(\tutl_i, x[nextVer].vltl-1)$; 
    \EndIf
    
    \State \Comment{$\tltl_i$ should be greater than $x[curVer].\vltl$}
    \State $\tltl_i = max(\tltl_i, x[curVer].\vltl + 1)$;
    \If {($\tltl_i > \tutl_i$)} \Comment{If the limits have crossed each other, then $T_i$ is aborted}
      \State return abort(i);
    \EndIf    
    }
             
    \State $val = x[curVer].v$; add $\langle x, val \rangle$ to $\rs_i$;
    \State add $T_i$ to $x[curVer].rl$; 
    \State unlock $\glock_i$; unlock $x$;
    \State return $val$;
  \EndIf
\end{algorithmic}
\end{algorithm}

\begin{algorithm}[H]  
\label{alg:ap-write} 
\caption{STM $write_i(x,val)$: A Transaction $T_i$ writes into local memory}
\begin{algorithmic}[1]
\State Append the $d\_tuple \langle x,val \rangle$ to $\ws_i$.
\State return $ok$;
\end{algorithmic}
\end{algorithm}

\begin{algorithm}[H]  
\label{alg:ap-tryc} 
\caption{STM $\tryc()$: Returns $ok$ on commit else return Abort}
\begin{algorithmic}[1]
\State \Comment{The following check is an optimization which needs to be performed again later}
\State lock $\glock_i$;
\If {$(\gval_i == F)$} 
\State return abort(i);
\EndIf
\State unlock $\glock_i$;

\State $\lrl = \allrl = nil$; \Comment{Initialize larger read list (\lrl), all read list (\allrl) to nil} 
\ForAll {$x \in \ws_i$}
  \State lock $x$ in pre-defined order;
  \State \Comment{ \findls: returns the version with the largest \ts value less than 
  $\gcts_i$. If no such version exists, it returns $nil$. }
  \State $\prevv = \findls(\gcts_i, x)$; \Comment{\prevv: largest version 
  smaller than $\gcts_i$}
  \If {$(\prevv == nil)$} \Comment{There exists no version with \ts value less 
  than $\gcts_i$}
   \State lock $\glock_i$; return abort(i); 
  \EndIf
  
  \State \Comment{\textbf{\getl}: obtain the list of reading transactions of 
  $x[\prevv].rl$ whose $\gcts$ is greater than $\gcts_i$}
  \State $\lrl = \lrl \cup \getl(\gcts_i, x[\prevv].rl)$;

\EndFor \Comment{$x \in \ws_i$}  

\State $\relll = \lrl \cup T_i$; \Comment{Initialize relevant Lock List 
(\relll)}

\ForAll {($T_k \in \relll$)}
  \State lock $\glock_k$ in pre-defined order; \Comment{Note: Since $T_i$ is also in $\relll$, $\glock_i$ is also locked}
\EndFor

\State \Comment{Verify if $\gval_i$ is false}

\If {$(\gval_i == F)$} 
 \State return abort(i);
\EndIf

\State $\abl = nil$ \Comment{Initialize abort read list (\abl)}

\State \Comment{Among the transactions in $T_k$ in $\lrl$, either $T_k$ or $T_i$ has to be aborted}

\ForAll {$(T_k \in \lrl)$}  
  \If {$(\isab(T_k))$} 
     \Comment{Transaction $T_k$ can be ignored since it is already aborted or about to be aborted}
    \State continue;
  \EndIf

  \If {$(\gits_i < \gits_k) \land (\gstat_k == \texttt{live})$}   
    \State \Comment{Transaction $T_k$ has lower priority and is not yet committed. So it needs to be aborted}
    \State $\abl = \abl \cup T_k$; \Comment{Store $T_k$ in \abl}
\Else \Comment{Transaction $T_i$ has to be aborted}
    \State return abort(i);
  \EndIf      
\EndFor

 \algstore{myalgtc}
\end{algorithmic}
\end{algorithm}

\begin{algorithm}
\begin{algorithmic}[1]
\algrestore{myalgtc}
\cmnt {
\algstore{tryc-break}
\end{algorithmic}
\end{algorithm}

\begin{algorithm}  
\label{alg:ap-tryc-cont} 
\caption{STM $\tryc()$: Continued}

\begin{algorithmic}[1]
\algrestore{tryc-break}

\State \Comment{Ensure that $\vutl_i$ is less than \vltl of versions in $\nvl$} 
\ForAll {$(ver \in \nvl)$}
  \State $x$ = \tobj of $ver$;
  \State $\tutl_i = min(\tutl_i, x[ver].\vltl - 1)$;
\EndFor 
}

\State \Comment{Store the current value of the global counter as commit time and increment it}
\State $\ct = \gtcnt.get\&Inc()$; 

\cmnt{
\ForAll {$(T_k \in \srl)$}  \Comment{Iterate through $\srl$ to see if $T_k$ or $T_i$ has to aborted}
  \If {$(\isab(T_k))$} 
    \State \Comment{Transaction $T_k$ can be ignored since it is already aborted or about to be aborted}
    \State continue;
  \EndIf

  \If {$(\tltl_k \geq \tutl_i)$} \label{lin:tk-check} \Comment{Ensure that the limits do not cross for both $T_i$ and $T_k$}
    \If {$(\gstat_k == live)$}  \Comment{Check if $T_k$ is live}        
      \If {$(\gits_i < \gits_k)$}
        \State \Comment{Transaction $T_k$ has lower priority and is not yet committed. So it needs to be aborted}
        \State $\abl = \abl \cup T_k$; \Comment{Store $T_k$ in \abl}      
      \Else \Comment{Transaction $T_i$ has to be aborted}
        \State return abort(i);
      \EndIf \Comment{$(\gits_i < \gits_k)$}    
    \Else \Comment{($T_k$ is committed. Hence, $T_i$ has to be aborted)}   
      \State return abort(i);
    \EndIf \Comment{$(\gstat_k == live)$}
  \EndIf \Comment{$(\tltl_k \geq \tutl_i)$}
\EndFor {$(T_k \in \srl)$}

\State \Comment{At this point $T_i$ can't abort.}
\State $\tltl_i = \tutl_i$;  \label{lin:ti-updt} 
\State \Comment{Since $T_i$ can't abort, we can update $T_k$'s \tutl}
\ForAll {$(T_k \in \srl)$}  
  \If {$(\isab(T_k))$} 
    \State \Comment{Transaction $T_k$ can be ignored since it is already aborted or about to be aborted}
    \State continue;
  \EndIf
  
  \State /* The following line ensure that $\tltl_k \leq \tutl_k < \tltl_i$. Note that this does not cause the limits of $T_k$ to cross each other because of the check in \Lineref{tk-check}.*/  
  \State $\tutl_k = min(\tutl_k, \tltl_i - 1)$;
\EndFor
}

\ForAll {$T_k \in \abl$} \Comment{Abort all the transactions in \abl}
  \State $\gval_k =  F$;
\EndFor

\cmnt
{
\algstore{tryc-break2}

\end{algorithmic}
\end{algorithm}

\begin{algorithm}[H]  
\label{alg:tryc-cont2} 
\caption{STM $\tryc()$: Continued Again}
\begin{algorithmic}[1]
\algrestore{tryc-break2}
}

\State \Comment{Having completed all the checks, $T_i$ can be committed}
\ForAll {$(x \in \ws_i)$} 
  
  \State $newTuple = \langle \gcts_i, \ws_i[x].val, nil \rangle$; \Comment { Create new v\_tuple: \gcts, val, \rl for $x$}
  \If {($|x.vl| > k$)}
    \State replace the oldest tuple in $x.\vl$ with $newTuple$; \Comment{$x.\vl$ is ordered by timestamp}
  \Else
    \State add a $newTuple$ to $x.vl$ in sorted order;
  \EndIf
\EndFor \Comment{$x \in \ws_i$}

\State $\gstat_i$ = \texttt{commit};
\State unlock all variables;
\State return $\mathcal{C}$;
 
\end{algorithmic}
\end{algorithm}

\begin{algorithm}[H]  
\label{alg:isab} 
\caption{$\isab(T_k)$: Verifies if $T_i$ is already aborted or its \gval flag is set to false implying that $T_i$ will be aborted soon}
\begin{algorithmic}[1]

\If {$(\gval_k == F) \lor (\gstat_k == \texttt{abort}) \lor (T_k \in \abl)$} 
  \State return $T$;
\Else
  \State return $F$;
\EndIf
\end{algorithmic}
\end{algorithm}

\begin{algorithm}[H] 
\label{alg:ap-abort} 
\caption{$abort(i)$: Invoked by various STM methods to abort transaction $T_i$. It returns $\mathcal{A}$}
\begin{algorithmic}[1]
\State $\gval_i = F$; $\gstat_i$ = \texttt{abort};
\State unlock all variables locked by $T_i$;      
\State return $\mathcal{A}$;
\end{algorithmic}
\end{algorithm}

We have the following property on the correctness of \pkto. 


\begin{property}
\label{prop:pmvto-correct} 
Any history generated by \pkto is strict-serializable.
\end{property}
Consider a history $H$ generated by \pkto. Let the \emph{committed} \ssch of $H$ be $CSH = \shist{\comm{H}}{H}$. It can be shown that $CSH$ is \opq with the equivalent serialized history $SH'$ is one in which all the transactions of $CSH$ are ordered by their \cts{s}. Hence, $H$ is \stsble.


\cmnt{
	The read \op{} described in \stref{read} is a common idea used in \mvstm{s} which are based on \ts{s} such as \mvto{} \cite{Kumar+:MVTO:ICDCN:2014}, Generic Multi-Version STM \cite{LuScott:GMV:DISC:2013} etc. Unlike these STMs which have infinite versions, a read \op{} in \pmvto{} can abort. Similarly, the $\tryc{}$ \op{} described in \stref{tryc} is very similar to \mvto, but modified for finite versions. 
}


\noindent
\textbf{Possibility of Starvation in \pkto:}
As discussed above, \pkto gives priority to transactions having lower \its. But a transaction $T_i$ having the lowest \its could still abort due to one of the following reasons: (1) Upon executing $\tread(x)$ \mth if it does not find any other version of $x$ to read from. This can happen if all the versions of $x$ present have a timestamp greater than $\tcts{i}$. (2) While executing \stref{verify}(i), of the $\tryc$ \mth, if $T_i$ wishes to create a version of $x$ with timestamp $i$. But some other transaction, say $T_k$ has read from a version with timestamp $j$ and $j<i<k$. In this case, $T_i$ has to abort if $T_k$ has already \emph{committed}. 

This issue is not restricted only to \pkto. It can occur in \pmvto (and \pmvtogc) due to the point (2) described above.  

\begin{figure}[H]
	\centering
	\scalebox{0.5}{\input{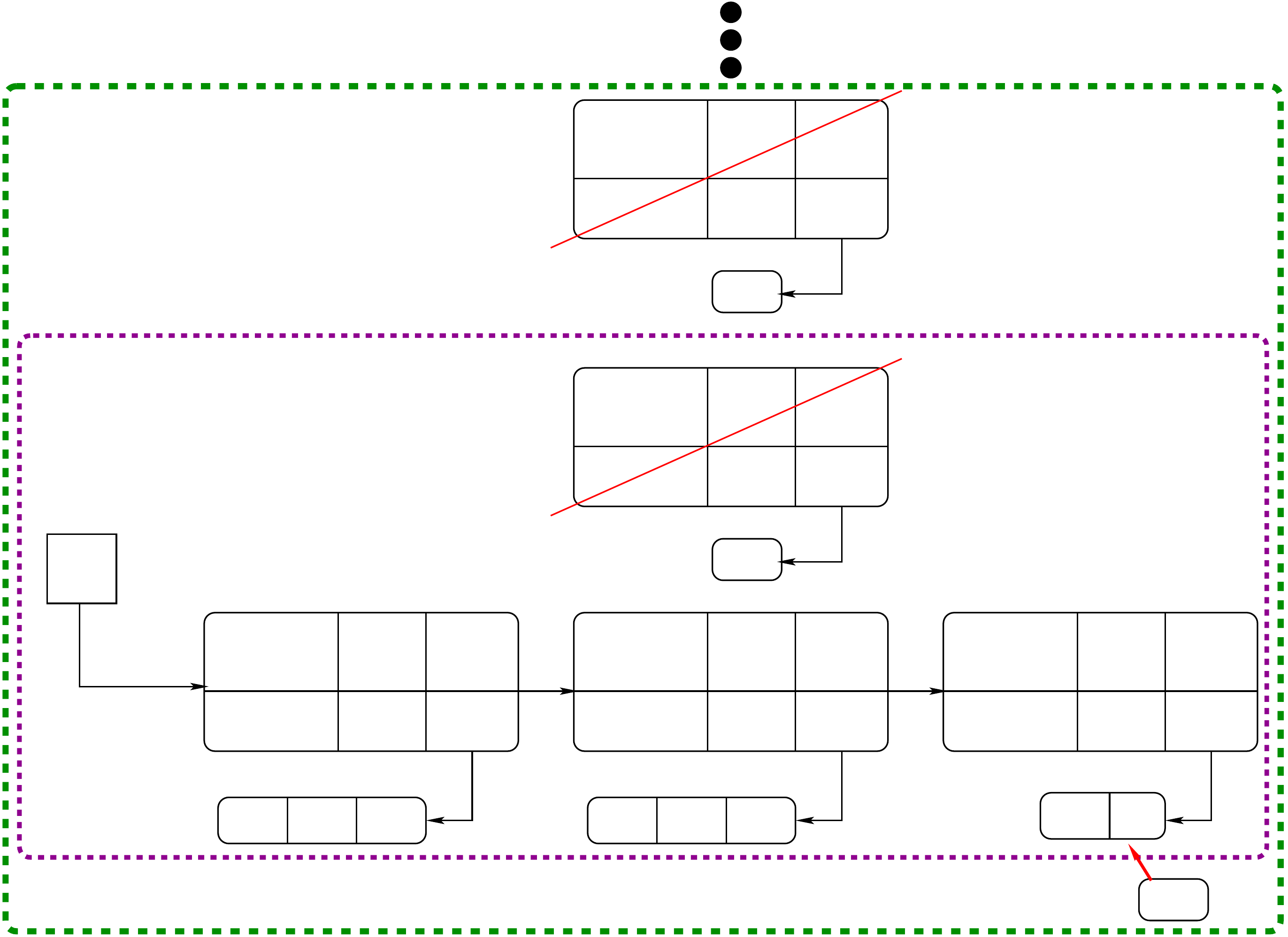_t}}
	\captionsetup{justification=centering}
	\caption{Pictorial representation of execution under \pkto}
	\label{fig:kstmvl}
\end{figure}
We illustrate this problem in \pkto with \figref{kstmvl}. Here transaction $T_{26}$, with \its 26 is the lowest among all the live transactions, starves due to \stref{verify}.(i) of the $\tryc$. \emph{First time}, $T_{26}$ gets \emph{aborted}  due to higher timestamp transaction $T_{29}$ in the \rlist of $x[25]$ has \emph{committed}. We have denoted it by a `(C)' next to the version. The \emph{second time}, $T_{26}$ retries with same \its 26 but new \cts 33. Now when $T_{33}$ comes for commit, suppose another transaction $T_{34}$ in the \rlist of $x[25]$ has already \emph{committed}. So this will cause $T_{33}$ (another incarnation of $T_{26}$) to abort again. Such scenario can possibly repeat again and again and thus causing no \inc of  $T_{26}$ to ever commit leading to its starvation. 

\noindent
\textbf{Garbage Collection in \mvsftmgc and \pmvtogc:} Having multiple versions to increase the performance and to decrease the number of aborts, leads to creating too many versions which are not of any use and hence occupying space. So, such garbage versions need to be taken care of. Hence we come up with a garbage collection over these unwanted versions. This technique help to conserve memory space and increases the performance in turn as no more unnecessary traversing of garbage versions by transactions is necessary. We have used a global, i.e., across all transactions a list that keeps track of all the live transactions in the system. We call this list as \livel. Each transaction at the beginning of its life cycle creates its entry in this \livel. Under the optimistic approach of STM, each transaction in the shared memory performs its updates in the $\tryc$ phase. In this phase, each transaction performs some validations, and if all the validations are successful then the transaction make changes or in simple terms creates versions of the corresponding \tobj{} in the shared memory. While creating a version every transaction, check if it is the least timestamp live transaction present in the system by using \livel{} data structure, if yes then the current transaction deletes all the version of that \tobj{} and create one of its own. Else the transaction does not do any garbage collection or delete any version and look for creating a new version of next \tobj{} in the write set, if at all. \figref{ksftm} and \figref{pkto} show that both \mvsftmgc and \pmvtogc performs better than \mvsftm and \pmvto across all workloads.

\ignore{
\begin{enumerate}
	\item \textbf{read rule:} $T_i$ on invoking $r_i(x)$ reads the value $v$, where $v$ is the value written by a transaction $T_j$ that commits before $r_i(x)$ and $j$ is the largest timestamp $\leq i$.
	\item \textbf{write rule:} $T_i$ writes into local memory.
	\item \textbf{commit rule:} $T_i$ on invoking \tryc{} \op{} checks for each \tobj{} $x$, in its \Wset:
	\begin{enumerate}
		\item If a transaction $T_k$ has read $x$ from $T_j$, i.e. $r_k(x, v) \in \evts{T_k}$ and $w_j(x, v) \in \evts{T_j}$ and $j < i < k$, then $\tryc_i$ returns abort, 
		\item otherwise, the transaction is allowed to commit.
	\end{enumerate}
\end{enumerate}
}
\subsection{Modifying \pkto to Obtain \sfkv: Trading Correctness for \emph{Starvation-Freedom}}
\label{subsec:sfmvto}

Our goal is to revise \pkto algorithm to ensure that \emph{\stfdm} is satisfied. Specifically, we want the transaction with the lowest \its to eventually commit. Once this happens, the next non-committed transaction with the lowest \its will commit. Thus, from induction, we can see that every transaction will eventually commit. 

\noindent
\textbf{Key Insights For Eliminating Starvation in \pkto:} To identify the necessary revision, we first focus on the effect of this algorithm on two transactions, say $T_{50}$ and $T_{60}$ with their \cts values being 50 and 60 respectively. Furthermore, for the sake of discussion, assume that these transactions only read and write \tobj $x$. Also, assume that the latest version for $x$ is with $ts$ $40$. Each transaction first reads $x$ and then writes $x$ (as part of the $\tryc{}$ operation). We use $r_{50}$ and $r_{60}$ to denote their read operations while $w_{50}$ and $w_{60}$ to denote their $\tryc$ \op{s}. Here, a read operation will not fail as there is a previous version present.  


Now, there are six possible permutations of these statements. We identify these permutations and the action that should be taken for that permutation in Table \ref{tbl:sfillus}. In all these permutations, the read \op{s} of a transaction come before the write \op{s} as the writes to the shared memory occurs only in the $\tryc$ \op (due to optimistic execution) which is the final \op of a transaction. 


\begin{table}[ht] \centering
	\begin{tabular}{|c|l|l|}
		\hline
		\multicolumn{1}{|c|}{S. No} & \multicolumn{1}{c|}{Sequence} & \multicolumn{1}{c|}{Action} \\
		\hline
		1. & $r_{50}, w_{50}, r_{60}, w_{60}$ & $T_{60}$ reads the version written by $T_{50}$. No conflict. \\
		\hline
		2. & $r_{50}, r_{60}, w_{50}, w_{60}$ & Conflict detected at $w_{50}$. Either abort $T_{50}$ or $T_{60}$. \\
		\hline
		3. & $r_{50}, r_{60}, w_{60}, w_{50}$ & Conflict detected at $w_{50}$. Hence, abort $T_{50}$. \\
		\hline
		4. & $r_{60}, r_{50}, w_{60}, w_{50}$ & Conflict detected at $w_{50}$. Hence, abort $T_{50}$. \\
		\hline
		5. & $r_{60}, r_{50}, w_{50}, w_{60}$ & Conflict detected at $w_{50}$. Either abort $T_{50}$ or $T_{60}$. \\
		\hline
		6. & $r_{60}, w_{60}, r_{50}, w_{50}$ & Conflict detected at $w_{50}$. Hence, abort $T_{50}$.\\
		\hline
	\end{tabular}
	\caption{Permutations of operations}
	\label{tbl:sfillus}
\end{table}
\ignore{
\begin{table}
	\centering
	\begin{tabular}{|c|l|l|}
		\hline
		\multicolumn{1}{|c|}{S.No} & \multicolumn{1}{c|}{Sequence} & \multicolumn{1}{c|}{Action} \\
		\hline
		1. & $r_{50}, w_{50}, r_{60}, w_{60}$ & $T_{60}$ reads the version written by $T_{50}$. No conflict. \\
		\hline
		2. & $r_{50}, r_{60}, w_{50}, w_{60}$ & Conflict detected at $w_{50}$. Either abort $T_{50}$ or $T_{60}$. \\
		\hline
		3. & $r_{50}, r_{60}, w_{60}, w_{50}$ & Conflict detected at $w_{50}$. Hence, abort $T_{50}$. \\
		\hline
		4. & $r_{60}, r_{50}, w_{60}, w_{50}$ & Conflict detected at $w_{50}$. Hence, abort $T_{50}$. \\
		\hline
		5. & $r_{60}, r_{50}, w_{50}, w_{60}$ & Conflict detected at $w_{50}$. Either abort $T_{50}$ or $T_{60}$. \\
		\hline
		6. & $r_{60}, w_{60}, r_{50}, w_{50}$ & Conflict detected at $w_{50}$. Hence, abort $T_{50}$.\\
		\hline
	\end{tabular}
	\captionsetup{justification=centering}
	\caption{Permutations of operations}
	\label{tbl:sfillus2}
\end{table}

	\begin{table} \centering
		\begin{tabular}{|l|l|l|}
			\hline
			1. & $r_{50}, w_{50}, r_{60}, w_{60}$ & $T_{60}$ reads the version written by $T_{50}$. No conflict.\\
			\hline
			2. & $r_{50}, r_{60}, w_{50}, w_{60}$ & Conflict detected at $w_{50}$. Either abort $T_{50}$ or $T_{60}$\\
			\hline
			3. & $r_{50}, r_{60}, w_{60}, w_{50}$ & Conflict detected at $w_{50}$.We must abort $T_{50}$.\\
			\hline
			4. & $r_{60}, r_{50}, w_{60}, w_{50}$ & Conflict detected at $w_{60}$, We must abort $T_{50}$.\\
			\hline
			5. & $r_{60}, r_{50}, w_{50}, w_{60}$ & Conflict detected at $w_{50}$, Either abort $T_{50}$ or $T_{60}$\\
			\hline
			6. & $r_{60}, w_{60}, r_{50}, w_{50}$ & Conflict detected at $w_{50}$, We must abort $T_{50}$\\
			\hline
		\end{tabular}
		\caption{Permutations of operations}
		\label{tbl:sfillus1}
	\end{table}
	
}

From this table, it can be seen that when a conflict is detected, in some cases, algorithm \pkto \textit{must} abort $T_{50}$. In case both the transactions are live, \pkto has the option of aborting either transaction depending on their \its. If $T_{60}$ has lower \its then in no case, \pkto is required to abort $T_{60}$. In other words, it is possible to ensure that the transaction with lowest \its and the highest \cts is never aborted. Although in this example, we considered only one \tobj, this logic can be extended to cases having multiple \op{s} and \tobj{s}. 

Next, consider \stref{notfound} of \pkto algorithm. Suppose a transaction $T_i$ wants to read a \tobj but does not find a version with a timestamp smaller than $i$. In this case, $T_i$ has to abort. But if $T_i$ has the highest \cts, then it will certainly find a version to read from. This is because the timestamp of a version corresponds to the timestamp of the transaction that created it. If $T_i$ has the highest \cts value then it implies that all versions of all the \tobj{s} have a timestamp smaller than \cts of $T_i$. This reinforces the above observation that a transaction with lowest \its and highest \cts is not aborted.


To summarize the discussion, algorithm $\pkto$ has an in-built mechanism to protect transactions with lowest \its and highest \cts value. However, this is different from what we need. Specifically, we want to protect a transaction $T_i$, with lowest $\its$ value. One way to ensure this: if transaction $T_i$ with lowest \its keeps getting aborted, eventually it will achieve the highest \cts. Once this happens, \pkto  ensures that $T_i$ cannot be further aborted. In this way, we can ensure the liveness of all transactions. 


\ignore{

\color{blue}
I propose to rename 3.2 as Key Insights For Eliminating Starvation in PMVTO

Create a new section here with title
Modifying PMVTO to Obtain SFMVTO: Trading Correctness for Starvation-freedom

Change title of 3.3 (new 3.4)  to 
Design of KFTM: Regaining Correctness while Preserving Starvation Freedom

\color{black}
}

\noindent
\textbf{The working of \emph{\stf}  algorithm:} To realize this idea and achieve \emph{\stfdm}, we consider another variation of \mvto, \emph{Starvation-Free MVTO} or \emph{\sfmv}. We specifically consider \sfmv with $K$ versions, denoted as \emph{\sfkv}. 

A transaction $T_i$ instead of using the current time as $\tcts{i}$, uses a potentially higher timestamp, \emph{Working Timestamp - \wts} or $\twts{i}$. Specifically, it adds $C * (\tcts{i} - \tits{i})$ to $\tcts{i}$, i.e., 
\begin{equation}
\label{eq:wtsf}
\twts{i} = \tcts{i} + C * (\tcts{i} - \tits{i});
\end{equation}
where, $C$ is any constant greater than 0. In other words, when the transaction $T_i$ is issued for the first time, $\twts{i}$ is same as $\tcts{i}(= \tits{i})$. However, as transaction keeps getting aborted, the drift between $\tcts{i}$ and $\twts{i}$ increases. The value of $\twts{i}$ increases with each retry. 

Furthermore, in \sfkv algorithm, \cts is replaced with \wts for $\tread$, $\twrite$ and $\tryc$ \op{s} of \pkto. In \sfkv, a transaction $T_i$ uses $\twts{i}$ to read a version in $\tread$. Similarly, $T_i$ uses $\twts{i}$ in $\tryc$ to find the appropriate previous version (in \stref{notfound}) and to verify if $T_i$ has to be aborted (in \stref{verify}). Along the same lines, once $T_i$ decides to commit and create new versions of $x$, the timestamp of $x$ will be same as its $\twts{i}$ (in \stref{commit}). Thus the timestamp of all the versions in $\vlist$ will be \wts of the transactions that created them. 


\noindent Now, we have the following property about \sfkv algorithm. 

\begin{property}
\label{prop:sfmv-live}
\sfkv algorithm ensures \stfdm. 
\end{property}
While the proof of this property is somewhat involved, the key idea is that the transaction with lowest \its value, say $T_{low}$, will eventually have highest \wts value than all the other transactions in the system. Moreover, after a certain duration, any \textit{new} transaction arriving in the system (i.e., whose $\its$ value sufficiently higher than that of $T_{low}$) will have a lower $\wts$ value than $T_{low}$. This will ensure that $T_{low}$ will not be aborted. In fact, this property can be shown to be true of \sfmv as well. 

\noindent
\textbf{The drawback of \sfkv:} Although \sfkv satisfies starvation-freedom, it, unfortunately, does not satisfy strict-serializability. Specifically, it violates the real-time requirement. \pkto uses \cts for its working while \sfkv uses \wts. It can be seen that \cts is close to the \rt execution of transactions whereas \wts of a transaction $T_i$ is artificially inflated based on its \its and might be much larger than its \cts. 
\begin{figure}
	\centerline{
		\scalebox{0.6}{\input{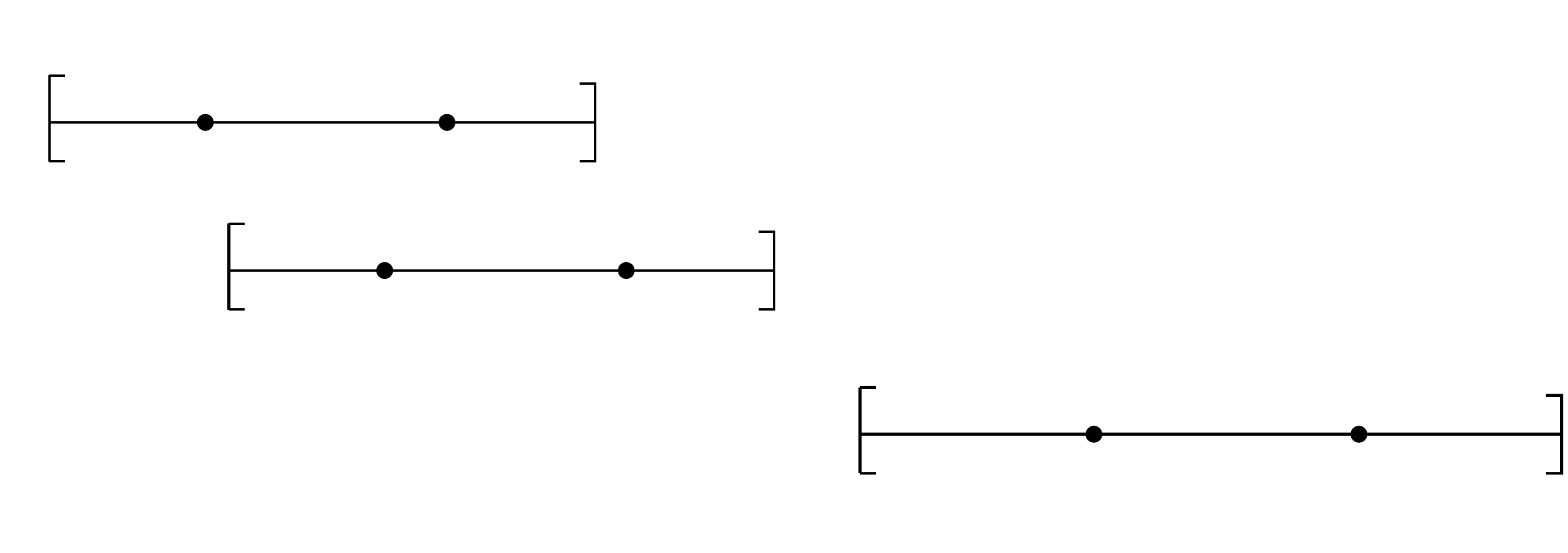_t}}}
	\captionsetup{justification=centering}
	\caption{Correctness of \sfkv Algorithm}
	\label{fig:sfmv-correct}
\end{figure}
We illustrate this with an example. Consider the history $H1$ as shown in \figref{sfmv-correct}: $r_1(x,0) r_2(y,0) w_1(x, 10)\\ C_1 w_2(x, 20) C_2 r_3(x, 10) r_3(z, 25) C_3$ with \cts as 50, 60 and 80 and \wts as 50, 100 and 80 for $T_1, T_2, T_3$ respectively. Here $T_1, T_2$ are ordered before $T_3$ in \rt with $T_1 \prec_{H1}^{RT} T_3$ and $T_2 \prec_{H1}^{RT} T_3$ although $T_2$ has a higher \wts than $T_3$. 

Here, as per \sfkv algorithm, $T_3$ reads $x$ from $T_1$ since $T_1$ has the largest \wts (50) smaller than $T_3$'s \wts (80). It can be verified that it is possible for \sfkv to generate such a history. But this history is not \stsble. The only possible serial order equivalent to $H1$ and \legal is $T_1 T_3 T_2$. But this violates \rt order as $T_3$ is serialized before $T_2$ but in $H1$, $T_2$ completes before $T_3$ has begun. Since $H1$ is not \stsble, it is not \lopq as well. Naturally, this drawback extends to \sfmv as well.


\subsection{Design of \ksftm: Regaining Correctness while Preserving \emph{Starvation-Freedom}}
\label{subsec:ksftm}

In this section, we discuss how principles of \pkto and \sfkv can be combined to obtain \ksftm that provides both correctness (strict-serializability and \lopq) as well as \emph{\stfdm}.
To achieve this, we first understand why the initial algorithm, \pkto satisfies strict-serializability. This is because \cts was used to create the ordering among committed transactions. \cts is closely associated with real-time. In contrast, \sfkv uses \wts which may not correspond to the real-time, as \wts may be significantly larger than \cts as shown by $H1$ in \figref{sfmv-correct}. 


One straightforward way to modify \sfkv is to delay a committing transaction, say $T_i$ with \wts value $\twts{i}$ until the real-time (\gtcnt) catches up to $\twts{i}$. This will ensure that value of \wts will also become same as the real-time thereby guaranteeing \stsbty. However, this is unacceptable, as in practice, it would require transaction $T_i$ locking all the variables it plans to update and wait. This will adversely affect the performance of the STM system. 

We can allow the transaction $T_i$ to commit before its $\twts{i}$ has caught up with the actual time if it does not violate the \rt ordering. Thus, to ensure that the notion of \rt order is respected by transactions in the course of their execution in \sfkv, we add extra time constraints. We use the idea of timestamp ranges. This notion of timestamp ranges was first used by Riegel et al. \cite{Riegel+:LSA:DISC:2006} in the context of multi-version STMs. Several other researchers have used this idea since then such as Guerraoui et al. \cite{Guer+:disc:2008}, Crain et al. \cite{Crain+:RI_VWC:ICA3PP:2011},  Aydonat \& Abdelrahman \cite{AydAbd:RCC:TPDS:2012}.

\ignore{





Towards, this end, first, we understand why the initial algorithm, \pkto is correct.  As mentioned in \propref{pmvto-correct}, any history generated by it is \stsble with the equivalent serial history being one in which transactions are ordered by their \cts{s}. We wanted to achieve the same principle with \sfkv using \wts{s}. But the serial order of \wts{s} of transactions does not respect \rt order as shown by $H1$ in \figref{sfmv-correct}. 

Thus, to ensure that the notion of \rt order is respected by transactions in course of their execution in \sfkv, we add extra time constraints. We use the idea of timestamp ranges. This notion of timestamp ranges was first used by Riegel et al. \cite{Riegel+:LSA:DISC:2006} in the context of multi-version STMs. Several other researchers have used this idea since then such as Guerraoui et al. \cite{Guer+:disc:2008}, Crain et al. \cite{Crain+:RI_VWC:ICA3PP:2011}, Aydonat \& Abdelrahman \cite{AydAbd:RCC:TPDS:2012}.

}

Thus, in addition to \its, \cts and \wts, each transaction $T_i$ maintains a timestamp range: \emph{Transaction Lower Timestamp Limit} or $\ttltl{i}$, and \emph{Transaction Upper Timestamp Limit} or $\ttutl{i}$. When a transaction $T_i$ begins, $\ttltl{i}$ is assigned $\tcts{i}$ and $\ttutl{i}$ is assigned a largest possible value which we denote as infinity. When $T_i$ executes a \mth $m$ in which it reads a version of a \tobj $x$ or creates a new version of $x$ in $\tryc$, $\ttltl{i}$ is incremented while $\ttutl{i}$ gets decremented \footnote{Technically $\infty$, which is assigned to $\ttutl{i}$, cannot be decremented. But here as mentioned earlier, we use $\infty$ to denote the largest possible value that can be represented in a system.}. 

We require to serialize all the transactions based on their \wts while maintaining their \rt order. On executing $m$, $T_i$ is ordered w.r.t to other transactions that have created a version of $x$ based on increasing order of \wts. For all transactions $T_j$ which also have created a version of $x$ and whose $\twts{j}$ is less than $\twts{i}$, $\ttltl{i}$ is incremented such that $\ttutl{j}$ is less than $\ttltl{i}$. Note that all such $T_j$ are serialized before $T_i$.  Similarly, for any transaction $T_k$ which has created a version of $x$ and whose $\twts{k}$ is greater than $\twts{i}$, $\ttutl{i}$ is decremented such that it becomes less than $\ttltl{k}$. Again, note that all such $T_k$ is serialized after $T_i$. 

Note that in the above discussion, $T_i$ need not have created a version of $x$. It could also have read the version of $x$ created by $T_j$. After the increments of $\ttltl{i}$ and the decrements of $\ttutl{i}$, if $\ttltl{i}$ turns out to be greater than $\ttutl{i}$ then $T_i$ is aborted. Intuitively, this implies that $T_i$'s \wts and \rt orders are out of \emph{sync} and cannot be reconciled. 


Finally, when a transaction $T_i$ commits: (1) $T_i$ records its commit time (or $\ct_i$) by getting the current value of \gtcnt and incrementing it by $\incv$ which is any value greater than or equal to 1. Then $\ttutl{i}$ is set to $\ct_i$ if it is not already less than it. Now suppose $T_i$ occurs in \rt before some other transaction, $T_k$ but does not have any conflict with it. This step ensures that $\ttutl{i}$ remains less than  $\ttltl{k}$ (which is initialized with $\tcts{k}$); (2) Ensure that $\ttltl{i}$ is still less than $\ttutl{i}$. Otherwise, $T_i$ is aborted. 

We illustrate this technique with the history $H1$ shown in \figref{sfmv-correct}. When $T_1$ starts its $\tcts{1} = 50, \ttltl{1} = 50, \ttutl{1}=\infty$. Now when $T_1$ commits, suppose $\gtcnt$ is 70. Hence, $\ttutl{1}$ reduces to 70. Next, when $T_2$ commits, suppose $\ttutl{2}$ reduces to 75 (the current value of $\gtcnt$). As $T_1, T_2$ have accessed a common \tobj $x$ in a conflicting manner, $\ttltl{2}$ is incremented to a value greater than $\ttutl{1}$, say 71. Next, when $T_3$ begins, $\ttltl{3}$ is assigned $\tcts{3}$ which is 80 and $\ttutl{3}$ is initialized to $\infty$. When $T_3$ reads 10 from $T_1$, which is $r_3(x, 10)$, $\ttutl{3}$ is reduced to a value less than $\ttltl{2} (= 71)$, say 70. But $\ttltl{3}$ is already at 80. Hence, the limits of $T_3$ have crossed and thus causing $T_3$ to abort. The resulting history consisting of only committed transactions $T_1 T_2$ is \stsble. 

Based on this idea, we next develop a variation of \sfkv, \emph{K-version Starvation-Free STM System} or \emph{\ksftm}. To explain this algorithm, we first describe the structure of the version of a \tobj used. It is a slight variation of the \tobj used in \pkto algorithm. It consists of: (1) timestamp, $ts$  which is the \wts of the transaction that created this version (and not \cts like \pkto); (2) the value of the version; (3) a list, called \rlist{}, consisting of transactions ids (could be \cts as well) that read from this version; (4) version \rt timestamp or \vt which is the \utl of the transaction that created this version. Thus a version has information of \wts and \utl of the transaction that created it. 

Now, we describe the main idea behind $\begt$, $\tread$, $\twrite$ and $\tryc{}$ \op{s} of a transaction $T_i$ which is an extension of \pkto. Note that as per our notation $i$ represents the \cts of $T_i$. 

\noindent
\textbf{$\begt(t)$:} A unique timestamp $ts$ is allocated to $T_i$ which is its \cts ($i$ from our assumption) which is generated by atomically incrementing the global counter $\gtcnt$. If the input $t$ is null then $\tcts{i} = \tits{i} = ts$ as this is the first \inc of this transaction. Otherwise, the non-null value of $t$ is assigned to $\tits{i}$. Then, \wts is computed by \eqnref{wtsf}. Finally, \ltl and \utl are initialized: $\ttltl{i} = \tcts{i}$, $\ttutl{i} = \infty$. 

\noindent
\textbf{$\tread(x)$:} Transaction $T_i$ reads from a version of $x$ with timestamp $j$ such that $j$ is the largest timestamp less than $\twts{i}$ (among the versions $x$), i.e. there exists no version $k$ such that $j<k<\twts{i}$ is true. If no such $j$ exists then $T_i$ is aborted. Otherwise, after reading this version of $x$, $T_i$ is stored in $j$'s $rl$. Then we modify \ltl, \utl as follows: 
\begin{enumerate}
\item The version $x[j]$ is created by a transaction with $\twts{j}$ which is less than $\twts{i}$. Hence, $\ttltl{i} = max(\ttltl{i}, x[j].$\vt$  + 1)$.
\item Let $p$ be the timestamp of smallest version larger than $i$. Then $\ttutl{i} = min(\ttutl{i}, x[p].\vt - 1)$.
\item After these steps, abort $T_i$ if \ltl and \utl have crossed, i.e., $\ttltl{i} > \ttutl{i}$. 
\end{enumerate}


\noindent
\textbf{$\twrite(x,v)$:} $T_i$ stores this write to value $x$ locally in its $\ws_i$. 

\noindent
\textbf{$\tryc:$} This \op{} consists of multiple steps: 
\begin{enumerate}
	\item 
	Before $T_i$ can commit, we need to verify that any version it creates is updated consistently. $T_i$ creates a new version with timestamp $\twts{i}$. Hence, we must ensure that any transaction that read a previous version is unaffected by this new version. Additionally, creating this version would require an update of \ltl and \utl of $T_i$ and other transactions whose read-write set overlaps with that of $T_i$. Thus, $T_i$ first validates each \tobj{} $x$ in its $\ws{}$ as follows: \label{step:kverify}
	\begin{enumerate}
		\item $T_i$ finds a version of $x$ with timestamp $j$ such that $j$ is the largest timestamp less than $\twts{i}$ (like in $\tread$). If there exists no version of $x$ with a timestamp less than $\twts{i}$ then $T_i$ is aborted. This is similar to \stref{notfound} of the $\tryc$ of \pkto algorithm. \label{step:k-look}
		
		\item Among all the transactions that have previously read from $j$ suppose there is a transaction $T_k$ such that $j<\twts{i}<\twts{k}$. Then (i) if $T_k$ has already committed then $T_i$ is aborted; (ii) Suppose $T_k$ is live, and $\tits{k}$ is less than $\tits{i}$. Then again $T_i$ is aborted; (iii) If $T_k$ is still live with $its_i$ less than $its_k$ then $T_k$ is aborted. 
		
		This step is similar to \stref{verify} of the $\tryc$ of \pkto algorithm. \label{step:k-verify}
			    
		
				
		\item Next, we must ensure that $T_i$'s \ltl and \utl are updated correctly w.r.t to other concurrently executing transactions. To achieve this, we adjust \ltl, \utl as follows: (i) Let $j$ be the $ts$ of the largest version smaller than $\twts{i}$. Then $\ttltl{i} = max(\ttltl{i}, x[j].\vt + 1)$. Next, for each reading transaction, $T_r$ in $x[j].\rlist$, we again set, $\ttltl{i} = max(\ttltl{i}, \ttutl{r} + 1)$. (ii) Similarly, let $p$ be the $ts$ of the smallest version larger than $\twts{i}$. Then, $\ttutl{i} = min(\ttutl{i}, x[p].\vt - 1)$. (Note that we don't have to check for the transactions in the \rlist of $x[p]$ as those transactions will have \ltl higher than $x[p].\vt$ due to $\tread$.) (iii) Finally, we get the commit time of this transaction from \gtcnt: $\ct_i = \gtcnt.add\&Get(\incv)$ where $\incv$ is any constant $\geq 1$. Then, $\ttutl{i} = min(\ttutl{i}, \ct_i)$. After performing these updates, abort $T_i$ if \ltl and \utl have crossed, i.e., $\ttltl{i} > \ttutl{i}$. \label{step:ktk-upd}
	\end{enumerate}
	
	\item After performing the tests of \stref{kverify} over each \tobj{s} $x$ in $T_i$'s $\ws$, if $T_i$ has not yet been aborted, we proceed as follows: for each $x$ in $\ws_i$ create a \vtup $\langle \twts{i}, \ws_i.x.v, null,\\ \ttutl{i} \rangle$. In this tuple, $\twts{i}$ is the timestamp of the new version; $\ws_i.x.v$ is the value of $x$ is in $T_i$'s $\ws$; the \rlist of the $\vtup$ is $null$; $\vt$ is  $\ttutl{i}$ (actually it can be any value between $\ttltl{i}$ and $\ttutl{i}$). Update the $\vlist$ of each \tobj $x$ similar to \stref{updt} of $\tryc$ of \pkto. 	
	
	\ignore{		
	\begin{enumerate}
		\item $T_i$ creates a \vtup $\langle i, x.v, null \rangle$. In this tuple, $i$ (\cts of $T_i$) is the timestamp of the new version; $x.v$ is the value of $x$ is in $T_i$'s \ws and the \rlist of the \vtup is $null$.
		\item Suppose the total number of versions of $x$ is $K$. Then among all the versions of $x$, $T_i$ replaces the version with the smallest timestamp with \vtup $\langle i, x.v, null \rangle$. Otherwise, the \vtup is added to $x$'s \vlist. 
	\end{enumerate}
	}
	\item Transaction $T_i$ is then committed. \label{step:kcommit}
\end{enumerate}

\noindent \stref{ktk-upd}.(iii) of $\tryc$ ensures that \rt order between transactions that are not in conflict. It can be seen that locks have to be used to ensure that all these \mth{s} to execute in a \lble manner (i.e., atomically).
\subsection{Data Structures and Pseudocode of \ksftm}
\label{sec:code}


The STM system consists of the following methods: $\init(), \begt(), read(i, x), write_i(i, x, v)$ and $\tryc(i)$. We assume that all the \tobj{s} are ordered as $x_1, x_2, ...x_n$ and belong to the set $\mathcal{T}$. We describe the data-structures used by the algorithm. 

We start with structures that local to each transaction. Each transaction $T_i$ maintains a $\rset{i}$ and $\wset{i}$. In addition it maintains the following structures (1) $\ct_i$: This is value given to $T_i$ when it terminates which is assigned a value in \tryc \mth. (2) A series of lists: \srl, \lrl, \allrl, \pvl, \nvl, \relll, \abl. The meaning of these lists will be clear with the description of the pseudocode. In addition to these local structures, the following shared global structures are maintained that are shared across transactions (and hence, threads). We name all the shared variable starting with `G'. 
\begin{itemize}
	\item $\gtcnt$ (counter): This a numerical valued counter that is incremented when a transaction begins and terminates. 
\end{itemize}
\noindent For each transaction $T_i$ we maintain the following shared time-stamps:
\begin{itemize}
	\item $\glock_i$: A lock for accessing all the shared variables of $T_i$.
	\item $\gits_i$ (initial timestamp): It is a time-stamp assigned to $T_i$ when it was invoked for the first time without any aborts. The current value of $\gtcnt$ is atomically assigned to it and then incremented. If $T_i$ is aborted and restarts later then the application assigns it the same \gits. 
	
	\item $\gcts_i$ (current timestamp): It is a time-stamp when $T_i$ is invoked again at a later time after an abort. Like \gits,the current value of $\gtcnt$ is atomically assigned to it and then incremented. When $T_i$ is created for the first time, then its \gcts{} is same as its \gits. 
	
	\item $\gwts_i$ (working timestamp): It is the time-stamp that $T_i$ works with. It is either greater than or equal to $T_i$'s \gcts. It is computed as follows: $\gwts_i = \gcts_i + C * (\gcts_i - \gits_i)$.
	
	\item $\gval_i$: This is a boolean variable which is initially true. If it becomes false then $T_i$ has to be aborted.
	
	\item $\gstat_i$: This is a variable which states the current value of $T_i$. It has three states: \texttt{live}, \texttt{committed} or \texttt{aborted}. 
	
	\item $\tltl_i, \tutl_i$ (transaction lower \& upper time limits): These are the time-limits described in the previous section used to keep the transaction \wts and \rt orders in sync. $\tltl_i$ is \gcts{} of $T_i$ when transaction begins and is a non-decreasing value. It continues to increase (or remains same) as $T_i$ reads \tobj{s} and later terminates. $\tutl_i$ on the other hand is a non-increasing value starting with $\infty$ when the $T_i$ is created. It reduces (or remains same) as $T_i$ reads \tobj{s} and later terminates. If $T_i$ commits then both $\tltl_i$ \& $\tutl_i$ are made equal.	
\end{itemize}
Two transactions having the same \its are said to be \inc{s}. No two transaction can have the same \cts. For simplicity, we assume that no two transactions have the same \wts as well. In case, two transactions have the same \wts, one can use the tuple $\langle$\wts, \cts$\rangle$ instead of \wts. But we ignore such cases. For each \tobj $x$ in $\mathcal{T}$, we maintain:
\begin{itemize}
	\item $x.\vl$ (version list): It is a list consisting of version tuples or \emph{\vtup} of the form $\langle \ts, val, \rl, \vt \rangle$. The details of the tuple are explained below. 
	
	\item $\ts$ (timestmp): Here $\ts$ is the $\gwts_i$ of a committed transaction $T_i$ that has created this version. 
	
	\item $val$: The value of this version. 
	
	\item $\rl$ (readList): $rl$ is the read list consists of all the transactions that have read this version. Each entry in this list is of the form $\langle rts \rangle$ where $rts$ is the $\gwts_j$ of a transaction $T_j$ that read this version.
	
	\item $\vt$ (version real-time timestamp): It is the \tutl value (which is same as \tltl) of the transaction $T_i$ that created this version at the time of commit of $T_i$.
\end{itemize}

\begin{algorithm}[H]
	\caption{STM $\init()$: Invoked at the start of the STM system. Initializes all 
		the \tobj{s} used by the STM System}  \label{algo:init}
		\begin{algorithmic}[1]
			\State $\gtcnt$ = 1; \Comment{Global Transaction Counter}
			\ForAll {$x$ in $\mathcal{T}$} \Comment{All the \tobj{s} used by the STM System}
			\State /* $T_0$ is creating the first version of $x$:  $\ts= 0, val = 0, \rl 
			= nil, \vt = 0$ */
			\State add $\langle 0, 0, nil, 0 \rangle$ to $x.\vl$; \label{lin:t0-init1} 
			\EndFor;
		\end{algorithmic}
\end{algorithm}

\begin{algorithm} [H]
	
	\caption{STM $\begt(its)$: Invoked by a thread to start a new transaction $T_i$. Thread can pass a parameter $its$ which is the initial timestamp when this transaction was invoked for the first time. If this is the first invocation then $its$ is $nil$. It returns the tuple $\langle id, \gwts, \gcts \rangle$} \label{algo:begin} 
			
		\begin{algorithmic}[1]
			\State $i$ = unique-id; \Comment{An unique id to identify this transaction. 
				It could be same as \gcts}
			\State \Comment{Initialize transaction specific local \& global variables}  
			\If {($its == nil$)}
			\State $\gits_i = \gwts_i = \gcts_i = \gtcnt.get\&Inc()$; \Comment{$\gtcnt.get\&Inc()$ returns the current value of \gtcnt and atomically increments it}
			\label{lin:ti-ts-init}

			\Else 
			\State $\gits_i = its$;
		
			\State $\gcts_i = \gtcnt.get\&Inc()$; 
			\State $\gwts_i = \gcts_i + C * (\gcts_i - \gits_i)$; \Comment{$C$ is any 
				constant greater or equal to than 1}

			\EndIf
			
			\State $\tltl_i = \gcts_i$; $\tutl_i = \ct_i = \infty$; \label{lin:lts-init}
		
			\State $\gstat_i$ = \texttt{live}; $\gval_i = T$;
			
			\State $\rs_i = \ws_i = nil$;			
			\State return $\langle i, \gwts_i, \gcts_i\rangle$
		\end{algorithmic}
\end{algorithm}
\begin{algorithm}[H]
	
	\caption{STM $read(i, x)$: Invoked by a transaction $T_i$ to read \tobj{} $x$. It returns either the value of $x$ or $\mathcal{A}$} \label{algo:read}
	
		\begin{algorithmic}[1]  
			\If {($x \in \ws_i$)} \Comment{Check if the \tobj{} $x$ is in 
						$\ws_i$}
						\State return $\ws_i[x].val$;
			\ElsIf{($x \in \rs_i$)} \Comment{Check if the \tobj{} $x$ is in 
						$\rs_i$}
						\State return $\rs_i[x].val$; 
			\Else \Comment{\tobj{} $x$ is not in $\rs_i$ and $\ws_i$}         
			\State lock $x$; lock $\glock_i$;
			\If {$(\gval_i == F)$} return abort(i); \label{lin:rd-chk}
			\EndIf    			
			\State /* \findls: From $x.\vl$, returns the largest \ts value less than $\gwts_i$. If no such version exists, it returns $nil$ */
			\State $curVer  = \findls(\gwts_i,x)$; 	\label{lin:rd-curver10} 					
			\If {$(curVer == nil)$} return abort(i); \Comment{Proceed only if $curVer$ is not nil} \label{lin:rd-cvnil}
			\EndIf
			\State /* \findsl: From $x.\vl$, returns the smallest \ts value greater than $\gwts_i$. If no such version exists, it returns $nil$ */			
			\State $nextVer  = \findsl(\gwts_i,x)$; 
\algstore{myalg}
\end{algorithmic}
\end{algorithm}
\begin{algorithm}[H]
\begin{algorithmic}
\algrestore{myalg}
			\If {$(nextVer \neq nil)$}
				\State \Comment{Ensure that $\tutl_i$ remains smaller than $nextVer$'s \vltl}
				\State $\tutl_i = min(\tutl_i, x[nextVer].\vt-1)$; \label{lin:rd-ul-dec}
			\EndIf
							
			\State \Comment{$\tltl_i$ should be greater than $x[curVer].\vltl$}
			
			\State $\tltl_i = max(\tltl_i, x[curVer].\vt + 1)$; \label{lin:rd-tltl-inc}
			\If {($\tltl_i > \tutl_i$)} \Comment{If the limits have crossed each other, then $T_i$ is aborted}
				\State return abort(i); \label{lin:rd-lts-cross}
			\EndIf    
			
			\State $val = x[curVer].v$; add $\langle x, val \rangle$ to $\rs_i$;
			\State add $T_i$ to $x[curVer].rl$; 
			\State unlock $\glock_i$; unlock $x$;
			\State return $val$;
			\EndIf
		\end{algorithmic}
\end{algorithm}

\begin{algorithm}  
	\caption{STM $write_i(x,val)$: A Transaction $T_i$ writes into local memory} \label{algo:write} 
	\begin{algorithmic}[1]
		\State Append the $d\_tuple \langle x,val \rangle$ to $\ws_i$.
		\State return $ok$;
	\end{algorithmic}
\end{algorithm}
\begin{algorithm} [H]
	\caption{STM $\tryc()$: Returns $ok$ on commit else return Abort} \label{algo:tryc} 
		\begin{algorithmic}[1]
			\State \Comment{The following check is an optimization which needs to be 
				performed again later}
			
			\State lock $\glock_i$;
			
			\If {$(\gval_i == F)$} return abort(i); \label{lin:init-tc-chk}
			\EndIf
			\State unlock $\glock_i$;
			
			\State \Comment{Initialize smaller read list (\srl), larger read list (\lrl), 
				all read list (\allrl) to nil}
			\State $\srl = \lrl = \allrl = nil$; \label{lin:init-rls}
			
			\State \Comment{Initialize previous version list (\pvl), next version list (\nvl) to nil}
			\State $\pvl = \nvl = nil$; \label{lin:init-vls}
			
			\ForAll {$x \in \ws_i$}
			\State lock $x$ in pre-defined order; \label{lin:lockxs} 
			\State /* \findls: returns the version of $x$ with the largest \ts less than $\gwts_i$. If no such version exists, it returns $nil$. */
			
			\State $\prevv = \findls(\gwts_i, x)$; \Comment{\prevv: largest version smaller than $\gwts_i$}
			\If {$(\prevv == nil)$} \Comment{There exists no version with \ts value less 
				than $\gwts_i$}
			\State lock $\glock_i$; return abort(i); \label{lin:prev-nil}
			\EndIf																
			\State $\pvl = \pvl \cup \prevv$; \Comment{\pvl stores the previous 
			version in sorted order }  
			
			\State $\allrl = \allrl \cup x[\prevv].rl$; \Comment{Store the read-list of the previous version}
			
			\State \Comment{\textbf{\getl}: obtain the list of reading 
			transactions of $x[\prevv].rl$ whose $\gwts$ is greater than 
			$\gwts_i$}
		
			\State $\lrl = \lrl \cup \getl(\gwts_i, $
			\Statex $x[\prevv].rl)$; \label{lin:lar-coll}
			
			\State \Comment{\textbf{\getsm}: obtain the list of reading transactions of $x[\prevv].rl$ whose $\gwts$ is smaller than $\gwts_i$}
			\State $\srl = \srl \cup \getsm(\gwts_i, $ 
			\Statex $x[\prevv].rl)$; \label{lin:lar-sml}  
\algstore{myalg}
\end{algorithmic}
\end{algorithm}
\begin{algorithm}
\begin{algorithmic}[H]
\algrestore{myalg}			
			\State /* \findsl: returns the version with the smallest \ts value greater than $\gwts_i$. If no such version exists, it returns $nil$. */
			\State $\nextv = \findsl(\gwts_i, x)$; \Comment{\nextv: smallest version larger than $\gwts_i$} \label{lin:get-nextv}
			
			\If {$(\nextv \neq nil)$)}
				\State $\nvl = \nvl \cup \nextv$; \Comment{\nvl stores the next version in sorted order}  \label{lin:nvl-coll}
			\EndIf
		
			\EndFor \Comment{$x \in \ws_i$}  
									
			\State $\relll = \allrl \cup T_i$; \Comment{Initialize relevant Lock List (\relll)}
				
			\ForAll {($T_k \in \relll$)}
				\State lock $\glock_k$ in pre-defined order; \Comment{Note: Since $T_i$ is also in $\relll$, $\glock_i$ is also locked} \label{lin:lockall}
			\EndFor
	
			\State \Comment{Verify if $\gval_i$ is false}
			\If {$(\gval_i == F)$} 
				return abort(i); \label{lin:mid-tc-chk}
			\EndIf
			
			\State $\abl = nil$ \Comment{Initialize abort read list (\abl)}
								
			\State \Comment{Among the transactions in $T_k$ in $\lrl$, either $T_k$ or $T_i$ has to be aborted}
			
			\ForAll {$(T_k \in \lrl)$}  
			\If {$(\isab(T_k))$} 
				\State \Comment{Transaction $T_k$ can be ignored since it is already aborted or about to be aborted}
				\State continue;
			\EndIf
			
\If {$(\gits_i < \gits_k) \land (\gstat_k == \texttt{live})$}   
				\State \Comment{Transaction $T_k$ has lower priority and is not yet committed. So it needs to be aborted}
				\State $\abl = \abl \cup T_k$; \Comment{Store $T_k$ in \abl} \label{lin:addAbl-lar}
			\Else \Comment{Transaction $T_i$ has to be aborted}
				\State return abort(i); \label{lin:its-chk1}
			\EndIf   
						
			\EndFor
			
			\State \Comment{Ensure that $\tltl_i$ is greater than \vltl of the versions in $\pvl$} 	
								
			\ForAll {$(ver \in \pvl)$}
				\State $x$ = \tobj of $ver$;
		
				\State $\tltl_i = max(\tltl_i, x[ver].\vt + 1)$; \label{lin:tryc-tltl-inc}
			\EndFor
			
			\State \Comment{Ensure that $\vutl_i$ is less than \vltl of versions in $\nvl$} 
				
			\ForAll {$(ver \in \nvl)$}
			
				\State $x$ = \tobj of $ver$;
							
				\State $\tutl_i = min(\tutl_i, x[ver].\vltl - 1)$; \label{lin:tryc-ul-dec}
				
			\EndFor 
				
			\State \Comment{Store the current value of the global counter as commit time and increment it}	
					
			\State $\ct_i = \gtcnt.add\&Get(\incv)$; \Comment{$\incv$ can be any constant $\geq$ 1} \label{lin:tryc-cmt-mod}
			
			\State $\tutl_i = min(\tutl_i, \ct_i)$; \Comment{Ensure that $\tutl_i$ is less than or equal to $\ct$} \label{lin:tryc-ul-cmt}

			\State \Comment{Abort $T_i$ if its limits have crossed}
			\If {$(\tltl_i > \tutl_i)$} 
				return abort(i); \label{lin:tc-lts-cross}
			\EndIf
\algstore{myalg}
\end{algorithmic}
\end{algorithm}
\begin{algorithm}
\begin{algorithmic}[H]
\algrestore{myalg}	
			\ForAll {$(T_k \in \srl)$}  
			\If {$(\isab(T_k))$} 
				\State continue;
				\EndIf
			
			\If {$(\tltl_k \geq \tutl_i)$} \label{lin:tk-check} \Comment{Ensure that the limits do not cross for both $T_i$ \& $T_k$}
			
			
			
				\If {$(\gstat_k == live)$}  \Comment{Check if $T_k$ is live}        

					\If {$(\gits_i < \gits_k)$}
						\State \Comment{Transaction $T_k$ has lower priority and is not yet committed. So it needs to be aborted}
						\State $\abl = \abl \cup T_k$; \Comment{Store $T_k$ in \abl} \label{lin:addAbl-sml}     
					\Else \Comment{Transaction $T_i$ has to be aborted}
						\State return abort(i); \label{lin:its|lar-sml}
					\EndIf \Comment{$(\gits_i < \gits_k)$}    \label{lin:its-ik}
				\Else \Comment{($T_k$ is committed. Hence, $T_i$ has to be aborted)}   
					\State return abort(i); \label{lin:its-chk2}
				\EndIf \Comment{$(\gstat_k == live)$}
			\EndIf \Comment{$(\tltl_k \geq \tutl_i)$}			
			\EndFor {$(T_k \in \srl)$}
			
			\State \Comment{After this point $T_i$ can't abort.}
			\State $\tltl_i = \tutl_i$;  \label{lin:ti-updt} 
			
			\State \Comment{Since $T_i$ can't abort, we can update $T_k$'s \tutl}
			\ForAll {$(T_k \in \srl)$}  
			\If {$(\isab(T_k))$} 
				\State continue;
			\EndIf
			       
            \State /* The following line ensure that $\tltl_k \leq \tutl_k < \tltl_i$. Note that this does not cause the limits of $T_k$ to cross each other because  of the check in  \Lineref{tk-check}.*/  			
			\State $\tutl_k = min(\tutl_k, \tltl_i - 1)$; \label{lin:tk-updt}
			\EndFor
			
			\ForAll {$T_k \in \abl$} \Comment{Abort all the transactions in \abl since 
				$T_i$ can't abort}
			\State $\gval_k =  F$; \label{lin:gval-set}
			\EndFor
			
			\State \Comment{Having completed all the checks, $T_i$ can be committed}
			\ForAll {$(x \in \ws_i)$} 
			\State /* Create new v\_tuple: $\ts, val, \rl, \vt$ for $x$ */
			\State $newTuple = \langle \gwts_i, \ws_i[x].val, nil, \tltl_i \rangle$; \label{lin:new-tup} 
			
			\If {($|x.vl| > k$)}
				\State replace the oldest tuple in $x.\vl$ with $newTuple$; \Comment{$x.\vl$ is ordered by $\ts$}
			\Else
				\State add a $newTuple$ to $x.vl$ in sorted order;
			\EndIf
			\EndFor \Comment{$x \in \ws_i$}
			
			\State $\gstat_i$ = \texttt{commit};
			\State unlock all variables;
			\State return $\mathcal{C}$;
			
		\end{algorithmic}
\end{algorithm}

\begin{algorithm}[H]
	\caption{$\isab(T_k)$: Verifies if $T_i$ is already aborted or its \gval flag 
		is set to false implying that $T_i$ will be aborted soon} \label{algo:isab} 
		\begin{algorithmic}[1]			
			\If {$(\gval_k == F) \lor (\gstat_k == \texttt{abort}) \lor (T_k \in \abl)$} 
			\State return $T$;
			\Else
			\State return $F$;
			\EndIf
		\end{algorithmic}
\end{algorithm}

\begin{algorithm}[H]
	\caption{$abort(i)$: Invoked by various STM methods to abort transaction $T_i$. 
		It returns $\mathcal{A}$} \label{algo:abort} 
		\begin{algorithmic}[1]
			\State $\gval_i = F$; $\gstat_i$ = \texttt{abort};
			\State unlock all variables locked by $T_i$;      
			\State return $\mathcal{A}$;
		\end{algorithmic}
\end{algorithm}
We get the following nice properties on \ksftm. For simplicity, we assumed $C$ and $\incv$ to be 0.1 and 1 respectively in our analysis. But the proof and the analysis holds for any value greater than 0. 

\ignore{
\begin{property}
\label{prop:ksftm-sble}
Any history generated by \pkto is \stsble.
\end{property}
}

\begin{theorem}
\label{thm:ksftm-lo}
Any history generated by \ksftm is strict-serializable and \lopq.
\end{theorem}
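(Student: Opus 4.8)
The plan is to prove the two claims — strict-serializability and local opacity — by invoking the graph characterization of \thmref{log} and \resref{opg}. Since a \lopq history is automatically \stsble (as noted in \secref{model}), it suffices to establish \lopty; strict-serializability will follow. So the real target is: for every sub-history $sh \in \shset{H}$ (where $H$ is any history generated by \ksftm), I must exhibit a version order $\ll_{sh}$ such that $\opg{sh}{\ll_{sh}}$ is acyclic.

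**First I would** fix the version order to be induced by \wts: for a \tobj $x$, order the versions $x_i \ll x_j$ iff $\twts{i} < \twts{j}$. This is the natural choice because \ksftm creates each version with timestamp equal to the \wts of the creating transaction, and the $\tread$/$\tryc$ logic (\stref{k-look}, \stref{k-verify}) is driven entirely by \wts comparisons. The key structural claim I would prove is a \emph{monotonicity lemma}: if there is any edge $v_a \to v_b$ in $\opg{sh}{\ll}$ — whether \rt, \rf, or \mv — then $\ttutl{a} < \ttltl{b}$, i.e. the \utl/\ltl timestamp ranges are strictly ordered along every edge. This is the heart of the argument, and it is exactly what the timestamp-range machinery of \stref{ktk-upd} and lines \Lineref{rd-ul-dec}, \Lineref{rd-tltl-inc}, \Lineref{tryc-tltl-inc}, \Lineref{tryc-ul-dec}, \Lineref{tryc-ul-cmt} were engineered to guarantee.

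**Then I would** verify the lemma edge-by-edge. For an \rf edge ($T_b$ reads $x$ from $T_a$), the read code forces $\ttltl{b} \ge x[a].\vt + 1 = \ttutl{a}+1$, and symmetrically the $\tryc$ of $T_a$ (when it later sees $T_b$ in a read-list, or via the \srl processing on lines \Lineref{tk-check}--\Lineref{tk-updt}) keeps $\ttutl{a} < \ttltl{b}$. For \mv edges I would split into the two cases of the definition and show each is covered by the $\ttltl{}=\max$ update over $\pvl$ (\Lineref{tryc-tltl-inc}) and the $\ttutl{}=\min$ update over $\nvl$ (\Lineref{tryc-ul-dec}). For \rt edges ($T_a$ commits before $T_b$ begins), the commit-time assignment $\ct_a = \gtcnt.add\&Get(\incv)$ on \Lineref{tryc-cmt-mod} combined with $\ttutl{a}\le\ct_a$ (\Lineref{tryc-ul-cmt}) and $\ttltl{b}=\tcts{b} > \ct_a$ (since $\gtcnt$ is monotone and $T_b$ draws a fresh, larger \cts after $T_a$'s commit incremented the counter) gives $\ttutl{a}<\ttltl{b}$. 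Once the lemma holds, acyclicity is immediate: any cycle $v_{a_1}\to\cdots\to v_{a_m}\to v_{a_1}$ would force $\ttutl{a_1} < \ttltl{a_1} \le \ttutl{a_1}$, a contradiction, using that each committed transaction has $\ttltl{}\le\ttutl{}$ (it is not aborted, so the crossing checks on \Lineref{rd-lts-cross}, \Lineref{tc-lts-cross} never fired, and \Lineref{ti-updt} sets them equal).

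**The hard part will be** the \mv edges together with the bookkeeping needed to make the lemma robust against the aborted/incomplete transactions that appear in the sub-histories. In a sub-history $sh\in\shset{H}$ there is at most one non-committed transaction, so I would handle the committed-transaction core with the version-order argument above and then argue separately that the single aborted/incomplete reader $T_i$ can only be a \emph{sink} (it has no outgoing \rf or \mv edges, since no one reads from an uncommitted transaction and its writes are purely local), whence it cannot lie on a cycle; its incoming edges are governed by the same range inequalities established during its own reads. A secondary subtlety is confirming that the locking discipline (locking each $x\in\ws_i$ in predefined order on \Lineref{lockxs} and all relevant $\glock_k$ on \Lineref{lockall}) makes every \mth appear atomic, so that the linearized history is well-defined and the \vt fields are read consistently; I would state this as a preliminary linearizability claim and lean on the standard two-phase-locking argument rather than reprove it.
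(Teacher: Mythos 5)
Your overall route is the same as the paper's. The paper also reduces both claims to \lopty via the graph characterization of \thmref{log}, fixes a timestamp-induced version order (stated there through the \vt{} field, $(v_i \ll_{\vt} v_j) \equiv (v_i.\vt < v_j.\vt)$, which coincides with your \wts-order because \tryc{} forces \vt{} to increase along every version list), proves a monotonicity lemma along edges of the opacity graph (\lemref{tltl-edge}: an edge $T_i \rightarrow T_j$ between transactions whose valid flags are true implies $\ttltl{i} < \ttltl{j}$), and kills cycles exactly as you do. Your invariant $\ttutl{a} < \ttltl{b}$ is interchangeable with the paper's, since a committing transaction equates $\ttltl{}$ and $\ttutl{}$ at \Lineref{ti-updt}; your separate atomicity/linearization step corresponds to the paper's \lemref{val-hs}, \lemref{rt-hs} and the surrounding operation-graph discussion.

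The genuine gap is your ``sink'' argument for the one non-committed transaction of a sub-history. You justify it by saying the aborted reader has no outgoing \rf{} or \mv{} edges ``since no one reads from an uncommitted transaction and its writes are purely local.'' That reasoning disposes of \rf{} edges only. By the definition of \mv{} edges, when $T_k$ reads $x$ from $T_j$ and a committed transaction $T_i$ of the same sub-history creates a version with $x_j \ll x_i$, the construction adds the edge $v_k \rightarrow v_i$ --- an edge \emph{out of the reader} $T_k$, requiring nobody to have read from $T_k$. The distinguished transaction of a sub-history in $\shset{H}$ is precisely such a reader (and is moreover given a commit event in the sub-history, so its own writes even create versions participating in $\ll_{sh}$); it therefore need not be a sink, and acyclicity of $\opg{sh}{\ll_{sh}}$ for sub-histories of aborted transactions is not established by your argument.

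The gap is repairable, but it needs an argument you have not supplied, and it is exactly the place where the range machinery alone does not help: a committed writer $T_i$ that finds $T_k$ in the read list of its previous version with $\twts{k} > \twts{i}$ handles $T_k$ through the abort logic of \Lineref{addAbl-lar}--\Lineref{its-chk1}, not through limit updates, so $\ttutl{k} < \ttltl{i}$ does not follow from the code for that edge. The natural repair is to show such a configuration cannot occur \emph{inside the sub-history}: any committed $T_i$ creating a version between $x_j$ and $\twts{k}$ must have committed after $T_k$'s read (else $T_k$ would have read $T_i$'s version), and its \tryc{} then either aborts itself or sets $T_k$'s valid flag to false, so $T_i$'s commit cannot precede $T_k$'s last successful operation and $T_i$ is excluded from $T_k$'s sub-history; the remaining outgoing \mv{} edges of $T_k$ go to writers with \wts{} above $\twts{k}$, for which \Lineref{rd-ul-dec} and the \srl-update at \Lineref{tk-updt} do give $\ttutl{k} < \ttltl{i}$. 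Be aware that the paper itself is thin here --- \lemref{tltl-edge} is restricted to transactions with both valid flags true, and the proof of \thmref{ap-ksftm-lo} is a two-line sketch --- so a complete write-up has to supply this missing case rather than cite it.
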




\begin{theorem}
\label{thm:ksftm-live}
\ksftm algorithm ensures \stfdm. 
\end{theorem}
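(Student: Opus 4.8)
The plan is to establish \stfdm by a minimal-counterexample argument on the \its ordering, supported by a quantitative analysis of how \wts grows under retries. Assume for contradiction that \ksftm starves some non-parasitic transaction, and let $a$ be the smallest \its among all starved transactions, with $T$ the transaction (retried infinitely often) having $\tits{T}=a$. Every transaction whose \its is below $a$ is, by minimality of $a$, not starved and therefore commits after finitely many retries and leaves; hence there is a time after which $T$ carries the globally smallest \its among all live transactions. I will show that from this time on $T$ is aborted only finitely often, so by the standing assumption that a transaction not forced to abort (and not parasitic) commits, some incarnation of $T$ commits---a contradiction. Two structural facts are used throughout. First, since \ksftm acquires its locks (the \tobj{} locks and then the per-transaction locks $\glock_i$) in a fixed global order, it is deadlock-free, so \asmref{bdtm} applies and every incarnation terminates within the bound $L$. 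Second, finitely many threads together with \asmref{bdtm} bound by a constant $M$ the number of \gtcnt{} increments that can occur during the lifetime of any one incarnation; consequently any transaction overlapping a given incarnation of $T$ has its \cts within $M$ of $T$'s \cts (the ``concurrency window'').

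The core step is to show that, once $T$ has the smallest live \its, \emph{every} abort of $T$ is attributable to a committed transaction whose \wts exceeds $\twts{T}$ for that incarnation. I will walk through each abort site. A nil current version in $\tread$ or in \stref{k-look} of $\tryc$ means no version lies below $\twts{T}$; since $T_0$'s version has $\ts=0$, this can happen only when the $K$ retained versions all have larger \ts, i.e.\ were created by higher-\wts committers. The priority test of \stref{k-verify} aborts $T$ only when the offending reader $T_k$ (with $j<\twts{T}<\twts{k}$) is already committed: if $T_k$ is live it has higher \its than $T$ and is aborted instead. Finally, a crossing $\ttltl{T}>\ttutl{T}$, whether in $\tread$ or in \stref{ktk-upd}, can be produced only by a reduction of $\ttutl{T}$ coming from a \emph{next} version $x[p]$ with $x[p].\ts>\twts{T}$---again a higher-\wts committer---because in the absence of such a version one has $\ttutl{T}=\ct_T$, the commit time freshly sampled from \gtcnt, while $\ttltl{T}$ is raised only to $x[j].\vt+1$ for versions $x[j]$ below $\twts{T}$, all of whose $\vt$ values equal commit times strictly smaller than $\ct_T$; thus $\ttltl{T}\le\ct_T=\ttutl{T}$. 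The same bookkeeping disposes of the committed case in the final \srl loop, where a committed $T_k$ with $\twts{k}<\twts{T}$ has $\ttltl{k}=\ct_k<\ct_T=\ttutl{T}$ and so never satisfies the test $\ttltl{k}\ge\ttutl{T}$.

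It remains to bound the culprits. Let $T'$ be a committed transaction with $\twts{T'}>\twts{T}$ for the current incarnation of $T$; write $c=\tcts{T}$, $b'=\tits{T'}$, and $c'$ for $T'$'s committing \cts. By \eqnref{wtsf}, $\twts{T'}>\twts{T}$ is equivalent to $(1+C)(c'-c)>C(b'-a)$. Since $b'>a$ the right-hand side is positive, forcing $c'>c$, so $T'$ must have begun after (and overlapped) this incarnation of $T$; the concurrency window then gives $c'-c\le M$ and hence $b'-a<(1+C)M/C=:\Delta$. Therefore every transaction that ever aborts $T$---across all of $T$'s incarnations, since $\Delta$ is independent of the incarnation---has its \its in the fixed finite window $(a,a+\Delta]$. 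Because \its values are assigned uniquely from the monotone counter and persist across incarnations, and because such a culprit is \emph{committed} when it aborts $T$ (so its thread thereafter starts a fresh transaction whose \its equals the then-current, far larger counter value and leaves the window permanently), each \its in the window supplies at most one committed culprit. As $\twts{T}$ strictly increases on every retry (its \its is fixed while its \cts grows, so by \eqnref{wtsf} it tends to infinity), each such culprit, whose \wts is fixed, can block only finitely many incarnations before $\twts{T}$ overtakes it. Hence $T$ is aborted only finitely often and eventually commits, contradicting its starvation.

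I expect the principal obstacle to be the range-crossing analysis of the second paragraph: showing that, for the lowest-\its transaction once its \wts dominates, the lower and upper limits never cross requires tracking $\ttltl{T}$ and $\ttutl{T}$ exactly against the freshly sampled $\ct_T$ and against the $\vt$ fields of the surrounding versions and readers. The intuition that $T$ eventually attains the largest \wts can be borrowed from \propref{sfmv-live} (it holds verbatim for \sfkv, on which \ksftm is built), but the extra real-time constraints that \ksftm adds to regain \stsbty mean I must additionally certify that \emph{no} abort condition other than the higher-\wts committed case can fire, which is where the bulk of the work lies.
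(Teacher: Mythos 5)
Your proposal rests on the same two pillars as the paper's proof in \secref{ap-liveness}: (i) once $T$ holds the smallest \its among live transactions, every abort of $T$ is attributable to a \emph{committed} transaction whose \wts exceeds that of $T$'s current incarnation --- this is precisely \lemref{its-wts}; and (ii) the arithmetic of \eqnref{wtsf} plus bounded termination confines any such higher-\wts transaction to a bounded \its window --- this is precisely \lemref{wts-its}. Your walk through the abort sites matches the paper's case analysis nearly clause for clause. Where you genuinely depart is in the final assembly. The paper climbs an enablement hierarchy (\itsen, then \cdsen, then \finen; \lemref{its-cds}, \lemref{cds-fin}, \lemref{enbd-ct}), and its \cdsen stage requires showing, by induction on the size of \depits, that every transaction in the window eventually commits. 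Your minimal-counterexample-plus-counting argument sidesteps that stage: a window transaction that never commits is never a culprit, each \its value yields at most one committed transaction ever, a culprit's \wts is frozen at commit, and $\twts{T}$ grows without bound over retries, so each culprit blocks only finitely many incarnations of $T$. This is leaner, and the minimality of $a$ also does the work of \lemref{live-its} (whose induction step the paper leaves unfinished).

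There is, however, one step that fails as written. In the window derivation you assert ``since $b' > a$ the right-hand side is positive,'' but nothing forces a culprit's \its to exceed $a$. Committed transactions with \its below $a$ --- the non-starved ones that finished before $T$ became the minimal live transaction --- leave behind versions and read-list entries, and for $b' < a$ the right-hand side of $(1+C)(c'-c) > C(b'-a)$ is negative, so the inequality can hold with $c' < c$: a many-times-retried transaction that committed early can satisfy $\twts{{T'}} > \twts{T}$ for the first post-minimality incarnations of $T$, and can be the committed transaction behind, say, a nil-current-version abort or a committed reader in the priority test. Hence your claim that every culprit lies in $(a, a+\Delta]$ is false, and the per-\its uniqueness count, as written, misses these culprits. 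The repair is immediate, and it is exactly how the paper states \lemref{wts-its}: make the window one-sided. Every culprit has \its less than $a+\Delta$ (trivially when $b' \le a$, by your overlap argument when $b' > a$); there are finitely many \its values below $a+\Delta$, each supplying at most one committed transaction, and your counting goes through unchanged. A further small omission, which the paper's own \lemref{ti|tutl-comt} shares: $\ttutl{T}$ can also be decremented by a committing transaction that holds $T$ in its \srl (\Lineref{tk-updt}); since such a writer has \wts larger than $\twts{T}$, it lands in the same higher-\wts-committer bucket, so the attribution step survives.
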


\noindent As explained in the description \propref{sfmv-live}, the proof of this property is somewhat involved. As expected, this proof can be extended to \mvsftm as well. 

\noindent
\textbf{Garbage Collection:} Having described the \emph{\stf} algorithm, we now describe how garbage collection can be performed on the unbounded variant, \mvsftm to achieve \mvsftmgc. This is achieved by deleting non-latest version (i.e., there exists a version with greater $ts$) of each \tobj whose timestamp, $ts$ is less than the \cts of smallest live transaction. It must be noted that \mvsftm (\ksftm) works with \wts which is greater or equal to \cts for any transaction. Interestingly, the same garbage collection principle can be applied for \pmvto to achieve \pmvtogc. 

To identify the transaction with the smallest \cts among live transactions, we maintain a set of all the live transactions, \livel. When a transaction $T_i$ begins, its \cts is added to this \livel. And when $T_i$ terminates (either commits or aborts), $T_i$ is deleted from this \livel. 



\section{Experimental Evaluation}
\label{sec:exp}

For performance evaluation of \ksftm with the state-of-the-art STMs, we implemented the the algorithms \pkto, \svsftm \cite{Gramoli+:TM2C:Eurosys:2012, WaliSten:Starve:CC:2009, Spear+:CCM:PPoPP:2009} along with \ksftm in C++ \footnote{Code is available here: https://github.com/PDCRL/KSFTM}. We used the available implementations of NOrec STM \cite{Dalessandro:2010:NSS:1693453.1693464}, and ESTM \cite{Felber:2017:jpdc} developed in C++. Although, only \ksftm and \svsftm provide \stfdm, we compared with other STMs as well, to see its performance in practice. 

\noindent
\textbf{Experimental system:} The experimental system is a 2-socket Intel(R) Xeon(R) CPU E5-2690 v4 @ 2.60GHz with 14 cores per socket and 2 hyper-threads (HTs) per core, for a total of 56 threads. Each core has a private 32KB L1 cache and 256 KB L2 cache. The machine has 32GB of RAM and runs Ubuntu 16.04.2 LTS. In our implementation, all threads have the same base priority and we use the default Linux scheduling algorithm. This satisfies the \asmref{bdtm} (\bdtm) about the scheduler. We ensured that there no parasitic transactions \cite{BushkovandGuerraoui:2015:COST} in our experiments.
	
\noindent
\textbf{Methodology:} Here we have considered two different applications:\textbf{(1)} Counter application - In this, each thread invokes a single transaction which performs 10 reads/writes \op{s} on randomly chosen \tobj{s}. A thread continues to invoke a transaction until it successfully commits. To obtain high contention, we have taken large number of threads ranging from 50-250 where each thread performs its read/write operation over a set of 5 \tobj{s}. We have performed our tests on three workloads stated as: (W1) Li - Lookup intensive: 90\% read, 10\% write, (W2) Mi - Mid intensive: 50\% read, 50\% write and (W3) Ui - Update intensive: 10\% read, 90\% write. This application is undoubtedly very flexible as it allows us to examine performance by tweaking different parameters (refer to \subsecref{countercode} for details).  \textbf{(2)} Two benchmarks from STAMP suite \cite{stamp123} - (a) We considered KMEANS which has low contention with short running transactions. The number of data points as 2048 with 16 dimensions and total clusters as 5. (b) We then considered LABYRINTH which has high contention with long running transactions. We considered the grid size as 64x64x3 and paths to route as 48.

To study starvation in the various algorithms, we considered \emph{\wct}, which is the maximum time taken by a transaction among all the transactions in a given experiment to commit from its first invocation. This includes time taken by all the aborted \inc{s} of the transaction to execute as well. To reduce the effect of outliers, we took the average of \wct in ten runs as the final result for each application.

	\cmnt{
	and second application is labyrinth from STAMP benchmark \cite{stamp} which has long running transactions with high contention. \cmnt{In our counter application, each thread invokes a single transaction and within a transaction, a thread performs 10 reads/writes \op{s} on randomly chosen \tobj{s} from a set of 5 \tobj{s} to increase the contention level. A thread continues to invoke a transaction until it successfully commits.To increase the contention we considered multiple threads. We have the control of greater flexibility with counter application. So, }We have performed our tests on three workloads stated as: (W1) Li - Lookup intensive (90\% read, 10\% write), (W2) Mi - Mid intensive (50\% read, 50\% write) and (W3) Ui - Update intensive (10\% read, 90\% write). \cmnt{We have also checked the performance on labyrinth which is showing similar results. Due to brevity of space, the comparison of algorithms are in \apnref{apn-result}.} For accurate results, we took an average of ten runs as the final result for each algorithm.}
\begin{figure}
	\captionsetup{justification=centering}
	\includegraphics[width=\linewidth]{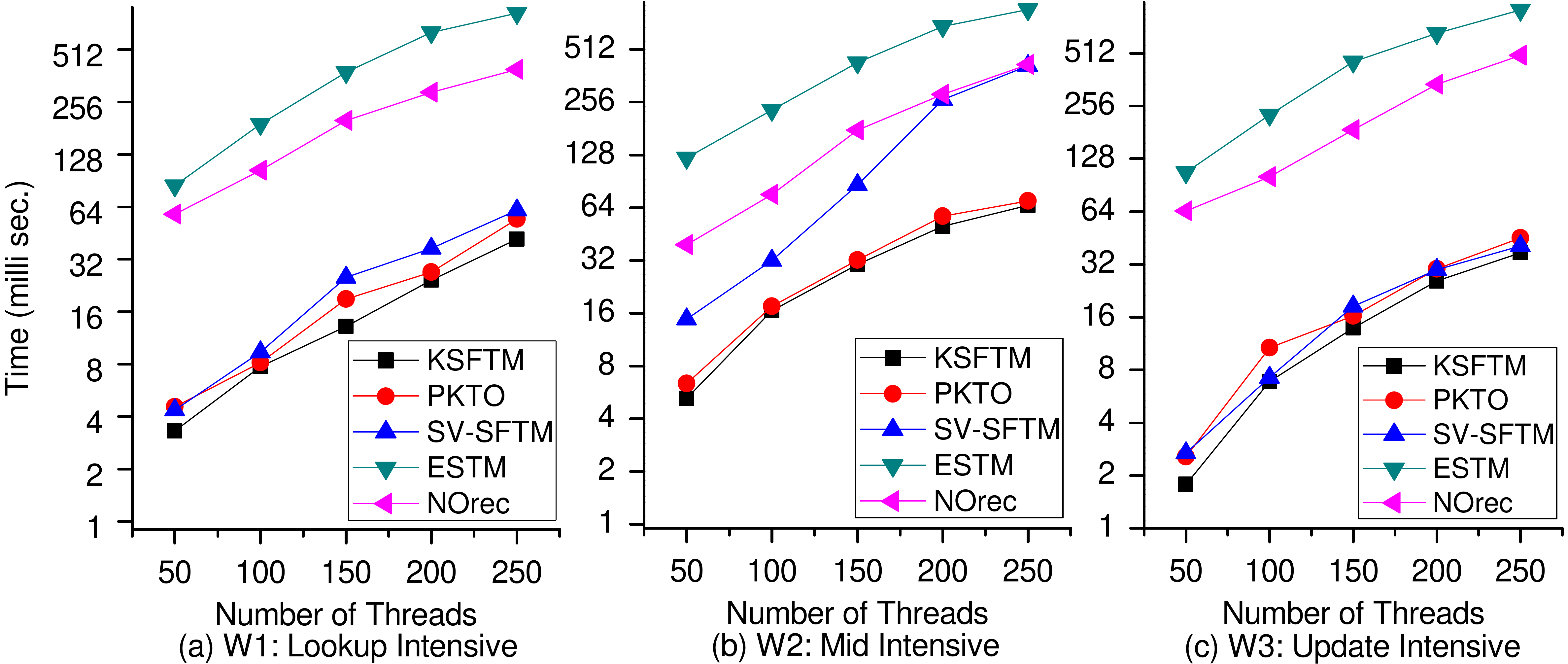}
	\vspace{-.2cm} \caption{Performance analysis on workload $W1$, $W2$, $W3$}\label{fig:wcts}
\end{figure}

\noindent
\textbf{Results Analysis:} \figref{wcts} illustrates \wct analysis of \ksftm over the above mentioned STMs for the counters application under the workloads $W1$, $W2$ and $W3$ while varying the number of threads from 50 to 250. For \ksftm and \pkto, we chose the value of K as 5 and C as 0.1 as the best results were obtained with these parameters. We can see that \ksftm performs the best for all the three workloads. \ksftm gives an average speedup on \wct by a factor of 1.22, 1.89, 23.26 and 13.12 over \pkto, \svsftm, NOrec STM and ESTM respectively.
	
\ignore{
Under high contention the time taken by the longest running transaction to commit is considered as the worst case time. 
	
We implemented 3 variants of \ksftm (\mvsftm, \mvsftmgc, and \ksftm) and \pkto (\pmvto, \pmvtogc, and \pkto) and tested on all the workloads $W1$  $W2$ and $W3$. \ksftm outperforms \mvsftm and \mvsftmgc by a factor of 2.1 and 1.5. Similarly, \pkto outperforms \pmvto and \pmvtogc by a factor of 2 and 1.35. These results show that maintaining finite versions corresponding to each \tobj performs better than maintaining infinite versions and garbage collection on infinite versions corresponding to each \tobj. We identified the best value of K as 5 and optimal value of $C$ as 0.1 for \ksftm on counter application.
we ran our experiment, varying value of K and keeping the number of threads as 64 on workload $W1$ and obtained the optimal value of $K$ in \ksftm is 5 as shown in \figref{worstcase}.(b). Similarly, we calculate the best value of $K$ as 5 for \pkto on the same parameters. The optimal value of $C$ as 0.1 for \ksftm on the above parameters (details described in \apnref{apn-result} and technical report \cite{DBLP:journals/corr/abs-1709-01033}).
}
    
\figref{stamp}(a) shows analysis of \wct for KMEANS while \figref{stamp}(b) shows for LABYRINTH. In this analysis we have not considered ESTM as the integrated STAMP code for ESTM is not publicly available. For KMEANS, \ksftm performs 1.5 and 1.44 times better than \pkto and \svsftm. But, NOrec is performing 1.09 times better than \ksftm. This is because KMEANS has short running transactions have low contention. As a result, the commit time of the transactions is also low. 

On the other hand for LABYRINTH, \ksftm again performs the best. It performs 1.14, 1.4 and 2.63 times better than \pkto, \svsftm and NOrec respectively. This is because LABYRINTH  has high contention with long running transactions. This result in longer commit times for transactions. 
    
\figref{stamp}(c) shows the stability of \ksftm algorithm over time for the counter application. Here we fixed the number of threads to 32, $K$ as 5, $C$ as 0.1, \tobj{s} as 1000, along with 5 seconds warm-up period on $W1$ workload. Each thread invokes transactions until its time-bound of 60 seconds expires. We performed the experiments on number of transactions committed over time in the increments 5 seconds. The experiment shows that over time \ksftm is stable which helps to hold the claim that \ksftm's performance will continue in same manner if time is increased to higher orders. 
    

\vspace{1mm}	
	\begin{figure}
	\captionsetup{justification=centering}
	\includegraphics[width=\linewidth]{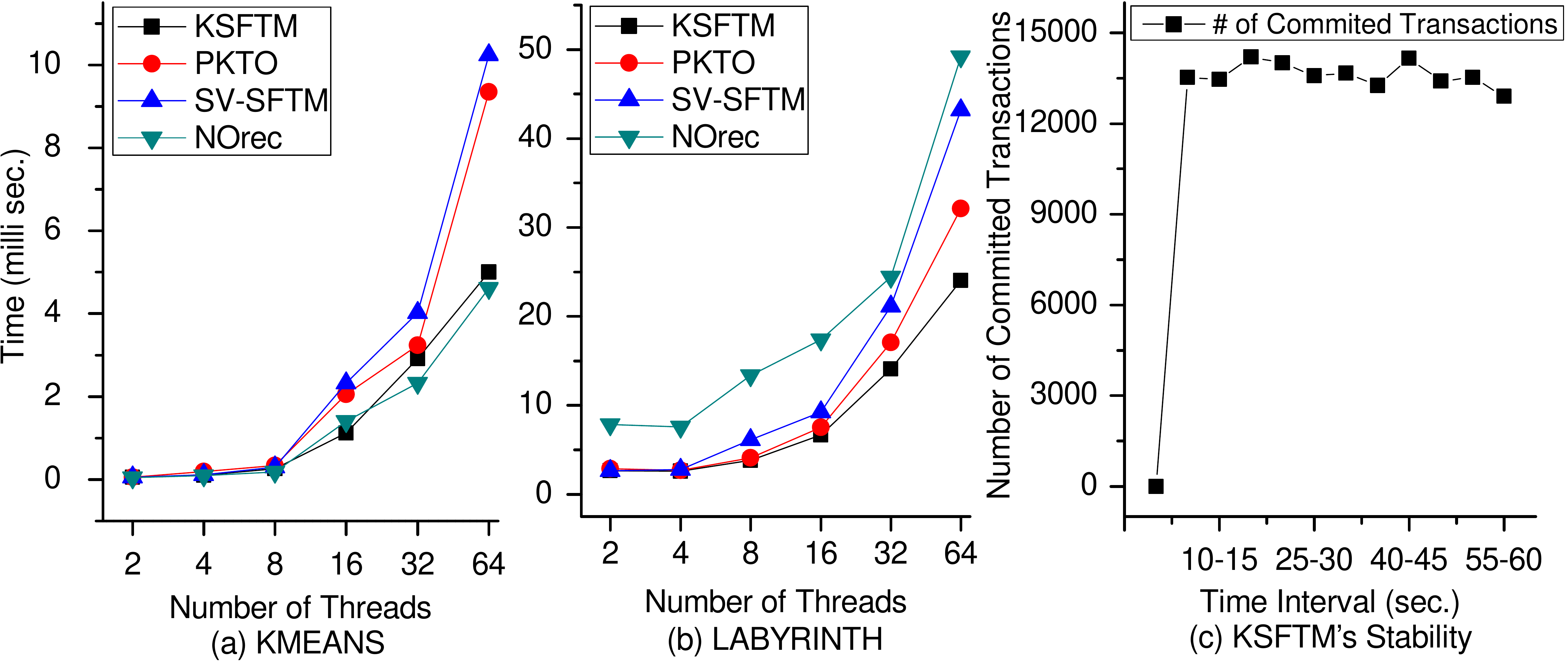}
\vspace{-.2cm} 	\caption{Performance analysis on KMEANS, LABYRINTH and KSFTM's Stability }\label{fig:stamp}
\end{figure}

\cmnt{
	\begin{figure}
		\captionsetup{justification=centering}
		\includegraphics[width=12cm, height=8cm]{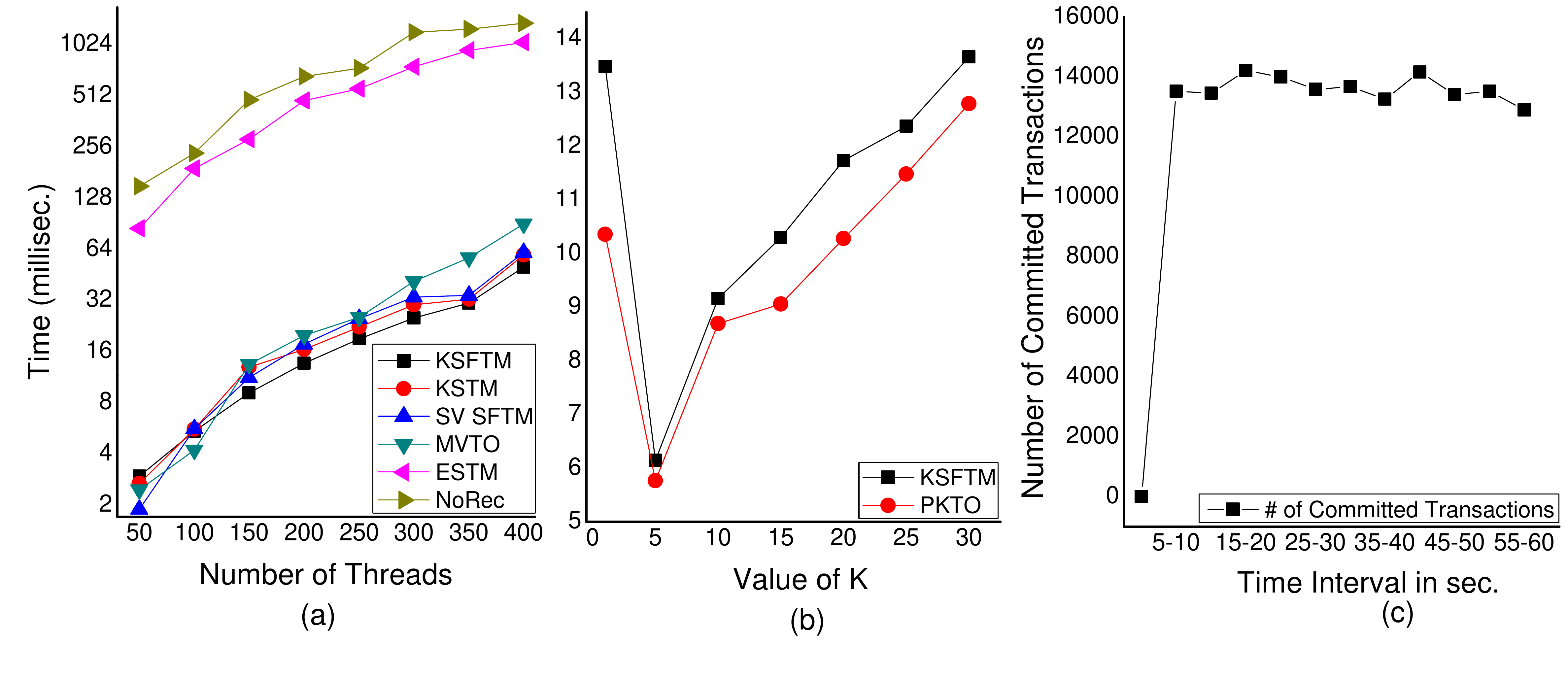}
		\vspace{-.5cm}
		\caption{Worst case time analysis, Optimal value of K and \ksftm stability}\label{fig:worstcase}
	\end{figure}

	Our experimental goals are to: (G1) identify the best value of K (i.e., number of versions) in the propose algorithm \ksftm; (G2) to evaluate the performance of all our proposed algorithms (\ksftm, \mvsftm, \mvsftmgc, \pkto \pmvto, \pmvtogc); (G3) to evaluate the overhead of starvation-freedom by comparing the performance of \ksftm  and \emph{non-\stf} \pkto STM, and  (G4) to compare the performance of \ksftm with state-of-the-art STMs (NOrec, ESTM, MVTO, and \svsftm) on different workloads. 
	
	\ignore{
	\color{red}
	It can be seen that achieving \stfdm involves some overhead. We want to understand how costly is this overhead by comparing the performance of \ksftm with finite version non-starvation free STM - \pkto.
	\color{black} 
	}
	\begin{figure}
		\captionsetup{justification=centering}
		\includegraphics[width=12cm, height=6cm]{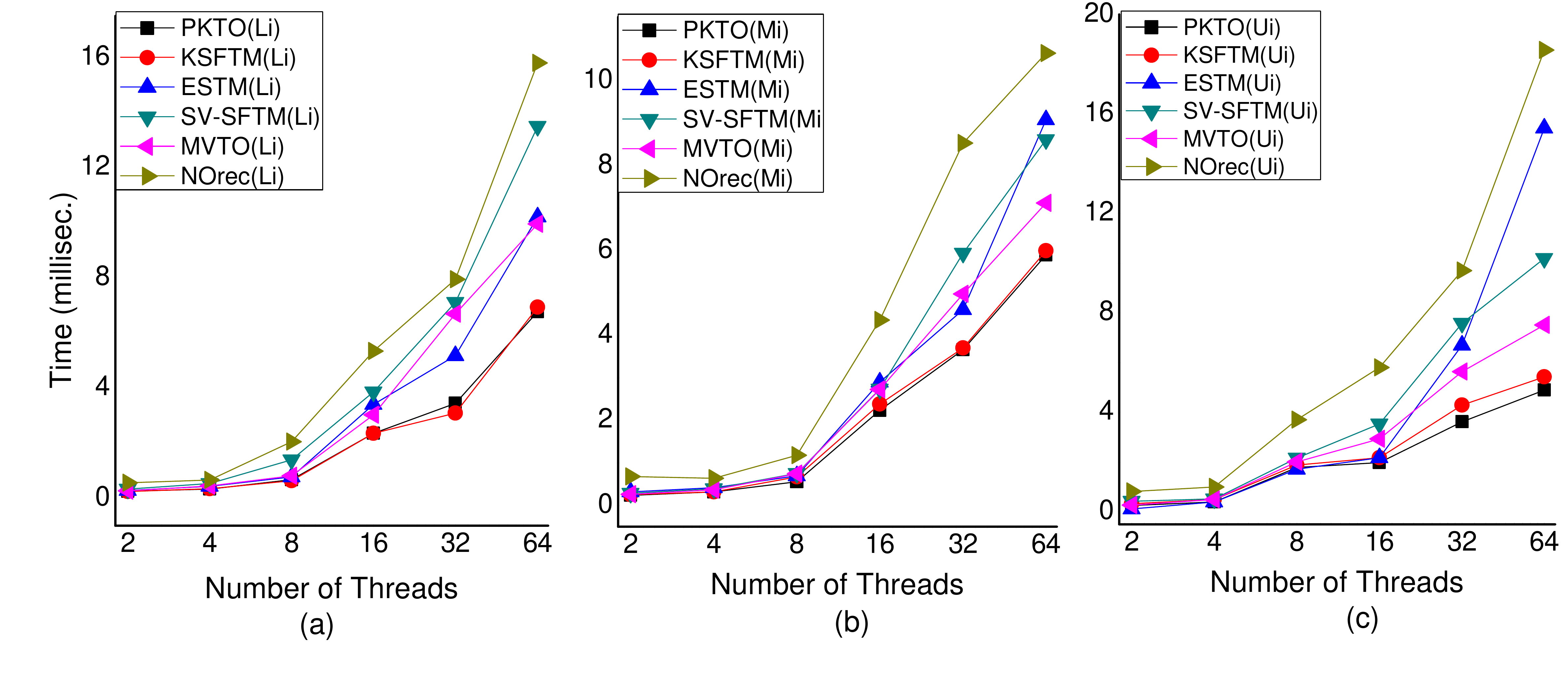}
		\caption{Performance on workload $W1$, $W2$, $W3$}\label{fig:performance}
	\end{figure}
	
	\noindent
	\textbf{Experimental system:} The experimental system is a 2-socket Intel(R) Xeon(R) CPU E5-2690 v4 @ 2.60GHz with 14 cores per socket and 2 hyper-threads (HTs) per core, for a total of 56 threads. Each core has a private 32KB L1 cache and 256 KB L2 cache. The machine has 32GB of RAM and runs Ubuntu 16.04.2 LTS. 
	
	In our implementation, all threads have the same base priority and we use the default Linux scheduling algorithm. This satisfies the \asmref{bdtm} (\bdtm), about the scheduler and no parasitic transaction \cite{BushkovandGuerraoui:2015:COST} exist.
	
	
	
	\ignore{
	\noindent
	\textbf{STM implementations:} The application first creates N-threads, each thread, in turn, invokes a transaction. Each transaction has two phases, the first phase, i.e., the read-write phase consists of read (from the shared memory) and write (to the transaction's local buffer) operations followed by the second phase, i.e., the commit phase, where the actual writes are made visible in the shared memory, and the transaction tries to commit. We have specifically chosen such a test application to achieve our experimental goals. Also to test the starvation of transactions, we wanted to vary the various parameters and our chosen application suffice this need too.
	
	We get the inspiration of single version starvation-freedom, \svsftm from the literature by Gramoli et al. \cite{Gramoli+:TM2C:Eurosys:2012}, Waliullah and Stenstrom \cite{WaliSten:Starve:CC:2009}, Spear et al. \cite{Spear+:CCM:PPoPP:2009}. 
	}
	
	\noindent
	\textbf{Methodology:} To test the algorithms, we considered a test-application in which each thread invokes a single transaction. Within a transaction, a thread performs 10 reads/writes \op{s} on randomly chosen \tobj{s} from a set of 1000 \tobj{s}. A thread continues to invoke a transaction until it successfully commits. We have considered three types of workloads: (W1) Li - Lookup intensive (90\% read, 10\% write), (W2) Mi - Mid intensive (50\% read, 50\% write) and (W3) Ui - Update intensive (10\% read, 90\% write). For accurate results, we took an average of ten runs as the final result for each algorithm. 
	
	\noindent
	\textbf{Results Analysis:} 
	We implemented 3 variants of \ksftm (\mvsftm, \mvsftmgc, and \ksftm) and \pkto (\pmvto, \pmvtogc, and \pkto) to do the performance analysis on different workloads $W1$ (Li), $W2$ (Mi) and $W3$ (Ui) respectively. \ksftm outperforms \mvsftm and \mvsftmgc by a factor of 2.1 and 1.5. Similarly, \pkto outperforms \pmvto and \pmvtogc by a factor of 2 and 1.35. These results show that maintaining finite versions corresponding to each \tobj performs better than maintaining infinite versions and garbage collection on infinite versions corresponding to each \tobj.
	
	To identify the best value of K for \ksftm on counter application, we ran our experiment, varying value of K and keeping the number of threads as 64 on workload $W1$ and obtained the optimal value of $K$ in \ksftm is 5 as shown in \figref{worstcase}.(b). Similarly, we calculate the best value of $K$ as 5 for \pkto on the same parameters. The optimal value of $C$ as 0.1 for \ksftm on the above parameters (details described in \apnref{apn-result} and technical report \cite{DBLP:journals/corr/abs-1709-01033}).
	
	\figref{performance} (a), (b) and (c) shows the performance of all the algorithms (proposed as well as state-of-the-art STMs) for workloads $W1$, $W2$ and $W3$ respectively. For workload $W1$, the graph shows that \ksftm outperforms \svsftm, NOrec, ESTM, and MVTO by a factor of 2.5, 3, 1.7 and 1.5. \ignore{ESTM performs better than \ksftm in low contention when the thread count is 16 and less, while this is not the case for high contention when the thread count is increased to 32 and more.} For workload $W2$, \ksftm exceeds \svsftm, NOrec, ESTM, and MVTO by a factor of 1.5, 2, 1.6 and 1.3 respectively. For workload $W3$, \ksftm again beats \svsftm, NOrec, ESTM, and MVTO by 1.7, 3.3, 3 and 1.4 at thread count 64. So, \ksftm outperforms all the other STM algorithms in low as well as high contention.
	
	\ksftm's performance is comparable to \pkto in all workloads. In fact, in all workloads, the performance of \ksftm is 2\% less than \pkto. But as discussed in \subsecref{mvto}, that the transactions can possibly starve with \pkto while this is not the case with \ksftm. We believe that this is the overhead that one has to pay to achieve \stfdm. On the positive side, the overhead is very small, being only around 2\% as compared to \pkto. On the other hand, the performance of \ksftm is much better than single-version \stf algorithm \svsftm.
	
	We analyzed time to achieve the \stfdm for \ksftm in worst-case and compared \svsftm, NOrec, ESTM, MVTO and \pkto. High contention certainly increases probability of transactions being aborted. So we have considered varying the threads from 50 to 400, each thread invoking 1000 transaction with $K$ as 5, $C$ as 0.1, \tobj{s} as 1000 and with $W1$ workload. \figref{worstcase}.(a) shows that the worst-case time for the commit of a transaction in \ksftm is consistently better.
	
	\begin{figure}
		\captionsetup{justification=centering}
		\includegraphics[width=12cm, height=6cm]{figs/Graph5.pdf}
		\vspace{-.5cm}
		\caption{Worst case time analysis, Optimal value of K and \ksftm stability}\label{fig:worstcase1}
	\end{figure}
	\vspace{-.05cm}
	
	\ignore{
	\begin{figure}
		\centering
		\begin{minipage}[b]{0.49\textwidth}
			\centering
			\includegraphics[width=6cm, height=5cm]{figs/ctime.pdf}
		\caption{Worst-Case Time Comparison}\label{fig:worst-case time1}
		\end{minipage}   
		\hfill
		\begin{minipage}[b]{0.49\textwidth}
	\includegraphics[width=7cm, height=4cm]{figs/ctrans.pdf}
			\centering
		\caption{KSFTM Stability}\label{fig:stability1}
		\end{minipage}
	\end{figure}
	}
	\cmnt{
	\begin{figure}
		\captionsetup{justification=centering}
			\includegraphics[width=12cm, height=9cm]{figs/ctime.pdf}
			\caption{\Large Worst-Case Time Comparison}\label{fig:worst-case time2}
	\end{figure}
	
	\begin{figure}
		\captionsetup{justification=centering}
			\includegraphics[width=\linewidth]{figs/ctrans.pdf}
			\caption{\Large KSFTM Stability}\label{fig:stability2}
	\end{figure}
	}
	To check the stability of our proposed algorithm \ksftm over time, we performed an experiment and shown in \figref{worstcase}.(c) where we fixed the number of threads to 32, $K$ as 5, $C$ as 0.1, \tobj{s} as 1000, along with 5 seconds warm-up period on $W1$ workload. Each thread keeps on invoking and executing transactions until $TIME$-$BOUND$ is over, which in our case is 60 seconds. We have done the experiments on number of transactions committed with increase in time as $t = t+ \epsilon$, where $\epsilon$ is fixed to 5 seconds. The experiment shows that over time \ksftm is stable which helps to hold the claim that \ksftm's performance will continue in same manner if time is increased to higher order.
	
	To show the benefits of starvation freedom utilizing the power of multiple versions in \ksftm, we tested on an application called Labyrinth provided by STAMP benchmark which has long running transactions along with high contention. \ksftm shows better performance than \pkto and \svsftm.
}

Maintaining multiple versions to increase the performance and to decrease the number of aborts, leads to creating too many versions which are not of any use and hence occupying space. So, such garbage versions need to be taken care of. Hence we come up with a garbage collection over these unwanted versions. This technique help to conserve memory space and increases the performance in turn as no more unnecessary traversing of garbage versions by transactions is necessary. We have used a global, i.e., across all transactions a list that keeps track of all the live transactions in the system. We call this list as \livel. Each transaction at the beginning of its life cycle creates its entry in this \livel. Under the optimistic approach of STM, each transaction in the shared memory performs its updates in the $\tryc$ phase. In this phase, each transaction performs some validations, and if all the validations are successful then the transaction make changes or in simple terms creates versions of the corresponding \tobj{} in the shared memory. While creating a version every transaction, check if it is the least timestamp live transaction present in the system by using \livel{} data structure, if yes then the current transaction deletes all the version of that \tobj{} and create one of its own. Else the transaction does not do any garbage collection or delete any version and look for creating a new version of next \tobj{} in the write set, if at all. 

\figref{ksftm} represents three variants of \ksftm (\mvsftm, \mvsftmgc, and \ksftm) and \figref{pkto} shows the three variants of \pkto (\pmvto, \pmvtogc, and \pkto) on all the workloads $W1$  $W2$ and $W3$. \ksftm outperforms \mvsftm and \mvsftmgc by a factor of 2.1 and 1.5. Similarly, \pkto outperforms \pmvto and \pmvtogc by a factor of 2 and 1.35. These results show that maintaining finite versions corresponding to each \tobj performs better than maintaining infinite versions and garbage collection on infinite versions corresponding to each \tobj.

\begin{figure}
	\captionsetup{justification=centering}
	\includegraphics[width=\linewidth]{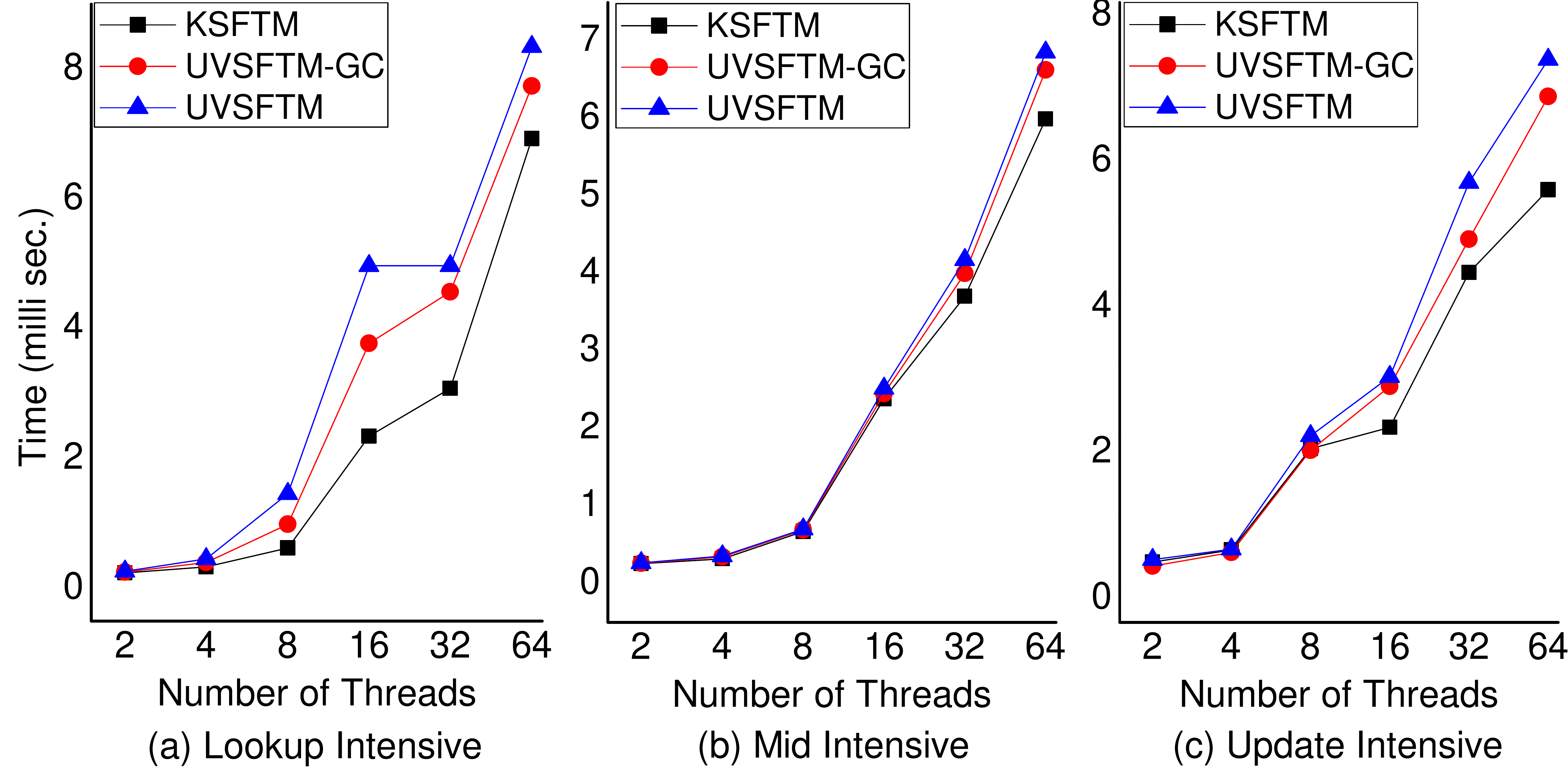}
	\caption{Time comparison among variants of \ksftm}\label{fig:ksftm}
\end{figure}
\begin{figure}
	\captionsetup{justification=centering}
	\includegraphics[width=\linewidth]{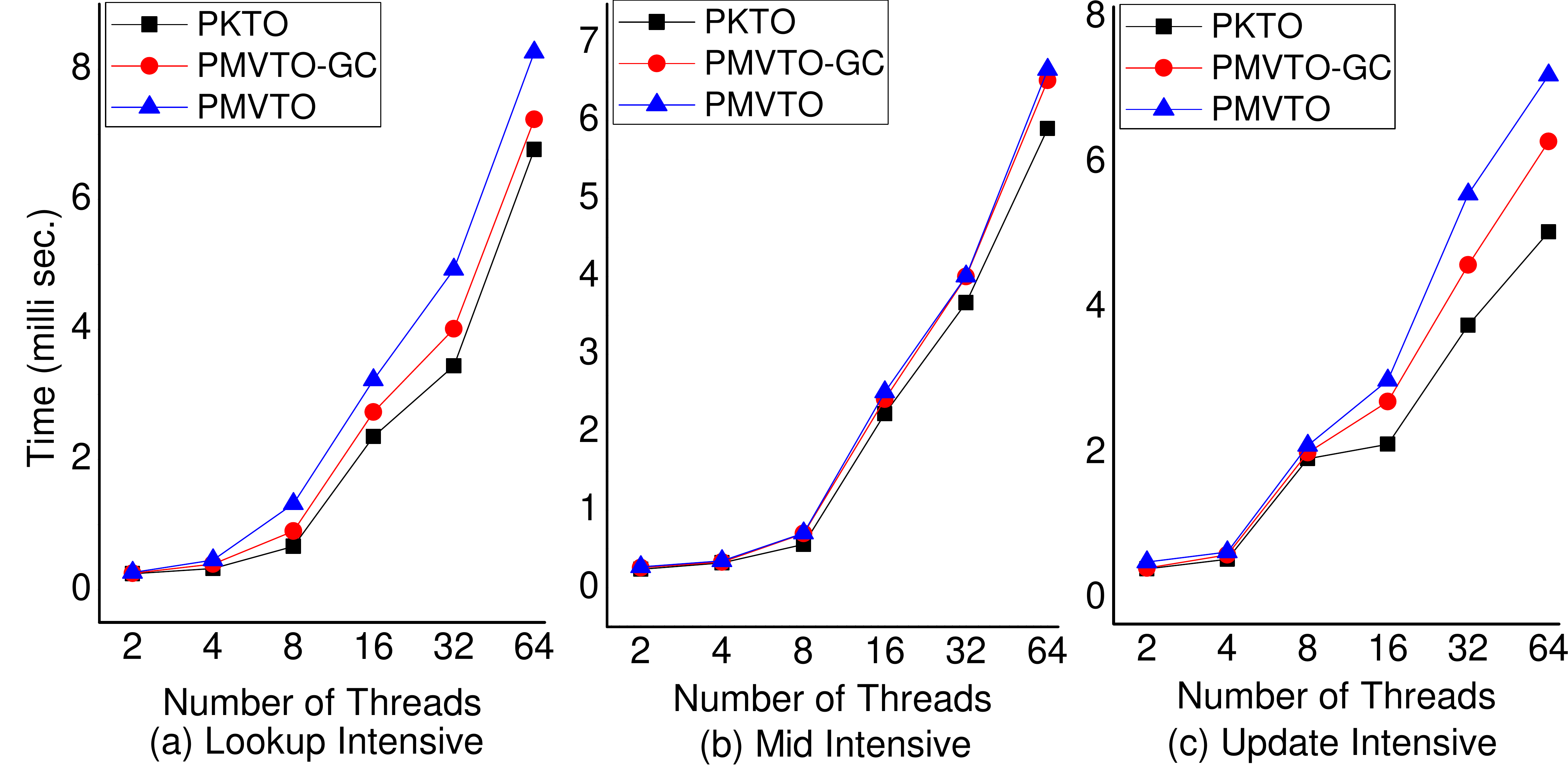}
	\caption{Time comparison among variants of \pkto}\label{fig:pkto}
\end{figure}
\begin{figure}
	\captionsetup{justification=centering}
	\includegraphics[width=\linewidth]{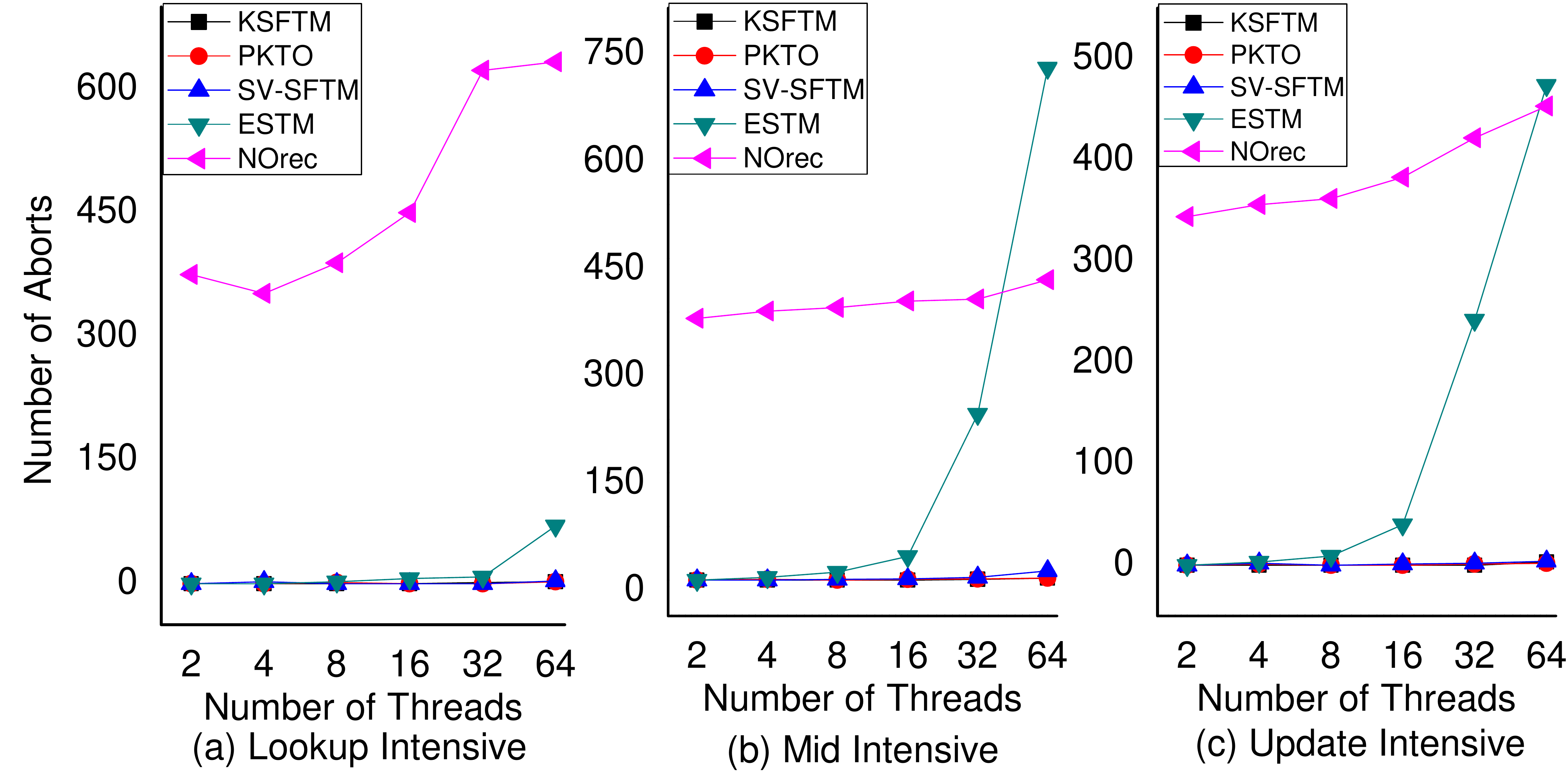}
	\caption{Abort Count on workload $W1, W2, W3$}\label{fig:aborts}
\end{figure}

\cmnt{
	\begin{figure}
		\captionsetup{justification=centering}
		\includegraphics[width=13cm, height=18cm]{figs/ksftmpkto.pdf}
		\caption{\Large Execution under variants of $KSFTM$ and $PKTO$}\label{fig:ksftmpkto}
	\end{figure}
}
\begin{figure}[H]
	\captionsetup{justification=centering}
	\includegraphics[width=\linewidth]{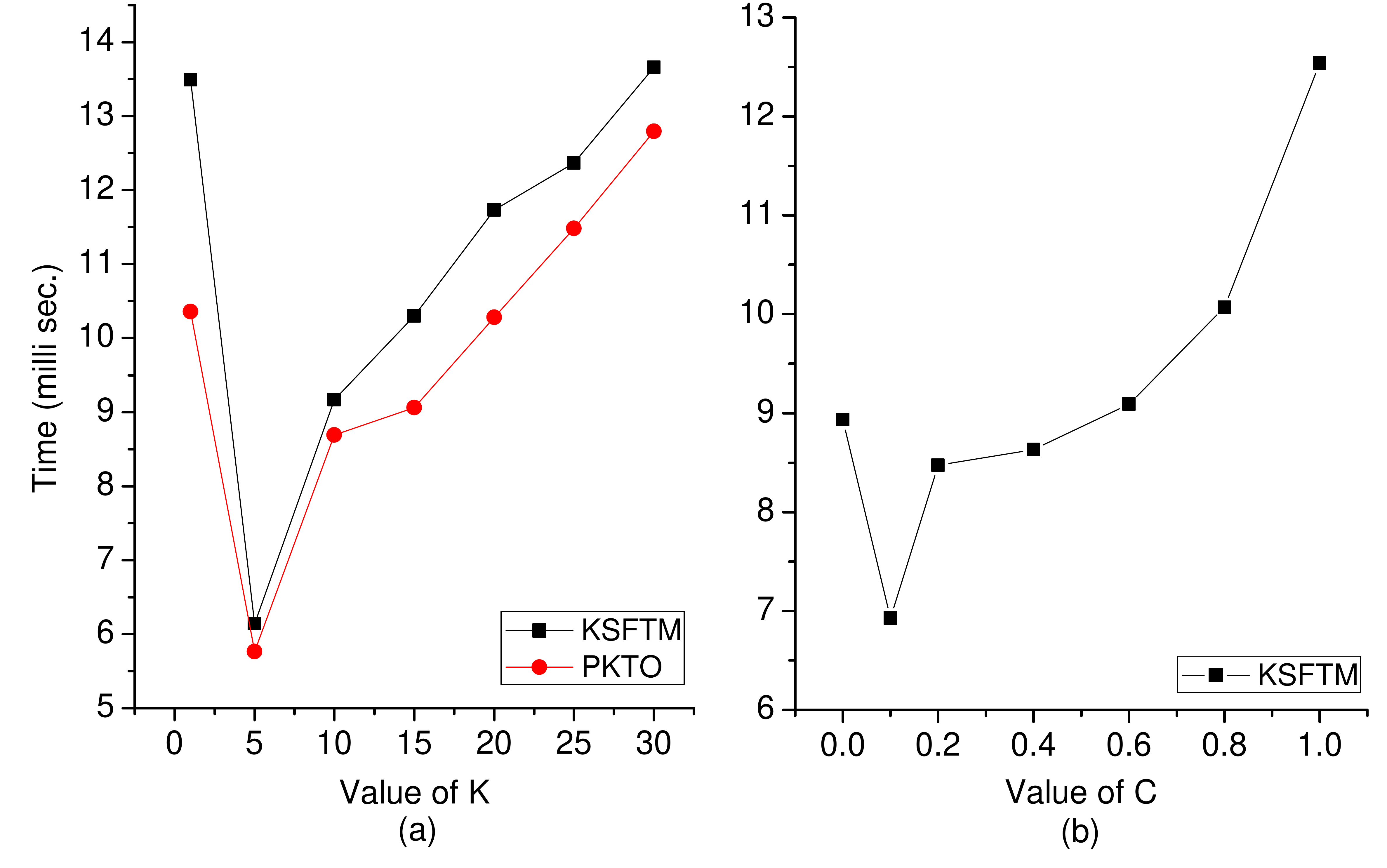}
	\caption{Best value of K and optimal value of $C$ for \ksftm}\label{fig:optimalC}
\end{figure}

\noindent
\textbf{Comparison on the basis of Abort count:}
\figref{aborts} shows the abort count comparisons of \ksftm with \pkto, ESTM,  NOrec, MVTO, and \svsftm across all workloads ($W1$, $W2$, and $W3$). The number of aborts in ESTM and NOrec are high as compared to all other STM algorithms while all other algorithms (\ksftm, \pkto, MVTO, \svsftm) have marginally small differences among them. 
\vspace{.4cm}

\noindent
\textbf{Best value of $K$ and optimal value of constant \textit{C}:} To identify the best value of K for \ksftm, we ran our experiment, varying value of K and keeping the number of threads as 64 on workload $W1$ and obtained the optimal value of $K$ in \ksftm is 5 as shown in \figref{optimalC}.(a) for counter application. Similarly, we calculate the best value of $K$ as 5 for \pkto on the same parameters. $C$, is a constant that is used to calculate $WTS$ of a transaction. i.e., $\twts{i} = \tcts{i} + C * (\tcts{i} - \tits{i});$ where, $C$ is any constant greater than 0. We run or experiments across load $W1$, for 64 threads and other parameters are same as defined in the methodology of \secref{exp}, we achieve the best value of $C$ as 0.1 for counter application. Experimental results are shown in \figref{optimalC} (b).

\cmnt{
	\textbf{Benefit of Starvation freedom and multi-versioning:}Our proposed algorithm provides the dual benefit of starvation freedom utilizing the power of multiple versions. \figref{lib} depicts that proposed algorithm KSFTM performs better than non-starvation free finite version algorithm PKVTO when tested on an application called Labyrinth provided by STAMP benchmark which has long running transactions and has very high contention. For the same application we experimented our algorithm against single version starvation freedom SV-SFTM algorithm, results shown in the \figref{lib} which supports the fact that multi-versioning provides better performance than single versioning. The experiment shows the worst case time a transaction can take over high contention and long running transactions environment, where grid size is 128 x 128 x 3 and paths to route are 64.}
\cmnt{
	\begin{figure}
		\captionsetup{justification=centering}
		\includegraphics[width=\linewidth]{figs/Lib.pdf}
		\caption{Execution under Labyrinth provided by STAMP benchmark }\label{fig:lib}
	\end{figure}
}

\cmnt{
	\begin{figure}
		\captionsetup{justification=centering}
		\includegraphics[width=\linewidth]{figs/abortc.pdf}
		\caption{ Aborts on workload $W1, W2, W3$ and Optimal value of C as 0.1}\label{fig:optimalC}
	\end{figure}
}
\subsection{Pseudo code of Counter Application}
\label{subsec:countercode}
OP\_LT\_SEED is defined as number of operations per transaction, T\_OBJ\_SEED is defined as number of transaction objects in the system, TRANS\_LT defines the total number of transactions to be executed in the system, and READ\_PER is the percentage of read operation which is used to define various workloads.
\label{apn:conters}
\begin{algorithm}  
	\caption{$main()$: The main procedure invoked by counter application} \label{algo:main} 
	\begin{algorithmic}[1]
	\State \Comment{To log abort counts by each thread}
		\State $abort\_count[$NUMTHREADS$]$ 
		\State \Comment{To log average time taken by each transaction to commit}
		\State $time\_taken[$NUMTHREADS$]$ 
		\State \Comment{To log the time of longest running transaction by each thread, worst case time}
		\State $worst\_time[$NUMTHREADS$]$
		
		\For{(i = 0 : NUMTHREADS)}
		    \State pthread\_create(\&threads[i], NULL, testFunc\_helper,(void$\ast$)args)
		\EndFor
		\For{(i = 0 : NUMTHREADS)}
		    \State pthread\_join(threads[i], \&status)
		\EndFor
		
		\State $max\_worst\_time = 0.0$
		\State $total\_abort\_count = 0$
		\State $average\_time_taken = 0$
		\For{(i = 0 : NUMTHREADS)}
		    \If{($max\_worst\_time < worst\_time[i]$)}
		    \State $max\_worst\_time = worst\_time[i]$
		    \EndIf
		    \State $total\_abort\_count += abort\_count[i]$
		    \State $average\_time\_taken += time\_taken[i]$
		\EndFor
	\end{algorithmic}
\end{algorithm}

\vspace{1mm}
\begin{algorithm} [H] 
	\caption{$testFunc\_helper()$:Function invoked by threads} \label{algo:testFunc} 
	\begin{algorithmic}[1]
	    \State $transaction\_count = 0$
		\While{(TRANS\_LT)}
		    \State\Comment{Log the time at the start of every transaction}
		    \State $begin\_time = time\_request()$
		    \State\Comment{Invoke the test function to execute a transaction}
		    \State $abort\_count[thread\_id] = test\_function()$
		    \State $transaction\_count++$
		    			\algstore{myalg}
		\end{algorithmic}
	\end{algorithm}
	\begin{algorithm}
		\begin{algorithmic}
			\algrestore{myalg}
		    \State\Comment{Log the time at the end of every transaction}
		    \State $end\_time = time\_request()$
		    \State $time\_taken[thread\_id] += (end\_time - begin\_time)$
		    \If{($worst\_time[thread_id] < (end\_time - begin\_time)$)}
		    \State $worst\_time[thread_id] = (end\_time - begin\_time)$
		    \EndIf
		    \State TRANS\_LT -= 1
		\EndWhile
		\State $time\_taken[thread\_id]$ /= $transaction\_count$
	\end{algorithmic}
\end{algorithm}

\vspace{1mm}
\begin{algorithm} [H] 
	\caption{$test\_function()$:main test function while executes a transaction} \label{algo:testFunx} 
	\begin{algorithmic}[1]
    \State Transaction $\ast$T = new Transaction;
    \State $T\rightarrow g\_its$ = NIL
    \State $local\_abort\_count$ = 0
    \State label:
    \While{(true)}
    \If{($T\rightarrow g\_its$ != $NIL$)}
    \State $its = T\rightarrow g\_its$
    \State $T = lib\rightarrow stm$-$begin(its)$
	\Else
	\State $T = lib\rightarrow stm$-$begin(T\rightarrow g\_its)$
    \EndIf
    \ForAll{(OP\_LT\_SEED)}
    \State $t\_obj = rand()\%T\_OBJ\_SEED$
    \State $randVal = rand()\%OP\_SEED$
    \If{($randVal <= READ\_PER $)}
     \State $stm$-$read(t\_obj, value)$
        \If{(value == $ABORTED$)}
        \State $local\_abort\_count$++
        \State goto label
        \EndIf
    \Else
    \State $stm$-$write(t\_obj, value)$
    \EndIf
    \EndFor
    \If{($lib\rightarrow stm$-$tryC() == ABORTED$)}
    \State $local\_abort\_count$++
    \State continue
    \EndIf
    \State break
    \EndWhile
	\end{algorithmic}
\end{algorithm}
\section{Graph Characterization of Local Opacity \& \ksftm Correctness}
\label{sec:ap-gphchar}


To prove correctness of STM systems, it is useful to consider graph characterization of histories. In this section, we describe the graph characterization developed by Kumar et al \cite{Kumar+:MVTO:ICDCN:2014} for proving \opty which is based on characterization by Bernstein and Goodman \cite{BernGood:1983:MCC:TDS}.  We extend this characterization for \lo. 

Consider a history $H$ which consists of multiple versions for each \tobj. The graph characterization uses the notion of \textit{version order}. Given $H$ and a \tobj{} $x$, we define a version order for $x$ as any (non-reflexive) total order on all the versions of $x$ ever created by committed transactions in $H$. It must be noted that the version order may or may not be the same as the actual order in which the version of $x$ are generated in $H$. A version order of $H$, denoted as $\ll_H$ is the union of the version orders of all the \tobj{s} in $H$. 

Consider the history $H2: r_1(x, 0) r_2(x, 0) r_1(y, 0) r_3(z, 0) w_1(x, 5) w_3(y, 15)  w_2(y, 10) w_1(z, 10) \\
c_1 c_2 r_4(x, 5) r_4(y, 10) w_3(z, 15) c_3 r_4(z, 10)$. Using the notation that a committed transaction $T_i$ writing to $x$ creates a version $x_i$, a possible version order for $H2$ $\ll_{H2}$ is: $\langle x_0 \ll x_1 \rangle, \langle y_0 \ll y_2 \ll y_3 \rangle, \langle z_0 \ll z_1 \ll z_3 \rangle $. 

We define the graph characterization based on a given version order. Consider a history $H$ and a version order $\ll$. We then define a graph (called opacity graph) on $H$ using $\ll$, denoted as $\opg{H}{\ll} = (V, E)$. The vertex set $V$ consists of a vertex for each transaction $T_i$ in $\overline{H}$. The edges of the graph are of three kinds and are defined as follows:

\begin{enumerate}	
	\item \textit{\rt}(real-time) edges: If $T_i$ commits before $T_j$ starts in $H$, then there is an edge from $v_i$ to $v_j$. This set of edges are referred to as $\rtx(H)$.
	
	\item \textit{\rf}(reads-from) edges: If $T_j$ reads $x$ from $T_i$ in $H$, then there is an edge from $v_i$ to $v_j$. Note that in order for this to happen, $T_i$ must have committed before $T_j$ and $c_i <_H r_j(x)$. This set of edges are referred to as $\rf(H)$.
	
	\item \textit{\mv}(multiversion) edges: The \mv{} edges capture the multiversion relations and is based on the version order. Consider a successful read \op{} $r_k(x,v)$ and the write \op{} $w_j(x,v)$ belonging to transaction $T_j$ such that $r_k(x,v)$ reads $x$ from $w_j(x,v)$ (it must be noted $T_j$ is a committed transaction and $c_j <_H r_k$). Consider a committed transaction $T_i$ which writes to $x$, $w_i(x, u)$ where $u \neq v$. Thus the versions created $x_i, x_j$ are related by $\ll$. Then, if $x_i \ll x_j$ we add an edge from $v_i$ to $v_j$. Otherwise ($x_j \ll x_i$), we add an edge from $v_k$ to $v_i$. This set of edges are referred to as $\mv(H, \ll)$.
\end{enumerate}

Using the construction, the $\opg{H2}{\ll_{H2}}$ for history $H2$ and $\ll_{H2}$ is shown in \figref{opg}. The edges are annotated. The only \mv{} edge from $T4$ to $T3$ is because of \tobj{s} $y, z$. $T4$ reads value 5 for $z$ from $T1$ whereas $T3$ also writes 15 to $z$ and commits before $r_4(z)$. 

\begin{figure}[tbph]
	\centerline{\scalebox{0.7}{\input{figs/ex2.pdf_t}}}
	\captionsetup{justification=centering}
	\caption{$\opg{H2}{\ll_{H2}}$}
	\label{fig:opg}
\end{figure}

Kumar et al \cite{Kumar+:MVTO:ICDCN:2014} showed that if a version order $\ll$ exists for a history $H$ such that $\opg{H}{\ll_H}$ is acyclic, then $H$ is \opq. This is captured in the following result.

\ignore{
	\begin{result}
		\label{res:main-opg}
		A \valid{} history $H$ is opaque iff there exists a version order $\ll_H$ such that $\opg{H}{\ll_H}$ is acyclic.
	\end{result}
	
	\noindent This result can be extended to characterize \lo using graphs with the following theorem. The proof is in Appendix \thmref{log}. 
	
	\begin{theorem}
		\label{thm:main-log}
		A \valid{} history $H$ is \lopq iff for each sub-history $sh$ in $\shset{H}$ there exists a version order $\ll_{sh}$ such that $\opg{sh}{\ll_{sh}}$ is acyclic. Formally, $\langle (H \text{ is \lopq}) \Leftrightarrow (\forall sh \in \shset{H}, \exists \ll_{sh}: \opg{sh}{\ll_{sh}} \text{ is acyclic}) \rangle$. 
	\end{theorem}
}

\begin{result}
	\label{res:opg}
	A \valid{} history $H$ is opaque iff there exists a version order $\ll_H$ such that $\opg{H}{\ll_H}$ is acyclic.
\end{result}

\noindent This result can be easily extended to prove \lo as follows

\begin{theorem}
	\label{thm:log}
	A \valid{} history $H$ is \lopq iff for each sub-history $sh$ in $\shset{H}$ there exists a version order $\ll_{sh}$ such that $\opg{sh}{\ll_{sh}}$ is acyclic. Formally, $\langle (H \text{ is \lopq}) \Leftrightarrow (\forall sh \in \shset{H}, \exists \ll_{sh}: \opg{sh}{\ll_{sh}} \text{ is acyclic}) \rangle$. 
\end{theorem}

\begin{proof}
	To prove this theorem, we have to show that each sub-history $sh$ in $\shset{H}$ is \valid. Then the rest follows from \resref{opg}. Now consider a sub-history $sh$. Consider any read \op $r_i(x, v)$ of a transaction $T_i$. It is clear that $T_i$ must have read a version of $x$ created by a previously committed transaction. From the construction of $sh$, we get that all the transaction that committed before $r_i$ are also in $sh$. Hence $sh$ is also \valid. 
	
	Now, proving $sh$ to be \opq iff there exists a version order $\ll_{sh}$ such that $\opg{sh}{\ll_{sh}}$ is acyclic follows from \resref{opg}. 
\end{proof}

\ignore {

Using this theorem, we can give the proof sketch of \ksftm algorithms. Here for simplicity, we assume that the history generated is sequential. 

\begin{theorem}
	\label{thm:ap1-ksftm-lo} 
	Any history generated by \ksftm{} is \lopq.
\end{theorem}

\begin{proof}
For proving this, we consider a sequential history $H$ generated by \ksftm. We define the version order $\ll_{\vt}$: for two versions $v_i, v_j$ it is defined as

$(v_i \ll_{\vt} v_j) \equiv (v_i.\vt < v_j.\vt)$

\noindent Using this version order $\ll_{\vt}$, we can show that all the sub-histories in $\shset{H}$ are acyclic.

\end{proof}

}
\begin{lemma}
	\label{lem:tltl-edge}
	Consider a history $H$ in $\gen{\ksftm}$ with two transactions $T_i$ and 
	$T_j$ such that both their G\_valid flags are true.  there is an edge from 
	$T_i$ $\rightarrow$ $T_j$ then $\tltl_i$ $<$ $\tltl_j$.		
\end{lemma}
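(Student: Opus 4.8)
The plan is to prove the implication by a case analysis on the type of the edge $T_i \rightarrow T_j$ in the opacity graph $\opg{H}{\ll_{\vt}}$, where $\ll_{\vt}$ is the version order that compares two versions by their recorded \vt{} field, i.e. $(v_a \ll_{\vt} v_b) \equiv (v_a.\vt < v_b.\vt)$. The reusable fact I would establish first is an identity: whenever a committed transaction $T_m$ creates a version $x_m$, the stored field $x_m.\vt$ equals the value $\tltl_m = \tutl_m$ that $T_m$ holds at the moment it appends the new \vtup, and since $T_m$ is committed this value is frozen. Hence for committed creators \vt{} and the final $\tltl$ are interchangeable. Recall also that $\tltl_i$ is non-decreasing and $\tutl_i$ non-increasing over the life of a transaction, and that the hypothesis on the G\_valid flags guarantees neither $T_i$ nor $T_j$ is slated for abort, so the range constraints written by $\tread$ and $\tryc$ apply to both.

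For an $\rtx$ edge, $T_i$ commits before $T_j$ begins. At $T_i$'s commit $\tutl_i$ was clamped to at most $\ct_i$, the counter value drawn from $\gtcnt$ at that instant, and $\tltl_i$ was set equal to $\tutl_i$. Because $\gtcnt$ is advanced on every begin and every terminate, the begin of $T_j$ draws $\gcts_j > \ct_i$; combined with $\tltl_j \ge \gcts_j$ (as $\tltl_j$ is initialised to $\gcts_j$ and only grows) this gives $\tltl_i = \tutl_i \le \ct_i < \gcts_j \le \tltl_j$. For an $\rf$ edge, $T_j$ reads the version $x_i$ created by the committed $T_i$, and the read step executes $\tltl_j \leftarrow \max(\tltl_j, x_i.\vt + 1)$, so $\tltl_j \ge x_i.\vt + 1 = \tltl_i + 1 > \tltl_i$. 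Both cases are immediate once the $\vt$/$\tltl$ identity is available.

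The $\mv$ edges carry the real content and split along the two directions of their definition. In the first direction $T_i$ and $T_j$ both write $x$, some transaction reads $x_j$, and $x_i \ll_{\vt} x_j$; then the edge $v_i \rightarrow v_j$ gives $\tltl_i = x_i.\vt < x_j.\vt = \tltl_j$ straight from the definition of $\ll_{\vt}$. The second direction is the hard one: here $T_k$ reads $x_j$, $T_i$ writes $x_i$ with $x_j \ll_{\vt} x_i$, and the edge is $v_k \rightarrow v_i$, so I must show $\tltl_k < x_i.\vt = \tltl_i$. I would split on whether $x_i$ was already present in $x$'s version list when $T_k$ performed its read. If it was, then since $T_k$ read $x_j$ as the largest version below $\gwts_k$ and $x_i.\vt > x_j.\vt$, version-order consistency (below) places $x_i$ above $\gwts_k$, so the read clamp $\tutl_k \leftarrow \min(\tutl_k, x[nextVer].\vt - 1)$ together with $x[nextVer].\vt \le x_i.\vt$ yields $\tltl_k \le \tutl_k < x_i.\vt$. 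If instead $x_i$ was created after $T_k$'s read, then at $T_i$'s $\tryc$ the entry for $T_k$ still sits in the \rlist{} of the predecessor version $T_i$ processes, so $T_k$ falls into $T_i$'s smaller-read-list \srl, and the tightening $\tutl_k \leftarrow \min(\tutl_k, \tltl_i - 1)$ executed there gives $\tltl_k \le \tutl_k < \tltl_i = x_i.\vt$.

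The main obstacle is exactly this second $\mv$ sub-case, and underneath it the supporting invariant it repeatedly invokes: for a fixed \tobj{} $x$, the order of its versions by $ts$ (the \gwts{} of their creators) coincides with the order by \vt. I would isolate this as a separate lemma proved by induction on version creation, using that a committing transaction raises its $\tltl$ above the \vt{} of the immediately preceding version and lowers its $\tutl$ below the \vt{} of the immediately following one, so each newly recorded \vt{} is wedged strictly between its neighbours' \vt{} values; the bounded-$K$ replacement of the oldest version is the delicate point, since I must check that discarding it cannot disturb the recorded order of the survivors. Granting this invariant, the edge-by-edge argument completes the lemma, which is precisely the monotonicity needed to rule out cycles in $\opg{sh}{\ll_{\vt}}$ (a cycle would force $\tltl$ to strictly increase around it) and hence, via \thmref{log}, to establish the strict-serializability and \lopty{} guarantee of \thmref{ksftm-lo}.
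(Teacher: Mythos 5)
Your proof follows the same skeleton as the paper's own: a case analysis over \rtx, \rf{} and \mv{} edges, anchored on the identity that a committed writer freezes $x_m.\vt = \tltl_m = \tutl_m$ at \Lineref{new-tup}, with the \rtx{} case closed by $\tltl_i \le \ct_i < \gcts_j \le \tltl_j$ and the \rf{} case by the clamp at \Lineref{rd-tltl-inc}. Where you genuinely diverge, you improve on the paper: by taking the version order to be $\ll_{\vt}$ itself, your first \mv{} direction ($x_i \ll_{\vt} x_j$, edge $v_i \rightarrow v_j$) is immediate, whereas the paper (which orders versions by \gwts) needs a further split on commit order there; and you explicitly isolate the invariant that \ts-order agrees with \vt-order on each version list, a fact the paper invokes silently every time it lets a \ts-based lookup (\findls{} or \findsl) justify a \vt{} inequality. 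That invariant is indeed needed, and your worry about bounded $K$ is easily discharged: replacing the oldest tuple never reorders the surviving ones.

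The gap is in your second \mv{} direction ($x_j \ll_{\vt} x_i$, edge $v_k \rightarrow v_i$) in the sub-case where $x_i$ is created after $T_k$'s read. You assert that $T_k$ then lands in $T_i$'s \srl, so that \Lineref{tk-updt} yields $\tutl_k < \tltl_i$. That holds only when $\gwts_k < \gwts_i$. If $\gwts_k > \gwts_i$, then $T_k$ is collected into \lrl{} at \Lineref{lar-coll}, the tightening at \Lineref{tk-updt} is never applied to $T_k$, and your inequality has no source; what actually happens is that the \lrl{} loop (\Lineref{addAbl-lar}--\Lineref{its-chk1}) either aborts $T_i$ or sets $\gval_k = F$ at \Lineref{gval-set}, so under the lemma's hypothesis that both flags remain true this sub-case is vacuous --- which is exactly how the paper dispatches its corresponding case. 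The repair is one sentence, but as written the step fails for those $T_k$. A further caveat, which your argument shares with the paper's rather than introduces: both tacitly assume the version $T_k$ read is the \prevv{} that $T_i$ processes in \tryc; if intermediate versions sit between $x_j$ and $x_i$, neither $T_i$'s \srl/\lrl{} logic nor $T_k$'s read clamp mentions $x_i$ directly, and the bound $\tutl_k < x_i.\vt$ must be propagated by induction along the chain of intermediate writers, using precisely your order-consistency invariant.
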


\begin{proof}
	There are three types of possible edges  in MVSG.
	\begin{enumerate}
		\item Real-time edge: Since, transaction $T_i$ and $T_j$  are in real 
		time 
		order so $\ct_i$ $<$ $\gcts_j$. As we know from \lemref{ti|tltl-comt} 
		$(\tltl_i 
		\leq \ct_i)$. So, $(\tltl_i \leq \cts_j)$.\\
		We know from STM $\begt(its)$ method, $\tltl_j = \gcts_j$.\\
		Eventually, $\tltl_i$ $<$ $\tltl_j$.	
		\item Read-from edge: Since, transaction $T_i$ has been committed and $T_j$ is reading from 
		$T_i$ so, 
		from \Lineref{new-tup} 
		$\tryc(T_i)$, $\tltl_i = \vltl_i$.\\ 
		  and from \Lineref{rd-tltl-inc} STM $read(j, x)$, $\tltl_j = 
		max(\tltl_j,\\ x[curVer].
		\vltl + 1)$ $\Rightarrow$ $(\tltl_j > \vltl_i)$ $\Rightarrow$ $(\tltl_j 
		> \tltl_i)$  \\
		Hence, $\tltl_i$ $<$ $\tltl_j$.
		\item Version-order edge: Consider a triplet $w_j(x_j) r_k(x_j) 
		w_i(x_i)$ in 
		which there are two possibilities of version order:
		\begin{enumerate}
			\item i $\ll$ j $\Longrightarrow$ $\gwts_i < \gwts_j$ 
			\\ There are two possibilities of commit order:
			\begin{enumerate}
				\item $\ct_i <_H \ct_j$: Since, $T_i$ has been committed before 
				$T_j$ so 
				$\tltl_i = \vltl_i$. From \Lineref{tryc-tltl-inc} of 
				$\tryc(T_j)$, $\vltl_i < 
				\tltl(j)$.\\
				Hence, $\tltl_i$ $<$ $\tltl_j$.
				\item  $\ct_j <_H \ct_i$: Since, $T_j$ has been committed 
				before $T_i$ so 
				$\tltl_j = \vltl_j$. From 
				\Lineref{tryc-ul-dec} of $\tryc(T_i)$, $\tutl_i < \vltl_j$. As 
				we have assumed 
				$\gval_i$ is true so definitely it will execute the 
				\Lineref{ti-updt} 
				$\tryc(T_i)$ i.e. $\tltl_i = \tutl_i$.\\
				Hence, $\tltl_i$ $<$ $\tltl_j$.
			\end{enumerate}
			\item j $\ll$ i $\Longrightarrow$ $\gwts_j < \gwts_i$ 
			\\ Again, there are two possibilities of commit order:
			\begin{enumerate}
				\item $\ct_j <_H \ct_i$: Since, $T_j$ has been committed before 
				$T_i$ and $T_k$ 
				read from $T_j$. There can be two possibilities $\gwts_k$.
				\begin{enumerate}
					\item $\gwts_k > \gwts_i$: That means $T_k$ is in largeRL 
					of $T_i$. From 
					\Lineref{addAbl-lar} to \Lineref{its-chk1}of $\tryc(i)$, either transaction $T_k$ or $T_i$,   $\gval$ 
					flag is set to be false. If $T_i$ returns abort then this case will not be considered in 
					\lemref{tltl-edge}. Otherwise, as $T_j$ has 	already been committed and later $T_i$ will execute the \Lineref{new-tup} $\tryc(T_i)$, 									
					Hence, $\tltl_j < \tltl_i$.\\
										
					\item $\gwts_k < \gwts_i$: That means $T_k$ is in smallRL 
					of $T_i$. From 
					\Lineref{rd-ul-dec} of $read(k, x)$, $\tutl_k < \vltl_i$ 
					and from 
					\Lineref{rd-tltl-inc} of $read(k, x)$, $\tltl_k > \vltl_j$. 
					Here, $T_j$ has 
					already been committed so, $\tltl_j = \vltl_j$. As we have 
					assumed 
					$\gval_i$ is true so definitely it will execute the 
					\Lineref{new-tup} 
					$\tryc(T_i)$, $\tltl_i = \vltl_i$.\\ 
					So, $\tutl_k < \tltl_i $ and $\tltl_k > \tltl_j$.
					While considering $\gval_k$ flag is true $\rightarrow$ 
					$\tltl_k < \tutl_k$.\\
					Hence, $\tltl_j < \tltl_k < \tutl_k < \tltl_i$.\\
					Therefore, $\tltl_j < \tltl_k < \tltl_i$.
				\end{enumerate}
				\item $\ct_i <_H \ct_j$: Since, $T_i$ has been committed before 
				$T_j$ so, $\tltl_i = \vltl_i$. From 
				\Lineref{tryc-ul-dec} of $\tryc(T_j)$, $\tutl_j < \vltl_i$ i.e. 
				$\tutl_j < \tltl_i$. Here, $T_k$ 
				read from $T_j$. So, From \Lineref{rd-ul-dec} of $read(k, x)$, 
				$\tutl_k < \vltl_i$ $\rightarrow$ $\tutl_k < \tltl_i$ from 
				\Lineref{rd-tltl-inc} of $read(k, x)$, $\tltl_k > \vltl_j$. As 
				we have assumed 
				$\gval_j$ is true so definitely it will execute the 
				\Lineref{new-tup} 
				$\tryc(T_j)$, $\tltl_j = \vltl_j$.\\
				Hence, $\tltl_j < \tltl_k < \tutl_k < \tltl_i$.\\
				Therefore, $\tltl_j < \tltl_k < \tltl_i$.
			\end{enumerate}
		\end{enumerate}
		\cmnt{Due to acquiring the lock on each dataitem before 
			creating a version So, let say $T_i$ created a version then release 
			it and 
			return commit. For both the transactions G\_valid 
			flags are true. As we know from \lemref{tltl commit} $(\tltl_i 
			\leq \ct_i)$. After creating a version by $T_i$, transaction $T_j$ 
			wants to 
			create a version of the same dataitem then definitely, $(\ct_i < 
			\ct_j)$. Again 
			from \lemref{tltl commit} on $T_j$, $(\tltl_j \leq \ct_j)$. \\
			So, $\tltl_i$ $<$ $\tltl_j$.
		} 
	\end{enumerate} 
	
\end{proof}

\cmnt{            
\begin{theorem}
	\label{thm:trans-com|abt}
	Transaction with lowest $\its$ value will eventually have the highest 
	$\gwts$ value.
\end{theorem}
}

\begin{theorem}
Any history H gen(KSFTM) is local opaque iff for a given version order $\ll$ H, 
 MVSG(H,$\ll$) is acyclic. 
\end{theorem}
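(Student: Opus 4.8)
The plan is to read this statement as the specialization to \ksftm of the general graph characterization of \lopty proved in \thmref{log} (which itself rests on \resref{opg}). Concretely, I would fix the version order $\ll_{\vt}$ defined on versions by $(v_i \ll_{\vt} v_j) \equiv (v_i.\vt < v_j.\vt)$, i.e. order the versions of each \tobj by the \vt field that \ksftm stamps on them at commit time, and take $\ll_H$ to be the union of these per-object orders. Beyond \thmref{log}, two ingredients are needed: that every $H \in \gen{\ksftm}$ is \valid, and that the single graph MVSG$(H,\ll_H)$ on the whole history correctly witnesses acyclicity of the per-subhistory graphs $\opg{sh}{\ll_{sh}}$ that appear in \thmref{log}.

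First I would dispatch validity. Inspecting \tread and \tryc, a version is inserted into $x.\vl$ only inside \tryc just before the transaction sets its status to commit, and \tread returns the value of an existing version (the one selected by \findls); hence every successful read returns a value written by a committed transaction, so $H$ is \valid, and the same holds for each $sh \in \shset{H}$, since the construction of $sh$ retains every transaction that committed before any read it contains. With validity in hand, the iff is exactly \thmref{log} once MVSG is identified with the opacity graph OPG: the nontrivial direction (acyclic MVSG $\Rightarrow$ \lopq) follows because each $\opg{sh}{\ll_{sh}}$ is an induced subgraph of MVSG$(H,\ll_H)$ (the extra aborted or live transaction of $sh$ is treated as committed, and all edges among the transactions of $sh$ are inherited from the full graph), and a subgraph of an acyclic graph is acyclic; conversely, any cycle in the full graph lies inside some subhistory, contradicting acyclicity there.

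The substance that makes this characterization useful — and the step I expect to be the real obstacle — is showing that MVSG$(H,\ll_{\vt})$ is in fact acyclic for \ksftm, which is precisely where \lemref{tltl-edge} does the work. That lemma shows that along every edge $T_i \to T_j$ of MVSG (real-time, reads-from, or version-order, in either orientation) one has $\ttltl{i} < \ttltl{j}$ whenever both transactions have their \gval flag true. Since the lower-limit value $\ttltl{i}$ strictly increases along each edge, a directed cycle would force $\ttltl{i} < \ttltl{i}$, a contradiction; hence the graph is acyclic and, by the characterization, $H$ is \lopq. The delicate points I would have to handle carefully are the version-order-edge case analysis of \lemref{tltl-edge} (the four sub-cases split on the orientation under $\ll_{\vt}$ and on commit order, relying on the range-maintenance updates to $\ttltl{i}$ and $\ttutl{i}$ in \tread and \tryc), and ensuring that in each subhistory the single aborted or live transaction — treated as committed, and whose \gval flag is true at the point its reads are validated — genuinely satisfies the hypotheses of \lemref{tltl-edge}, so that the monotonicity argument extends to the edges incident on it.
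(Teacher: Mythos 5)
Your proposal is correct and is essentially the paper's own argument: both hinge on \lemref{tltl-edge}, which gives $\tltl_i < \tltl_j$ along every edge of MVSG$(H,\ll)$ between transactions whose \gval flags are true, so that a directed cycle would yield $\tltl_1 < \tltl_1$, a contradiction, establishing acyclicity. The additional scaffolding you supply --- validity of \ksftm histories, the choice of the version order $\ll_{\vt}$, and the reduction to \thmref{log} via the subgraph relation between sub-history graphs and the full MVSG --- is precisely what the paper leaves implicit, since its written proof establishes only the acyclicity half of the iff.
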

\begin{proof}
We are proving it by contradiction, so Assuming MVSG(H,$\ll$) has cycle.
From \lemref{tltl-edge}, For any two transactions $T_i$ and 
	$T_j$ such that both their G\_valid flags are true and if there is an edge 
	from $T_i$ $\rightarrow$ $T_j$ then $\tltl_i$ $<$ $\tltl_j$. While 
	considering transitive case for k transactions $T_1, T_2, T_3...T_k$ such 
	that G\_valid flags of all the transactions are true. if there is an edge 
		from $T_1$ $\rightarrow$ $T_2$ $\rightarrow$ $T_3$ 
		$\rightarrow$....$\rightarrow$ $T_k$ then $\tltl_1 $ $<$ $\tltl_2$ $<$ 
		$\tltl_3$ $<$ ....$<$ $\tltl_k$.\\
		Now, considering our assumption, MVSG(H,$\ll$) has cycle so, $T_1$ 
		$\rightarrow$ $T_2$ $\rightarrow$ $T_3$ $\rightarrow$....$\rightarrow$ 
		$T_k$ $\rightarrow$ $T_1$ that implies $\tltl_1 $ $<$ $\tltl_2$ $<$ 
		$\tltl_3$ $<$ ....$<$ $\tltl_k$ $<$ $\tltl_1$.\\
		Hence from above assumption, $\tltl_1$ $<$ $\tltl_1$ but this is 
		impossible. So, our assumption is wrong.\\
		Therefore, MVSG(H,$\ll$) produced by KSFTM is acyclic.
\end{proof}

\textbf{\textit{M\_Order$_H$:}} It stands for method order of history H in which methods of transactions are interval (consists of invocation and response of 
a method) instead of dot (atomic). Because of having method as an interval, methods of different transactions can overlap. To prove the correctness \textit{(local 
opacity)} of our algorithm, we need to order the overlapping methods.

Let say, there are two transactions $T_i$ and $T_j$ either accessing common (t-objects/$\glock$) or $\gtcnt$ through operations $op_i$ and $op_j$ respectively. If res($op_i$) $<_H$ 
inv($op_j$) then $op_i$ and $op_j$ are in real-time order in H. So, the 
\textit{M\_Order$_H$} is $op_i \rightarrow op_j$.

If operations are overlapping and either accessing common t-objects or sharing $\glock$: 
\begin{enumerate}
\item $read_i(x)$ and $read_j(x)$: If $read_i(x)$ acquires the lock on x  
before $read_j(x)$ then the \textit{M\_Order$_H$} is $op_i \rightarrow 
op_j$.
\item $read_i(x)$ and $\tryc_j()$: If they are accessing common t-objects then, let say $read_i(x)$ acquires the lock on x  before 
$\tryc_j()$ then the \textit{M\_Order$_H$} is $op_i \rightarrow 
op_j$. Now if they are not accessing common t-objects but sharing $\glock$ then, let say $read_i(x)$ acquires the lock on $\glock_i$  before 
$\tryc_j()$ acquires the lock on $\relll$ (which consists of $\glock_i$ and $\glock_j$) then the \textit{M\_Order$_H$} is $op_i \rightarrow 
op_j$.
\item  $\tryc_i()$ and $\tryc_j()$: If they are accessing common t-objects then, let say $\tryc_i()$  acquires the lock on x  before 
$\tryc_j()$ then the \textit{M\_Order$_H$} is $op_i \rightarrow 
op_j$. Now if they are not accessing common t-objects but sharing $\glock$ then, let say $\tryc_i()$  acquires the lock on $\relll_i$  before 
$\tryc_j()$ then the \textit{M\_Order$_H$} is $op_i \rightarrow 
op_j$.

\end{enumerate}

If operations are overlapping and accessing different t-objects but sharing $\gtcnt$ counter:
\begin{enumerate}
	\item $\begt_i$ and $\begt_j$: Both the $\begt$ are accessing shared 
	counter variable $\gtcnt$. If $\begt_i$ executes $\gtcnt.get\&Inc()$ 
	before $\begt_j$ then the \textit{M\_Order$_H$} is $op_i \rightarrow op_j$.
	\item $\begt_i$ and $\tryc(j)$: If $\begt_i$ executes $\gtcnt.get\&Inc()$ 
	before $\tryc(j)$ then the \textit{M\_Order$_H$} is $op_i \rightarrow op_j$.
	
\end{enumerate}

\textit{Linearization:} The history generated by STMs are generally not sequintial because operations of the transactions are overlapping. The correctness of STMs is defined on sequintial history, inorder to show history generated by our algorithm is correct we have to consider sequintial history. We have enough information to order the overlapping methods, after ordering the operations will have equivalent sequintial history, the total order of the operation is called linearization of the history.\\

\textit{Operation graph (OPG):}  Consider each operation as a vertex and edges 
as below:
\begin{enumerate}
	
    \item Real time edge: If response of operation $op_i$ happen before the invocation of operation $op_j$ i.e. rsp($op_i$) $<_H$ inv($op_j$) then there exist real time edge between $op_i$ $\rightarrow$ $op_j$.
    
    \item Conflict edge: It is based on $L\_Order_H$ which depends on three conflicts:
    \begin{enumerate}
    		\item Common \textit{t-object}: If two operations $op_i$ and $op_j$  are overlapping and accessing common \textit{t-object x}. Let	say $op_i$ acquire lock first on x then $L\_Order.op_i$(x) $<_H$ $L\_Order.op_j$(x) so, conflict edge is $op_i$ $\rightarrow$ $op_j$.
    		
    		\item Common $\gval$ flag: If two operation $op_i$ and $op_j$  are 
    		overlapping but accessing common $\gval$ flag instead of 
    		\textit{t-object}. Let	say $op_i$ acquire lock first on $\gval_i$ 
    		then $L\_Order.op_i$(x) $<_H$ $L\_Order.op_j$(x) so, conflict edge 
    		is $op_i$ $\rightarrow$ $op_j$.
    		\end{enumerate}
    		
    		\item Common $\gtcnt$ counter: If two operation $op_i$ and $op_j$  are overlapping but accessing common $\gtcnt$ counter instead of \textit{t-object}. Let	say $op_i$ access  $\gtcnt$ counter before $op_j$ then $L\_Order.op_i$(x) $<_H$ $L\_Order.op_j$(x) so, conflict edge is $op_i$ $\rightarrow$ $op_j$.

\end{enumerate}

\cmnt{
\begin{lemma}
	Any history H gen(KSFTM) follows strict partial order of all the locks in H 
	($lockOrder_H$) so, operation graph (OPG(H)) is acyclic. i.e.
\end{lemma}
\begin{enumerate}
\item \textrm{If ($p_i$, $p_j$) is an edge in OPG, then $Fpu_i$($\alpha$) $<$ 
$Lpl_j$($\alpha$) for some data item $\alpha$ and two operations 
$p_i$($\alpha$), $p_j$($\alpha$) in conflict.}
\item \textrm{If ($p_1$,$p_2$, ... ,$p_n$) is a path in OPG, n $\geq$ 1, then 
$Fpu_1$($\alpha$) $<$ $Lpl_n$($\gamma$) for two data items $\alpha$ and 
$\gamma$ as well as operations $p_1$($\alpha$) and $p_n$($\gamma$).}
\item \textrm{OPG is acyclic.}
\end{enumerate}

\begin{proof} 
\textrm{We assume variables $\alpha$, $\beta$ and $\gamma$ can be data item or 
G\_Valid.}

\begin{enumerate}
\item \textrm{If ($p_i$, $p_j$) is an edge in operation graph OPG, then OPG 
comprises two steps $p_i$($\alpha$) and $p_j$($\alpha$) in conflict such that 
$p_i$($\alpha$) $<$ $p_j$($\alpha$). According to (If $o_i$(x)(o $\in$\{r, w\}) 
occurs in graph, then so do $ol_i$($\alpha$) and $ou_i$($\alpha$) with the 
sequencing $ol_i$($\alpha$) $<$ $o_i$($\alpha$) $<$ $ou_i$($\alpha$).),  this 
implies $pl_i$($\alpha$) $<$ $p_i$($\alpha$) $<$ $pu_i$($\alpha$) and 
$pl_j$($\alpha$) $<$ $p_j$($\alpha$) $<$ $pu_j$($\alpha$). According to (If 
some steps $p_i$($\alpha$) and $p_j$($\alpha$) from graph are in conflict, then 
either $Fpu_i$($\alpha$) $<$ $Lpl_j$($\alpha$) or $Fpu_j$($\alpha$) $<$ 
$Lpl_i$($\alpha$) holds.), we moreover find (a) $Fpu_i$($\alpha$) $<$  
$Lpl_j$($\alpha$) or (b) $Fpu_j$($\alpha$) $<$  $Lpl_i$($\alpha$). Case (b) 
means $Lpl_j$($\alpha$) $<$ $p_j$($\alpha$) $<$ $pu_j$($\alpha$) $<$ 
$pl_i$($\alpha$) $<$ $p_i$($\alpha$) $<$ $Fpu_i$($\alpha$) and hence 
$p_j$($\alpha$) $<$ $p_i$($\alpha$), a contradiction to $p_i$($\alpha$) $<$ 
$p_j$($\alpha$). Thus, $Fpu_i$($\alpha$) $<$  $Lpl_j$($\alpha$).}

\item \textrm{In this case we are considering transitive order. In order to 
prove it, we are taking induction on n (1): If ($p_1$, $p_2$) is an edge in 
OPG, 
there is a conflict between $p_1$ and $p_2$. Thus, $Fpu_1$($\alpha$) $<$ 
$Lpl_2$($\alpha$), i.e., $p_1$ unlocks $\alpha$ before $p_2$ locks $\alpha$. In 
other words, when $p_2$ sets a lock, $p_1$ has already released one. Now we are 
assuming it is true for n transactions on the path ($p_1$,$p_2$, ... ,$p_n$) in 
OPG and we need to prove, it holds for path n+1. The inductive assumption now 
tells us that there are data items $\alpha$ and $\beta$ such that 
$Fpu_1$($\alpha$) $<$  $Lpl_n$($\beta$) in S . Since ($p_n$, $p_{n+1}$) is an 
edge in OPG, it follows from (1) above that for operations $p_n$($\gamma$) and 
$p_{n+1}$($\gamma$) in conflict we have $Fpu_n$($\gamma$) $<$ 
$Lpl_{n+1}$($\gamma$). According to (If $p_i$($\alpha$) and $p_i$($\gamma$) are 
in graph, then $pl_i$($\alpha$) $<$ $pu_i$($\gamma$), i.e., every lock 
operation occurs before every unlock operation of the same transaction), this 
implies $pl_n$($\beta$) $<$ $pu_n$($\gamma$) and hence $Fpu_1$($\alpha$) $<$  
$Lpl_{n+1}$($\gamma$).}

\item \textrm{Proof by contradiction: Assuming OPG is cyclic. So there exists a 
cycle ($p_1$,$p_2$, ... ,$p_n$, $p_1$), n $\geq$ 1. By using (2), 
$Fpu_1$($\alpha$) 
$<$ $Lpl_1$(($\gamma$) for operations $p_1$($\alpha$), $p_1$($\gamma$), a 
contradiction to the KSFTM (If $p_i$($\alpha$) and $p_i$(($\gamma$) are 
in OPG, then $pl_i$($\alpha$) $<$ $pu_i$(($\gamma$), i.e., every lock operation 
occurs before every unlock operation of the same transaction).}
\end{enumerate}	
\end{proof}
}

\begin{lemma}
All the locks in history H ($L\_Order_H$) gen(KSFTM) follows strict partial 
order. So, operation graph (OPG(H)) is acyclic. If ($op_i$$\rightarrow$$op_j$) 
in OPG, then atleast one of them will definitely true: ($Fpu_i$($\alpha$) 
$<$ 
$Lpl\_op_j$($\alpha$)) $\cup$ ($access.\gtcnt_i$ $<$ $access.\gtcnt_j$) $\cup$
($Fpu\_op_i$($\alpha$) $<$ $access.\gtcnt_j$) $\cup$ ($access.\gtcnt_i$ $<$ 
$Lpl\_op_j$($\alpha$)). Here, $\alpha$ can either be t-object or $\gval$.
\end{lemma}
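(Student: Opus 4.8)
The plan is to follow the classical strict-two-phase-locking acyclicity argument of Weikum and Vossen (the one transcribed in the commented-out block above), extended to cover the three resources that \ksftm{} synchronises on: \tobj{} locks, the per-transaction $\gval$ locks (taken through $\glock$), and the atomic global counter $\gtcnt$. Throughout, for an operation $op_i$ I write $Lpl_i(\alpha)$ for the instant it acquires the lock on a resource $\alpha$, $Fpu_i(\alpha)$ for the instant it releases that lock, and $access.\gtcnt_i$ for the single instant it performs an atomic $get\&Inc$/$add\&Get$ on $\gtcnt$.

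First I would record the \textbf{intra-operation invariant} that each method is effectively two-phase. Inspecting the pseudocode of $read$, $\tryc$ and $\begt$, every lock acquisition of $op_i$ precedes every lock release of $op_i$, so $Lpl_i(\alpha) < Fpu_i(\beta)$ for all resources $\alpha,\beta$ it locks; moreover the counter access lies strictly inside the held-lock region, so $Lpl_i(\alpha) < access.\gtcnt_i < Fpu_i(\beta)$ (in $\tryc$ the $\gtcnt.add\&Get$ on \Lineref{tryc-cmt-mod} occurs after all of $\ws_i$ and $\relll$ are locked and before the final release; $\begt$ is the degenerate case performing only a counter access; $write$ is purely local and carries no synchronisation event at all, hence only real-time edges). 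Phrased uniformly: calling $\{Fpu_i(\cdot),\,access.\gtcnt_i\}$ the \emph{late} events of $op_i$ and $\{Lpl_i(\cdot),\,access.\gtcnt_i\}$ its \emph{early} events, every early event of an operation weakly precedes every late event of the same operation, with equality possible only when both are the unique counter access of a $\begt$.

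Next I would prove the \textbf{edge property} (the disjunction of the four inequalities) by a case split on the kind of edge. For a conflict edge on a common resource $\alpha$ (a \tobj{} or a $\gval$ flag), mutual exclusion forces the transaction that $L\_Order_H$ puts first, say $op_i$, to release $\alpha$ before $op_j$ can acquire it, giving $Fpu_i(\alpha) < Lpl_j(\alpha)$. For a conflict edge on the counter, atomicity of $\gtcnt$ totally orders the two accesses and the edge direction yields $access.\gtcnt_i < access.\gtcnt_j$. For a real-time edge $rsp(op_i) <_H inv(op_j)$, every event of $op_i$ precedes every event of $op_j$, so whichever late event of $op_i$ and early event of $op_j$ exist, the matching one of the four inequalities holds. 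Since each synchronising method touches at least one of the two lock types or the counter, and the four disjuncts exhaust the lock/counter $\times$ lock/counter combinations, at least one always holds. The same facts — per-resource mutual exclusion, atomicity of $\gtcnt$, and the predefined global lock-acquisition order that makes the scheme deadlock-free (as \asmref{bdtm} needs) — show $L\_Order_H$ is irreflexive, antisymmetric and transitive, i.e.\ a strict partial order.

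Finally I would establish \textbf{acyclicity of $OPG(H)$} by the standard induction-plus-contradiction. Composing the edge property with the intra-operation invariant along a path $op_1 \to op_2 \to \cdots \to op_n$ telescopes to a late event of $op_1$ strictly before an early event of $op_n$: each edge contributes ``late of $op_k < $ early of $op_{k+1}$'', and inside $op_{k+1}$ the witnessing early event weakly precedes the witnessing late event, so the strict inequality propagates. If $OPG(H)$ had a cycle, setting $op_n = op_1$ would give a late event of $op_1$ strictly before an early event of $op_1$, contradicting the intra-operation invariant that every early event of $op_1$ weakly precedes every late event of $op_1$; hence no cycle exists. \textbf{The main obstacle} I anticipate is bookkeeping rather than conceptual: verifying the two-phase invariant uniformly across every method, and in particular handling the early optimistic re-check in $\tryc$ (\Lineref{init-tc-chk}) that locks and immediately releases $\glock_i$ before the main locking phase. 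Strictly this is not two-phase, so I would argue it is a read-only validation whose outcome is re-established inside the main critical region and therefore induces no conflict edge that can lie on a cycle; equivalently I restrict $Lpl_i$/$Fpu_i$ to the main critical region and show the pre-check cannot reorder with anything the final linearization depends on.
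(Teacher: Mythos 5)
Your proposal is correct and is essentially the paper's own argument: the paper proves the same statement by induction on path length, composing the four edge relations using precisely your intra-operation invariant (every synchronising method acquires its locks, performs its single $\gtcnt$ access while holding them, and only then releases, i.e.\ $Ll\_op_n(\alpha) < access.\gtcnt_n < Fu\_op_n(\alpha)$), and concludes acyclicity from the resulting late-before-early chain exactly as in your telescoping-plus-contradiction step. The only differences are presentational rather than mathematical --- your early/late-event abstraction collapses the paper's roughly twenty explicit subcases (real-time, conflict, and common-counter edges, each crossed with locking versus non-locking successors and the four inductive hypotheses) into a single uniform step, and you additionally flag and patch the non-two-phase $\glock_i$ pre-check at \Lineref{init-tc-chk} of \tryc, a detail the paper's proof silently ignores.
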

\begin{proof} we consider proof by induction, So we assummed there exist a path from $op_1$ to $op_n$ and there is an edge between $op_n$ to $op_{n+1}$. As we described, while constructing OPG(H) we need to consider 
three types of edges. We are considering one by one:
\begin{enumerate}
\item Real time edge between $op_n$ to $op_{n+1}$:

\begin{enumerate}
	\item $op_{n+1}$ is a locking method: In this we are considering all the possible path between $op_1$ to $op_n$:
\begin{enumerate}
	\item ($Fu\_op_1$($\alpha$) $<$ $Ll\_op_n$($\alpha$)): Here, ($Fu\_op_n$($\alpha$) $<$ $Ll\_op_{n+1}$($\alpha$)).\\
	So, ($Fu\_op_1$($\alpha$) $<$ $Ll\_op_n$($\alpha$)) $<$ ($Fu\_op_n$($\alpha$) $<$ $Ll\_op_{n+1}$($\alpha$))\\
	Hence, ($Fu\_op_1$($\alpha$) $<$ $Ll\_op_{n+1}$($\alpha$))
	
	\item ($Fu\_op_1$($\alpha$) $<$ $Ll\_op_n$($\alpha$)): Here, ($access.\gtcnt_n$ $<$	$Ll\_op_{n+1}$($\alpha$)). As we know if any method is locking as well as accessing common counter then locking tobject first then accessing the counter after that unlocking tobject i.e.\\
	So, ($Ll\_op_n$($\alpha$)) $<$ ($access.\gtcnt_n$) $<$ ($Fu\_op_n$($\alpha$)).\\
	Hence, ($Fu\_op_1$($\alpha$) $<$ $Ll\_op_{n+1}$($\alpha$))
	
	\item ($access.\gtcnt_{1}$) $<$ ($access.\gtcnt_{n}$): Here, ($access.\gtcnt_{n}$) $<$ $Ll\_op_{n+1}$($\alpha$)).\\
	So, ($access.\gtcnt_{1}$) $<$ ($access.\gtcnt_{n}$) $<$ $Ll\_op_{n+1}$($\alpha$)).\\
	Hence, ($access.\gtcnt_{1}$) $<$ $Ll\_op_{n+1}$($\alpha$)).
	
	\item ($Fu\_op_1$($\alpha$) $<$ ($access.\gtcnt_{n}$): Here, ($access.\gtcnt_{n}$) $<$ $Ll\_op_{n+1}$($\alpha$)).\\
	So, ($Fu\_op_1$($\alpha$) $<$ ($access.\gtcnt_{n}$) $<$ $Ll\_op_{n+1}$($\alpha$)).\\
	Hence, ($Fu\_op_1$($\alpha$) $<$ $Ll\_op_{n+1}$($\alpha$))
	
	\item ($access.\gtcnt_{1}$) $<$ $Ll\_op_{n}$($\alpha$)): Here, ($Fu\_op_n$($\alpha$) $<$ $Ll\_op_{n+1}$($\alpha$)).\\
	So, ($access.\gtcnt_{1}$) $<$ $Ll\_op_{n}$($\alpha$)) $<$ ($Fu\_op_n$($\alpha$) $<$ $Ll\_op_{n+1}$($\alpha$)).\\
	Hence, ($access.\gtcnt_{1}$) $<$ $Ll\_op_{n+1}$($\alpha$)).
	
	\item ($access.\gtcnt_{1}$) $<$ $Ll\_op_{n}$($\alpha$)): Here, ($access.\gtcnt_n$ $<$	$Ll\_op_{n+1}$($\alpha$)). As we know if any method is locking as well as accessing common counter then locking tobject first then accessing the counter after that unlocking tobject i.e.\\
	So, ($Ll\_op_n$($\alpha$)) $<$ ($access.\gtcnt_n$) $<$ ($Fu\_op_n$($\alpha$)).\\
	Hence, ($access.\gtcnt_{1}$) $<$ $Ll\_op_{n+1}$($\alpha$)).
\end{enumerate}
\item $op_{n+1}$ is a non-locking method: Again, we are considering all the possible path between $op_1$ to $op_n$:
\begin{enumerate}
	\item ($Fu\_op_1$($\alpha$) $<$ $Ll\_op_n$($\alpha$)): Here, ($access.\gtcnt_n$) $<$ ($access.\gtcnt_{n+1}$).\\
	As we know if any method is locking as well as accessing common counter then locking tobject first then accessing the counter after that unlocking tobject i.e.\\
	So, ($Ll\_op_n$($\alpha$)) $<$ ($access.\gtcnt_n$) $<$ ($Fu\_op_n$($\alpha$)).\\
	Hence, ($Fu\_op_1$($\alpha$) $<$ ($access.\gtcnt_{n+1}$)
	
	\item ($Fu\_op_1$($\alpha$) $<$ $Ll\_op_n$($\alpha$)): Here, ($Fu\_op_n$($\alpha$) $<$ ($access.\gtcnt_{n+1}$). \\	So, ($Fu\_op_1$($\alpha$) $<$ $Ll\_op_n$($\alpha$)) $<$ ($Fu\_op_n$($\alpha$) $<$ ($access.\gtcnt_{n+1}$)\\
	Hence, ($Fu\_op_1$($\alpha$) $<$ ($access.\gtcnt_{n+1}$))
	
	\item ($access.\gtcnt_{1}$) $<$ ($access.\gtcnt_{n}$): Here, ($access.\gtcnt_{n}$) $<$ ($access.\gtcnt_{n+1}$).\\
	So, ($access.\gtcnt_{1}$) $<$ ($access.\gtcnt_{n}$) $<$ ($access.\gtcnt_{n+1}$).\\
	Hence, ($access.\gtcnt_{1}$) $<$ ($access.\gtcnt_{n+1}$).
	
	\item ($Fu\_op_1$($\alpha$) $<$ ($access.\gtcnt_{n}$): Here, ($access.\gtcnt_{n}$) $<$ ($access.\gtcnt_{n+1}$).\\
	So, ($Fu\_op_1$($\alpha$) $<$ ($access.\gtcnt_{n}$) $<$ ($access.\gtcnt_{n+1}$).\\
	Hence, ($Fu\_op_1$($\alpha$) $<$ ($access.\gtcnt_{n+1}$)
	
	\item ($access.\gtcnt_{1}$) $<$ $Ll\_op_{n}$($\alpha$)): Here, ($access.\gtcnt_n$) $<$ ($access.\gtcnt_{n+1}$).\\
	As we know if any method is locking as well as accessing common counter then locking tobject first then accessing the counter after that unlocking tobject i.e.\\
	So, ($Ll\_op_n$($\alpha$)) $<$ ($access.\gtcnt_n$) $<$ ($Fu\_op_n$($\alpha$)).\\
	Hence, ($access.\gtcnt_{1}$) $<$ ($access.\gtcnt_{n+1}$).
	
	\item ($access.\gtcnt_{1}$) $<$ $Ll\_op_{n}$($\alpha$)): Here, ($Fu\_op_n$($\alpha$) $<$ ($access.\gtcnt_{n+1}$). \\	
	So, ($access.\gtcnt_{1}$) $<$ $Ll\_op_{n}$($\alpha$)) $<$ ($Fu\_op_n$($\alpha$) $<$ ($access.\gtcnt_{n+1}$). \\
	Hence, ($access.\gtcnt_{1}$) $<$ ($access.\gtcnt_{n+1}$).
	
\end{enumerate}
\end{enumerate}

\item Conflict edge between $op_n$ to $op_{n+1}$:
\begin{enumerate}
	\item ($Fu\_op_1$($\alpha$) $<$ $Ll\_op_n$($\alpha$)): Here, ($Fu\_op_n$($\alpha$) $<$ $Ll\_op_{n+1}$($\alpha$)). Ref 1.(a).i.
	
	\item ($access.\gtcnt_{1}$) $<$ ($access.\gtcnt_{n}$): Here, ($Fu\_op_n$($\alpha$) $<$ $Ll\_op_{n+1}$($\alpha$)). As we know if any method is locking as well as accessing common counter then locking tobject first then accessing the counter after that unlocking tobject i.e.\\
	So, ($Ll\_op_n$($\alpha$)) $<$ ($access.\gtcnt_n$) $<$ ($Fu\_op_n$($\alpha$)).\\
	Hence, ($access.\gtcnt_{1}$) $<$ $Ll\_op_{n+1}$($\alpha$)).
	
	\item ($Fu\_op_1$($\alpha$) $<$ ($access.\gtcnt_{n}$): Here, ($Fu\_op_n$($\alpha$) $<$ $Ll\_op_{n+1}$($\alpha$)). As we know if any method is locking as well as accessing common counter then locking tobject first then accessing the counter after that unlocking tobject i.e.\\
	So, ($Ll\_op_n$($\alpha$)) $<$ ($access.\gtcnt_n$) $<$ ($Fu\_op_n$($\alpha$)).\\
	Hence, ($Fu\_op_1$($\alpha$) $<$ $Ll\_op_{n+1}$($\alpha$)).
	
	\item ($access.\gtcnt_{1}$) $<$ $Ll\_op_{n}$($\alpha$)): Here, ($Fu\_op_n$($\alpha$) $<$ $Ll\_op_{n+1}$($\alpha$)).\\ Ref 1.(a).v.
\end{enumerate}

\item Common counter edge between $op_n$ to $op_{n+1}$:
\begin{enumerate}
	\item ($Fu\_op_1$($\alpha$) $<$ $Ll\_op_n$($\alpha$)): Here, ($access.\gtcnt_{n}$) $<$ ($access.\gtcnt_{n+1}$). As we know if any method is locking as well as accessing common counter then locking tobject first then accessing the counter after that unlocking tobject i.e.\\
	So, ($Ll\_op_n$($\alpha$)) $<$ ($access.\gtcnt_n$) $<$ ($Fu\_op_n$($\alpha$)).\\
	Hence, ($Fu\_op_1$($\alpha$) $<$ ($access.\gtcnt_{n+1}$).
	
	\item ($access.\gtcnt_{1}$) $<$ ($access.\gtcnt_{n}$): Here, ($access.\gtcnt_{n}$) $<$ ($access.\gtcnt_{n+1}$). Ref 1.(b).iii.
	
	\item ($Fu\_op_1$($\alpha$) $<$ ($access.\gtcnt_{n}$): Here, ($access.\gtcnt_{n}$) $<$ ($access.\gtcnt_{n+1}$). Ref 1.(b).iv.
	
	\item ($access.\gtcnt_{1}$) $<$ $Ll\_op_{n}$($\alpha$)): Here, ($access.\gtcnt_n$) $<$ ($access.\gtcnt_{n+1}$). Ref 1.(b).v
\end{enumerate}
\end{enumerate}

\cmnt{
then $Fpu_i$($\alpha$) $<$ 
$Lpl_j$($\alpha$). Conflict edge is based on $M\_Order_H$.\\
	 
We are proving by contradiction while assuming OPG(H) is cyclic. So, $op_1$ 
$\rightarrow$ $op_2$ $\rightarrow$ $op_3$ $\rightarrow$....$\rightarrow$ 
$op_k$ $\rightarrow$ $op_1$ that implies either ($Fpu_1$($\alpha$) $<$ 
$Lpl_2$($\alpha$) $<$ $Fpu_2$($\alpha$) $<$ 
$Lpl_3$($\alpha$) $<$ .... $<$ $Fpu_{k-1}$($\alpha$) $<$ 
$Lpl_k$($\alpha$) $<$ $Fpu_k$($\alpha$) $<$ 
$Lpl_1$($\alpha$)) or ($access.\gtcnt_1$ $<$ $access.\gtcnt_2$ $<$ 
$access.\gtcnt_3$ $<$ .... $<$ $access.\gtcnt_{k-1}$ $<$ $access.\gtcnt_k$ $<$ 
$access.\gtcnt_1$). Hence from above assumption, either ($Fpu_1$($\alpha$) $<$ 
$Lpl_1$($\alpha$)) or 
($access.\gtcnt_1$ $<$ $access.\gtcnt_1$). But ($Fpu_1$($\alpha$) 
$<$ $Lpl_1$($\alpha$)) is impossible because methods are follwing 2PL order of 
locking and ($access.\gtcnt_1$ $<$ $access.\gtcnt_1$) is never be true because 
of  same method $op_1$. Hence, all the above cases are impossible. So, our 
assumption is wrong.
}

Therefore, OPG(H, $M\_Order$) produced by KSFTM is acyclic.
\end{proof}

\begin{lemma}
\label{lem:val-hs}
Any history H gen(KSFTM) with $\alpha$ linearization such that it respects $M\_Order_H$ then (H, $\alpha$) is valid.
\end{lemma}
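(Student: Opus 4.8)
The plan is to reduce the claim to a statement about a single read and then discharge it using the code of \ksftm together with the locking discipline that underlies $M\_Order_H$. By the definition of validity in \secref{model}, $(H,\alpha)$ is \valid{} precisely when every successful read $r_k(x,v)$ in it is \valid, i.e. there is a committed $T_j$ with $w_j(x,v)$ and $c_j <_\alpha r_k(x,v)$. So I would fix an arbitrary successful read $r_k(x,v)$ and produce such a $T_j$, ordered before the read in $\alpha$.

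First I would identify the source of the value $v$. By the code of $read(i,x)$ (\algoref{read}), a successful read returns $val = x[curVer].v$, where $curVer = \findls(\gwts_i,x)$ is a genuine entry of the version list $x.\vl$. I would then establish the invariant that every tuple in $x.\vl$ is installed either by $T_0$ in \algoref{init} (\Lineref{t0-init1}) or by some transaction in the final commit loop of $\tryc$ beginning at \Lineref{new-tup}; the latter is reached only after all abort checks have passed and is immediately followed by $\gstat = \texttt{commit}$. Consequently the version $r_k$ read was created by a \emph{committed} transaction $T_j$ (possibly $T_0$), and $T_j$ has events $w_j(x,v)$ and $c_j$ with $v$ the installed value. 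This supplies the "committed writer" half of \validity.

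The main step, and the one I expect to be delicate, is showing $c_j <_\alpha r_k(x,v)$. Here I would invoke the common-\tobj{} clause of $M\_Order_H$. Both $\tryc_j$ (at the moment it installs the version and sets $\gstat = \texttt{commit}$) and $read_k$ hold the lock on the common \tobj{} $x$, and these lock-hold intervals are disjoint by mutual exclusion. If $read_k$'s interval preceded $T_j$'s, the version would not yet be present in $x.\vl$ when $r_k$ scanned it via $\findls$, contradicting the fact that $r_k$ returned $v$ from that very version. Hence $\tryc_j$ acquires and releases $x$'s lock before $r_k$ does, so $M\_Order_H$ orders $\tryc_j \rightarrow r_k$. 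Since $\alpha$ respects $M\_Order_H$, this order is preserved in $\alpha$, and because the commit response $c_j$ lies inside the $\tryc_j$ interval we get $c_j <_\alpha r_k(x,v)$.

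The subtle point to handle carefully is ruling out that a read observes a partial or soon-to-abort version: this is why I stressed that the version insertion (\Lineref{new-tup}) and the status change to \texttt{commit} both happen while the writer still holds $x$'s lock, with all of its locks released only afterwards (the tail of $\tryc$, or \algoref{abort} on an abort path). Thus any reader that later acquires $x$'s lock necessarily sees a fully committed version, never one belonging to an aborting or still-live transaction. The degenerate case $T_j = T_0$ is immediate, as $T_0$ installs the base version of $x$ before any other operation and is committed by assumption. Since the argument applies to every successful read in $(H,\alpha)$, all of them are \valid, and therefore $(H,\alpha)$ is \valid.
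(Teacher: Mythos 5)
Your proof is correct and takes essentially the same approach as the paper's own proof: both reduce \validity{} to individual successful reads, observe that a successful read returns the value of a version in $x.\vl$, and argue that every such version was installed either by $T_0$ or by a transaction that had passed all abort checks of \tryc{} and therefore commits. The only difference is rigor---the paper's proof is terser, simply asserting that \findls{} returns a version that has ``already been committed,'' whereas you explicitly discharge the ordering requirement $c_j <_{\alpha} r_k(x,v)$ through the disjoint lock-hold intervals on the shared \tobj{} and the fact that $\alpha$ respects $M\_Order_H$, a step the paper leaves implicit.
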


\begin{proof}
From the definition of \textit{valid history}: If all the read operations of H is reading from the previously committed transaction $T_j$ then H is valid.\\
In order to prove H is valid, we are analyzing the read(i,x). so, from \Lineref{rd-curver10}, it returns the largest \ts value less than $\gwts_i$ that has already been committed and return the value successfully. If such version created by transaction $T_j$ found then $T_i$ read from $T_j$. Otherwise, if there is no version whose \wts is less than $T_i$'s \wts, then $T_i$ returns abort.\\
Now, consider the base case read(i,x) is the first transaction $T_1$ and none of the transactions has been created a version then as we have assummed, there always exist $T_0$ by default that has been created a version for all t-objects. Hence, $T_1$ reads from committed transaction $T_0$.\\
So, all the reads are reading from largest \ts value less than $\gwts_i$ that 
has already been committed. Hence, (H, $\alpha$) is valid.
\end{proof}

\begin{lemma}
\label{lem:rt-hs}
Any history H gen(KSFTM) with $\alpha$ and $\beta$ linearization such that both 
respects $M\_Order_H$ i.e. $M\_Order_H \subseteq  \alpha$ and $M\_Order_H 
\subseteq  \beta$ then  $\prec_{(H, {\alpha})} ^{RT}$= $\prec_{(H, {\beta})}^{RT}$.
\end{lemma}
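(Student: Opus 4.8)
The plan is to establish the equality of relations by proving both inclusions $\prec_{(H,\alpha)}^{RT} \subseteq \prec_{(H,\beta)}^{RT}$ and $\prec_{(H,\beta)}^{RT} \subseteq \prec_{(H,\alpha)}^{RT}$; since $\alpha$ and $\beta$ enter the hypothesis symmetrically, it is enough to prove one inclusion and invoke the symmetric argument for the other. The core observation I would isolate first is that, for any linearization $\gamma$ extending $M\_Order_H$, the relation $T_k \prec_{(H,\gamma)}^{RT} T_m$ holds \emph{exactly} when the last event of $T_k$ is placed before the first event of $T_m$ (its \begt) by $\gamma$, and that this placement is already decided by $M\_Order_H$ itself and hence independent of the particular extension. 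Proving this reduces the lemma to a statement purely about $H$ and $M\_Order_H$, from which equality of the two induced real-time orders is immediate.

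For the forward direction I would argue as follows. Suppose $T_k \prec_{(H,\alpha)}^{RT} T_m$, so $T_k$ is complete and its last event precedes, in $\alpha$, the first event of $T_m$. I split on the wall-clock relationship of $T_k$ and $T_m$ in $H$. If $T_k$ completes before $T_m$ is invoked in $H$, then the response of $T_k$'s last operation precedes the invocation of $T_m$'s \begt, so by the definition of $M\_Order_H$ there is a real-time edge from $T_k^{last}$ to $T_m^{first}$. Because $M\_Order_H \subseteq \beta$, the same ordering holds in $\beta$; and since $T_k^{last}$ is the last and $T_m^{first}$ the first event of their respective transactions, this gives $T_k \prec_{(H,\beta)}^{RT} T_m$.

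The remaining, and genuinely delicate, case is when $T_k$ and $T_m$ \emph{overlap} in $H$. Here no real-time edge forces $T_k^{last}$ before $T_m^{first}$, so I must instead rule out that $\alpha$ ordered all of $T_k$ before all of $T_m$ at all, i.e. show that $T_k \prec_{(H,\alpha)}^{RT} T_m$ is impossible under overlap. The lever is the global counter $\gtcnt$: every transaction's \begt performs a $\gtcnt.get\&Inc()$ and every committing transaction's \tryc performs a $\gtcnt.add\&Get(\incv)$, and by the counter-edge rule all such accesses are totally ordered in $M\_Order_H$. I would use this to show that whenever $T_m$'s \begt has begun before $T_k$'s last operation has responded (which overlap guarantees), the counter access of $T_m$'s \begt is ordered by $M\_Order_H$ before the completing counter step of $T_k^{last}$, forcing $T_m^{first}$ before $T_k^{last}$ in every extension and thus contradicting $T_k \prec_{(H,\alpha)}^{RT} T_m$. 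Consequently overlap implies the pair is unordered in every linearization respecting $M\_Order_H$, so $\alpha$ and $\beta$ agree on it.

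Finally I would assemble the cases: each ordered pair of $\prec_{(H,\alpha)}^{RT}$ arises from a real-time edge of $M\_Order_H$ and is therefore also an ordered pair of $\prec_{(H,\beta)}^{RT}$, while each pair that $\alpha$ leaves unordered (overlapping transactions) is likewise unordered under $\beta$; the symmetric argument yields the reverse inclusion and hence $\prec_{(H,\alpha)}^{RT} = \prec_{(H,\beta)}^{RT}$. I expect the main obstacle to be precisely the overlap case: pinning down which atomic sub-step of a \begt or \tryc serves as the transaction's representative event for the real-time relation, and verifying that the counter-access ordering of $M\_Order_H$ genuinely blocks a spurious real-time edge between the overlapping \emph{transactions} rather than merely between their begin operations. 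A careful bookkeeping of which operations of $T_k$ and $T_m$ touch $\gtcnt$ — in particular distinguishing a committed $T_k$, whose last event accesses the counter, from an aborted $T_k$, whose last event may not — will be needed to make this airtight.
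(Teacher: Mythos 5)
Your first case is, in substance, the paper's entire proof: when $T_k$ finishes before $T_m$ is invoked in $H$, the pair consisting of $T_k$'s last method and $T_m$'s first method is ordered by the real-time rule of $M\_Order_H$, and since $M\_Order_H \subseteq \alpha$ and $M\_Order_H \subseteq \beta$, both linearizations place $T_k$'s last event before $T_m$'s first event, so the pair lies in both real-time relations. The genuine gap is your overlap case. You claim that if $T_k$ and $T_m$ overlap in $H$ then $T_k \prec_{(H, {\alpha})}^{RT} T_m$ is impossible for any linearization $\alpha$ respecting $M\_Order_H$, justified by the assertion that, because $\begt_m$ is invoked before $T_k$'s last method responds, the $\gtcnt$ access of $\begt_m$ is ordered by $M\_Order_H$ before that of $T_k$'s last method. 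That inference is unsound: the counter edges of $M\_Order_H$ follow the order of the \emph{atomic accesses} to $\gtcnt$, not the order of invocations, and inside two overlapping method intervals these accesses can occur in either order. Concretely, let $\begt_m$ be invoked, then $\tryc_k$ be invoked, let $\tryc_k$ perform its $\gtcnt.add\&Get(\incv)$ and respond with commit, and only then let $\begt_m$ perform its $\gtcnt.get\&Inc()$ and respond. Here $T_k$ and $T_m$ overlap in $H$, yet $M\_Order_H$ contains the counter edge from $\tryc_k$ to $\begt_m$, so \emph{every} linearization respecting $M\_Order_H$ --- $\alpha$ and $\beta$ alike --- puts $T_k$'s last event before $T_m$'s first, i.e., $T_k \prec^{RT} T_m$ holds in both. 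So overlap does not force unorderedness, and your case analysis proves the wrong statement there.

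The repair is to argue, in the overlap case, not that the pair is unordered but that its order is pinned: whenever $M\_Order_H$ orders $T_k$'s last method against $\begt_m$ --- which it always does when $T_k$ is committed, since $\tryc_k$ and $\begt_m$ both access $\gtcnt$ --- the direction of that edge is inherited by every linearization extending $M\_Order_H$, so $\alpha$ and $\beta$ agree on the pair whichever way the edge points; combined with your first case this yields the claimed equality. Note also the corner you flagged but did not close, and which the paper does not close either: if $T_k$ is aborted and its last method is a read (or a \tryc that aborts before reaching the counter access) sharing no \tobj, lock, or $\gtcnt$ access with $\begt_m$, then $M\_Order_H$ leaves the two methods unordered, and two linearizations can genuinely disagree on $T_k \prec^{RT} T_m$; the paper's proof sidesteps this by silently restricting attention to pairs witnessed by $\tryc_k$ preceding $\begt_m$ in $M\_Order_H$.
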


\begin{proof}
Consider a history H gen(KSFTM) such that two transactions $T_i$ and $T_j$ are in 
real time order which respects $M\_Order_H$ i.e. $\tryc_i$ $<$ 
$\begt_j$. As $\alpha$ and $\beta$ are linearizations of H so, $\tryc_i$ 
$<_{(H,{\alpha})}$ $\begt_j$ and $\tryc_i$ $<_{(H,{\beta})}$ $\begt_j$. Hence in both 
the cases of linearizations, $T_i$ committed before begin of $T_j$. So, 
$\prec_{(H, {\alpha})} ^{RT}$= $\prec_{(H, {\beta})}^{RT}$. 
\end{proof}

\begin{lemma}
	Any history H gen(KSFTM) with $\alpha$ and $\beta$ linearization such that both respects $M\_Order_H$ i.e. $M\_Order_H \subseteq  \alpha$ and $M\_Order_H	\subseteq  \beta$ then $(H, {\alpha})$ is local opaque iff $(H, {\beta})$ is local opaque.
\end{lemma}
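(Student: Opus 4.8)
The plan is to reduce the claim to the graph characterization of local opacity (Theorem~\ref{thm:log}, together with the MVSG-acyclicity characterization established just above) and then to show that the relevant multi-version graphs are the \emph{same} object regardless of whether the underlying total order is $\alpha$ or $\beta$. First I would invoke Lemma~\ref{lem:val-hs} to record that both $(H,\alpha)$ and $(H,\beta)$ are \valid, so that the characterization applies to each. By Theorem~\ref{thm:log} it then suffices to compare, sub-history by sub-history, the acyclicity of the opacity graphs $\opg{sh}{\ll}$, and I would carry out this comparison by exhibiting, for each sub-history, a single version order for which the two graphs coincide edge-for-edge.

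The heart of the argument is that none of the three edge types of $\opg{sh}{\ll}$ depends on the choice of linearization. The set of transactions and their outcomes (committed, aborted, or live) is recorded in $H$ itself and is therefore linearization-invariant; likewise every per-transaction quantity fixed by the execution — $\tits{i}$, $\tcts{i}$, $\gwts_i$, $\tltl_i$, $\tutl_i$, $\ct_i$, and each committed version's $\vt$ field — does not move when we reorder overlapping operations. I would fix, for every sub-history, the version order $\ll_{\vt}$ induced by the $\vt$ fields (exactly the order used in the MVSG argument above); since these values are invariant, $\ll_{\vt}$ is the same under $\alpha$ and $\beta$. The reads-from edges coincide because, by the convention of unique write values, each successful read $r_k(x,v)$ is matched to the unique writer of $v$ independently of the order; the multiversion edges coincide because they are computed purely from the reads-from relation and $\ll_{\vt}$; and the real-time edges coincide by Lemma~\ref{lem:rt-hs}, which gives $\prec_{(H,\alpha)}^{RT}=\prec_{(H,\beta)}^{RT}$. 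Hence $\opg{sh}{\ll_{\vt}}$ is literally the same graph under both linearizations, so one is acyclic iff the other is, and Theorem~\ref{thm:log} closes the equivalence.

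The step I expect to be the main obstacle is establishing that the \emph{collection} $\shset{H}$ of sub-histories corresponds across $\alpha$ and $\beta$: that for each aborted transaction $T_i$ the set of ``previously committed'' transactions placed into $T_i$'s sub-history is the same under both orders, and similarly for the final committed transaction. Here I would argue that the only committed transactions whose membership can influence the acyclicity of $T_i$'s graph are those joined to $T_i$ by a reads-from, real-time, or multiversion edge, and that each such edge is forced by a conflict — a shared \tobj, a shared $\glock$, or the shared counter $\gtcnt$ — hence ordered identically in $\alpha$ and $\beta$ since both refine $M\_Order_H$; a committed transaction concurrent with and non-conflicting with $T_i$ contributes an isolated (or identically connected) vertex that can neither create nor destroy a cycle. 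Combining this conflict-forces-ordering observation with Lemma~\ref{lem:rt-hs} shows the corresponding sub-histories induce the same graphs, and the equivalence of local opacity for $(H,\alpha)$ and $(H,\beta)$ follows.
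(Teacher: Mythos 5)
Your route is genuinely different from the paper's: for this lemma the paper never touches the graph characterization. It takes the legal t-sequential history $S$ witnessing local opacity of $(H,\alpha)$, observes that $(\overline{H},\beta)$ consists of exactly the same events and is therefore equivalent to that same $S$, and transfers the real-time requirement via \lemref{rt-hs}; symmetry gives the converse. Your reduction through \thmref{log} is reasonable in principle, and your edge-level analysis (reads-from fixed by unique write values, \mv{} edges fixed by reads-from together with the linearization-invariant order $\ll_{\vt}$, \rt{} edges equal by \lemref{rt-hs}) is correct \emph{whenever the two sub-histories being compared contain the same set of transactions}.

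The gap sits exactly at the step you flag as the main obstacle, and your resolution of it does not work. When a committed transaction $T_j$'s \tryc{} overlaps the last successful operation of an aborted transaction $T_i$ and the two methods conflict on nothing (no common \tobj, no common \glock, and a read of $T_i$ never touches \gtcnt), $M\_Order_H$ leaves them unordered, so $T_j$ may count as ``previously committed'' under $\alpha$ but not under $\beta$: the vertex sets of the two sub-history graphs genuinely differ. Your claim that such a $T_j$ contributes ``an isolated (or identically connected) vertex that can neither create nor destroy a cycle'' is unjustified: $T_j$ is a committed transaction and can carry \rt, \rf{} and \mv{} edges to and from the \emph{other} committed transactions in the sub-history, and adding a vertex with both incoming and outgoing edges to an acyclic digraph can create a cycle through it. Your conflict-forces-ordering observation only controls the vertices adjacent to $T_i$; the intermediate vertices of a potential cycle are unconstrained, and that is precisely where $\alpha$ and $\beta$ could disagree. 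To close this you need something extra: for cycles avoiding $T_i$, note they lie entirely among committed transactions and hence would already appear in the sub-history of the last committed transaction, whose vertex set (all committed transactions) is the same under both linearizations; cycles through $T_i$ still need a separate argument, since their intermediate vertices may be exactly the problematic $T_j$'s. Alternatively, and most simply for histories in $\gen{\ksftm}$, invoke \lemref{tltl-edge}: every edge $T_i \rightarrow T_j$ forces $\tltl_i < \tltl_j$, so every such graph is acyclic outright and both sides of your intended equivalence hold simultaneously. As written, however, the equivalence does not follow.
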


\begin{proof}
	As $\alpha$ and $\beta$ are linearizations of history H gen(KSFTM) so, from \lemref{val-hs} (H, $\alpha$) and (H, $\beta$) are valid histories.
	
	Now assuming (H, $\alpha$) is local opaque so we need to show (H, $\beta$) is also local opaque. Since (H, $\alpha$) is local opaque so there exists legal t-sequential history S (with respect to each aborted transactions and last committed transaction while considering only committed transactions) which is equivalent to ($\overline{H}$, $\alpha$). As we know $\beta$ is a linearization of H so ($\overline{H}$, $\beta$) is equivalent to some legal t-sequential history S. From the definition of local opacity $\prec_{(H, {\alpha})} ^{RT} \subseteq \prec_S^{RT}$. From \lemref{rt-hs}, $\prec_{(H, {\alpha})} ^{RT}$= $\prec_{(H, {\beta})}^{RT}$ that implies $\prec_{(H, {\beta})}^{RT} \subseteq \prec_S^{RT}$. Hence, $(H, {\beta})$ is local opaque.\\
	
	Now consider the other way in which (H, $\beta$) is local opaque and we need to show (H, $\alpha$) is also local opaque. We can prove it while giving the same argument as above, by exchanging  $\alpha$ and $\beta$.\\
	
	Hence, $(H, {\alpha})$ is local opaque iff $(H, {\beta})$ is local opaque.
\end{proof}

\begin{theorem}
	\label{thm:ap-ksftm-lo} 
	Any history generated by \ksftm{} is \lopq.
\end{theorem}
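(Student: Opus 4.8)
The plan is to establish local opacity through the graph characterization of \thmref{log}, while handling the fact that histories in $\gen{\ksftm}$ are not sequential (their methods are intervals that may overlap) by passing to a linearization. First I would fix an arbitrary $H \in \gen{\ksftm}$ and pick a linearization $\alpha$ of $H$ that respects $M\_Order_H$. Such an $\alpha$ exists: the operation graph $\opgraph(H)$ is acyclic (by the preceding operation-graph lemma), so $M\_Order_H$ is a strict partial order on the atomic method events and can be extended to a total order. By \lemref{val-hs}, the sequential history $(H,\alpha)$ is then \valid, which supplies exactly the hypothesis required to invoke \thmref{log}.

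Next I would show that $(H,\alpha)$ is \lopq. I would use the \vt-based version order $\ll_{\vt}$, defined by $(v_i \ll_{\vt} v_j) \equiv (v_i.\vt < v_j.\vt)$, where $\vt$ records the \utl of a version's creator at its commit. Fix any sub-history $sh \in \shset{(H,\alpha)}$; I must show $\opg{sh}{\ll_{\vt}}$ is acyclic. Every transaction occurring in $sh$ is either a committed transaction or the single designated aborted (or last committed) transaction that the local-opacity construction admits, and on the operations retained in $sh$ each such transaction has its $\gval$ flag true (the admitted aborted transaction is treated as committed immediately after its last successful operation). Hence \lemref{tltl-edge} applies to every edge of $\opg{sh}{\ll_{\vt}}$, giving $\tltl_i < \tltl_j$ whenever $T_i \rightarrow T_j$. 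A directed cycle $T_1 \rightarrow T_2 \rightarrow \cdots \rightarrow T_k \rightarrow T_1$ would then force $\tltl_1 < \tltl_1$, which is impossible; so $\opg{sh}{\ll_{\vt}}$ is acyclic for every $sh$, and by \thmref{log}, $(H,\alpha)$ is \lopq.

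Finally I would discharge the dependence on the chosen linearization. By the preceding equivalence lemma, if $\alpha$ and $\beta$ are two linearizations of $H$ both respecting $M\_Order_H$, then $(H,\alpha)$ is \lopq iff $(H,\beta)$ is \lopq. Since the linearization I constructed yields a \lopq sequential history, every $M\_Order_H$-respecting linearization does, and therefore $H$ itself is \lopq. The main obstacle is the middle step: verifying that the hypotheses of \lemref{tltl-edge} genuinely hold for all three edge types (real-time, reads-from, and version-order) inside a sub-history — in particular, confirming that the lone aborted/live transaction admitted by the local-opacity construction satisfies the $\gval$-true condition on exactly the operations kept in $sh$, and that orienting version-order edges by $\ll_{\vt}$ is consistent with the case split (commit order versus \wts order) on which \lemref{tltl-edge}'s proof relies.
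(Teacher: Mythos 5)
Your proposal follows essentially the same route as the paper: the paper's own proof of this theorem is a three-line sketch that considers a sequential history, adopts the same \vt-based version order $\ll_{\vt}$, and asserts acyclicity of all sub-history graphs, with the real work delegated to exactly the lemmas you invoke (the operation-graph and linearization lemmas, \lemref{val-hs}, the linearization-equivalence lemma, and \lemref{tltl-edge} feeding \thmref{log}). Your write-up is if anything more explicit than the paper's sketch about how these pieces fit together, including the hypothesis check on \lemref{tltl-edge} inside a sub-history, which the paper leaves implicit.
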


\begin{proof}
	For proving this, we consider a sequential history $H$ generated by \ksftm. We define the version order $\ll_{\vt}$: for two versions $v_i, v_j$ it is defined as
	
	$(v_i \ll_{\vt} v_j) \equiv (v_i.\vt < v_j.\vt)$
	
	\noindent Using this version order $\ll_{\vt}$, we can show that all the sub-histories in $\shset{H}$ are acyclic.	
\end{proof}

\noindent Since the histories generated by \ksftm are \lopq, we get that they are also \stsble. 

\begin{corollary}
	\label{thm:ap-ksftm-stsble} 
	Any history generated by \ksftm{} is \stsble.
\end{corollary}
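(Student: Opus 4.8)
The plan is to obtain strict-serializability as an immediate consequence of the local opacity already established in \thmref{ap-ksftm-lo}, exploiting the general relationship between the two correctness criteria observed in \secref{model}. First I would recall that \thmref{ap-ksftm-lo} shows every sequential history $H \in \gen{\ksftm}$ is \lopq, i.e., under the version order $\ll_{\vt}$ every sub-history in $\shset{H}$ is opaque. The corollary then follows by noting that local opacity subsumes strict-serializability.

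To make the implication explicit rather than merely citing the hierarchy, I would unpack the two definitions. By the construction of $\shset{H}$, the sub-history associated with the last committed transaction $T_l$ consists of all previously committed transactions together with $T_l$; since $T_l$ is the last committed transaction, this sub-history is exactly the committed projection $\shist{\comm{H}}{H}$. Local opacity of $H$ guarantees that this particular sub-history is opaque. As strict-serializability of $H$ is, by definition, precisely the requirement that $\shist{\comm{H}}{H}$ be opaque, the two conditions coincide here and strict-serializability follows directly.

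Because both \thmref{ap-ksftm-lo} and the observation in \secref{model} (that any \lopq history is \stsble) are available, there is no real obstacle: the corollary is a single-step consequence. The only point deserving a moment of care is verifying that the last-committed-transaction sub-history in the local-opacity construction genuinely coincides with the full committed projection; once this identification is in place, opacity of that sub-history transfers to strict-serializability with no further argument.
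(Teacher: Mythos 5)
Your proposal is correct and follows essentially the same route as the paper: the corollary is obtained directly from \thmref{ap-ksftm-lo} together with the observation in \secref{model} that any \lopq history is also \stsble. Your additional step of identifying the last-committed-transaction sub-history in $\shset{H}$ with the committed projection $\shist{\comm{H}}{H}$ simply makes explicit the definitional link that the paper leaves implicit in its one-line derivation.
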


\ignore{
\begin{lemma}
	Any history H gen(KSFTM) is deadlock-free.
\end{lemma}

\begin{proof}
	In our algorithm, each transaction $T_i$ is following lock order in every
	method ($read(x,i)$ and $tryc()$) that are locking t-object first then
	$\glock$.  
	
	Since transaction $T_i$ is acquiring locks on t-objects in predefined order at
	\Lineref{lockxs} of $\tryc()$ and it is also following predefined locking
	order of all conflicting $\glock$ including itself at \Lineref{lockall} of
	$\tryc()$.
	
	Hence, history H gen(KSFTM) is deadlock-free.	
\end{proof}

\begin{lemma}
\label{lem:its-finitewts}
Consider two histories $H1, H2$ in $\gen{\ksftm}$ such that $H2$ is an 
extension of $H1$. Let $T_i$ and $T_j$ are transactions in  $\live{H1}$ such 
that $T_i$ 
has the lowest $\its$ among all the transactions in \textbf{live} and 
\textbf{$\rab$} set
and $\gval_i$ flag is true in $H1$. Suppose $T_i$ is aborted in 
$H2$. Then the number of transaction $T_j$ in $\txns{H1}$ with higher 
$\twts{j}$ 
than 
$\twts{i}$ is finite in $H2$. Formally, $ \langle H1, H2, T_i: ((\{H1, H2\} 
\subset \gen{\ksftm}) \land (H1 \sqsubset H2) \land (T_i \in \live{H1}) \land 
(\tits{i} \text{ is the smallest among } ((\live{H1} \land \rab(H1)) ) \land 
(\htval{i}{H1} = T) \land (T_i \in \aborted{H2})) \implies (\exists T_j \in 
\txns{H1}: (\htwts{i}{H1} < \htwts{j}{H1}) \land$ (such $T_j$ are finite in 
{H2})) 
$\rangle$. 
\end{lemma}
\begin{proof}
As we observed from \lemref{wts-its}, $T_i$ Will terminate within 2L range. So 
everytime ($\tits{i} + 2L \geq \tits{j}$) transactions with higher $twts$ cause 
$T_i$ to abort. Let say, there are m such transactions within 2L range called 
$T_j$. Then in worst case, $T_i$ will abort maximum m times because on every 
reties atleast one of transaction from $T_j$ will definitely commit and cause 
$T_i$ to abort. On abort, when $T_i$ retries again it retains same $its$ but 
higher $wts$. So, after committing all such $T_j$, $T_i$ will be the only 
transaction with lowest $its$ among all the transactions in \textbf{live} and 
\textbf{$\rab$} set (from lemma assumption) and highest $wts$ among all the 
transactions in \textbf{live}, \textbf{committed} and 
\textbf{$\rab$} set. Hence, the maximum such 
$T_j$ with higher $wts$ than $wts_i$ is finite that causes $T_i$ to abort. So, 
the number of such $T_j$ in $\txns{H1}$ with higher $\twts{j}$ than $\twts{i}$ 
is finite in $H2$. 
\end{proof}

\begin{lemma}

\label{lem:its-hwts}
Consider two histories $H1, H2$ in $\gen{\ksftm}$ such that $H2$ is an 
extension of $H1$. Let $T_i$ be a transaction in  $\live{H1}$ such that $T_i$ 
has the lowest \its among all the transactions in \textbf{live} and 
\textbf{$\rab$} set
and $\gval_i$ flag is true in $H1$. Suppose $T_i$ is having highest $wts$ in 
$H2$. Then $T_i$ will definitely commit in $H2$. So, KSFTM ensures 
starvation-freedom. Formally, $ \langle H1, H2, 
T_i: ((\{H1, H2\} 
\subset \gen{\ksftm}) \land (H1 \sqsubset H2) \land (T_i \in \live{H1}) \land 
(\tits{i} \text{ is the smallest among } ((\live{H1} \land \rab(H1)) ) \land 
(\htval{i}{H1} = T) \land highest(\htwts{i}{H1})$ $\implies$ ($\exists$ $T_i$ 
is committed in {H2}))
$\rangle$. 
\end{lemma}

\begin{proof}
From \lemref{its-finitewts}, transaction $T_i$ having lowest $its$ among all 
the 
transactions in \textbf{live} and 
\textbf{$\rab$} set. Then the number of transaction $T_j$ in $\txns{H1}$ with 
higher $\twts{j}$ than $\twts{i}$ is finite in $H2$.
So, for each transaction $T_i$ there eventually exists a
global state in which it has the lowestest $its$ and highest $wts$. In that 
state and in all other future global states (in which $T_i$ is still
live), $T_i$  can not be aborted. So, $T_i$ will definitely commit in $H2$. 
Hence, 
KSFTM ensures starvation-freedom. 
\end{proof}
}

\section{Proof of Liveness}
\label{sec:ap-liveness}

\paragraph{Proof Notations:} Let $\gen{\ksftm}$ consist of all the histories accepted by \ksftm{} algorithm. In the follow sub-section, we only consider histories that are generated by \ksftm unless explicitly stated otherwise. For simplicity, we only consider sequential histories in our discussion below. 

Consider a transaction $T_i$ in a history $H$ generated by \ksftm. Once it executes \begt \mth, its \its, \cts, \wts values do not change. Thus, we denote them as $\tits{i}, \tcts{i}, \twts{i}$ respectively for $T_i$. In case the context of the history $H$ in which the transaction executing is important, we denote these variables as $\htits{i}{H}, \htcts{i}{H}, \htwts{i}{H}$ respectively. 

The other variables that a transaction maintains are: \ltl, \utl, \lock, \val, \stat. These values change as the execution proceeds. Hence, we denote them as: $\htltl{i}{H}, \htutl{i}{H}, \htlock{i}{H}, \htval{i}{H}, \htstat{i}{H}$. These represent the values of \ltl, \utl, \lock, \val, \stat after the execution of last event in $H$. Depending on the context, we sometimes ignore $H$ and denote them only as: $\tlock{i}, \tval{i}, \tstat{i}, \ttltl{i}, \ttutl{i}$.

We approximate the system time with the value of $\tcntr$. We denote the \syst of history $H$ as the value of $\tcntr$ immediately after the last event of $H$. Further, we also assume that the value of $C$ is 1 in our arguments. But, it can be seen that the proof will work for any value greater than 1 as well. 

The application invokes transactions in such a way that if the current $T_i$ transaction aborts, it invokes a new transaction $T_j$ with the same \its. We say that $T_i$ is an \emph{\inc} of $T_j$ in a history $H$ if $\htits{i}{H} = \htits{j}{H}$. Thus the multiple \inc{s} of a transaction $T_i$ get invoked by the application until an \inc finally commits. 

To capture this notion of multiple transactions with the same \its, we define \emph{\incset} (incarnation set) of $T_i$ in $H$ as the set of all the transactions in $H$ which have the same \its as $T_i$ and includes $T_i$ as well. Formally, 
\begin{equation*}
\incs{i}{H} = \{T_j|(T_i = T_j) \lor (\htits{i}{H} = \htits{j}{H})\}  
\end{equation*}

Note that from this definition of  \incset, we implicitly get that $T_i$ and all the transactions in its \incset of $H$ also belong to $H$. Formally, $\incs{i}{H} \in \txns{H}$. 

The application invokes different incarnations of a transaction $T_i$ in such a way that as long as an \inc is live, it does not invoke the next \inc. It invokes the next \inc after the current \inc has got aborted. Once an \inc of $T_i$ has committed, it can't have any future \inc{s}. Thus, the application views all the \inc{s} of a transaction as a single \emph{\aptr}. 

We assign \emph{\incn{s}} to all the transactions that have the same \its. We say that a transaction $T_i$ starts \emph{afresh}, if $\inum{i}$ is 1. We say that $T_i$ is the \ninc of $T_i$ if $T_j$ and $T_i$ have the same \its and $T_i$'s \incn is $T_j$'s \incn + 1. Formally, $\langle (\nexti{i} = T_j) \equiv (\tits{i} = \tits{j}) \land (\inum{i} = \inum{j} + 1)\rangle$


As mentioned the objective of the application is to ensure that every \aptr eventually commits. Thus, the applications views the entire \incset as a single \aptr (with all the transactions in the \incset having the same \its). We can say that an \aptr has committed if in the corresponding \incset a transaction in eventually commits. For $T_i$ in a history $H$, we denote this by a boolean value \incct (incarnation set committed) which implies that either $T_i$ or an \inc of $T_i$ has committed. Formally, we define it as $\inct{i}{H}$

\begin{equation*}
\inct{i}{H} = \begin{cases}
True    & (\exists T_j: (T_j \in \incs{i}{H}) \land (T_j \in \comm{H})) \\
False	& \text{otherwise}
\end{cases}
\end{equation*}

\noindent From the definition of \incct we get the following observations \& lemmas about a transaction $T_i$

\begin{observation}
\label{obs:inct-term}
Consider a transaction $T_i$ in a history $H$ with its \incct being true in $H$. Then $T_i$ is terminated (either committed or aborted) in $H$. Formally, $\langle H, T_i: (T_i \in \txns{H}) \land (\inct{i}{H}) \implies (T_i \in \termed{H}) \rangle$. 
\end{observation}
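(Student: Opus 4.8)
The plan is to unfold the definition of $\inct{i}{H}$ and then lean on the strictly sequential discipline by which the application invokes successive incarnations of a single \aptr. Since $\inct{i}{H}$ holds by hypothesis, there is a transaction $T_j \in \incs{i}{H}$ with $T_j \in \comm{H}$; that is, some \inc sharing $T_i$'s \its has committed. Note also that $T_i \in \incs{i}{H}$ directly from the definition of \incset, so $T_i$ and $T_j$ are two (possibly identical) incarnations of the same \aptr, ordered among all incarnations with that \its by their incarnation numbers $\inum{i}$ and $\inum{j}$.

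First I would record the two structural guarantees the application provides on the incarnations of one \aptr: (i) the next \inc is invoked only after the current \inc has aborted, and (ii) once an \inc commits, no further \inc is ever invoked. From (i), every \inc whose number is strictly smaller than $\inum{j}$ must already have aborted, for otherwise $T_j$ would never have been invoked. From (ii), no \inc with number strictly larger than $\inum{j}$ exists at all, because $T_j$ committed.

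Then I would finish with a three-way case analysis comparing $\inum{i}$ and $\inum{j}$. If $\inum{i} = \inum{j}$, then $T_i = T_j$ (a transaction is uniquely determined by its \its together with its \incn value), so $T_i$ is committed. If $\inum{i} < \inum{j}$, then by (i) $T_i$ has aborted. The case $\inum{i} > \inum{j}$ cannot arise, by (ii). In each of the two feasible cases $T_i \in (\comm{H} \cup \aborted{H}) = \termed{H}$, which is precisely the claim.

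The argument is short, and the only subtle point --- the place I would flag as the crux --- is ruling out $\inum{i} > \inum{j}$. This step is exactly where guarantee (ii), that no \inc follows a committed one, is indispensable: without it a live $T_i$ could in principle coexist with an already committed sibling \inc, and the conclusion would fail. Everything else is routine bookkeeping over the incarnation numbering.
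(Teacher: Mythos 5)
Your proof is correct; the paper states this result as an unproved observation, treating it as immediate from the application's discipline for invoking incarnations, and your case analysis on incarnation numbers is exactly that implicit argument made explicit. In particular, the step you flag as the crux --- ruling out $\inum{i} > \inum{j}$ because a committed incarnation has no successors --- is precisely the model assumption (``Once an \inc of $T_i$ has committed, it can't have any future \inc{s}'') that the paper relies on, so your argument matches the paper's intended justification.
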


\begin{observation}
\label{obs:inct-fut}
Consider a transaction $T_i$ in a history $H$ with its \incct being true in $H1$. Let $H2$ be a extension of $H1$ with a transaction $T_j$ in it. Suppose $T_j$ is an \inc of $T_i$. Then $T_j$'s \incct is true in $H2$. Formally, $\langle H1, H2, T_i, T_j: (H1 \sqsubseteq H2) \land (\inct{i}{H1}) \land (T_j \in \txns{H2}) \land (T_i \in \incs{j}{H2})\implies (\inct{j}{H2}) \rangle$. 
\end{observation}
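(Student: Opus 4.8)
The plan is to show that the very transaction which witnesses $\inct{i}{H1}$ continues to witness $\inct{j}{H2}$. Everything then reduces to two persistence facts across the extension $H1 \sqsubseteq H2$: that committed transactions stay committed, and that a transaction's \its never changes once it has executed \begt.

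First I would unfold the definition of \incct. Since $\inct{i}{H1}$ holds, there is some transaction $T_k$ with $T_k \in \incs{i}{H1}$ and $T_k \in \comm{H1}$; by the definition of \incset this gives $\htits{k}{H1} = \htits{i}{H1}$ (the case $T_k = T_i$ makes the equality trivial). Next I would transport this information to $H2$. Because $H2$ extends $H1$ we have $\txns{H1} \subseteq \txns{H2}$, so $T_k, T_i \in \txns{H2}$, and appending events to a history cannot undo a commit, hence $T_k \in \comm{H2}$. Using the invariance of \its recorded in the proof notations, $\htits{k}{H2} = \htits{k}{H1}$ and $\htits{i}{H2} = \htits{i}{H1}$, whence $\htits{k}{H2} = \htits{i}{H2}$.

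Finally I would chain in the hypothesis $T_i \in \incs{j}{H2}$, which yields $\htits{i}{H2} = \htits{j}{H2}$ (again trivial if $T_i = T_j$). Transitivity then gives $\htits{k}{H2} = \htits{j}{H2}$, so $T_k \in \incs{j}{H2}$ by the definition of \incset. Since we already have $T_k \in \comm{H2}$, the transaction $T_k$ certifies $\inct{j}{H2}$, which is exactly what the observation claims.

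The argument involves no genuinely hard step; it is pure bookkeeping over the definitions. The only point requiring care --- and the closest thing to an obstacle --- is to make explicit the two monotonicity facts that are used implicitly: (i) $\comm{H1} \subseteq \comm{H2}$ for any extension, since a committed transaction can neither become live nor aborted by later events, and (ii) the freezing of \its after \begt. Both are immediate from well-formedness of histories and from the stated timestamp-invariance, so I would cite them rather than re-derive them.
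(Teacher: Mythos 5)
Your proposal is correct, and it matches the paper exactly in spirit: the paper states this as an \emph{observation} with no written proof, treating it as immediate from the definitions of \incset{} and \incct{} together with the persistence of commits and the invariance of \its{} across extensions. Your argument is precisely that definitional bookkeeping (witness $T_k$ survives into $H2$ as committed, \its{} equalities chain by transitivity), so nothing further is needed.
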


\begin{lemma}
\label{lem:inct-diff}
Consider a history $H1$ with a strict extension $H2$. Let $T_i$ \& $T_j$ be two transactions in $H1$ \& $H2$ respectively. Let $T_j$ not be in $H1$. Suppose $T_i$'s \incct is true. Then \its of $T_i$ cannot be the same as \its of $T_j$. Formally, $\langle H1, H2, T_i, T_j: (H1 \sqsubset H2) \land (\inct{i}{H1}) \land (T_j \in \txns{H2}) \land (T_j \notin \txns{H1}) \implies (\htits{i}{H1} \neq \htits{j}{H2}) \rangle$.
\end{lemma}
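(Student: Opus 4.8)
The plan is to argue by contradiction, exploiting the application-level invocation discipline stated just before the lemma: once some \inc of an \aptr commits, the application never invokes a further \inc carrying the same \its. I would assume, for contradiction, that $\htits{i}{H1} = \htits{j}{H2}$ and then show this forces the application to have spawned a new \inc after one had already committed, which the discipline forbids.

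First I would record two standing facts that make the cross-history comparison meaningful. Since $T_i$ executes \begt within $H1$ and (as noted in the proof notations) its \its never changes afterward, we have $\htits{i}{H1} = \htits{i}{H2}$; and since $H1 \sqsubset H2$, membership in $\txns{}$ and in \comm{} is preserved under the extension. Now suppose $\htits{i}{H1} = \htits{j}{H2}$. Combining with $\htits{i}{H1} = \htits{i}{H2}$ gives $\htits{i}{H2} = \htits{j}{H2}$, so by the definition of \incset we get $T_j \in \incs{i}{H2}$, i.e. $T_j$ is an \inc of $T_i$. Next, the hypothesis $\inct{i}{H1}$ together with the definition of \incct yields a transaction $T_k \in \incs{i}{H1}$ with $T_k \in \comm{H1}$; that is, an \inc of the same \aptr has \emph{already committed} inside the prefix $H1$.

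The crux is then the appeal to the invocation rule. Because $T_j \in \txns{H2}$ but $T_j \notin \txns{H1}$, the transaction $T_j$ was invoked strictly after the prefix $H1$ — yet $H1$ already contains the committed \inc $T_k$ of that same \aptr (same \its as $T_j$). This is precisely the forbidden situation: a fresh \inc of an \aptr is created after one of its incarnations has committed. The contradiction establishes $\htits{i}{H1} \neq \htits{j}{H2}$. The main (and essentially only) obstacle is that this argument leans on a behavioural assumption about the application's invocation pattern that the paper states in prose rather than as a formal axiom; so the care lies in making that appeal precise — pinning down that $T_j$'s absence from $H1$ together with its presence in the strict extension $H2$ means it was invoked after the commit of $T_k$ recorded in $H1$ — and in silently using the immutability of \its after \begt and the monotonicity of $\txns{}$ and \comm{} under extension.
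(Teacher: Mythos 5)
Your proposal is correct and follows essentially the same route as the paper's own proof: assume the \its values coincide, note that $T_j$ must have been invoked after the last event of $H1$, observe that $\inct{i}{H1}$ means a committed \inc already exists in $H1$, and derive a contradiction with the application's rule that no new \inc is invoked after one commits. The extra care you take with the immutability of \its after \begt and monotonicity of $\txns{}$ under extension is implicit in the paper's argument, so there is no substantive difference.
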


\begin{proof}
Here, we have that $T_i$'s \incct is true in $H1$. Suppose $T_j$ is an \inc of $T_i$, i.e., their \its{s} are the same. We are given that $T_j$ is not in $H1$. This implies that $T_j$ must have started after the last event of $H1$. 

We are also given that $T_i$'s \incct is true in $H1$. This implies that an \inc of $T_i$ or $T_i$ itself has committed in $H1$. After this commit, the application will not invoke another transaction with the same \its as $T_i$. Thus, there cannot be a transaction after the last event of $H1$ and in any extension of $H1$ with the same \its of $T_1$. Hence, $\htits{i}{H1}$ cannot be same as $\htits{j}{H2}$. 
\end{proof}

Now we show the liveness with the following observations, lemmas \& theorems. We start with two observations about that histories of which one is an extension of the other. The following states that for any history, there exists an extension. In other words, we assume that the STM system runs forever and does not terminate. This is required for showing that every transaction eventually commits. 

\begin{observation}
\label{obs:hist-future}
Consider a history $H1$ generated by \gen{\ksftm}. Then there is a history $H2$ in \gen{\ksftm} such that $H2$ is a strict extension of $H1$. Formally, $\langle \forall H1: (H1 \in \gen{ksftm}) \implies (\exists H2: (H2 \in \gen{ksftm}) \land (H1 \sqsubset H2) \rangle$. 
\end{observation}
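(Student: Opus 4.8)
The plan is to show that for an arbitrary $H1 \in \gen{\ksftm}$ the algorithm always enables at least one further event, so that appending it to $H1$ yields a strict extension $H2$ with $H1 \sqsubset H2$ and $H2 \in \gen{\ksftm}$. The argument naturally splits on whether $H1$ has a live transaction, i.e.\ on whether $\live{H1}$ is empty.

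First I would treat the case $\live{H1} \neq \emptyset$. Pick any $T_i \in \live{H1}$. Since $T_i$ is live it has not yet executed a \termop, so by well-formedness the next step available to the thread $Th_x$ running $T_i$ is an invocation of one of \tread, \twrite, or \tryc on $T_i$. Each such method is enabled and, crucially, returns: \twrite{} always returns $ok$, while \tread{} and \tryc{} return either a value / $\commit$ or $\abort$. That the pending method actually produces a response rather than blocking forever follows from the deadlock-freedom of \ksftm{} --- all locks on \tobj{s} and on the \glock{} variables are acquired in the fixed predefined order --- together with \asmref{bdtm}, which guarantees bounded termination in the absence of deadlock. Hence the invocation together with its matching response form a legal sequential event that can be appended after the last event of $H1$, giving the required strict extension.

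Next I would treat the case $\live{H1} = \emptyset$, where every transaction of $H1$ is already committed or aborted. Here I invoke the standing modelling assumption stated immediately above this observation, namely that the STM system runs forever and the application never exhausts its supply of transactions. Concretely, some thread $Th_x$ begins a fresh transaction (or the next \inc of a previously aborted one) by invoking the \begt{} method; well-formedness holds because $Th_x$'s previous transaction, if any, has terminated in $H1$. This method merely reads and increments \gtcnt{} and initialises the local and shared timestamps of the new transaction, so it always completes and returns an identifier. The resulting event strictly extends $H1$ and keeps the history in $\gen{\ksftm}$.

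The one delicate point is the claim, used in the first case, that a method pending on a live transaction is guaranteed to terminate and yield a response; this is precisely where deadlock-freedom and \asmref{bdtm} do the work, and I would lean on the fixed-order locking argument already sketched for \ksftm{}. With both cases handled, every $H1 \in \gen{\ksftm}$ admits a strict extension in $\gen{\ksftm}$, which is exactly the statement. I note that the observation is really a property of the chosen liveness model (infinite, non-terminating executions), so its content lies entirely in verifying that some event is always enabled rather than in any intricate reasoning about \ksftm{} itself.
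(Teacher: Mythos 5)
Your proposal is correct, but it does more work than the paper itself does: the paper offers no proof of this observation at all. Immediately before stating it, the authors write that ``we assume that the STM system runs forever and does not terminate,'' so in the paper the statement is a modelling axiom about infinite executions rather than a derived fact, which is also why it is labelled an observation. Your final paragraph shows you recognized exactly this, and the body of your argument is best read as a consistency check that the axiom is compatible with the mechanics of \ksftm: you split on whether $\live{H1}$ is empty, and in the non-empty case you argue some pending \mth{} of a live transaction must complete and yield an appendable event (leaning on the fixed-order locking / deadlock-freedom argument, which the paper formalizes later as \lemref{mth-fdm}, together with \asmref{bdtm}), while in the empty case you fall back on the assumption that the application keeps issuing transactions --- which is precisely the paper's postulate, so that case cannot be discharged by anything other than the axiom. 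The trade-off is clear: the paper's treatment buys brevity by fiat, whereas your enabledness argument buys the reassurance that there is no reachable state of \ksftm{} in which no event is enabled (no deadlock, no blocked \mth), at the cost of invoking machinery (\lemref{mth-fdm}, \asmref{bdtm}) that the paper only introduces for the later liveness lemmas; since neither of those results depends on this observation, your use of them introduces no circularity.
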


\noindent The follow observation is about the transaction in a history and any of its extensions. 

\begin{observation}
\label{obs:hist-subset}
Given two histories $H1$ \& $H2$ such that $H2$ is an extension of $H1$. Then, the set of transactions in $H1$ are a subset equal to  the set of transaction in $H2$. Formally, $\langle \forall H1, H2: (H1 \sqsubseteq H2) \implies (\txns{H1} \subseteq \txns{H2}) \rangle$. 
\end{observation}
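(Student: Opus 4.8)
The plan is to reduce this observation to the definitions of \emph{extension} and of the transaction set $\txns{H}$, since the claim is essentially definitional once these are made precise. Recall that $H2$ being an extension of $H1$ (written $H1 \sqsubseteq H2$) means that $H1$ is a prefix of $H2$: every event of $H1$ is an event of $H2$, the two histories agree on the order of these common events, and any additional events of $H2$ occur after the last event of $H1$. In particular this yields the containment $\evts{H1} \subseteq \evts{H2}$, which is the only property of extensions that the argument will actually use.

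Next I would recall the definition of $\txns{H}$ as the set of transactions that \emph{appear} in $H$; that is, a transaction $T$ lies in $\txns{H}$ exactly when at least one event of $T$ belongs to $\evts{H}$. With these two facts in hand, the proof is a one-line inclusion argument: pick an arbitrary $T \in \txns{H1}$; by the definition of $\txns{H1}$ it has some event $e$ with $e \in \evts{H1}$; by the extension property $\evts{H1} \subseteq \evts{H2}$ we get $e \in \evts{H2}$; hence $T$ appears in $H2$, so $T \in \txns{H2}$. Since $T$ was arbitrary, $\txns{H1} \subseteq \txns{H2}$, which is precisely the claim.

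There is no genuine obstacle here. The only point requiring a moment's care is fixing the precise meaning of $\sqsubseteq$, because the paper uses both $\sqsubseteq$ (extension) and $\sqsubset$ (strict extension) in the liveness section without restating the underlying prefix definition; I would make that definition explicit at the start of the proof. Once extension is read as event-set containment with order preservation, the observation follows immediately and needs no case analysis on the \ksftm algorithm, on commit/abort status, or on the multiversion structure of the histories.
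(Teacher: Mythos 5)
Your proof is correct, and it matches the paper's intent: the paper states this as an unproved \emph{observation}, treating it as immediate from the definition of an extension, and your definitional unpacking (a transaction in $\txns{H1}$ has an event in $\evts{H1} \subseteq \evts{H2}$, hence lies in $\txns{H2}$) is exactly the argument the authors leave implicit. Making the prefix semantics of $\sqsubseteq$ explicit is a reasonable addition, since the paper indeed never restates it in the liveness section.
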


In order for a transaction $T_i$ to commit in a history $H$, it has to compete with all the live transactions and all the aborted that can become live again as a different \inc. Once a transaction $T_j$ aborts, another \inc of $T_j$ can start and become live again. Thus $T_i$ will have to compete with this \inc of $T_j$ later. Thus, we have the following observation about aborted \& committed transactions. 

\begin{observation}
\label{obs:abort-retry}
Consider an aborted transaction $T_i$ in a history $H1$. Then there is an extension of $H1$, $H2$ in which an \inc of $T_i$, $T_j$ is live and has $\tcts{j}$ is greater than $\tcts{i}$. Formally, $\langle H1, T_i: (T_i \in \aborted{H1}) \implies(\exists T_j, H2: (H1 \sqsubseteq H2) \land (T_j \in \live{H2}) \land (\htits{i}{H2} = \htits{j}{H2}) \land (\htcts{i}{H2} < \htcts{j}{H2})) \rangle$. 
\end{observation}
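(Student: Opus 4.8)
The plan is to derive this observation directly from the transaction-invocation model together with the strict monotonicity of the global counter \gtcnt, rather than from any internal reasoning about the \ksftm validation logic. First I would invoke the invocation semantics from the ``Transaction Invocation'' paragraph: since $T_i \in \aborted{H1}$ and (in the standing liveness context) no incarnation of $T_i$ has yet committed, the thread $Th_x$ that issued $T_i$ reissues a fresh incarnation $T_j$, passing $\tits{i}$ as the parameter $its$ to \begt. The candidate $T_j$ therefore satisfies $\htits{j}{H2} = \htits{i}{H2}$, because \begt assigns $\gits_j = its$ whenever $its \neq nil$ (the branch at \Lineref{ti-ts-init} governs only the first, null-valued invocation).

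Next I would use \obsref{hist-future} to produce the required extension. Since the STM system runs forever, $H1$ admits arbitrarily long strict extensions; I would pick $H2$ to be one that contains the \begt event of $T_j$ but stops before any terminal operation of $T_j$. Such a prefix exists because, by \asmref{bdtm}, $T_j$ runs for a finite but strictly positive interval, and a \ksftm run may be truncated at any intermediate configuration after $T_j$'s begin and before its termination. For this choice $T_j \in \live{H2}$, as required, and $T_i \in \txns{H2}$ by \obsref{hist-subset}, so both $\tits{}$ and $\tcts{}$ values are well defined in $H2$.

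Finally I would establish $\htcts{i}{H2} < \htcts{j}{H2}$ from the fact that \cts is drawn from \gtcnt via an atomic get-and-increment, which is strictly increasing across the entire execution. Because $T_j$ begins only after $T_i$ has aborted---and hence after $T_i$ executed its own \begt and read its \cts---the value $T_j$ obtains from \gtcnt is strictly greater than the one $T_i$ obtained. Conjoining the three facts $H1 \sqsubseteq H2$, $T_j \in \live{H2}$, $\htits{i}{H2} = \htits{j}{H2}$, and $\htcts{i}{H2} < \htcts{j}{H2}$ gives exactly the claimed statement.

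The main obstacle here is definitional rather than computational: the conclusion rests on the scheduler/application behaviour---that an aborted transaction is in fact retried, and that the retry can be witnessed while it is still live---rather than on the algorithm itself. I would therefore be careful to (i) make explicit the standing assumption that the incarnation set of $T_i$ has not already committed, so that the invocation model genuinely guarantees a new incarnation $T_j$, and (ii) justify truncating the extension with $T_j$ live, which uses only the freedom to stop a longer run at an intermediate state. Once these two points are pinned down, the \cts-monotonicity and \its-preservation steps are entirely routine.
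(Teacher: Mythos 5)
Your proposal is correct, and it matches the justification the paper itself relies on: the paper offers no proof of \obsref{abort-retry} at all, stating it as a model-level observation whose grounds are exactly the two ingredients you formalize --- the application's retry semantics (an aborted incarnation is reissued with the same \its via \begt) and the strict monotonicity of \gtcnt, which forces the fresh incarnation's \cts to exceed the old one. Your use of \obsref{hist-future} plus truncation of the run to catch $T_j$ while it is still live is a reasonable way to make the existence of the witness history $H2$ precise, and it is faithful to how the paper treats $\gen{\ksftm}$ as closed under prefixes and extensions.

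Your caveat (i) is not merely cosmetic: as literally stated, the observation is false when some incarnation of $T_i$ has already committed in $H1$, since then the application invokes no further incarnations and \obsref{cmt-noinc} rules out any live incarnation in every extension. So the statement implicitly carries the proviso that $T_i$'s \incct is false in $H1$, and your proof is right to surface that hypothesis explicitly rather than leave it buried in the invocation model; the paper's subsequent uses of the observation (e.g., in \lemref{wts-great}) are in contexts where that proviso holds, which is why the gap never bites there.
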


\begin{observation}
\label{obs:cmt-noinc}
Consider an committed transaction $T_i$ in a history $H1$. Then there is no extension of $H1$, in which an \inc of $T_i$, $T_j$ is live. Formally, $\langle H1, T_i: (T_i \in \comm{H1}) \implies(\nexists T_j, H2: (H1 \sqsubseteq H2) \land (T_j \in \live{H2}) \land (\htits{i}{H2} = \htits{j}{H2})) \rangle$. 
\end{observation}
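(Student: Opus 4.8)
The plan is to argue by contradiction, leaning on the application's sequential invocation discipline for incarnations (once an incarnation of a transaction commits, the application never issues a further incarnation with the same \its) together with the already-established \lemref{inct-diff}. First I would record the immediate consequence that, since $T_i \in \comm{H1}$, the value $\inct{i}{H1}$ is true by the very definition of \incct, taking the witnessing committed transaction in $\incs{i}{H1}$ to be $T_i$ itself.

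Next, suppose for contradiction that there is an extension $H2$ with $H1 \sqsubseteq H2$ containing a transaction $T_j$ such that $T_j \in \live{H2}$ and $\htits{j}{H2} = \htits{i}{H2}$. I would split on whether $T_j$ already appears in $H1$. If $T_j \notin \txns{H1}$, then $T_j$ was invoked strictly after the last event of $H1$; but \lemref{inct-diff}, applied with the committed (hence \incct-true) transaction $T_i$, forces $\htits{i}{H1} \neq \htits{j}{H2}$, directly contradicting the assumption that $T_j$ is an incarnation of $T_i$.

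The remaining and more delicate case is $T_j \in \txns{H1}$. Here I would use that, for a fixed \its value, the application keeps at most one incarnation live at a time and issues the next incarnation only after the current one has aborted. Since $T_i$ committed in $H1$, at the moment of its commit $T_i$ was the unique live member of its incarnation set; every other transaction of $H1$ sharing that \its must therefore have been invoked earlier and already aborted, because the application could not have invoked anything with that \its after $T_i$'s commit. Thus $T_j \neq T_i$ (as $T_i$ is committed, not live) and $T_j \in \aborted{H1}$; since an aborted transaction remains aborted in every extension, $T_j \notin \live{H2}$, a contradiction. Closing both cases completes the argument.

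The real work sits in the $T_j \in \txns{H1}$ case: the two contradictions are routine once the invocation discipline is made precise, so the main obstacle is formalizing ``at most one incarnation is live at a time'' and ``no incarnation is invoked after a commit'' as a usable invariant on $\gen{\ksftm}$ histories --- essentially the same application-level assumption that underlies \lemref{inct-diff} and \obsref{abort-retry}. If that invariant is taken as given, as the surrounding text does, the observation follows immediately.
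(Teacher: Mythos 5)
Your argument is correct, but note that the paper offers no proof of this statement at all: it is one of the unproved \emph{observations} in the liveness section that simply encode the application model, namely the invocation discipline stated in the model and proof-notation text --- as long as an \inc is live the application does not invoke the next \inc, it invokes the next \inc only after the current one aborts, and once an \inc of a transaction commits there are no future \inc{s} of that \aptr. Your proof is a formalization of exactly that assumption, so the two do not genuinely diverge in substance; yours just makes the bookkeeping explicit. The case split you introduce is sound: when $T_j \notin \txns{H1}$ the appeal to \lemref{inct-diff} is legitimate (strictness of the extension is forced there, since $T_j \in \txns{H2}$ but $T_j \notin \txns{H1}$ implies $H1 \sqsubset H2$), and when $T_j \in \txns{H1}$ your invariant that, at the moment $T_i$ commits, every other member of $\incs{i}{H1}$ has already aborted --- combined with the fact that commit and abort are terminal, so membership in $\comm{H1}$ or $\aborted{H1}$ persists in every extension --- closes the case. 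One thing to be aware of: \lemref{inct-diff} is itself proved in the paper from this very same invocation discipline, so your first case is not buying independent leverage; the entire observation reduces to that single model assumption, which is precisely what you concede in your closing remark and what the paper's decision to state this as an observation (rather than a lemma with proof) reflects.
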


\begin{lemma}
\label{lem:cts-wts}
Consider a history $H1$ and its extension $H2$. Let $T_i, T_j$ be in $H1, H2$ respectively such that they are \inc{s} of each other. If \wts of $T_i$ is less than \wts of $T_j$ then \cts of $T_i$ is less than \cts $T_j$.
Formally, $\langle H1, H2, T_i, T_j: (H1 \sqsubset H2) \land (T_i \in \txns{H1}) \land (T_j \in \txns{H2}) \land (T_i \in \incs{j}{H2}) \land (\htwts{i}{H1} < \htwts{j}{H2})\implies (\htcts{i}{H1} < \htcts{j}{H2}) \rangle$
\end{lemma}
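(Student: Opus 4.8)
The plan is to reduce the statement to elementary algebra on the timestamp definitions, since the entire content of the claim is a monotonicity property of the map $\tcts{} \mapsto \twts{}$ when the \its is held fixed. First I would record two facts that are already available from the proof-notation preamble: once a transaction has executed its \begt \mth, its \its, \cts and \wts values are frozen, so in particular $\htits{i}{H2} = \htits{i}{H1}$ and $\htwts{i}{H2} = \htwts{i}{H1}$; and, by \obsref{hist-subset}, since $H1 \sqsubseteq H2$ the transaction $T_i$ also belongs to $H2$. These let me compare the timestamps of $T_i$ and $T_j$ in a single common history without ambiguity.

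The key preliminary step is to pin down that $T_i$ and $T_j$ share the same \its. We are given $T_i \in \incs{j}{H2}$, so by the definition of \incset either $T_i = T_j$ or $\htits{i}{H2} = \htits{j}{H2}$. The first alternative is vacuous: if $T_i = T_j$ then $\htwts{i}{H1} = \htwts{i}{H2} = \htwts{j}{H2}$, contradicting the strict hypothesis $\htwts{i}{H1} < \htwts{j}{H2}$. Hence the second alternative holds, and combining it with the invariance of \its across the extension gives $\htits{i}{H1} = \htits{j}{H2}$; write $s$ for this common value.

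Then I would expand both working timestamps using \eqnref{wtsf}: each of $\htwts{i}{H1}$ and $\htwts{j}{H2}$ has the form $(1+C)\,c - C\,s$, where $c$ is the respective \cts and $s$ is the shared \its. Substituting into the hypothesis yields $(1+C)\,\htcts{i}{H1} - C\,s < (1+C)\,\htcts{j}{H2} - C\,s$; the $-C\,s$ terms cancel, leaving $(1+C)\,\htcts{i}{H1} < (1+C)\,\htcts{j}{H2}$. Since $C > 0$, the factor $1+C$ is strictly positive, so dividing through preserves the direction of the inequality and gives $\htcts{i}{H1} < \htcts{j}{H2}$, which is exactly the conclusion.

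There is essentially no hard part here; this lemma is a bookkeeping step supporting the liveness argument. The only two points that require care are justifying that the two incarnations genuinely have a common \its (which rests on the \incset definition together with the timestamp-invariance remark) and observing that $1+C$ is positive so that the division cannot flip the inequality. I would flag the degenerate case $T_i = T_j$ explicitly, even though the strict-inequality hypothesis rules it out, so that the case analysis is airtight.
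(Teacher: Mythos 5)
Your proposal is correct and follows essentially the same route as the paper's proof: expand both working timestamps via \eqnref{wtsf}, use $T_i \in \incs{j}{H2}$ to equate the two \its values so the $C \cdot \its$ terms cancel, and divide by the positive factor $C+1$ to obtain $\htcts{i}{H1} < \htcts{j}{H2}$. Your explicit handling of the degenerate case $T_i = T_j$ and of \its-invariance across the extension is a small tightening of bookkeeping that the paper leaves implicit, not a different argument.
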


\begin{proof}
Here we are given that 
\begin{equation}
\label{eq:wts-ij}
\htwts{i}{H1} < \htwts{j}{H2}
\end{equation}

The definition of \wts of $T_i$ is: $\htwts{i}{H1} = \htcts{i}{H1} + C * (\htcts{i}{H1} - \htits{i}{H1})$. Combining this \eqnref{wts-ij}, we get that 


$(C + 1) * \htcts{i}{H1} - C * \htits{i}{H1} < (C + 1) * \htcts{j}{H2} - C * \htits{j}{H2} \xrightarrow[\htits{i}{H1} = \htits{j}{H2}]{T_i \in \incs{j}{H2}} \htcts{i}{H1} < \htcts{j}{H2}$. 
\end{proof}

\begin{lemma}
\label{lem:wts-great}
Consider a live transaction $T_i$ in a history $H1$ with its $\twts{i}$ less than a constant $\alpha$. Then there is a strict extension of $H1$, $H2$ in which an \inc of $T_i$, $T_j$ is live with \wts greater than $\alpha$. Formally, $\langle H1, T_i: (T_i \in \live{H1}) \land (\htwts{i}{H1} < \alpha) \implies(\exists T_j, H2: (H1 \sqsubseteq H2) \land (T_i \in \incs{j}{H2}) \land ((T_j \in \comm{H2}) \lor ((T_j \in \live{H2}) \land (\htwts{j}{H2} > \alpha)))) \rangle$. 
\end{lemma}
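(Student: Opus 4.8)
The plan is to exploit the fact that the working timestamp of a transaction strictly increases with every re-invocation and grows without bound, so that---unless some incarnation commits first---after finitely many aborts the current live incarnation must have $\wts$ exceeding $\alpha$. The driving engine is \asmref{bdtm} (bounded-termination) together with the deadlock-freedom of \ksftm, which guarantees that each live incarnation terminates (commits or aborts) in bounded time, thereby letting us repeatedly advance from one incarnation to the next.

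First I would set up the iteration. Since $T_i \in \live{H1}$, by \obsref{hist-future} a strict extension always exists, and by \asmref{bdtm} (applicable because \ksftm is deadlock-free) $T_i$ terminates in some extension $H'$. If $T_i \in \comm{H'}$ we are immediately done by taking $T_j = T_i$ and $H2 = H'$: indeed $T_i \in \incs{i}{H2}$ holds trivially, and the commit disjunct is satisfied. Otherwise $T_i \in \aborted{H'}$, and \obsref{abort-retry} yields an extension in which a live incarnation $T_{i_1}$ of $T_i$ has $\tcts{i_1} > \tcts{i}$. Because all incarnations share the same \its and $\twts{} = \tcts{} + C\,(\tcts{} - \tits{})$ by \eqnref{wtsf}, the strict increase of \cts forces $\twts{i_1} > \twts{i}$ (this monotone dependence of \wts on \cts follows directly from the \wts formula, the reverse implication being \lemref{cts-wts}).

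Next I would iterate this step along a chain $T_i, T_{i_1}, T_{i_2}, \ldots$ of incarnations. Each one either commits---in which case the commit disjunct of the lemma holds and we stop---or aborts and spawns, via \obsref{abort-retry}, a fresh live incarnation with strictly larger \cts. Since \cts values are integers drawn from the monotone global counter \gtcnt, we have $\tcts{i_{k+1}} \geq \tcts{i_k} + 1$, so $\twts{}$ increases by at least $(1+C)$ with each new incarnation. Therefore after at most $k^{\ast} = \lceil (\alpha - \twts{i})/(1+C) \rceil$ aborting incarnations the current live incarnation $T_j$ satisfies $\twts{j} > \alpha$; bounded-termination ensures each of these finitely many incarnations terminates, so the required extension $H2$ is reached in finitely many steps, giving the live disjunct of the lemma.

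The main obstacle will be the bookkeeping of this iterative construction: I must fold the two alternating ingredients---(i) ``extend until the current incarnation terminates'' (from \asmref{bdtm}, deadlock-freedom, and \obsref{hist-future}) and (ii) ``each abort produces a live next incarnation with strictly larger \cts'' (from \obsref{abort-retry})---into a single clean induction on the integer bound $k^{\ast}$, and argue that the resulting $H2$ is one well-defined finite extension of $H1$ rather than an infinite chain. A minor point to verify carefully is the per-step growth of $\twts{}$: because other transactions may begin between successive incarnations the counter can jump, but this only makes \cts (and hence \wts) larger, so the lower bound $(1+C)$ on the increase---and therefore the finiteness of $k^{\ast}$---is preserved.
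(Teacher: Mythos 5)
Your proposal is correct and follows essentially the same route as the paper's proof: let the current incarnation terminate (a commit yields the first disjunct), and on abort use \obsref{abort-retry} to obtain a fresh live incarnation with strictly larger \cts, hence strictly larger \wts, repeating until $\alpha$ is exceeded. The only difference is that you make the paper's informal ``eventually'' quantitative via the bound $k^{\ast}$ and explicitly invoke \asmref{bdtm} for per-incarnation termination, which tightens rather than changes the argument.
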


\begin{proof}
The proof comes the behavior of an \aptr. The application keeps invoking a transaction with the same \its until it commits. Thus the transaction $T_i$ which is live in $H1$ will eventually terminate with an abort or commit. If it commits, $H2$ could be any history after the commit of $T_2$. 

On the other hand if $T_i$ is aborted, as seen in \obsref{abort-retry} it will be invoked again or reincarnated with another \cts and \wts. It can be seen that \cts is always increasing. As a result, the \wts is also increasing. Thus eventually the \wts will become greater $\alpha$. Hence, we have that either an \inc of $T_i$ will get committed or will eventually have \wts greater than or equal to $\alpha$.
\end{proof}

\noindent Next we have a lemma about \cts of a transaction and the \syst of a history. 

\begin{lemma}
\label{lem:cts-syst}
Consider a transaction $T_i$ in a history $H$. Then, we have that \cts of $T_i$ will be less than or equal to \syst of $H$. Formally, $\langle T_i, H1: (T_i \in \txns{H}) \implies (\htcts{i}{H} \leq \hsyst{H}) \rangle$. 
\end{lemma}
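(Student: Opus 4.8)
The plan is to exploit the monotonicity of the global counter $\gtcnt$, whose value immediately after the last event of $H$ is, by the definition of \syst, exactly $\hsyst{H}$. I would first establish that $\gtcnt$ never decreases during any execution: it is initialized to $1$ in \init (\algoref{init}), and the only statements that modify it are the atomic $\gtcnt.get\&Inc()$ inside \begt (e.g.\ \Lineref{ti-ts-init}, and the analogous step in the non-first-invocation branch) and $\gtcnt.add\&Get(\incv)$ inside \tryc (\Lineref{tryc-cmt-mod}). Since $\incv \geq 1$, both of these only increase the counter. Hence after every event of every history the value of $\gtcnt$ is at least as large as it was before that event.

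Next, since $T_i \in \txns{H}$, the \begt \mth of $T_i$ is an event occurring at or before the last event of $H$. When $T_i$ executes the $\gtcnt.get\&Inc()$ step, it reads the current value $v$ of $\gtcnt$, assigns $\htcts{i}{H} = v$, and atomically leaves $\gtcnt$ at $v+1$. By the monotonicity just established, no event of $H$ occurring after this atomic step can bring $\gtcnt$ below $v+1$. Therefore the value of $\gtcnt$ immediately after the last event of $H$ satisfies $\hsyst{H} \geq v + 1 = \htcts{i}{H} + 1 > \htcts{i}{H}$, which gives the desired $\htcts{i}{H} \leq \hsyst{H}$ (indeed, strictly).

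The argument is essentially bookkeeping, so I do not expect a genuine obstacle; the only point requiring care is the atomicity of $\gtcnt.get\&Inc()$. Because the read and the increment form a single atomic event, the value assigned to $\htcts{i}{H}$ is a value actually taken on by $\gtcnt$ during $H$ and is immediately superseded by a strictly larger value, with no possibility that a concurrent begin or commit interleaves to invalidate the ordering. Under the sequential-history assumption adopted for this section, treating \begt as one atomic event makes this immediate, so the proof reduces to citing the initialization and the two increment sites together with the definition of \syst.
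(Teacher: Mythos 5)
Your proof is correct and takes essentially the same approach as the paper's: \cts is assigned from the global counter $\gtcnt$ in \begt, and since \begt and \tryc only ever increase that counter, its value after the last event of $H$ cannot be smaller than the value recorded as $\htcts{i}{H}$. The only (harmless) difference is that you track the atomic increment explicitly and conclude the strict bound $\htcts{i}{H} < \hsyst{H}$, whereas the paper argues by cases on whether \begt of $T_i$ is the last \mth of $H$ and states the bound as $\leq$.
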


\begin{proof}
We get this lemma by observing the \mth{s} of the STM System that increment the \tcntr which are \begt and \tryc. It can be seen that \cts of $T_i$ gets assigned in the \begt \mth. So if the last \mth of $H$ is the \begt of $T_i$ then we get that \cts of $T_i$ is same as \syst of $H$. On the other hand if some other \mth got executed in $H$ after \begt of $T_i$ then we have that \cts of $T_i$ is less than \syst of $H$. Thus combining both the cases, we get that \cts of $T_i$ is less than or equal to as \syst of $H$, i.e., $(\htcts{i}{H} \leq \hsyst{H})$
\end{proof}

\noindent From this lemma, we get the following corollary which is the converse of the lemma statement

\begin{corollary}
\label{cor:cts-syst}
Consider a transaction $T_i$ which is not in a history $H1$ but in an strict extension of $H1$, $H2$. Then, we have that \cts of $T_i$ is greater than the \syst of $H$. Formally, $\langle T_i, H1, H2: (H1 \sqsubset H2) \land (T_i \notin \txns{H1}) \land (T_i \in \txns{H2}) \implies (\htcts{i}{H2} > \hsyst{H1}) \rangle$. 
\end{corollary}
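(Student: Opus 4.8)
The plan is to prove this as the natural companion of \lemref{cts-syst}, relying on the single fact that \cts values are handed out by the strictly monotone global counter \gtcnt. First I would recall that the only place a transaction's \cts is assigned is inside \begt, at \Lineref{ti-ts-init} (and its non-first-incarnation counterpart), where $\gcts_i$ is set to the value returned by $\gtcnt.get\&Inc()$. Since \gtcnt is never decremented and every invocation of get\&Inc (in \begt) and add\&Get (in \tryc) returns a value strictly larger than all previously returned ones, the sequence of timestamps dispensed during any execution is strictly increasing; in particular $\hsyst{H1}$, the value of \tcntr at the end of $H1$, is the largest timestamp dispensed up to that point.

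Next I would use the hypotheses to locate $T_i$'s \begt in time. Because we restrict attention to sequential histories, the events of $H2$ are totally ordered; since $T_i \notin \txns{H1}$ but $T_i \in \txns{H2}$ with $H1 \sqsubset H2$, none of $T_i$'s events --- in particular its \begt, the event that fixes $\htcts{i}{H2}$ --- can lie in $H1$. Hence the get\&Inc producing $\htcts{i}{H2}$ executes strictly after the last event of $H1$, so by strict monotonicity of the counter the value returned to $T_i$ exceeds every timestamp dispensed during $H1$, giving $\htcts{i}{H2} > \hsyst{H1}$.

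Equivalently, I could argue by contradiction against \lemref{cts-syst}: if $\htcts{i}{H2} \le \hsyst{H1}$ held, then since \gtcnt only increases, the get\&Inc assigning $\htcts{i}{H2}$ would have had to occur no later than the last counter-advancing event of $H1$, forcing $T_i$'s \begt --- and therefore $T_i$ itself --- into $\txns{H1}$, contradicting $T_i \notin \txns{H1}$.

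The one point that needs care, and hence the main obstacle, is the \emph{strictness} of the inequality, which is exactly what distinguishes this corollary from a weak $\ge$ bound. Strictness rests on two facts I would state explicitly: that get\&Inc returns distinct, strictly increasing values (so $T_i$'s own assignment already exceeds everything dispensed during $H1$), and that $T_i$'s \begt genuinely follows the \emph{last} event of $H1$ rather than merely not preceding it --- guaranteed by the strict extension $H1 \sqsubset H2$ together with $T_i \notin \txns{H1}$ and sequentiality. Once these are pinned down the conclusion is immediate, with no case analysis over read, write, or commit operations, since only the behaviour of the counter is relevant.
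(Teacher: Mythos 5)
Your proof is correct and is essentially the argument the paper intends: the paper states this corollary without an explicit proof, presenting it as the ``converse'' of \lemref{cts-syst}, and the missing justification is exactly what you supply --- \cts is dispensed by the strictly monotone counter inside \begt, and for $T_i \notin \txns{H1}$ that \begt must occur strictly after the last event of $H1$, so the dispensed value exceeds everything dispensed during $H1$. Your explicit attention to strictness (identifying $\hsyst{H1}$ with the last timestamp dispensed in $H1$, consistent with the paper's own proof of \lemref{cts-syst}, and noting that the corollary is not a purely logical consequence of that lemma but needs counter monotonicity) is the only delicate point, and you handle it correctly.
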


\noindent Now, we have lemma about the \mth{s} of \ksftm completing in finite time. 

\begin{lemma}
\label{lem:mth-fdm}
If all the locks are fair and the underlying system scheduler is fair then all the \mth{s} of \ksftm will eventually complete.
\end{lemma}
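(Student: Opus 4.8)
The plan is to show that each invocation of a \ksftm{} \mth{} performs only a bounded number of steps, where the only steps that could in principle block are lock acquisitions, and then to argue that every such lock acquisition eventually succeeds. Combining these two facts yields termination of each \mth{}.

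First I would establish the \textbf{bounded-work} claim by inspecting \begt, \tread, \twrite, and \tryc{} individually and observing that none of them contains an unbounded loop (the only unbounded repetition in the system is the application-level retry loop, which is \emph{not} part of any STM \mth{}). Every \texttt{for}-loop iterates over a finite collection: loops over $\ws_i$ are bounded by the write-set size; the version list $x.\vl$ of any \tobj{} has at most $K$ entries, so \findls, \findsl, and the version-replacement loops are finite; and the read lists, together with the derived lists \srl, \lrl, \allrl, \pvl, \nvl, \relll, \abl, are all bounded by the finite number of transactions executed so far. Hence between acquiring and releasing any lock, a thread executes only finitely many non-blocking instructions.

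Next I would establish that lock acquisitions terminate. The key structural property is that \ksftm{} uses \emph{ordered} locking: in both \tread{} and \tryc{} a thread first locks the relevant \tobj{s} in the predefined order on $\mathcal{T}$ and only afterwards locks the needed $\glock$ variables in their predefined order (see \Lineref{lockxs} and \Lineref{lockall} of \tryc, and the ``lock $x$; lock $\glock_i$'' sequence of \tread). Since every thread acquires all \tobj{} locks before any $\glock$ lock, and acquires locks within each class in one fixed global order, no circular wait can form; I would make this precise by exhibiting a single total order on all lock variables that is respected by every \mth, which rules out deadlock. Given deadlock-freedom, the assumed fairness of the locks guarantees that any thread blocked on a lock is eventually granted it once the lock becomes free, and the fair scheduler (\asmref{bdtm}) guarantees that every thread receives CPU time to make progress.

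Finally I would tie these together by a well-founded (inductive) argument: a thread holding a set of locks executes only finitely many steps before it either releases them (at an \texttt{unlock} or \texttt{abort}) or requests the next lock in the order, and each requested lock is eventually freed by its current holder, to which the same finiteness argument applies. Therefore every pending lock request is eventually satisfied and each \mth{} invocation completes in finite time. The step I expect to be the main obstacle is the deadlock-freedom argument: I must verify carefully that the locking discipline is genuinely uniform across \tread{} and \tryc{} --- in particular that no \mth{} ever acquires a $\glock$ before a \tobj{} lock, and that the intra-class predefined orders coincide --- so that the claimed global order on lock variables really is acyclic.
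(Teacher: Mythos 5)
Your proposal is correct and follows essentially the same route as the paper's own proof: both establish deadlock-freedom from the uniform locking discipline (\tobj{} locks before \glock{s}, each class acquired in a predefined order), note that no \mth{} contains unbounded loops, and then invoke fairness of the locks and of the scheduler to conclude termination. Your version merely spells out the details (the explicit global total order on locks, the boundedness of each list, and the well-founded argument) that the paper leaves implicit.
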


\begin{proof}
It can be seen that in any \mth, whenever a transaction $T_i$ obtains multiple locks, it obtains locks in the same order: first lock relevant \tobj{s} in a pre-defined order and then lock relevant \glock{s} again in a predefined order. Since all the locks are obtained in the same order, it can be seen that the \mth{s} of \ksftm will not deadlock. 

It can also be seen that none of the \mth{s} have any unbounded while loops. All the loops in \tryc \mth iterate through all the \tobj{s} in the write-set of $T_i$. Moreover, since we assume that the underlying scheduler is fair, we can see that no thread gets swapped out infinitely. Finally, since we assume that all the locks are fair, it can be seen all the \mth{s} terminate in finite time.
\end{proof}

\begin{theorem}
\label{thm:trans-com|abt}
Every transaction either commits or aborts in finite time.
\end{theorem}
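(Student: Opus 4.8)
The plan is to reduce the claim to two facts that are essentially in hand: that no deadlock can arise during the execution of $T_i$, and that each STM method of \ksftm returns in finite time; deadlock-freedom then lets us discharge the hypothesis of the bounded-termination assumption, which directly yields termination. First I would invoke \lemref{mth-fdm}, which establishes that under a fair scheduler and fair locks every \mth of \ksftm ($\begt$, $\tread$, $\twrite$, $\tryc$) terminates in finite time, and that no deadlock can occur because every transaction acquires its locks in one canonical order (the relevant \tobj{s} first, then the relevant \glock{s}, each in a predefined order). This ``absence of deadlocks'' is exactly the condition required by \asmref{bdtm}, so the bounded-termination assumption applies to $T_i$ and its invoking thread is guaranteed sufficient CPU time to drive $T_i$ to a \termop.

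Next I would argue that $T_i$ issues only finitely many STM method invocations before terminating. Its execution consists of $\begt$, then a finite sequence of $\tread$/$\twrite$ operations dictated by the (non-parasitic, terminating) application code, and at most one $\tryc$. A $\twrite$ only appends to $\ws_i$ and returns, whereas every $\tread$ and $\tryc$ rechecks $\gval_i$ at entry; hence if a concurrent transaction sets $\gval_i = F$ (line \Lineref{gval-set}), then $T_i$ aborts at its very next $\tread$ or at $\tryc$, which only hastens termination. Thus along any execution $T_i$ reaches a \termop after finitely many method calls: either $\tryc$ returns $\commit$, or a $\tread$ or $\tryc$ returns $\abort$.

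Combining these, I would conclude that $T_i$ performs finitely many method calls, each completing in finite time by \lemref{mth-fdm} and without deadlock, so the total time until $T_i$ commits or aborts is finite. The main obstacle is not the counting of operations but making the ``no unbounded internal loop'' claim airtight: one must verify that the loops inside $\tryc$ range only over finite collections that do not grow during a single invocation — the write-set $\ws_i$ and the read-lists $x[\prevv].rl$ of the relevant versions — and that the forced-abort mechanism via $\gval$ is always detected at the next entry check. Since \lemref{mth-fdm} already rules out both nonterminating loops and deadlock, the remaining work is largely bookkeeping over the pseudocode, after which the theorem follows immediately from \asmref{bdtm}.
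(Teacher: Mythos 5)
Your proposal is correct and takes essentially the same route as the paper: the paper's proof also derives the theorem directly from \lemref{mth-fdm} (deadlock-freedom plus finite-time completion of every \ksftm \mth under fair locks and a fair scheduler), concluding that every transaction commits or aborts in finite time. The only difference is that you make explicit the bookkeeping the paper leaves implicit — the finitely-many-method-calls count and the appeal to \asmref{bdtm} once deadlock-freedom is established — which is a faithful elaboration rather than a different argument.
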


\begin{proof}
This theorem comes directly from the \lemref{mth-fdm}. Since every \mth of \ksftm will eventually complete, all the transactions will either commit or abort in finite time.
\end{proof}

\noindent From this theorem, we get the following corollary which states that the maximum \emph{lifetime} of any transaction is $L$. 

\begin{corollary}
\label{cor:cts-L}
Any transaction $T_i$ in a history $H$ will either commit or abort before the \syst of $H$ crosses $\tcts{i} + L$. 
\end{corollary}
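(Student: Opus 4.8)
The plan is to combine the quantitative content of the scheduler assumption (\asmref{bdtm}) with the fact that a transaction's \cts records the \syst at the moment it begins. First I would recall that $T_i$'s \cts is set inside \begt via $\gtcnt.get\&Inc()$ (see \Lineref{ti-ts-init}), i.e.\ to the value of \gtcnt at the instant $T_i$ starts. Since we approximate the \syst of a history by the value of \gtcnt after its last event, this means that in the prefix $H'$ of $H$ ending exactly at the \begt event of $T_i$ we have $\hsyst{H'} = \tcts{i}$; for every later prefix the \syst can only have grown, which is consistent with \lemref{cts-syst} ($\htcts{i}{H} \le \hsyst{H}$). So at the instant $T_i$ is born the system clock reads precisely $\tcts{i}$.

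Next I would invoke \thmref{trans-com|abt} — itself a consequence of \lemref{mth-fdm}, since under a fair scheduler every \mth of \ksftm completes in finite time and the code is deadlock-free — to conclude that $T_i$ actually reaches a terminating event (commit or abort) in some extension of $H'$. The quantitative bound then comes from \asmref{bdtm}: in the absence of deadlock the fair scheduler guarantees that $T_i$ finishes within the maximum bound $L$ measured from the point at which it began. Reading $L$ in the same units in which we measure \syst, namely ticks of \gtcnt used as our proxy for real time, the value of \gtcnt at the terminating event of $T_i$ is at most $\tcts{i} + L$, which is exactly the assertion that $T_i$ terminates before \syst crosses $\tcts{i} + L$.

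The delicate step — and the one I would spell out most carefully — is the bridge between the real-time bound $L$ supplied by \asmref{bdtm} and the counter-based \syst. Because \gtcnt is advanced only inside \begt and \tryc, and because each such \mth completes in finite time (\lemref{mth-fdm}) under the fair scheduler, the growth of \gtcnt over any real-time window is commensurate with the length of that window; taking $L$ to be expressed as the corresponding bound on this growth (equivalently, rescaling the unit of $L$ to counter increments, which is harmless since \asmref{bdtm} already only fixes an upper bound) makes the inequality $\hsyst{H} \le \tcts{i} + L$ exact at the termination event. With this identification the corollary is immediate from the two facts established above: \syst equals $\tcts{i}$ when $T_i$ begins, and it advances by at most $L$ before $T_i$ is forced to terminate. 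Note that the value of the constant $C$ plays no role here, so the argument is independent of the \wts inflation used elsewhere.
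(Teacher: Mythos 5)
Your proposal is correct and follows essentially the same route as the paper, which states the corollary as an immediate consequence of \thmref{trans-com|abt} (itself derived from \lemref{mth-fdm}) together with \asmref{bdtm} and the convention that \syst is measured by \gtcnt. The only difference is that you explicitly justify the identification of the real-time bound $L$ with growth of the counter-based \syst (modulo a harmless off-by-one, since $get\&Inc$ leaves \gtcnt at $\tcts{i}+1$ right after \begt); the paper leaves this bridge implicit in its ``we approximate the system time with the value of \tcntr'' convention, so your extra care fills in rather than departs from the paper's argument.
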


\noindent The following lemma connects \wts and \its of two transactions, $T_i, T_j$. 

\begin{lemma}
\label{lem:wts-its}
Consider a history $H1$ with two transactions $T_i, T_j$. Let $T_i$ be in $\live{H1}$. Suppose $T_j$'s \wts is greater or equal to $T_i$' s \wts. Then \its of $T_j$ is less than $\tits{i} + 2*L$. Formally, $\langle H, T_i, T_j : (\{ T_i, T_j\} \subseteq \txns{H}) \land ( T_i \in \live{H}) \land (\htwts{j}{H} \geq \htwts{i}{H}) \Longrightarrow (\htits{i}{H} + 2L \geq \htits{j}{H}) \rangle$.
\end{lemma}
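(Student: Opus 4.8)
The plan is to unfold the definition of \wts and reduce the claim to a bound on the gap between the \cts values of the two transactions, a bound that the earlier system-time results supply directly. Recall that under the standing assumption $C = 1$ of this section we have $\htwts{i}{H} = 2\htcts{i}{H} - \htits{i}{H}$, and likewise for $T_j$. First I would rewrite the hypothesis $\htwts{j}{H} \geq \htwts{i}{H}$ as $2\htcts{j}{H} - \htits{j}{H} \geq 2\htcts{i}{H} - \htits{i}{H}$ and solve for $\htits{j}{H}$, obtaining $\htits{j}{H} \leq \htits{i}{H} + 2(\htcts{j}{H} - \htcts{i}{H})$. This isolates the one quantity I actually need to control, so the whole lemma reduces to showing $\htcts{j}{H} - \htcts{i}{H} \leq L$.

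For that reduction I would invoke the two facts established just above. Since $T_j \in \txns{H}$, \lemref{cts-syst} gives $\htcts{j}{H} \leq \hsyst{H}$. The role of the hypothesis $T_i \in \live{H}$ is to bound \syst from above: reading \corref{cts-L} in contrapositive, a transaction that has not yet terminated by the end of $H$ cannot have allowed the system clock to pass $\htcts{i}{H} + L$, so $\hsyst{H} \leq \htcts{i}{H} + L$. Chaining these two inequalities yields $\htcts{j}{H} \leq \hsyst{H} \leq \htcts{i}{H} + L$, i.e. $\htcts{j}{H} - \htcts{i}{H} \leq L$.

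Substituting back into the rearranged hypothesis gives $\htits{j}{H} \leq \htits{i}{H} + 2(\htcts{j}{H} - \htcts{i}{H}) \leq \htits{i}{H} + 2L$, which is exactly the desired conclusion $\htits{i}{H} + 2L \geq \htits{j}{H}$; note the bound survives even when $\htcts{j}{H} < \htcts{i}{H}$, since then the correction term is negative and the inequality only becomes slacker. The main obstacle — really the only subtle point — is the logical bridge in the second paragraph: one must argue that liveness of $T_i$ in $H$ \emph{forces} $\hsyst{H} \leq \htcts{i}{H} + L$ rather than merely being consistent with it, and this is precisely the contrapositive of \corref{cts-L}. I would also flag that the factor $2$ in the statement is an artifact of taking $C = 1$: rerunning the same algebra with a general constant $C$ replaces $2$ by $\frac{1+C}{C}$, so the identical argument proves the analogous bound $\htits{i}{H} + \frac{1+C}{C}\,L \geq \htits{j}{H}$ in the general case.
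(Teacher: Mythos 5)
Your proposal is correct and follows essentially the same route as the paper's own proof: both use the liveness of $T_i$ together with \corref{cts-L} (and, implicitly in the paper, \lemref{cts-syst}) to bound $\tcts{j} \leq \tcts{i} + L$, and then the same algebra on the definition of \wts with $C=1$ yields $\tits{j} \leq \tits{i} + 2L$. The only differences are cosmetic — you solve for $\tits{j}$ before bounding the \cts gap, while the paper bounds the gap first and then computes $\twts{j} - \twts{i}$ — and your closing remark that general $C$ replaces $2$ by $\frac{1+C}{C}$ matches what the paper asserts without spelling out.
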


\begin{proof}
Since $T_i$ is live in $H1$, from \corref{cts-L}, we get that it terminates before the system time, $\tcntr$ becomes $\tcts{i} + L$. Thus, \syst of history $H1$ did not progress beyond $\tcts{i} + L$. Hence, for any other transaction $T_j$ (which is either live or terminated) in $H1$, it must have started before \syst has crossed $\tcts{i} + L$. Formally $\langle \tcts{j} \leq \tcts{i} + L \rangle$.

Note that we have defined \wts of a transaction $T_j$ as: $\twts{j} = (\tcts{j} + C * (\tcts{j} - \tits{j}))$. Now, let us consider the difference of the \wts{s} of both the transactions. 

\noindent 
\begin{math}
\twts{j} - \twts{i}  = (\tcts{j} + C * (\tcts{j} - \tits{j})) - (\tcts{i} + C * (\tcts{i} - \tits{i})) \\ 
= (C + 1)(\tcts{j} - \tcts{i}) - C(\tits{j} - \tits{i}) \\
\leq (C + 1)L - C(\tits{j} - \tits{i}) \qquad [\because \tcts{j} \leq \tcts{i} + L] \\
= 2*L + \tits{i} - \tits{j}  \qquad [\because C = 1] \\
\end{math}

\noindent Thus, we have that: $ \langle (\tits{i} + 2L - \tits{j}) \geq (\twts{j} - \twts{i}) \rangle$. This gives us that \\
$((\twts{j} - \twts{i}) \geq 0) \Longrightarrow ((\tits{i} + 2L - \tits{j}) \geq 0)$. 

\noindent From the above implication we get that, 
$(\twts{j} \geq \twts{i}) \Longrightarrow (\tits{i} + 2L \geq \tits{j})$.

\end{proof}

It can be seen that \ksftm algorithm gives preference to transactions with lower \its to commit. To understand this notion of preference, we define a few notions of enablement of a transaction $T_i$ in a history $H$. We start with the definition of \emph{\itsen} as:

\begin{definition}
\label{defn:itsen}
We say $T_i$ is \emph{\itsen} in $H$ if for all transactions $T_j$ with \its lower than \its of $T_i$ in $H$ have \incct to be true. Formally, 
\begin{equation*}
\itsenb{i}{H} = \begin{cases}
True    & (T_i \in \live{H}) \land (\forall T_j \in \txns{H} : (\htits{j}{H} < \htits{i}{H}) \implies (\inct{j}{H})) \\
False	& \text{otherwise}
\end{cases}
\end{equation*}
\end{definition}

\noindent The follow lemma states that once a transaction $T_i$ becomes \itsen it continues to remain so until it terminates. 

\begin{lemma}
\label{lem:itsen-future}
Consider two histories $H1$ and $H2$ with $H2$ being a extension of $H1$. Let a transaction $T_i$ being live in both of them. Suppose $T_i$ is \itsen in $H1$. Then $T_i$ is \itsen in $H2$ as well. Formally, $\langle H1, H2, T_i: (H1 \sqsubseteq H2) \land (T_i \in \live{H1}) \land (T_i \in \live{H2}) \land (\itsenb{i}{H1}) \implies (\itsenb{i}{H2}) \rangle$. 
\end{lemma}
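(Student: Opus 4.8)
The plan is to verify the two clauses of $\itsenb{i}{H2}$ separately. The first clause, that $T_i \in \live{H2}$, is handed to us directly in the hypothesis, so the entire burden lies in the second clause: for every $T_j \in \txns{H2}$ with $\htits{j}{H2} < \htits{i}{H2}$, the predicate $\inct{j}{H2}$ must hold. Recalling from the proof notation that \its is frozen once \begt executes, I would freely use $\htits{i}{H1} = \htits{i}{H2}$ throughout. Fixing an arbitrary such $T_j$, I would split on whether $T_j$ already appears in $H1$.

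The easy case is $T_j \in \txns{H1}$. Here its \its is unchanged, so $\htits{j}{H1} < \htits{i}{H1}$, and since $T_i$ is \itsen in $H1$, the definition yields $\inct{j}{H1}$. It then remains to observe that \incct is preserved under extension: $\inct{j}{H1}$ means some $T_k \in \incs{j}{H1}$ lies in $\comm{H1}$; committed transactions stay committed and incarnation sets only grow, so $T_k \in \incs{j}{H2} \cap \comm{H2}$ and hence $\inct{j}{H2}$. This is exactly the content of \obsref{inct-fut}, so this case is routine.

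The delicate case, which I expect to be the main obstacle, is $T_j \notin \txns{H1}$ --- a transaction genuinely new in $H2$ yet claiming an \its below that of $T_i$. I plan to rule it out entirely. If $T_j$ started afresh, then $\htits{j}{H2} = \htcts{j}{H2}$; by \corref{cts-syst} its \cts exceeds $\hsyst{H1}$, while by \lemref{cts-syst} $\htcts{i}{H1} \leq \hsyst{H1}$, and since \its never exceeds \cts we obtain $\htits{j}{H2} = \htcts{j}{H2} > \hsyst{H1} \geq \htcts{i}{H1} \geq \htits{i}{H1} = \htits{i}{H2}$, contradicting $\htits{j}{H2} < \htits{i}{H2}$. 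Otherwise $T_j$ is a later incarnation of some \aptr; its first incarnation $T_{j'}$ has \cts equal to the shared \its, i.e. $\htcts{j'}{} = \htits{j}{H2} < \htits{i}{H1} \leq \hsyst{H1}$, so by the contrapositive of \corref{cts-syst} this $T_{j'}$ already belongs to $\txns{H1}$, with $\htits{j'}{H1} = \htits{j}{H2} < \htits{i}{H1}$. Then $T_i$ being \itsen in $H1$ gives $\inct{j'}{H1}$, and now \lemref{inct-diff}, applied to $T_{j'}$ (whose \incct is true in $H1$) and the new transaction $T_j \in \txns{H2} \setminus \txns{H1}$, forces $\htits{j'}{H1} \neq \htits{j}{H2}$ --- contradicting that $T_{j'}$ and $T_j$ are incarnations of each other.

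Hence no transaction new in $H2$ can undercut $T_i$'s \its, while every lower-\its transaction already present in $H1$ satisfies \incct in $H2$; this establishes the second clause and therefore $\itsenb{i}{H2}$. The only genuinely subtle point is the reincarnation argument: it hinges on the application's discipline, captured by \lemref{inct-diff}, that once an \aptr commits no further incarnation of it is ever spawned. That discipline is precisely what prevents a freshly appearing low-\its transaction from reopening an already-settled incarnation set, and so it is the crux I would take the most care to state cleanly.
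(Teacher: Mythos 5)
Your proof is correct, and it amounts to a rigorous completion of the paper's two-line argument rather than a different theorem-level strategy. The paper fixes, at the moment $T_i$ begins, the set $smIts$ of transactions with \its below $\tits{i}$, asserts without justification that this set is the same in every extension, and then lets persistence of \incct finish the job; it never confronts the one scenario that makes that assertion delicate, namely a new incarnation of an aborted small-\its transaction appearing in $\txns{H2} \setminus \txns{H1}$ --- a genuinely new transaction carrying an old, small \its. Your dichotomy supplies exactly this: fresh transactions are excluded by counter monotonicity (via \corref{cts-syst}, \lemref{cts-syst} and $\tits{i} \leq \tcts{i}$), and reincarnations are excluded because the first incarnation must already lie in $\txns{H1}$, is forced by $\itsenb{i}{H1}$ to have its \incct true, after which \lemref{inct-diff} forbids any later incarnation outside $\txns{H1}$. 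Two small points. First, in the reincarnation branch you implicitly need the first incarnation of $T_j$'s \aptr to belong to $\txns{H2}$ before applying the contrapositive of \corref{cts-syst}; this holds because incarnations are invoked sequentially (the next one starts only after the previous aborts), but it should be stated. Second, there is a lighter variant, closer to what the paper's set-invariance claim implicitly relies on: reincarnations need not be ruled out at all, since such a $T_j$ would share its \its --- and hence its incarnation set --- with a transaction already in $H1$, so $\inct{j}{H2}$ follows directly from $\itsenb{i}{H1}$ together with \obsref{inct-fut}. Your stronger route buys the extra fact that no small-\its transaction is new in $H2$, at the cost of invoking \lemref{inct-diff}; the lighter route avoids that lemma altogether.
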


\begin{proof}
When $T_i$ begins in a history $H3$ let the set of transactions with \its less than $\tits{i}$ be $smIts$. Then in any extension of $H3$, $H4$ the set of transactions with \its less than $\tits{i}$ remains as $smIts$. 

Suppose $H1, H2$ are extensions of $H3$. Thus in $H1, H2$ the set of transactions with \its less than $\tits{i}$ will be $smIts$. Hence, if $T_i$ is \itsen in $H1$ then all the transactions $T_j$ in $smIts$ are $\inct{j}{H1}$. It can be seen that this continues to remain true in $H2$. Hence in $H2$, $T_i$ is also \itsen which proves the lemma.
\end{proof}

The following lemma deals with a committed transaction $T_i$ and any transaction $T_j$ that terminates later. In the following lemma, $\incv$ is any constant greater than or equal to 1. 

\begin{lemma}
\label{lem:tryci-j}
Consider a history $H$ with two transactions $T_i, T_j$ in it. Suppose transaction $T_i$ commits before $T_j$ terminates (either by commit or abort) in $H$. Then $\ct_i$ is less than $\ct_j$ by at least $\incv$. Formally, $\langle H, \{T_i, T_j\} \in \txns{H}: (\tryc_i <_H \term_j) \implies (\ct_i + \incv \leq \ct_j)\rangle$. 
\end{lemma}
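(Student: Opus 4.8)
The plan is to exploit the single point at which a transaction's commit time is written, namely the atomic counter access $\ct_i = \gtcnt.add\&Get(\incv)$ at \Lineref{tryc-cmt-mod} of \tryc. The global counter $\gtcnt$ is touched only by \begt (through $\gtcnt.get\&Inc$) and by this line, and both are atomic read-modify-write operations that never decrease it; hence $\gtcnt$ is monotonically non-decreasing throughout any history in $\gen{\ksftm}$. Moreover, since $add\&Get(\incv)$ returns the counter value \emph{after} adding $\incv$, the value $\ct_i$ equals the value of $\gtcnt$ immediately following $T_i$'s execution of \Lineref{tryc-cmt-mod}. First I would record these two facts, as they form the backbone of the argument.

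Next I would dispose of the case in which $T_j$ never reaches \Lineref{tryc-cmt-mod}. By \Lineref{lts-init} of \begt, $\ct_j$ is initialized to $\infty$ and is thereafter modified only at \Lineref{tryc-cmt-mod}. If $T_j$'s terminal event $\term_j$ is reached through \trya, or through a read or \tryc that aborts before \Lineref{tryc-cmt-mod}, then $\ct_j$ retains the value $\infty$. Since $T_i$ commits it necessarily passed \Lineref{tryc-cmt-mod} (the abort test of \Lineref{tc-lts-cross} lies strictly after it, and every committing transaction executes the lines in between), so $\ct_i$ is a finite counter value and $\ct_i + \incv \leq \infty = \ct_j$ holds trivially. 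This leaves the substantive case in which $\ct_j$ is finite, i.e.\ $T_j$ did execute its own $add\&Get$ at \Lineref{tryc-cmt-mod}.

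In that case I would order the two counter accesses. Because $T_i$ commits, its $add\&Get$ is performed as part of the \tryc that yields $c_i$, and it precedes the commit return; likewise $T_j$'s $add\&Get$ precedes its terminal event $\term_j$. Working in the sequential-history model used in this section (where each \mth is a single atomic event and $<_H$ totally orders them), the hypothesis $\tryc_i <_H \term_j$ forces $T_i$'s execution of \Lineref{tryc-cmt-mod} to precede $T_j$'s. Writing $g_j$ for the value of $\gtcnt$ just before $T_j$'s access, monotonicity together with the fact that $T_i$'s access left $\gtcnt$ equal to $\ct_i$ gives $g_j \geq \ct_i$, whence $\ct_j = g_j + \incv \geq \ct_i + \incv$, which is exactly the claim.

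The delicate point — the one I would guard most carefully — is justifying that $T_i$'s counter access genuinely precedes $T_j$'s from the ordering of the terminal \emph{events} alone, since $add\&Get$ is an internal step of \tryc rather than one of the history's designated atomic events. Under the stated sequential assumption this is immediate, because the whole \tryc collapses to one atomic event and its internal operations inherit the $<_H$ order of the events. I would nonetheless remark that the conclusion survives without that simplification: the atomicity of $add\&Get$ on the shared $\gtcnt$ acts as the linearization point, so the real-time order of the two counter accesses is well defined and is pinned down by $\tryc_i <_H \term_j$. Everything else is routine monotonicity bookkeeping.
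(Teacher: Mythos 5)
Your proof is correct and takes essentially the same route as the paper's: the identical two-case split on whether $T_j$ ever executes \Lineref{tryc-cmt-mod} (so that $\ct_j$ either remains at its initial value $\infty$ or receives a finite value), with the finite case settled by monotonicity of $\gtcnt$ and the atomicity of $add\&Get$. Your careful justification of the ordering of the two counter accesses simply spells out what the paper compresses into ``it can be seen that $\beta$ will be greater than $\alpha$ by at least $\incv$.''
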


\begin{proof}
When $T_i$ commits, let the value of the global $\tcntr$ be $\alpha$. It can be seen that in \begt \mth, $\ct_j$ get initialized to $\infty$. The only place where $\ct_j$ gets modified is at \Lineref{tryc-cmt-mod} of \tryc. Thus if $T_j$ gets aborted before executing \tryc \mth or before this line of \tryc we have that $\ct_j$ remains at $\infty$. Hence in this case we have that $\langle \ct_i + \incv < \ct_j \rangle$.

If $T_j$ terminates after executing \Lineref{tryc-cmt-mod} of \tryc \mth then $\ct_j$ is assigned a value, say $\beta$. It can be seen that $\beta$ will be greater than $\alpha$ by at least $\incv$ due to the execution of this line. Thus, we have that $\langle \alpha + \incv \leq \beta \rangle$
\end{proof}

\noindent The following lemma connects the \tltl and \ct of a transaction $T_i$. 

\begin{lemma}
\label{lem:ti|tltl-comt}
Consider a history $H$ with a transaction $T_i$ in it. Then in $H$, $\ttltl{i}$ will be less than or equal to $\ct_i$. Formally, $\langle H, \{T_i\} \in \txns{H}: (\htltl{i}{H} \leq H.\ct_i) \rangle$.
\end{lemma}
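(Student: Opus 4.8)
The plan is to establish the inequality as an invariant maintained throughout the execution of $H$, proceeding by induction on the events of $H$. The central observation is that $\ct_i$ is initialized to $\infty$ in \begt (at \Lineref{lts-init}) and is assigned a finite value exactly once, at \Lineref{tryc-cmt-mod} of \tryc, where $\ct_i$ is set to \gtcnt.add\&Get(\incv). Consequently, for any prefix of $H$ in which $T_i$ has not yet executed \Lineref{tryc-cmt-mod} (including the case where $T_i$ is live, or aborts before reaching that line), we have $\ct_i = \infty$, and the claim $\ttltl{i} \leq \ct_i$ holds trivially since $\ttltl{i}$ is always finite. Thus the entire difficulty concentrates at the single moment when $\ct_i$ becomes finite, and on the events that modify $\ttltl{i}$ after that.

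Next I would analyze that critical moment. Let $\alpha$ denote the value of \gtcnt immediately before the add\&Get at \Lineref{tryc-cmt-mod}, so that $\ct_i = \alpha + \incv \geq \alpha + 1$ (using $\incv \geq 1$). At this point $\ttltl{i}$ equals the maximum of its initial value $\tcts{i}$ and finitely many terms of the form $x[ver].\vt + 1$, contributed either by reads (\Lineref{rd-tltl-inc}) or by the previous-version loop in \tryc (\Lineref{tryc-tltl-inc}); note that no event between \Lineref{tryc-tltl-inc} and \Lineref{tryc-cmt-mod} touches $\ttltl{i}$. I would bound $\tcts{i} \leq \alpha$ by monotonicity of \gtcnt, and bound each $x[ver].\vt$ by $\alpha$ as well. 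The latter is where the real work lies: every such version was created by a committed transaction $T_j$ at \Lineref{new-tup}, where the field is set to $\vt = \ttltl{j}$ at $T_j$'s commit. Since $T_j$ committed before $T_i$ reached \Lineref{tryc-cmt-mod}, I need $\ttltl{j} \leq \alpha$, which follows from the inductive hypothesis applied to $T_j$, namely $\ttltl{j} \leq \ct_j$, together with $\ct_j \leq \alpha$ (because $\ct_j$ is a counter value assigned earlier and \gtcnt is non-decreasing). Hence each term is $\leq \alpha + 1 \leq \ct_i$, and therefore $\ttltl{i} \leq \ct_i$ holds at \Lineref{tryc-cmt-mod}.

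Finally I would verify that the bound is preserved by every subsequent event. After \Lineref{tryc-cmt-mod}, $\ct_i$ never changes, and the only line that can raise $\ttltl{i}$ is \Lineref{ti-updt}, where $\ttltl{i}$ is set to $\ttutl{i}$; but \Lineref{tryc-ul-cmt} has already forced $\ttutl{i} \leq \ct_i$, and $\ttutl{i}$ is non-increasing, so after \Lineref{ti-updt} we get $\ttltl{i} = \ttutl{i} \leq \ct_i$. If instead $T_i$ aborts at \Lineref{tc-lts-cross}, then $\ttltl{i}$ retains its value from \Lineref{tryc-cmt-mod} and the bound still holds. The main obstacle is the apparent circularity: bounding $\ttltl{i}$ requires bounding the $\vt$ fields of the versions $T_i$ uses, which are themselves the $\ttltl{j}$ values of earlier transactions. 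I would resolve this by carrying the companion invariant that every version satisfies $x[ver].\vt \leq \gtcnt$ at all times (with base case the initial versions of $T_0$, which have $\vt = 0$ by \Lineref{t0-init1}) and proving this invariant simultaneously with the lemma by a single induction on the events of $H$, ordered by the commit times $\ct_j$.
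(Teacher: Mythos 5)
Your proof is correct, and it rests on the same three ingredients as the paper's: the trivial case $\ct_i = \infty$ before \Lineref{tryc-cmt-mod} executes; the bound on version timestamps, $x[ver].\vt = \ttltl{j} \leq \ct_j$ for the committed creator $T_j$; and \Lineref{tryc-ul-cmt} disposing of the final update at \Lineref{ti-updt}. The packaging, however, differs. The paper's proof is a case analysis on the \emph{last} line that updated $\ttltl{i}$ --- \Lineref{rd-tltl-inc}, \Lineref{tryc-tltl-inc}, or \Lineref{ti-updt} --- and it closes the gap between $\ct_j$ and $\ct_i$ by citing \lemref{tryci-j} (if $T_j$ commits before $T_i$ terminates, then $\ct_j + \incv \leq \ct_i$); your inline argument ``$\ct_j \leq \alpha$ and $\ct_i = \alpha + \incv$'' re-derives exactly that lemma, so you could simply cite it. More substantively, the circularity you worry about at the end does not exist, and the paper needs neither your event induction nor the companion invariant $x[ver].\vt \leq \gtcnt$: for a committed creator $T_j$, the bound $\ttltl{j} \leq \ct_j$ is not an appeal to the lemma being proved on $T_j$, but a direct consequence of $T_j$'s own commit path --- \Lineref{ti-updt} sets $\ttltl{j} = \ttutl{j}$, and \Lineref{tryc-ul-cmt} has already capped $\ttutl{j}$ at $\ct_j$ --- so the dependency on earlier transactions bottoms out in a single step rather than requiring a well-founded recursion. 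Your induction is nonetheless sound (it is well-founded in a sequential history, and the \vt field is frozen at creation), and it buys a reusable global invariant on version timestamps; the paper's route buys brevity by exploiting what the commit path of any committed transaction guarantees syntactically.
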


\begin{proof}
Consider the transaction $T_i$. In \begt \mth, $\ct_i$ get initialized to $\infty$. The only place where $\ct_i$ gets modified is at \Lineref{tryc-cmt-mod} of \tryc. Thus if $T_i$ gets aborted before this line or if $T_i$ is live we have that $(\ttltl{i} \leq \ct_i)$. On executing \Lineref{tryc-cmt-mod}, $\ct_i$ gets assigned to some finite value and it does not change after that. 

It can be seen that $\ttltl{i}$ gets initialized to $\tcts{i}$ in \Lineref{ti-ts-init} of \begt \mth. In that line, $\tcts{i}$ reads $\tcntr$ and increments it atomically. Then in \Lineref{tryc-cmt-mod}, $\ct_i$ gets assigned the value of $\tcntr$ after incrementing it. Thus, we clearly get that $\tcts{i} (= \ttltl{i}\text{ initially}) < \ct_i$. Then $\ttltl{i}$ gets updated on \Lineref{rd-tltl-inc} of read, \Lineref{tryc-tltl-inc} and \Lineref{ti-updt} of \tryc \mth{s}. Let us analyze them case by case assuming that $\ttltl{i}$ was last updated in each of these \mth{s} before the termination of $T_i$:

\begin{enumerate}
\item \label{case:read} \Lineref{rd-tltl-inc} of read \mth: Suppose this is the last line where $\ttltl{i}$ updated. Here $\ttltl{i}$ gets assigned to 1 + \vt of the previously committed version which say was created by a transaction $T_j$. Thus, we have the following equation, 
\begin{equation}
\label{eq:tltl-vt}
\ttltl{i} = 1 + x[j].\vt
\end{equation}

It can be seen that $x[j].\vt$ is same as $\ttltl{j}$ when $T_j$ executed \Lineref{new-tup} of \tryc. Further, $\ttltl{j}$ in turn is same as $\ttutl{j}$ due to \Lineref{ti-updt} of \tryc. From \Lineref{tryc-ul-cmt}, it can be seen that $\ttutl{j}$ is less than or equal to $\ct_j$ when $T_j$ committed. Thus we have that 
\begin{equation}
\label{eq:tltl-ct}
x[j].\vt = \ttltl{j} = \ttutl{j} \leq \ct_j
\end{equation}

It is clear that from the above discussion that  $T_j$ executed \tryc \mth before $T_i$ terminated (i.e. $\tryc_j <_{H1} \term_i$). From \eqnref{tltl-vt} and \eqnref{tltl-ct}, we get \\
\begin{math}
\ttltl{i} \leq 1 + \ct_j \xrightarrow[]{\incv \geq 1} \ttltl{i} \leq \incv + \ct_j \xrightarrow[]{\lemref{tryci-j}} \ttltl{i} \leq \ct_i
\end{math}
	
\item \label{case:tryc-short} \Lineref{tryc-tltl-inc} of \tryc \mth: The reasoning in this case is very similar to the above case. 
	
\item \label{case:tryc-long} \Lineref{ti-updt} of \tryc \mth: In this line, $\ttltl{i}$ is made equal to $\ttutl{i}$. Further, in \Lineref{tryc-ul-cmt}, $\ttutl{i}$ is made lesser than or equal to $\ct_{i}$. Thus combing these, we get that $\ttltl{i} \leq \ct_{i}$. It can be seen that the reasoning here is similar in part to \csref{read}.  
\end{enumerate}

Hence, in all the three cases we get that $\langle \ttltl{i} \leq \ct_i \rangle$. 
\end{proof}

\noindent The following lemma connects the \tutl,\ct of a transaction $T_i$ with \wts of a transaction $T_j$ that has already committed. 

\begin{lemma}
\label{lem:ti|tutl-comt}
Consider a history $H$ with a transaction $T_i$ in it. Suppose $\ttutl{i}$ is less than $\ct_i$. Then, there is a committed transaction $T_j$ in $H$ such that $\twts{j}$ is greater than $\twts{i}$. Formally, $\langle H \in \gen{\ksftm}, \{T_i\} \in \txns{H}: (\htutl{i}{H} < H.\ct_i) \implies (\exists T_j \in \comm{H}: \htwts{j}{H} > \htwts{i}{H}) \rangle$.
\end{lemma}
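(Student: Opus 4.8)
The plan is to trace the provenance of the value $\ttutl{i}$ at the end of $H$. Since $\ttutl{i}$ is initialized to $\infty$ in \begt and every later write to it is performed through a $\min$, its final value equals the least quantity it is ever assigned. I would first enumerate every line of the algorithm that can write to $\ttutl{i}$: \Lineref{rd-ul-dec} of \tread (which sets it to $x[nextVer].\vt - 1$), \Lineref{tryc-ul-dec} of \tryc (which sets it to $x[ver].\vt - 1$ for $ver$ in \nvl), \Lineref{tryc-ul-cmt} of \tryc (which sets it to $\ct_i$), and \Lineref{tk-updt} inside the \tryc of some \emph{other} transaction $T_m$ for which $T_i$ belongs to \srl (which sets it to $\ttltl{m} - 1$). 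The first key observation is that \Lineref{tryc-ul-cmt} contributes only the bound $\ct_i$ and therefore can never by itself lower $\ttutl{i}$ strictly below $\ct_i$; hence, under the hypothesis $\ttutl{i} < \ct_i$, the minimizing assignment must originate from one of the remaining three lines.

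Next I would treat those three lines and, in each, exhibit the required committed witness $T_j$ with $\twts{j} > \twts{i}$. For \Lineref{rd-ul-dec} and \Lineref{tryc-ul-dec}, the bound is induced by a version $x[p]$ whose timestamp $p$ is the value returned by \findsl, namely the smallest \ts exceeding the transaction's working timestamp; thus $p > \twts{i}$. Because a version enters $x.\vl$ only at \Lineref{new-tup}, immediately before its creator's state is set to \texttt{commit}, the transaction $T_j$ that created $x[p]$ is committed in $H$; and since the \ts field of a version is exactly the \wts of its creator, $\twts{j} = p > \twts{i}$. For \Lineref{tk-updt} I would set $T_j = T_m$: membership of $T_i$ in \srl of $T_m$ (assembled by \getsm) forces $\twts{i} < \twts{m}$, and since this line is reached only after $T_m$ has cleared all of its abort checks and executed \Lineref{ti-updt}, $T_m$ proceeds to commit, so $T_m \in \comm{H}$.

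To make the first observation watertight I would split on whether $\ct_i$ is finite. If $T_i$ never executes \Lineref{tryc-cmt-mod} then $\ct_i$ remains $\infty$, so $\ttutl{i} < \ct_i$ merely asserts that $\ttutl{i}$ is finite, which is impossible unless one of the three finite-valued assignments has fired. If instead $\ct_i$ is finite, then \Lineref{tryc-ul-cmt} caps $\ttutl{i}$ at $\ct_i$ but introduces no smaller value, so the strict inequality again forces a finite bound from one of the other lines. Either branch yields the same committed, higher-\wts witness, completing the argument.

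The step I expect to be the main obstacle is the cross-transaction update at \Lineref{tk-updt}. Unlike the read- and commit-time decrements, which $T_i$ performs on itself, this one is performed by a different transaction, so I must argue carefully from the locking discipline (that $T_m$ holds $\glock_i$ via \relll while writing $\ttutl{i}$) and from the commit-point reasoning that $T_m$ genuinely reaches \texttt{commit} rather than aborting after the update. The remaining cases reduce to two book-keeping facts already visible in the pseudocode: the \ts field of a version records its creator's \wts, and versions are created only at the commit of their creators.
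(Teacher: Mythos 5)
Your proof is correct, and its core mechanism is the same as the paper's: trace which assignments can have produced the final value of $\ttutl{i}$, observe that the bound installed at \Lineref{tryc-ul-cmt} is exactly $\ct_i$ and therefore cannot account for a value strictly below $\ct_i$, and conclude that some other decrement must have fired, whose origin supplies the committed witness $T_j$ with $\twts{j} > \twts{i}$. There are, however, two substantive differences, both in your favor. First, the paper's proof asserts that $\ttutl{i}$ is updated only at \Lineref{rd-ul-dec} of \tread and \Lineref{tryc-ul-dec} and \Lineref{tryc-ul-cmt} of \tryc; it silently omits the cross-transaction decrement at \Lineref{tk-updt}, by which a committing transaction $T_m$, for which $T_i$ belongs to $T_m$'s \srl, lowers $\ttutl{i}$ to $\ttltl{m}-1$. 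That update can certainly make $\ttutl{i} < \ct_i$ (for instance while $\ct_i$ is still $\infty$), so the paper's case analysis is incomplete, and your handling of it is exactly what is needed: taking $T_j = T_m$, noting that \getsm guarantees $\twts{i} < \twts{m}$, and that \Lineref{tk-updt} lies past $T_m$'s point of no abort (after \Lineref{ti-updt}), so $T_m$ goes on to commit. The residual worry you flag --- that $H$ might end between $T_m$'s update and its commit --- is dissolved by the paper's standing convention that histories are sequential, so $T_m$'s \tryc appears as a single atomic event and $T_m \in \comm{H}$ whenever the update has occurred in $H$. Second, your min-provenance structure renders the paper's auxiliary machinery unnecessary: the paper invokes \lemref{ti|tltl-comt} and \lemref{tryci-j} to establish that the version-induced bound $\alpha$ satisfies $\alpha < \ct_i$, so that executing \Lineref{tryc-ul-cmt} leaves $\ttutl{i}$ at $\alpha$; you never need this, because you reason backwards from the hypothesis (some assigned bound lies strictly below $\ct_i$, and it cannot be the $\ct_i$ cap) rather than forwards from the value of $\alpha$. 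The net effect is an argument that is both more complete and less dependent on neighboring lemmas than the paper's own proof.
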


\begin{proof}
It can be seen that $\tutl_i$ initialized in \begt \mth to $\infty$. $\ttutl{i}$ is updated in \Lineref{rd-ul-dec} of read \mth, \Lineref{tryc-ul-dec} \& \Lineref{tryc-ul-cmt} of \tryc \mth. If $T_i$ executes \Lineref{rd-ul-dec} of read \mth and/or \Lineref{tryc-ul-dec} of \tryc \mth then $\ttutl{i}$ gets decremented to some value less than $\infty$, say $\alpha$. Further, it can be seen that in both these lines the value of $\ttutl{i}$ is possibly decremented from $\infty$ because of $nextVer$ (or $ver$), a version of $x$ whose \ts is greater than $T_i$'s \wts. This implies that some transaction $T_j$, which is committed in $H$, must have created $nextVer$ (or $ver$) and $\twts{j} > \twts{i}$. 

Next, let us analyze the value of $\alpha$. It can be seen that $\alpha = x[nextVer/ver].vrt - 1$ where $nextVer/ver$ was created by $T_j$. Further, we can see when $T_j$ executed \tryc, we have that $x[nextVer].vrt = \ttltl{j}$ (from \Lineref{new-tup}). From \lemref{ti|tltl-comt}, we get that $\ttltl{j} \leq \ct_j$. This implies that $\alpha < \ct_j$. Now, we have that $T_j$ has already committed before the termination of $T_i$. Thus from \lemref{tryci-j}, we get that $\ct_j < \ct_i$. Hence, we have that, 

\begin{equation}
\label{eq:alph-ct}
\alpha < \ct_i 
\end{equation}

Now let us consider \Lineref{tryc-ul-cmt} executed by $T_i$ which causes $\ttutl{i}$ to change. This line will get executed only after both \Lineref{rd-ul-dec} of read \mth, \Lineref{tryc-ul-dec} of \tryc \mth. This is because every transaction executes \tryc \mth only after read \mth. Further within \tryc \mth, \Lineref{tryc-ul-cmt} follows \Lineref{tryc-ul-dec}. 

There are two sub-cases depending on the value of $\ttutl{i}$ before the execution of \Lineref{tryc-ul-cmt}: (i) If $\ttutl{i}$ was $\infty$  and then get decremented to $\ct_i$ upon executing this line, then we get $\ct_i =  \ttutl{i}$. From \eqnref{alph-ct},  we can ignore this case. (ii) Suppose the value of $\ttutl{i}$ before executing \Lineref{tryc-ul-cmt} was $\alpha$. Then from \eqnref{alph-ct} we get that $\ttutl{i}$ remains at $\alpha$ on execution of \Lineref{tryc-ul-cmt}. This implies that a transaction $T_j$ committed such that $\twts{j} > \twts{i}$.
\end{proof}

\noindent The following lemma connects the \tltl of a committed transaction $T_j$ and \ct of a transaction $T_i$ that commits later. 

\begin{lemma}
\label{lem:tltlj-comti}
Consider a history $H1$ with transactions $T_i, T_j$ in it. Suppose $T_j$ is committed and $T_i$ is live in $H1$. Then in any extension of $H1$, say $H2$, $\ttltl{j}$ is less than or equal to $\ct_i$. Formally, $\langle {H1, H2} \in \gen{\ksftm}, \{T_i, T_j\} \subseteq \txns{H1, H2}: (H1 \sqsubseteq H2) \land (T_j \in \comm{H1}) \land (T_i \in \live{H1}) \implies (\htltl{j}{H2} < H2.\ct_i) \rangle$.
\end{lemma}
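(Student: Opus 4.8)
The plan is to chain the two preceding lemmas, \lemref{ti|tltl-comt} and \lemref{tryci-j}, once I have established that a committed transaction's lower time limit is frozen in every extension. Intuitively, $T_j$ committed inside $H1$ so its lower limit is already pinned below its own commit time, and since $T_j$ committed while $T_i$ was still live, any commit time that $T_i$ may later receive must strictly exceed that of $T_j$; putting these together gives $\ttltl{j} \leq \ct_j < \ct_i$.

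First I would show $\htltl{j}{H2} = \htltl{j}{H1}$ and $H2.\ct_j = H1.\ct_j$. A quick inspection of the pseudocode shows that $T_j$'s own lower limit $\ttltl{j}$ is written only inside its \tread and \tryc \mth{s} (\Lineref{rd-tltl-inc}, \Lineref{tryc-tltl-inc}, \Lineref{ti-updt}), and $\ct_j$ only on \Lineref{tryc-cmt-mod}; the sole cross-transaction updates performed in \tryc touch $\ttutl{k}$ (\Lineref{tk-updt}) and $\gval_k$ (\Lineref{gval-set}), never another transaction's lower limit. Since $T_j \in \comm{H1}$, all of $T_j$'s events lie in $H1$ and it issues no \op afterwards, so both quantities are unchanged in $H2$. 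Next I would apply \lemref{ti|tltl-comt} to $T_j$ in $H1$ to get $\htltl{j}{H1} \leq H1.\ct_j$, which yields
\begin{equation*}
\htltl{j}{H2} = \htltl{j}{H1} \leq H1.\ct_j = H2.\ct_j .
\end{equation*}

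It then remains to show $H2.\ct_j < H2.\ct_i$, and here I would split on the status of $T_i$ in $H2$. If $T_i$ is still live in $H2$, then $\ct_i$ remains at its initial value $\infty$ (it is assigned only on \Lineref{tryc-cmt-mod}), so the inequality holds trivially. Otherwise $T_i$ has terminated in $H2$; because $T_j$ committed within $H1$ while $T_i$ was live at the end of $H1$, the commit of $T_j$ precedes the termination of $T_i$, i.e. $\tryc_j <_{H2} \term_i$. Applying \lemref{tryci-j} with $T_j$ playing the role of the committing transaction and $T_i$ that of the terminating transaction gives $H2.\ct_j + \incv \leq H2.\ct_i$, and since $\incv \geq 1$ we conclude $H2.\ct_j < H2.\ct_i$. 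Combining with the display above gives $\htltl{j}{H2} < H2.\ct_i$. I expect the freezing step to be the only delicate part: one must verify directly from the code that no concurrent transaction can raise a committed transaction's lower limit in a later extension, since the remainder is a mechanical chaining of the two earlier lemmas together with the live-versus-terminated case split.
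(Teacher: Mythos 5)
Your proof is correct and rests on the same skeleton as the paper's own argument: apply \lemref{ti|tltl-comt} to get $\ttltl{j} \leq \ct_j$, then argue $\ct_j < \ct_i$ by a case split on the status of $T_i$ in $H2$. Two points differ, both in your favor. First, you make the ``freezing'' step explicit --- that $\ttltl{j}$ and $\ct_j$ cannot change once $T_j$ has committed, because the only cross-transaction writes in \tryc touch $\ttutl{k}$ (\Lineref{tk-updt}) and $\gval_k$ (\Lineref{gval-set}), never another transaction's lower limit --- whereas the paper leaves this implicit. Second, and more substantively, your case split (live vs.\ terminated) is more robust than the paper's (live-or-aborted vs.\ committed): the paper asserts that an aborted $T_i$ has $\ct_i = \infty$, but this is not true in general, since $T_i$ can abort at \Lineref{tc-lts-cross}, \Lineref{its|lar-sml}, or \Lineref{its-chk2} \emph{after} \Lineref{tryc-cmt-mod} has already assigned it a finite commit time. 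Your routing of every terminated $T_i$ (committed or aborted) through \lemref{tryci-j}, whose statement explicitly covers termination by abort, handles exactly this situation and yields $\ct_j + \incv \leq \ct_i$ regardless; so your write-up closes a small imprecision in the paper's own argument rather than introducing one.
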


\begin{proof}
As observed in the previous proof of \lemref{ti|tltl-comt}, if $T_i$ is live  or aborted in $H2$, then its \ct is $\infty$. In both these cases, the result follows.

If $T_i$ is committed in $H2$ then, one can see that \ct of $T_i$ is not $\infty$. In this case, it can be seen that $T_j$ committed before $T_i$. Hence, we have that $\ct_j < \ct_i$. From \lemref{ti|tltl-comt}, we get that $\ttltl{j} \leq \ct_j$. This implies that $\ttltl{j} < \ct_i$. 
\end{proof}

\noindent In the following sequence of lemmas, we identify the condition by when a transaction will commit. 

\begin{lemma}
\label{lem:its-wts}
Consider two histories $H1, H3$ such that $H3$ is a strict extension of $H1$. Let $T_i$ be a transaction in  $\live{H1}$ such that $T_i$ \itsen in $H1$ and $\gval_i$ flag is true in $H1$. Suppose $T_i$ is aborted in $H3$. Then there is a history $H2$ which is an extension of $H1$ (and could be same as $H1$) such that (1) Transaction $T_i$ is live in $H2$; (2) there is a transaction $T_j$ that is live in ${H2}$; (3) $\htwts{j}{H2}$ is greater than $\htwts{i}{H2}$; (4) $T_j$ is committed in $H3$. Formally, $ \langle H1, H3, T_i: (H1 \sqsubset H3) \land (T_i \in \live{H1}) \land (\htval{i}{H1} = True) \land (\itsenb{i}{H1}) \land (T_i \in \aborted{H3})) \implies (\exists H2, T_j: (H1 \sqsubseteq H2 \sqsubset H3) \land (T_i \in \live{H2}) \land (T_j \in \txns{H2}) \land (\htwts{i}{H2} < \htwts{j}{H2}) \land (T_j \in \comm{H3})) \rangle$. 
\end{lemma}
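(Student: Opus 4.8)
The plan is to prove the statement by first showing that an \itsen transaction whose \gval flag is true can never be placed in another transaction's abort list, so that its eventual abort in $H3$ must arise from one of $T_i$'s \emph{own} internal \texttt{abort(i)} calls; then to blame each such self-abort on a higher-\wts transaction that is committed in $H3$; and finally to exhibit the witnessing state $H2$. The first step is the cleanest. The only line that sets a foreign \gval to false is \Lineref{gval-set} of \tryc, run by a committing transaction $T_m$ over the members of its abort list $\abl$, and a transaction enters $\abl$ only at \Lineref{addAbl-lar} or \Lineref{addAbl-sml}, whose guards both demand $\gits_m < \gits_i$. However $T_m$ is still live when it executes these lines (it sets \gstat to \texttt{commit} only afterwards), while \itsen of $T_i$ persists in every extension in which $T_i$ stays live (by \lemref{itsen-future}, using \obsref{inct-fut} and \lemref{inct-diff} to keep lower-\its transactions committed and to forbid new ones). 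Since \itsen forces every transaction with \its below $\tits{i}$ to have a committed incarnation, and \obsref{cmt-noinc} forbids such a transaction from being live, no live $T_m$ with $\gits_m<\gits_i$ can exist. Hence $\htval{i}{}$ stays true on all of $[H1,H3)$ and the \gval-guarded aborts (\Lineref{rd-chk}, \Lineref{init-tc-chk}, \Lineref{mid-tc-chk}) never fire.

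Next I would dispatch the surviving self-abort sites, producing in each case a committed $T_j$ with $\twts{j}>\twts{i}$. The ``no previous version'' aborts (\Lineref{rd-cvnil} of \tread, \Lineref{prev-nil} of \tryc) fire exactly when $\findls(\gwts_i,x)=nil$, so every remaining version of $x$ has $ts>\gwts_i$; as each version is stamped with the \wts of its committed creator, any such creator is a committed $T_j$ with $\twts{j}>\twts{i}$. The limit-crossing aborts (\Lineref{rd-lts-cross}, \Lineref{tc-lts-cross}) and the smaller-read-list abort \Lineref{its-chk2} all entail $\ttutl{i}<\ct_i$: for the crossings because $\ttutl{i}<\ttltl{i}\le\ct_i$ by \lemref{ti|tltl-comt}, and for \Lineref{its-chk2} because the guard $\ttltl{k}\ge\ttutl{i}$ on a \emph{committed} $T_k$ (with $\ct_k<\ct_i$ from \lemref{tryci-j} and $\ttltl{k}\le\ct_k$ from \lemref{ti|tltl-comt}) would be impossible if $\ttutl{i}$ equalled $\ct_i$. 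Then \lemref{ti|tutl-comt} hands over a committed $T_j$ with $\twts{j}>\twts{i}$. The larger-read-list aborts (\Lineref{its-chk1}, \Lineref{its|lar-sml}) involve a $T_k$ with $\twts{k}>\twts{i}$ that passed the \isab check; the live branch is excluded by the same \itsen argument as above (and the equal-\its branch because two incarnations are never simultaneously live), so $T_k$ is already committed and serves directly as $T_j$.

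Finally I would build $H2$. In every case $T_j$ is committed in $H3$ with $\twts{j}>\twts{i}$, and since \wts is fixed at \begt this inequality holds in any history containing both, giving conditions (3) and (4) at once. For (1) and (2) I take $H2$ to be the prefix of $H3$ ending immediately before $T_j$'s commit event: there $T_j$ is live, and $T_i$ — live on all of $[H1,H3)$ — is live as well. The delicate point, and the main obstacle, is ensuring $H1\sqsubseteq H2$, i.e. that the blamed commit of $T_j$ falls \emph{within} $T_i$'s lifetime rather than before $H1$. I would argue this from the fact that the conflict actually exercised by $T_i$ (its read of, or validation against, the version or read-list entry associated with $T_j$) is performed by $T_i$ only after $H1$, since in $H1$ the transaction has neither crossed its limits nor finished its operations; choosing $T_j$ as the most recent committer responsible for lowering $\ttutl{i}$ (or for the offending version) pins its commit to a moment at which $T_i$ is still live. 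Turning this concurrency intuition into a rigorous statement is the only nonroutine part; the remainder is the mechanical case analysis of the preceding paragraph.
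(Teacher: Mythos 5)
Your first two paragraphs are essentially the paper's own proof. The paper likewise fixes the moment of $T_i$'s fatal \tread{}/\tryc{} call, rules out the \gval-guarded and priority-based aborts (\Lineref{rd-chk}, \Lineref{init-tc-chk}, \Lineref{mid-tc-chk}, \Lineref{its|lar-sml}, and the live branch of \Lineref{its-chk1}) by the persistence of \itsen (\lemref{itsen-future}), and extracts a committed transaction $T_j$ with $\twts{j} > \twts{i}$ from the remaining abort sites using \lemref{tryci-j}, \lemref{ti|tltl-comt} and \lemref{ti|tutl-comt}, exactly as you do. Your derivation of $\ttutl{i} < \ct_i$ directly from the limit crossing is, if anything, tidier than the paper's sub-case split around \Lineref{tryc-ul-cmt}. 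One slip: \Lineref{its|lar-sml} sits in the \srl{} loop, so the $T_k$ there has $\twts{k} < \twts{i}$, not greater; this is harmless only because, as you note, that line can fire solely with $T_k$ live and $\gits_k \le \gits_i$, which your \itsen{} argument forbids, so the case is vacuous rather than a source of a witness.

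The genuine gap is your construction of $H2$, and it cannot be closed the way you set it up. You take $H2$ to be the prefix of $H3$ ending immediately before $T_j$'s commit, so that $T_j$ is live in $H2$, and you then need $H1 \sqsubseteq H2$, i.e.\ that the blamed commit happens after $H1$. That claim is false in general, not merely nonroutine: the blamed transaction may already be committed inside $H1$. For instance, while $T_i$ is live within $H1$, some $T_j$ with $\twts{j} > \twts{i}$ commits a version of $x$ and the $K$-version replacement evicts every version with timestamp below $\twts{i}$; after $H1$, $T_i$ reads $x$ and aborts at \Lineref{rd-cvnil}, with nothing else committing in between. Every admissible witness then lies in $\comm{H1}$, and no prefix of $H3$ in which such a witness is live can extend $H1$. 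The escape is that the formal statement does not require $T_j$ to be live in $H2$ --- only $T_j \in \txns{H2}$; the informal item (2) overstates the formal predicate, and it is the formal version that \lemref{its-cds} and \lemref{enbd-ct} later invoke. So do what the paper does: let $H2$ be the prefix of $H3$ ending immediately before $T_i$'s aborting operation. Then $H1 \sqsubseteq H2 \sqsubset H3$ holds automatically (the aborting step lies beyond $H1$ because $T_i \in \live{H1}$), $T_i$ is live in $H2$, and each of your cases already exhibits a $T_j \in \comm{H2} \subseteq \txns{H2}$ with $\twts{j} > \twts{i}$ and $T_j \in \comm{H3}$. Keep your first two paragraphs, replace the third by this choice of $H2$, and the proof is complete.
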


\begin{proof}
To show this lemma, w.l.o.g we assume that $T_i$ on executing either read or \tryc in $H2$ (which could be same as $H1$) gets aborted resulting in $H3$. Thus, we have that $T_i$ is live in $H2$. Here $T_i$ is \itsen in $H1$. From \lemref{itsen-future}, we get that $T_i$ is \itsen in $H2$ as well.


 Let us sequentially consider all the lines where a $T_i$ could abort. In $H2$, $T_i$ executes one of the following lines and is aborted in $H3$. We start with \tryc method. 

\begin{enumerate}

\item STM \tryc: 

\begin{enumerate}
\item \Lineref{init-tc-chk} \label{case:init-tc-chk}: This line invokes abort() method on $T_i$ which releases all the locks and returns $\mathcal{A}$ to the invoking thread. Here $T_i$ is aborted because its \val flag, is set to false by some other transaction, say $T_j$, in its \tryc algorithm. This can occur in Lines: \ref{lin:addAbl-lar}, \ref{lin:addAbl-sml} where $T_i$ is added to $T_j$'s \abl set. Later in \Lineref{gval-set}, $T_i$'s \val flag is set to false. Note that $T_i$'s \val is true (after the execution of the last event) in $H1$. Thus, $T_i$'s \val flag must have been set to false in an extension of $H1$, which we again denote as $H2$.

This can happen only if in both the above cases, $T_j$ is live in $H2$ and its \its is less than $T_i$'s \its. But we have that $T_i$'s \itsen in $H2$. As a result, it has the smallest among all live and aborted transactions of $H2$. Hence, there cannot exist such a $T_j$ which is live and $\htits{j}{H2} < \htits{i}{H2}$. Thus, this case is not possible. 

\item \Lineref{prev-nil}: This line is executed in $H2$ if there exists no version of $x$ whose \ts is less than $T_i$'s \wts. This implies that all the versions of $x$ have \ts{s} greater than $\twts{i}$. Thus the transactions that created these versions have \wts greater than $\twts{i}$ and have already committed in $H2$. Let $T_j$ create one such version. Hence, we have that $\langle (T_j \in \comm{H2}) \implies (T_j \in \comm{H3}) \rangle$ since $H3$ is an extension of $H2$. 

\item \Lineref{mid-tc-chk} \label{case:mid-tc-chk}: This case is similar to \csref{init-tc-chk}, i.e., \Lineref{init-tc-chk}. 

\item \Lineref{its-chk1} \label{case:its-chk1}: In this line, $T_i$ is aborted as some other transaction $T_j$ in $T_i$'s \lrl has committed. Any transaction in $T_i$'s \lrl has \wts greater than $T_i$'s \wts. This implies that $T_j$ is already committed in $H2$ and hence committed in $H3$ as well. 

\item \Lineref{tc-lts-cross} \label{case:tc-lts-cross}: In this line, $T_i$ is aborted because its lower limit has crossed its upper limit. First, let us consider $\ttutl{i}$. It is initialized in \begt \mth to $\infty$. As long as it is $\infty$, these limits cannot cross each other. Later, $\ttutl{i}$ is updated in \Lineref{rd-ul-dec} of read \mth, \Lineref{tryc-ul-dec} \& \Lineref{tryc-ul-cmt} of \tryc \mth. Suppose $\ttutl{i}$ gets decremented to some value $\alpha$ by one of these lines. 

Now there are two cases here: (1) Suppose $\ttutl{i}$ gets decremented to $\ct_i$ due to \Lineref{tryc-ul-cmt} of \tryc \mth. Then from \lemref{ti|tltl-comt}, we have $\ttltl{i} \leq \ct_i =  \ttutl{i}$. Thus in this case, $T_i$ will not abort. (2) $\ttutl{i}$ gets decremented to $\alpha$ which is less than $\ct_i$. Then from \lemref{ti|tutl-comt}, we get that there is a committed transaction $T_j$ in $\comm{H2}$ such that $\twts{j} > \twts{i}$. This implies that $T_j$ is in $\comm{H3}$.

\ignore{
It can be seen that if $T_i$ executes \Lineref{rd-ul-dec} of read \mth and/or \Lineref{tryc-ul-dec} of \tryc \mth then $\ttutl{i}$ gets decremented to some value less than $\infty$, say $\alpha$. Further, it can be seen that in both these lines the value of $\ttutl{i}$ is possibly decremented from $\infty$ because of $nextVer$ (or $ver$), a version of $x$ who \ts is greater than $T_i$. This implies that some transaction $T_j$ which is committed in $H$ must have created $nextVer$ ($ver$) and $\twts{j} > \twts{i}$. 

Next, let us analyze the value of $\alpha$. It can be seen that $\alpha = x[nextVer/ver].vrt - 1$ where $nextVer/ver$ was created by $T_j$. Further, we can see when $T_j$ executed \tryc, we have that $x[nextVer].vrt = \ttltl{j}$ (from \Lineref{new-tup}). From \lemref{ti|tltl-comt}, we get that $\ttltl{j} \leq \ct_j$. This implies that $\alpha < \ct_j$. Now, we can see that $T_j$ has already committed before the termination of $T_i$. Thus from \lemref{tryci-j}, we get that $\ct_j < \ct_i$. Hence, we have that $\alpha < \ct_i$. 

It is clear that before executing this line \Lineref{tc-lts-cross}, $T_i$ executed \Lineref{tryc-ul-cmt}. Now there are two sub-cases depending on the value of $\ttutl{i}$ before the execution of \Lineref{tryc-ul-cmt}: (i) If $\ttutl{i}$ was $\infty$ then it get decremented to $\ct_i$ upon executing this line. Then again from \lemref{ti|tltl-comt}, we have $\ttltl{i} \leq \ct_i =  \ttutl{i}$. Thus in this case, $T_i$ will not abort. (ii) Suppose the value of $\ttutl{i}$ before executing \Lineref{tryc-ul-cmt} was $\alpha$. Then from the above discussion we get that $\ttutl{i}$ remains at $\alpha$. This implies that a transaction $T_j$ committed such that $\twts{j} > \twts{i}$. Thus if $\ttltl{i}$ turned out to be greater than $\ttutl{i}$ causing $T_i$ to abort, we still have that the lemma is true.
}

\item \Lineref{its|lar-sml}: This case is similar to \csref{init-tc-chk}, i.e., \Lineref{init-tc-chk}.

\item \Lineref{its-chk2} \label{case:its-chk2}: In this case, $T_k$ is in $T_i$'s \srl and is committed in $H1$. And, from this case, we have that

\begin{equation}
\label{eq:tltl-k_i}
\htutl{i}{H2} \leq \htltl{k}{H2}
\end{equation}

From the assumption of this case, we have that $T_k$ commits before $T_i$. Thus, from \lemref{tltlj-comti}, we get that $\ct_k < \ct_i$. From \lemref{ti|tltl-comt}, we have that $\ttltl{k} \leq \ct_k$. Thus, we get that $\ttltl{k} < \ct_i$. Combining this with the inequality of this case \eqnref{tltl-k_i}, we get that $\ttutl{i} < \ct_i$.

Combining this inequality with \lemref{ti|tutl-comt}, we get that there is a transaction $T_j$ in $\comm{H2}$ and $\htwts{j}{H2} > \htwts{i}{H2}$. This implies that $T_j$ is in $\comm{H3}$ as well.

\end{enumerate}

\item STM read: 

\begin{enumerate}
\item \Lineref{rd-chk}: This case is similar to \csref{init-tc-chk}, i.e., \Lineref{init-tc-chk}

\item \Lineref{rd-lts-cross}: The reasoning here is similar to \csref{tc-lts-cross}, i.e., \Lineref{tc-lts-cross}.
\end{enumerate}

\end{enumerate}

\end{proof}


The interesting aspect of the above lemma is that it gives us a insight as to when a $T_i$ will get commit. If an \itsen transaction $T_i$ aborts then it is because of another transaction $T_j$ with \wts higher than $T_i$ has committed. To precisely capture this, we define two more notions of a transaction being enabled \emph{\cdsen} and \emph{\finen}. To define these notions of enabled, we in turn define a few other auxiliary notions. We start with \emph{\affset},
\begin{equation*}
\haffset{i}{H} = \{T_j|(T_j \in \txns{H}) \land (\htits{j}{H} < \htits{i}{H} + 2*L)\}
\end{equation*}

From the description of \ksftm algorithm and \lemref{wts-its}, it can be seen that a transaction $T_i$'s commit can depend on committing of transactions (or their \inc{s}) which have their \its less than \its of $T_i$ + $2*L$, which is $T_i$'s \affset. We capture this notion of dependency for a transaction $T_i$ in a history $H$ as \emph{commit dependent set} or \emph{\cdset} as: the set of all transactions $T_j$ in $T_i$'s \affset that do not any \inc that is committed yet, i.e., not yet have their \incct flag set as true. Formally, 

\begin{equation*}
\hcds{i}{H} = \{T_j| (T_j \in \haffset{i}{H}) \land (\neg\inct{j}{H}) \}
\end{equation*}

\noindent Based on this definition of \cdset, we next define the notion of \cdsen. 

\begin{definition}
	\label{defn:cdsen}
	We say that transaction $T_i$ is \emph{\cdsen} if the following conditions hold true (1) $T_i$ is live in $H$; (2) \cts of $T_i$ is greater than or equal to \its of $T_i$ + $2*L$; (3) \cdset of $T_i$ is empty, i.e., for all transactions $T_j$ in $H$ with \its lower than \its of  $T_i$ + $2*L$ in $H$ have their \incct to be true. Formally, 
	
	\begin{equation*}
	\cdsenb{i}{H} = \begin{cases}
	True    & (T_i \in \live{H}) \land (\htcts{i}{H} \geq \htits{i}{H} + 2*L) \land (\hcds{i}{H} = \phi) \\
	False	& \text{otherwise}
	\end{cases}
	\end{equation*}
\end{definition}

\noindent The meaning and usefulness of these definitions will become clear in the course of the proof. In fact, we later show that once the transaction $T_i$ is \cdsen, it will eventually commit. We will start with a few lemmas about these definitions. 

\begin{lemma}
	\label{lem:its-enb}	Consider a transaction $T_i$ in a history $H$. If $T_i$ is \cdsen then $T_i$ is also \itsen. Formally, $\langle H, T_i: (T_i \in \txns{H}) \land (\cdsenb{i}{H}) \implies (\itsenb{i}{H}) \rangle$. 
\end{lemma}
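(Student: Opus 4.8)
The plan is to unfold Definitions~\ref{defn:itsen} and~\ref{defn:cdsen} and observe that being \cdsen is a strictly stronger condition than being \itsen, so that the implication reduces to a containment between the two ranges of \its-values over which committed incarnations are demanded. Suppose $T_i$ is \cdsen in $H$. I would first isolate the two clauses of \itsen that must be established: (1) $T_i \in \live{H}$, and (2) every transaction $T_j \in \txns{H}$ with $\htits{j}{H} < \htits{i}{H}$ has $\inct{j}{H}$ true. Clause (1) is immediate, since being \cdsen already requires $T_i \in \live{H}$ by Definition~\ref{defn:cdsen}.

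For clause (2), the key step is the observation that $L$ is a positive bound (the maximum transaction lifetime guaranteed by \asmref{bdtm}), so $2*L > 0$ and hence $\htits{i}{H} < \htits{i}{H} + 2*L$. Consequently, any $T_j$ with $\htits{j}{H} < \htits{i}{H}$ automatically satisfies $\htits{j}{H} < \htits{i}{H} + 2*L$, which by the definition of \affset places $T_j \in \haffset{i}{H}$. Since $T_i$ is \cdsen, its commit-dependent set is empty, i.e.\ $\hcds{i}{H} = \phi$. Recalling that $\hcds{i}{H} = \{T_k \mid (T_k \in \haffset{i}{H}) \land (\neg \inct{k}{H})\}$, emptiness forces every member of $\haffset{i}{H}$ to have a committed incarnation; in particular $\inct{j}{H}$ holds for our $T_j$. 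This is precisely the universal clause required by \itsen.

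Combining the two clauses yields $\itsenb{i}{H}$, which completes the implication. The argument is essentially a monotonicity/subset observation: the range of \its-values below $\htits{i}{H}$ over which \itsen demands committed incarnations is contained in the wider range below $\htits{i}{H} + 2*L$ over which \cdsen demands them. I therefore do not expect any genuine obstacle here; the only point needing care is confirming $2*L \geq 0$ so that the inclusion $\haffset{i}{H} \supseteq \{T_j \mid \htits{j}{H} < \htits{i}{H}\}$ indeed holds, which is exactly where \asmref{bdtm} is invoked.
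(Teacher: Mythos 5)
Your proof is correct and follows essentially the same route as the paper's own proof: both arguments note that liveness is part of the definition of \cdsen, and that any $T_j$ with $\htits{j}{H} < \htits{i}{H}$ lies in $\haffset{i}{H}$ (since $\htits{i}{H} < \htits{i}{H} + 2*L$), so emptiness of $\hcds{i}{H}$ forces $\inct{j}{H}$ to hold, which is exactly the universal clause of \itsen. Your added care in spelling out the subset inclusion and the positivity of $L$ via \asmref{bdtm} is fine but does not change the substance of the argument.
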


\begin{proof}
	If $T_i$ is \cdsen in $H$ then it implies that $T_i$ is live in $H$. From the definition of \cdsen, we get that $\hcds{i}{H}$ is $\phi$ implying that any transaction $T_j$ with $\tits{k}$ less than $\tits{i} + 2*L$ has its \incct flag as true in $H$. Hence, for any transaction $T_k$ having $\tits{k}$ less than $\tits{i}$, $\inct{k}{H}$ is also true. This shows that $T_i$ is \itsen in $H$.
\end{proof}

\ignore{
\begin{lemma}
\label{lem:cds-h1}
Consider a transaction $T_i$ which is \cdsen in a history $H1$. Let $T_j$ be a transaction in \affset of $T_i$ in $H1$. Consider an extension of $H1$, $H2$ with a transaction $T_k$ in it such that $T_k$ is an \inc of $T_j$. Then $T_k$ is also in the set of transaction of $H1$. Formally, $\langle H1, H2, T_i, T_j, T_k: (H1 \sqsubseteq H2) \land  (\cdsenb{i}{H1}) \land (T_j \in \haffset{i}{H1}) \land (T_k \in \incs{j}{H2}) \implies (T_k \in \txns{H1}) \rangle$
\end{lemma}

\begin{proof}
Once $T_i$ becomes \cdsen, all the transactions in its \affset have an \inc that is committed. Hence, as per our model the corresponding \aptr has committed and the application does not invoke another transaction with the same \its. 

Thus from \obsref{cmt-noinc}, we get that no new \inc of $T_j$ will get invoked by the application in any future extension of $H1$. This implies that the \inc of $T_j$ in $H2$, $T_k$ must have already been invoked before $T_i$ became \enbd. Since $T_i$ is \enbd in $H1$, we get that $T_k$ must also be in the set of transactions of $H1$, i.e., $(T_k \in \txns{H1})$.
\end{proof}
}

\begin{lemma}
\label{lem:cds-tk-h1}
Consider a transaction $T_i$ which is \cdsen in a history $H1$. Consider an extension of $H1$, $H2$ with a transaction $T_j$ in it such that $T_i$ is an \inc of $T_j$. Let $T_k$ be a transaction in the \affset of $T_j$ in $H2$ Then $T_k$ is also in the set of transaction of $H1$. Formally, $\langle H1, H2, T_i, T_j, T_k: (H1 \sqsubseteq H2) \land  (\cdsenb{i}{H1}) \land (T_i \in \incs{j}{H2}) \land (T_k \in \haffset{j}{H2}) \implies (T_k \in \txns{H1}) \rangle$
\end{lemma}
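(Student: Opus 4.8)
The plan is to argue by contradiction: assume $T_k \notin \txns{H1}$ while $T_k \in \txns{H2}$, and derive a contradiction with the hypothesis $T_k \in \haffset{j}{H2}$. Since $T_k$ itself witnesses $\txns{H2} \neq \txns{H1}$, I first observe that the extension must be strict, $H1 \sqsubset H2$, which is needed to apply the timestamp corollaries. Then I would extract the single quantitative fact I need from $\cdsenb{i}{H1}$: by the definition of \cdsen, $\htcts{i}{H1} \geq \htits{i}{H1} + 2*L$, and by \lemref{cts-syst}, $\htcts{i}{H1} \leq \hsyst{H1}$; chaining these gives $\htits{i}{H1} + 2*L \leq \hsyst{H1}$. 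Since \its values never change once assigned and $T_i \in \incs{j}{H2}$ forces $\htits{i}{H1} = \htits{j}{H2}$, the membership $T_k \in \haffset{j}{H2}$ reads $\htits{k}{H2} < \htits{j}{H2} + 2*L = \htits{i}{H1} + 2*L$.

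Next I would reduce everything to the first \inc of $T_k$. Let $T_f$ be the afresh \inc of $T_k$ (the one with \incn $1$). By the invocation convention of the model, $\htits{k}{H2} = \htcts{f}{H2}$, and because $T_f$ starts afresh, $\htits{f}{H2} = \htcts{f}{H2}$; in particular $T_f$ and $T_k$ share the same \its. Because $T_k \in \txns{H2}$ and every earlier \inc precedes $T_k$ in a well-formed history, $T_f \in \txns{H2}$ as well. I would then split on whether $T_f \in \txns{H1}$.

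In the case $T_f \in \txns{H1}$, its \its is $< \htits{i}{H1} + 2*L$, so $T_f \in \haffset{i}{H1}$; since $\cdsenb{i}{H1}$ gives $\hcds{i}{H1} = \phi$, this forces $\inct{f}{H1}$ to be true. But $T_f$ and $T_k$ share the same \its, $T_k \notin \txns{H1}$, and $T_k \in \txns{H2}$ with $H1 \sqsubset H2$ --- exactly the hypotheses of \lemref{inct-diff}, which yields $\htits{f}{H1} \neq \htits{k}{H2}$, contradicting that they are \inc{s}. In the complementary case $T_f \notin \txns{H1}$, \corref{cts-syst} gives $\htcts{f}{H2} > \hsyst{H1}$; combined with $\htits{k}{H2} = \htcts{f}{H2}$ and the inequality $\htits{i}{H1} + 2*L \leq \hsyst{H1}$ derived above, this gives $\htits{k}{H2} > \htits{i}{H1} + 2*L$, contradicting $T_k \in \haffset{j}{H2}$. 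Either branch refutes the assumption $T_k \notin \txns{H1}$, proving the lemma.

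The main obstacle is precisely the case where $T_k$ is a \emph{late} \inc carrying a small \its (so that it could plausibly sit inside the \affset): a purely timestamp-based argument through \corref{cts-syst} does not suffice here, because a late \inc may have been assigned its \its far in the past. The resolution --- and the delicate point to get right --- is that \cdsen forces every \its-class below $\htits{i}{H1}+2*L$ to already have a committed \inc, so \lemref{inct-diff} forbids any brand-new transaction from appearing in such an \its-class after $H1$; routing the argument through the first \inc $T_f$ is what makes the two branches line up cleanly into a single contradiction.
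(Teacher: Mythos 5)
Your proof is correct and follows essentially the same route as the paper's: the same contradiction setup, the same two-way case split on whether an \inc of $T_k$ already appears in $H1$, and the same key ingredients (\corref{cts-syst}, \lemref{cts-syst}, the definition of \cdsen, and \lemref{inct-diff}). Your explicit routing through the first \inc $T_f$ is in fact a slightly more careful rendering of the paper's first case, which loosely asserts that $T_k$ itself ``starts afresh'' after $H1$ when, strictly speaking, only its earliest \inc need do so.
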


\begin{proof} 
Since $T_i$ is \cdsen in $H1$, we get (from the definition of \cdsen) that 
\begin{equation}
\label{eq:ti-cts-its}
\htcts{i}{H1} \geq \htits{i}{H1} + 2*L
\end{equation}

Here, we have that $T_k$ is in $\haffset{j}{H2}$. Thus from the definition of \affset, we get that 
\begin{equation}
\label{eq:tk-tj-aff}
\htits{k}{H2} < \htits{j}{H2} + 2*L
\end{equation}

Since $T_i$ and $T_j$ are \inc{s} of each other, their \its are the same. Combining this with \eqnref{tk-tj-aff}, we get that 
\begin{equation}
\label{eq:tk-ti-h12}
\htits{k}{H2} < \htits{i}{H1} + 2*L
\end{equation}

We now show this proof through contradiction. Suppose $T_k$ is not in $\txns{H1}$. Then there are two cases:

\begin{itemize}
\item No \inc of $T_k$ is in $H1$: This implies that $T_k$ starts afresh after $H1$. Since $T_k$ is not in $H1$, from \corref{cts-syst} we get that

$\htcts{k}{H2} > \hsyst{H1} \xrightarrow [\htcts{k}{H2} = \htits{k}{H2}] {T_k \text{ starts afresh}}\htits{k}{H2} > \hsyst{H1} \xrightarrow [\hsyst{H1} \geq \htcts{i}{H1}]{(T_i \in H1) \land \lemref{cts-syst}} \htits{k}{H2} > \htcts{i}{H1} \xrightarrow {\eqnref{ti-cts-its}} \htits{k}{H2} > \htits{i}{H1} + 2*L \xrightarrow {\htits{i}{H1} = \htits{j}{H2}} \htits{k}{H2} > \htits{j}{H2} + 2*L$

But this result contradicts with \eqnref{tk-tj-aff}. Hence, this case is not possible. 

\item There is an \inc of $T_k$, $T_l$ in $H1$: In this case, we have that 

\begin{equation}
\label{eq:tl-h1}
\htits{l}{H1} = \htits{k}{H2}
\end{equation}


Now combing this result with \eqnref{tk-ti-h12}, we get that $\htits{l}{H1} < \htits{i}{H1} + 2*L$. This implies that $T_l$ is in \affset of $T_i$ in $H1$. Since $T_i$ is \cdsen, we get that $T_l$'s \incct must be true. 

We also have that $T_k$ is not in $H1$ but in $H2$ where $H2$ is an extension of $H1$. Since $H2$ has some events more than $H1$, we get that $H2$ is a strict extension of $H1$.

Thus, we have that, $(H1 \sqsubset H2) \land (\inct{l}{H1}) \land (T_k \in \txns{H2}) \land (T_k \notin \txns{H1})$. Combining these with \lemref{inct-diff}, we get that $(\htits{l}{H1} \neq \htits{k}{H2})$. But this result contradicts \eqnref{tl-h1}. Hence, this case is also not possible.
\end{itemize}
Thus from both the cases we get that $T_k$ should be in $H1$. Hence proved.
\end{proof}

\begin{lemma}
\label{lem:aff-tkinc-h1}
Consider two histories $H1, H2$ where $H2$ is an extension of $H1$. Let $T_i, T_j, T_k$ be three transactions such that $T_i$ is in $\txns{H1}$ while $T_j, T_k$ are in $\txns{H2}$. Suppose we have that (1) $\tcts{i}$ is greater than $\tits{i} + 2*L$ in $H1$; (2) $T_i$ is an \inc of $T_j$; (3) $T_k$ is in \affset of $T_j$ in $H2$. Then an \inc of $T_k$, say $T_l$ (which could be same as $T_k$) is in $\txns{H1}$. Formally, $\langle H1, H2, T_i, T_j, T_k: (H1 \sqsubseteq H2) \land (T_i \in \txns{H1})  \land (\{T_j, T_k\} \in \txns{H2}) \land (\htcts{i}{H1} > \htits{i}{H1} + 2*L) \land (T_i \in \incs{j}{H2}) \land (T_k \in \haffset{j}{H2}) \implies (\exists T_l: (T_l \in \incs{k}{H2}) \land (T_l \in \txns{H1})) \rangle$
\end{lemma}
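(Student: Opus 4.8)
The plan is to argue by contradiction, mirroring the first sub-case of the proof of \lemref{cds-tk-h1}; indeed this lemma is essentially that sub-case extracted and weakened — the hypothesis now retains only the bound $\htcts{i}{H1} > \htits{i}{H1} + 2*L$ (dropping the full \cdsen assumption), and the conclusion asks only for \emph{some} \inc $T_l$ of $T_k$ to lie in $H1$ rather than $T_k$ itself. So I would first dispose of the easy case: if some \inc of $T_k$ already belongs to $\txns{H1}$, take $T_l$ to be that transaction and we are done. Otherwise I assume, for contradiction, that \emph{no} \inc of $T_k$ lies in $\txns{H1}$, and derive a violation of the \affset hypothesis $T_k \in \haffset{j}{H2}$.

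The key steps, in order, are as follows. First I would record the arithmetic consequence of the two combinatorial hypotheses: from $T_k \in \haffset{j}{H2}$ we get $\htits{k}{H2} < \htits{j}{H2} + 2*L$, and since $T_i \in \incs{j}{H2}$ the two transactions share an \its, so $\htits{j}{H2} = \htits{i}{H1}$, giving $\htits{k}{H2} < \htits{i}{H1} + 2*L$. The contradiction comes from forcing the reverse inequality. Under the assumption that no \inc of $T_k$ occurs in $H1$, I would pass to the \inc $T_m$ of $T_k$ that starts \emph{afresh}, for which $\htcts{m}{H2} = \htits{m}{H2} = \htits{k}{H2}$. Since $T_m$ shares $T_k$'s \its it lies in $\txns{H2}$ but, by the contradiction hypothesis, not in $\txns{H1}$; thus $H1 \sqsubset H2$ strictly, and \corref{cts-syst} gives $\htcts{m}{H2} > \hsyst{H1}$, i.e. $\htits{k}{H2} > \hsyst{H1}$. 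On the other hand, $T_i \in \txns{H1}$ together with \lemref{cts-syst} yields $\htcts{i}{H1} \leq \hsyst{H1}$. Chaining these with hypothesis (1),
\[
\htits{k}{H2} > \hsyst{H1} \geq \htcts{i}{H1} > \htits{i}{H1} + 2*L = \htits{j}{H2} + 2*L,
\]
which directly contradicts the \affset bound derived above. Hence the assumption fails and some \inc $T_l$ of $T_k$ must already be in $\txns{H1}$.

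The step needing the most care is the passage to the afresh incarnation $T_m$: because $T_k$ may itself be a re-\inc, we cannot assert $\htcts{k}{H2} = \htits{k}{H2}$, so applying \corref{cts-syst} to $T_k$ directly would be unsound — it is exactly the identity $\htcts{m}{H2} = \htits{k}{H2}$ for the afresh \inc that powers the inequality chain. I would also need to justify that $T_m \in \txns{H2}$: this follows because the application invokes \inc{s} strictly sequentially, so every predecessor \inc of $T_k$ (in particular $T_m$) has been invoked and has terminated before $T_k$ begins, and is therefore present in any well-formed history that contains $T_k$. Everything else is the same routine timestamp bookkeeping already carried out verbatim in \lemref{cds-tk-h1}.
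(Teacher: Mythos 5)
Your proof is correct and follows essentially the same route as the paper's: assume no incarnation of $T_k$ lies in $\txns{H1}$, deduce that the chain of $T_k$ started afresh after $H1$, and chain $\htits{k}{H2} > \hsyst{H1} \geq \htcts{i}{H1} > \htits{i}{H1} + 2*L = \htits{j}{H2} + 2*L$ to contradict $T_k \in \haffset{j}{H2}$. Your explicit passage to the afresh incarnation $T_m$ (so that \corref{cts-syst} is applied to a transaction whose \cts equals $\htits{k}{H2}$) is a slightly more careful rendering of the step the paper performs by introducing the intermediate history $H3$, but it is the same argument.
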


\begin{proof} 

\noindent This proof is similar to the proof of \lemref{cds-tk-h1}. We are given that 
\begin{equation}
\label{eq:given-ti-ctsits}
\htcts{i}{H1} \geq \htits{i}{H1} + 2*L
\end{equation}

We now show this proof through contradiction. Suppose no \inc of $T_k$ is in $\txns{H1}$. This implies that $T_k$ must have started afresh in some history $H3$ which is an extension of $H1$. Also note that $H3$ could be same as $H2$ or a prefix of it, i.e., $H3 \sqsubseteq H2$. Thus, we have that 

\noindent
\begin{math}
\htits{k}{H3} > \hsyst{H1} \xrightarrow{\lemref{cts-syst}} \htits{k}{H3} > \htcts{i}{H1} \xrightarrow{\eqnref{given-ti-ctsits}} \htits{k}{H3} > \htits{i}{H1} + 2*L \xrightarrow{\htits{i}{H1} = \htits{j}{H2}} \htits{k}{H3} > \htits{j}{H2} + 2*L \xrightarrow[\obsref{hist-subset}]{H3 \sqsubseteq H2} \htits{k}{H2} > \htits{j}{H2} + 2*L \xrightarrow[definition]{\affset} T_k \notin \haffset{j}{H2}
\end{math}

But we are given that $T_k$ is in \affset of $T_j$ in $H2$. Hence, it is not possible that $T_k$ started afresh after $H1$. Thus, $T_k$ must have a \inc in $H1$.
\end{proof}

\begin{lemma}
\label{lem:aff-same}
Consider a transaction $T_i$ which is \cdsen in a history $H1$. Consider an extension of $H1$, $H2$ with a transaction $T_j$ in it such that $T_j$ is an \inc of $T_i$ in $H2$. Then \affset of $T_i$ in $H1$ is same as the \affset of $T_j$ in $H2$. Formally, $\langle H1, H2, T_i, T_j: (H1 \sqsubseteq H2) \land  (\cdsenb{i}{H1}) \land (T_j \in \txns{H2}) \land (T_i \in \incs{j}{H2}) \implies ((\haffset{i}{H1} = \haffset{j}{H2})) \rangle$
\end{lemma}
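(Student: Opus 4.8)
The plan is to prove the set equality $\haffset{i}{H1} = \haffset{j}{H2}$ by double inclusion, exploiting two facts that hold throughout \ksftm: the \its of a transaction never changes once its \begt executes, and since $T_i \in \incs{j}{H2}$ the two transactions share a common \its, say $\tau = \htits{i}{H1} = \htits{j}{H2}$. With this observation both affect sets are cut off at the \emph{same} threshold $\tau + 2*L$, so the only thing that can make them differ is the underlying transaction set. Thus the whole argument reduces to comparing $\txns{H1}$ and $\txns{H2}$ at \its values below $\tau + 2*L$.

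First I would dispatch the easy inclusion $\haffset{i}{H1} \subseteq \haffset{j}{H2}$. Take any $T_k \in \haffset{i}{H1}$; by definition $T_k \in \txns{H1}$ and $\htits{k}{H1} < \tau + 2*L$. Since $H2$ extends $H1$, \obsref{hist-subset} gives $\txns{H1} \subseteq \txns{H2}$, so $T_k \in \txns{H2}$, and because \its is invariant, $\htits{k}{H2} = \htits{k}{H1} < \tau + 2*L = \htits{j}{H2} + 2*L$, whence $T_k \in \haffset{j}{H2}$. This direction uses neither \cdsen nor the incarnation structure beyond the shared \its.

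The substance lies in the reverse inclusion $\haffset{j}{H2} \subseteq \haffset{i}{H1}$, and this is exactly where the hypothesis $\cdsenb{i}{H1}$ enters. For any $T_k \in \haffset{j}{H2}$, the one nontrivial question is whether $T_k$ is already present in $H1$: a priori, $H2$ could have introduced a fresh transaction (or a fresh incarnation) whose \its is small enough to fall below the threshold but which did not exist in $H1$. This possibility is precisely what \lemref{cds-tk-h1} excludes, and its hypotheses match ours verbatim ($T_i$ is \cdsen in $H1$, $H1 \sqsubseteq H2$, $T_i \in \incs{j}{H2}$, and $T_k \in \haffset{j}{H2}$); it concludes $T_k \in \txns{H1}$. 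Once membership in $H1$ is secured, invariance of \its again yields $\htits{k}{H1} = \htits{k}{H2} < \tau + 2*L = \htits{i}{H1} + 2*L$, so $T_k \in \haffset{i}{H1}$. Combining the two inclusions gives the claimed equality.

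The main obstacle is therefore fully absorbed by \lemref{cds-tk-h1}, so in practice this proof is a short citation rather than a fresh construction. The intuition I would record is that \cdsen forces every transaction in $T_i$'s affect set to possess a committed incarnation (its \incct is true), so by \obsref{cmt-noinc} the application never spawns a new incarnation with \its below the $\tau + 2*L$ cutoff after $H1$; and any brand-new afresh transaction appearing after $H1$ has \cts, hence \its, too large to enter the affect set, using \corref{cts-syst} together with the \cdsen condition $\htcts{i}{H1} \geq \tau + 2*L$. No new member can slip into the affect set between $H1$ and $H2$, which is exactly what makes the two affect sets coincide.
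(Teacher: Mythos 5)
Your proposal is correct and follows essentially the same route as the paper's own proof: a double-inclusion argument where the forward direction uses \obsref{hist-subset} together with the invariance of \its and the shared \its of incarnations, and the reverse direction hinges on citing \lemref{cds-tk-h1} to rule out new small-\its transactions appearing after $H1$. Your closing intuition simply restates why \lemref{cds-tk-h1} holds, which the paper handles inside that lemma's own proof, so nothing is missing or redundant.
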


\begin{proof}
From the definition of \cdsen, we get that $T_i$ is in $\txns{H1}$. Now to prove that \affset{s} are the same, we have to show that $(\haffset{i}{H1} \subseteq \haffset{j}{H2})$ and $(\haffset{j}{H1} \subseteq \haffset{i}{H2})$. We show them one by one:

\paragraph{$(\haffset{i}{H1} \subseteq \haffset{j}{H2})$:} Consider a transaction $T_k$ in $\haffset{i}{H1}$. We have to show that $T_k$ is also in $\haffset{j}{H2}$. From the definition of \affset, we get that 
\begin{equation}
\label{eq:tk-h1}
T_k \in \txns{H1}
\end{equation}

\noindent Combining \eqnref{tk-h1} with \obsref{hist-subset}, we get that 
\begin{equation}
\label{eq:tk-h2}
T_k \in \txns{H2}
\end{equation}

\noindent From the definition of \its, we get that 
\begin{equation}
\label{eq:its-h1-h2}
\htits{k}{H1} = \htits{k}{H2}
\end{equation}

\noindent Since $T_i, T_j$ are \inc{s} we have that . 
\begin{equation}
\label{eq:its-ij}
\htits{i}{H1} = \htits{j}{H2}
\end{equation}

\noindent From the definition of \affset, we get that, \\ 
$\htits{k}{H1} < \htits{i}{H1} + 2*L \xrightarrow{\eqnref{its-h1-h2}} \htits{k}{H2} < \htits{i}{H1} + 2*L \xrightarrow{\eqnref{its-ij}} \htits{k}{H2} < \htits{j}{H2} + 2*L$

\noindent Combining this result with \eqnref{tk-h2}, we get that $T_k \in \haffset{j}{H2}$. 

\paragraph{$(\haffset{i}{H1} \subseteq \haffset{j}{H2})$:} Consider a transaction $T_k$ in $\haffset{j}{H2}$. We have to show that $T_k$ is also in $\haffset{i}{H1}$. From the definition of \affset, we get that $T_k \in \txns{H2}$.

Here, we have that $(H1 \sqsubseteq H2) \land  (\cdsenb{i}{H1}) \land (T_i \in \incs{j}{H2}) \land (T_k \in \haffset{j}{H2})$. Thus from \lemref{cds-tk-h1}, we get that $T_k \in \txns{H1}$. Now, this case is similar to the above case. It can be seen that Equations \ref{eq:tk-h1}, \ref{eq:tk-h2}, \ref{eq:its-h1-h2}, \ref{eq:its-ij} hold good in this case as well. 

Since $T_k$ is in $\haffset{j}{H2}$, we get that \\
$\htits{k}{H2} < \htits{i}{H2} + 2*L \xrightarrow{\eqnref{its-h1-h2}} \htits{k}{H1} < \htits{j}{H2} + 2*L \xrightarrow{\eqnref{its-ij}} \htits{k}{H1} < \htits{i}{H1} + 2*L $

\noindent Combining this result with \eqnref{tk-h1}, we get that $T_k \in \haffset{i}{H1}$.
\end{proof}

\noindent Next we explore how a \cdsen transaction remains \cdsen in the future histories once it becomes true.

\begin{lemma}
\label{lem:cds-fut}
Consider two histories $H1$ and $H2$ with $H2$ being an extension of $H1$. Let  $T_i$ and $T_j$ be two transactions which are live in $H1$ and $H2$ respectively. Let $T_i$ be an \inc of $T_j$ and $\tcts{i}$ is less than $\tcts{j}$. Suppose $T_i$ is \cdsen in $H1$. Then $T_j$ is \cdsen in $H2$ as well. Formally, $\langle H1, H2, T_i, T_j: (H1 \sqsubseteq H2) \land (T_i \in \live{H1}) \land (T_j \in \live{H2}) \land (T_i \in \incs{j}{H2}) \land (\htcts{i}{H1} < \htcts{j}{H2}) \land (\cdsenb{i}{H1}) \implies (\cdsenb{j}{H2}) \rangle$. 
\end{lemma}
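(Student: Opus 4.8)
The plan is to prove $\cdsenb{j}{H2}$ by verifying directly the three defining conditions from Definition~\ref{defn:cdsen}, each time transferring the corresponding property of $T_i$ in $H1$ over to $T_j$ in $H2$ using the incarnation relationship and the auxiliary lemmas already in hand. The liveness condition is immediate, the timestamp condition is a short calculation, and the emptiness of the commit-dependent set is the real work.

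First I would dispatch the two easy conditions. Liveness, namely $T_j \in \live{H2}$, is given outright in the hypothesis. For the timestamp condition I must show $\htcts{j}{H2} \geq \htits{j}{H2} + 2*L$. Since $T_i$ is \cdsen in $H1$, its definition yields $\htcts{i}{H1} \geq \htits{i}{H1} + 2*L$. Because $T_i$ and $T_j$ are \inc{s} of each other their initial timestamps coincide, $\htits{i}{H1} = \htits{j}{H2}$, and the hypothesis supplies $\htcts{i}{H1} < \htcts{j}{H2}$. Chaining these, $\htcts{j}{H2} > \htcts{i}{H1} \geq \htits{i}{H1} + 2*L = \htits{j}{H2} + 2*L$, which gives (indeed strengthens) the required inequality.

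The substantive step is to show $\hcds{j}{H2} = \phi$, and the key is that the affect set does not grow once $T_i$ is \cdsen. I would invoke \lemref{aff-same}: since $T_i$ is \cdsen in $H1$, $H2$ extends $H1$, and $T_i$ is an \inc of $T_j$ in $H2$, we obtain $\haffset{i}{H1} = \haffset{j}{H2}$. Now take any $T_k \in \haffset{j}{H2} = \haffset{i}{H1}$. Because $T_i$ is \cdsen in $H1$, its commit-dependent set is empty, so $\inct{k}{H1}$ holds. As committed transactions stay committed in every extension and \its values never change, $\inct{k}{H2}$ also holds; formally this is \obsref{inct-fut} applied with $T_k$ taken as both transactions (each transaction lies in its own \incset). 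Hence every member of $\haffset{j}{H2}$ has its \incct flag true in $H2$, so $\hcds{j}{H2} = \phi$. Combining the three verified conditions gives $\cdsenb{j}{H2}$.

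The main obstacle is precisely this third condition: one must rule out a \emph{new} transaction entering $T_j$'s window in $H2$ that had no committed incarnation in $H1$, since such a transaction would reopen the commit-dependent set. This is exactly where the clause $\htcts{i}{H1} \geq \htits{i}{H1} + 2*L$ baked into the definition of \cdsen pays off, because it is what powers \lemref{aff-same} to freeze the affect set; once the affect set is frozen, emptiness of the commit-dependent set is preserved for free by the monotone persistence of commits. I would therefore be careful to state explicitly that the equality of affect sets is what reduces condition three to the persistence of $\incct$, rather than re-deriving the affect-set invariance inline.
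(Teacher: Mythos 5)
Your proposal is correct and follows essentially the same route as the paper's own proof: the identical timestamp chaining $\htcts{j}{H2} > \htcts{i}{H1} \geq \htits{i}{H1} + 2*L = \htits{j}{H2} + 2*L$, followed by \lemref{aff-same} to freeze the affect set and \obsref{inct-fut} to carry each $\inct{k}{H1}$ forward to $H2$. Your explicit remark that \obsref{inct-fut} applies because each transaction lies in its own \incset is a small clarification the paper leaves implicit, but the argument is otherwise the same.
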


\begin{proof}
We have that $T_i$ is live in $H1$ and $T_j$ is live in $H2$. Since $T_i$ is \cdsen in $H1$, we get (from the definition of \cdsen) that 
\begin{equation}
\label{eq:cts-its}
\htcts{i}{H1} \geq \htits{i}{H2} + 2*L
\end{equation}

We are given that $\tcts{i}$ is less than $\tcts{j}$ and $T_i, T_j$ are incarnations of each other. Hence, we have that

\ignore{
\begin{align*}
\htcts{j}{H2} & > \htcts{i}{H1} \\
\htcts{j}{H2} & > \htits{i}{H1} + 2*L & [\text{From \eqnref{cts-its}}] \\
\htcts{j}{H2} & > \htits{j}{H2} + 2*L & [\tits{i} = \tits{j}] \\
\end{align*}
}

\begin{align*}
\htcts{j}{H2} & > \htcts{i}{H1} \\
& > \htits{i}{H1} + 2*L & [\text{From \eqnref{cts-its}}] \\
& > \htits{j}{H2} + 2*L & [\tits{i} = \tits{j}] \\
\end{align*}

Thus we get that $\tcts{j} > \tits{j} + 2*L$. We have that $T_j$ is live in $H2$. In order to show that $T_j$ is \cdsen in $H2$, it only remains to show that \cdset of $T_j$ in $H2$ is empty, i.e., $\hcds{j}{H2} = \phi$. The \cdset becomes empty when all the transactions of $T_j$'s \affset in $H2$ have their \incct as true in $H2$.

Since $T_j$ is live in $H2$, we get that $T_j$ is in $\txns{H2}$. Here, we have that $(H1 \sqsubseteq H2) \land (T_j \in \txns{H2}) \land (T_i \in \incs{j}{H2}) \land (\cdsenb{i}{H1})$. Combining this with \lemref{aff-same}, we get that $\haffset{i}{H1} = \haffset{j}{H2}$.

Now, consider a transaction $T_k$ in $ \haffset{j}{H2}$. From the above result, we get that $T_k$ is also in $\haffset{i}{H1}$. Since $T_i$ is \cdsen in $H1$, i.e., $\cdsenb{i}{H1}$ is true, we get that $\inct{k}{H1}$ is true. Combining this with \obsref{inct-fut}, we get that $T_k$ must have its \incct as true in $H2$ as well, i.e. $\inct{k}{H2}$. This implies that all the transactions in $T_j$'s \affset have their \incct flags as true in $H2$. Hence the $\hcds{j}{H2}$ is empty. As a result, $T_j$ is \cdsen in $H2$, i.e., $\cdsenb{j}{H2}$. 
\end{proof}

\ignore{

\begin{proof}
	We have that $T_i$ is live in $H1$ and $T_j$ is live in $H2$. Since $T_i$ is \cdsen in $H1$, we get (from the definition of \cdsen) that 
	\begin{equation}
	\label{eq:cts-its1}
	\htcts{i}{H1} \geq \htits{i}{H2} + 2*L
	\end{equation}
	
	We are given that $\tcts{i}$ is less than $\tcts{j}$ and $T_i, T_j$ are incarnations of each other. Hence, we have that
	
	\begin{align*}
	\htcts{j}{H2} & > \htcts{i}{H1} \\
	& > \htits{i}{H1} + 2*L & [\text{From \eqnref{cts-its}}] \\
	& > \htits{j}{H2} + 2*L & [\tits{i} = \tits{j}] \\
	\end{align*}

	Thus we get that $\tcts{j} > \tits{j} + 2*L$. We have that $T_j$ is live in $H2$. Now, suppose $T_j$ is not \cdsen in $H2$. This can happen only if there is a transaction $T_k$ such that $\tits{k}$ is less than $\tits{i} + 2*L$ but \incct of $T_k$ is not true in $H2$. Formally, 
	
	\begin{equation}
	\label{eq:its-ki1}
	(\htits{k}{H2} < \htits{j}{H2} + 2*L) \land (\neg \incs{k}{H2})
	\end{equation}
	
	Since $T_i$ is \cdsen in $H1$, we get that for all transactions $T_k$, such that $\tits{k} < \tits{i} + 2*L$, \incct of $T_j$ has to be true in $H1$. Combining this \eqnref{its-ki}, we get that $T_k$ or any \inc of $T_k$ cannot be in $H1$. Thus, we get that $T_k$ is not in $\txns{H1}$. This implies that $T_k$ must have started afresh in some history after $H1$. We get that, 
	
	\begin{align*}
	\htits{k}{H2} & \geq  \hsyst{H1} & [\text{Since $T_k$ starts afresh after $H1$}] \\
	& >  \htcts{i}{H1}  & [\text{From \obsref{cts-syst}}] \\
	& \geq  \htits{i}{H1} + 2*L  & [\text{From \eqnref{cts-its}}] \\
	& = \htits{j}{H2} + 2*L  & [\text{As $T_i, T_j$ are \inc{s} of each other}] \\
	\end{align*}
	
	Thus, we get that $\htits{k}{H2} > \htits{j}{H2} + 2*L$. But this contradicts with \eqnref{its-ki}. Hence, we have that there cannot exist a transaction $T_k$ such that $\tits{k}$ is less than $\tits{i} + 2*L$ and \incct of $T_k$ is false in $H2$. This implies that $T_i$ must be \cdsen in $H2$ as well.
\end{proof}
}

Having defined the properties related to \cdsen, we start defining notions for \finen. Next, we define \emph{\maxwts} for a transaction $T_i$ in $H$ which is the transaction $T_j$ with the largest \wts in $T_i$'s \incset. Formally,
\begin{equation*}
\hmaxwts{i}{H} = max\{\htwts{j}{H}|(T_j \in \incs{i}{H})\}
\end{equation*}

\noindent From this definition of \maxwts, we get the following simple observation. 

\begin{observation}
\label{obs:max-wts} 
For any transaction $T_i$ in $H$, we have that $\twts{i}$ is less than or equal to $\hmaxwts{i}{H}$. Formally, $\htwts{i}{H} \leq \hmaxwts{i}{H}$.
\end{observation}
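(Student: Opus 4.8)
The plan is to derive this observation directly from the definitions of \incset and \maxwts, since it amounts to the elementary fact that any element of a set is bounded above by the maximum of that set. First I would recall the definition of the incarnation set, $\incs{i}{H} = \{T_j \mid (T_i = T_j) \lor (\htits{i}{H} = \htits{j}{H})\}$. The key point to observe is that $T_i$ is itself a member of $\incs{i}{H}$: taking $T_j = T_i$, the first disjunct $(T_i = T_j)$ holds trivially, so $T_i \in \incs{i}{H}$. This reflexive membership was in fact already noted in the text immediately following the definition of \incset (where it is remarked that the \incset of $T_i$ includes $T_i$ as well).

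With this membership established, the rest follows at once. By definition, $\hmaxwts{i}{H} = \max\{\htwts{j}{H} \mid T_j \in \incs{i}{H}\}$, i.e., it is the largest \wts over all transactions in the incarnation set of $T_i$. Since $T_i \in \incs{i}{H}$, the value $\htwts{i}{H}$ is one of the elements of the set over which this maximum is taken, and hence it cannot exceed that maximum. Therefore $\htwts{i}{H} \leq \hmaxwts{i}{H}$, which is exactly the claim. I do not expect any genuine obstacle here: the only fact requiring verification is the reflexive membership $T_i \in \incs{i}{H}$, which is immediate from the first disjunct of the definition, and the remaining step is just the definition of maximum. The interest of this observation is not in its difficulty but in its role as a convenient handle for later liveness arguments, where $\hmaxwts{i}{H}$ is used to reason uniformly about all incarnations of a transaction while still controlling the \wts of the particular incarnation $T_i$.
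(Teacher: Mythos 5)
Your proof is correct and matches the argument the paper implicitly relies on: the paper states this as a simple observation following directly from the definitions of \incset and \maxwts, with no further proof needed. Your explicit verification of the reflexive membership $T_i \in \incs{i}{H}$ and the appeal to the definition of maximum is exactly the intended reasoning.
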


Next, we combine the notions of \affset and \maxwts to define \emph{\affwts}. It is the maximum of \maxwts of all the transactions in its \affset. Formally, 
\begin{equation*}
\haffwts{i}{H} = max\{\hmaxwts{j}{H}|(T_j \in \haffset{i}{H})\}
\end{equation*}

\noindent Having defined the notion of \affwts, we get the following lemma relating the \affset and \affwts of two transactions. 

\begin{lemma}
\label{lem:affwts-same}
Consider two histories $H1$ and $H2$ with $H2$ being an extension of $H1$. Let  $T_i$ and $T_j$ be two transactions which are live in $H1$ and $H2$ respectively. Suppose the \affset of $T_i$ in $H1$ is same as \affset of $T_j$ in $H2$. Then the \affwts of $T_i$ in $H1$ is same as \affwts of $T_j$ in $H2$. Formally, $\langle H1, H2, T_i, T_j: (H1 \sqsubseteq H2) \land (T_i \in \txns{H1}) \land (T_j \in \txns{H2}) \land (\haffset{i}{H1} = \haffset{j}{H2}) \implies (\haffwts{i}{H1} = \haffwts{j}{H2}) \rangle$. 
\end{lemma}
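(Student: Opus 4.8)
The plan is to reduce the claimed equality of the two \affwts values to an element-wise comparison of \maxwts over the common \affset, and then to show that this \maxwts is unchanged in passing from $H1$ to $H2$. Write $A = \haffset{i}{H1} = \haffset{j}{H2}$ for the common \affset. By the definition of \affwts we have $\haffwts{i}{H1} = \max\{\hmaxwts{k}{H1} : T_k \in A\}$ and $\haffwts{j}{H2} = \max\{\hmaxwts{k}{H2} : T_k \in A\}$, so it suffices to prove $\hmaxwts{k}{H1} = \hmaxwts{k}{H2}$ for every $T_k \in A$. Since $A = \haffset{i}{H1}$ consists only of transactions in $\txns{H1}$, each such $T_k$ already belongs to $H1$, so both quantities are well defined.

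The core step is to show $\incs{k}{H1} = \incs{k}{H2}$ for each $T_k \in A$. The inclusion $\incs{k}{H1} \subseteq \incs{k}{H2}$ is immediate from \obsref{hist-subset} (every transaction of $H1$ is in $H2$) together with the fact that \its values do not change across extensions. For the reverse inclusion, I would take any $T_m \in \incs{k}{H2}$, so $\htits{m}{H2} = \htits{k}{H2}$; because $T_k \in A = \haffset{j}{H2}$, its \its is below $\htits{j}{H2} + 2L$, hence so is $T_m$'s, which places $T_m$ in $\haffset{j}{H2}$. But $\haffset{j}{H2} = A = \haffset{i}{H1}$, and the \affset of $T_i$ in $H1$ contains only transactions of $H1$; therefore $T_m \in \txns{H1}$ with the same \its as $T_k$, i.e.\ $T_m \in \incs{k}{H1}$. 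This trapping argument---any incarnation of a member of the common \affset is forced to lie in $\txns{H1}$---is the crux, and the only place where the hypothesis $\haffset{i}{H1} = \haffset{j}{H2}$ is genuinely used.

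Once $\incs{k}{H1} = \incs{k}{H2}$ is established, I would finish using the invariance of \wts: as recorded in the proof notations, once a transaction executes \begt its \wts never changes, so $\htwts{m}{H1} = \htwts{m}{H2}$ for every $T_m$ in this common incarnation set. Taking the maximum over identical sets of identical values gives $\hmaxwts{k}{H1} = \hmaxwts{k}{H2}$, and then the maximum over $A$ yields $\haffwts{i}{H1} = \haffwts{j}{H2}$.

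The main obstacle I anticipate is precisely the reverse inclusion $\incs{k}{H2} \subseteq \incs{k}{H1}$: a priori $H2$ could contain fresh incarnations of some $T_k \in A$ absent from $H1$, which would inflate \maxwts and break the equality. The equality of the two \affsets is exactly what rules this out, and care is needed to invoke it in the right direction (pushing $T_m$ into $\haffset{j}{H2}$ and then rewriting it as $\haffset{i}{H1} \subseteq \txns{H1}$) rather than assuming that $T_i$ and $T_j$ are incarnations of each other, which the statement does not provide.
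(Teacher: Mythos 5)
Your proof is correct, but it is not the argument the paper gives, and the difference matters. The paper's proof works with maximizers: it writes $\haffwts{i}{H1} = \hmaxwts{p}{H1}$ for some $T_p \in \haffset{i}{H1}$ and $\haffwts{j}{H2} = \hmaxwts{q}{H2}$ for some $T_q \in \haffset{j}{H2}$, uses the hypothesis $\haffset{i}{H1} = \haffset{j}{H2}$ to place each maximizer in the other set, and then asserts the two cross inequalities $\hmaxwts{p}{H1} \geq \hmaxwts{q}{H2}$ and $\hmaxwts{q}{H2} \geq \hmaxwts{p}{H1}$, from which equality follows. The second inequality is sound (since $T_p$ lies in $\haffset{j}{H2}$ and \maxwts can only grow under extension, $\hmaxwts{q}{H2} \geq \hmaxwts{p}{H2} \geq \hmaxwts{p}{H1}$), but the first is not immediate: membership $T_q \in \haffset{i}{H1}$ only yields $\hmaxwts{p}{H1} \geq \hmaxwts{q}{H1}$, while extension gives $\hmaxwts{q}{H2} \geq \hmaxwts{q}{H1}$ --- the wrong direction. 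Closing that step requires exactly your trapping argument: every \inc of a member $T_k$ of the common set has the same \its, hence lands in $\haffset{j}{H2} = \haffset{i}{H1} \subseteq \txns{H1}$, so no fresh incarnation of such a member can appear only in $H2$, giving $\incs{k}{H1} = \incs{k}{H2}$ and thus $\hmaxwts{k}{H1} = \hmaxwts{k}{H2}$ (using that \wts values are fixed at \begt). Your element-wise decomposition makes this invariance the explicit core and then takes maxima of identical values over identical sets; the paper's maximizer comparison is shorter but leaves the invariance implicit, silently assuming the very point you flagged as the main obstacle. What your route buys is a self-contained proof plus a reusable fact (incarnation-set stability for members of a frozen \affset); what the paper's buys is only brevity.
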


\begin{proof}

From the definition of \affwts, we get the following equations
\begin{equation}
\label{eq:h1-ti-affwts}
\haffwts{i}{H} = max\{\hmaxwts{k}{H}|(T_k \in \haffset{i}{H1})\}
\end{equation}

\begin{equation}
\label{eq:h2-tj-affwts}
\haffwts{j}{H} = max\{\hmaxwts{l}{H}|(T_l \in \haffset{j}{H2})\}
\end{equation}

From these definitions, let us suppose that $\haffwts{i}{H1}$ is $\hmaxwts{p}{H1}$ for some transaction $T_p$ in $\haffset{i}{H1}$. Similarly, suppose that $\haffwts{j}{H2}$ is $\hmaxwts{q}{H2}$ for some transaction $T_q$ in $\haffset{j}{H2}$.

Here, we are given that $\haffset{i}{H1} = \haffset{j}{H2})$. Hence, we get that $T_p$ is also in $\haffset{i}{H1}$. Similarly, $T_q$ is in $\haffset{j}{H2}$ as well. Thus from Equations \eqref{eq:h1-ti-affwts} \& \eqref{eq:h2-tj-affwts}, we get that 

\begin{equation}
\label{eq:ti-tp-max}
\hmaxwts{p}{H1} \geq  \hmaxwts{q}{H2}
\end{equation}

\begin{equation}
\label{eq:tj-tq-max}
\hmaxwts{q}{H2} \geq \hmaxwts{p}{H1}
\end{equation}

Combining these both equations, we get that $\hmaxwts{p}{H1} = \hmaxwts{q}{H2}$ which in turn implies that $\haffwts{i}{H1} = \haffwts{j}{H2}$.

\end{proof}

\noindent Finally, using the notion of \affwts and \cdsen, we define the notion of \emph{\finen}

\begin{definition}
\label{defn:finen}
We say that transaction $T_i$ is \emph{\finen} if the following conditions hold true (1) $T_i$ is live in $H$; (2) $T_i$ is \cdsen is $H$; (3) $\htwts{j}{H}$ is greater than $\haffwts{i}{H}$. Formally, 

\begin{equation*}
\finenb{i}{H} = \begin{cases}
True    & (T_i \in \live{H}) \land (\cdsenb{i}{H}) \land (\htwts{j}{H} > \haffwts{i}{H}) \\
False	& \text{otherwise}
\end{cases}
\end{equation*}
\end{definition}


It can be seen from this definition, a transaction that is \finen is also \cdsen. We now show that just like \itsen and \cdsen, once a transaction is \finen, it remains \finen until it terminates. The following lemma captures it. 

\ignore{
\begin{lemma}
\label{lem:fin-sam-fut}
Consider two histories $H1$ and $H2$ with $H2$ being an extension of $H1$. Let  $T_i$ be a transaction live in $H1$ and $H2$. Suppose $T_i$ is \finen in $H1$. Then $T_i$ is \finen in $H2$ as well. Formally, $\langle H1, H2, T_i, T_j: (H1 \sqsubseteq H2) \land (T_i \in \{\live{H1} \cup \live{H2}\}) \land (\finenb{i}{H1}) \implies (\finenb{i}{H2}) \rangle$. 
\end{lemma} \todo{Proof to be added here}
}

\begin{lemma}
\label{lem:fin-fut}
Consider two histories $H1$ and $H2$ with $H2$ being an extension of $H1$. Let  $T_i$ and $T_j$ be two transactions which are live in $H1$ and $H2$ respectively. Suppose $T_i$ is \finen in $H1$. Let $T_i$ be an \inc of $T_j$ and $\tcts{i}$ is less than $\tcts{j}$. Then $T_j$ is \finen in $H2$ as well. Formally, $\langle H1, H2, T_i, T_j: (H1 \sqsubseteq H2) \land (T_i \in \live{H1}) \land (T_j \in \live{H2}) \land (T_i \in \incs{j}{H2}) \land (\htcts{i}{H1} < \htcts{j}{H2}) \land (\finenb{i}{H1}) \implies (\finenb{j}{H2}) \rangle$. 
\end{lemma}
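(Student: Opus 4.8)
The plan is to verify the three defining conditions of $\finenb{j}{H2}$ one at a time, reusing the structural lemmas already established for \cdsen and \affwts. The first condition, that $T_j$ is live in $H2$, is given directly in the hypotheses. For the second condition, $\cdsenb{j}{H2}$, I would observe that $\finenb{i}{H1}$ implies $\cdsenb{i}{H1}$ by definition, so that together with $(H1 \sqsubseteq H2)$, $T_i \in \live{H1}$, $T_j \in \live{H2}$, $T_i \in \incs{j}{H2}$, and $\htcts{i}{H1} < \htcts{j}{H2}$, all the hypotheses of \lemref{cds-fut} are met; applying it yields $\cdsenb{j}{H2}$ immediately.

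The third and only nontrivial condition is $\htwts{j}{H2} > \haffwts{j}{H2}$. Here I would proceed in two moves. First, since $T_i$ is \cdsen in $H1$, $H2$ extends $H1$, $T_j \in \txns{H2}$ (as it is live), and $T_i \in \incs{j}{H2}$, \lemref{aff-same} gives $\haffset{i}{H1} = \haffset{j}{H2}$; feeding this equality into \lemref{affwts-same} then gives $\haffwts{i}{H1} = \haffwts{j}{H2}$. Second, because $T_i$ is an \inc of $T_j$ we have $\htits{i}{H1} = \htits{j}{H2}$, and from $\htcts{i}{H1} < \htcts{j}{H2}$ together with the definition of \wts in \eqnref{wtsf}, a short computation of the difference $\htwts{j}{H2} - \htwts{i}{H1} = (1+C)(\htcts{j}{H2} - \htcts{i}{H1}) > 0$ shows $\htwts{j}{H2} > \htwts{i}{H1}$ (using $C > 0$).

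Finally I would chain these facts: from $\finenb{i}{H1}$ we have $\htwts{i}{H1} > \haffwts{i}{H1}$, so
\[
\htwts{j}{H2} > \htwts{i}{H1} > \haffwts{i}{H1} = \haffwts{j}{H2},
\]
which is exactly the third condition. All three conditions then hold, establishing $\finenb{j}{H2}$.

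The argument is essentially a bookkeeping assembly of prior lemmas, so I do not anticipate a genuine obstacle. The only place demanding care is confirming that the hypotheses of \lemref{aff-same} and \lemref{cds-fut} are literally satisfied---in particular that $T_j \in \txns{H2}$ follows from liveness, and that the strictness hypothesis $\htcts{i}{H1} < \htcts{j}{H2}$ is precisely what drives both the \cdsen-preservation step and the strict \wts inequality. The monotonicity of \wts in \cts at fixed \its is immediate from \eqnref{wtsf}, so the computation carries no surprises.
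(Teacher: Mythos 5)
Your proposal is correct and follows essentially the same route as the paper's own proof: reduce \finen-preservation to \cdsen-preservation via \lemref{cds-fut}, transfer the \affwts value across incarnations via \lemref{aff-same} (and \lemref{affwts-same}), derive $\htwts{i}{H1} < \htwts{j}{H2}$ from the \cts inequality and \eqnref{wtsf}, and chain the inequalities. If anything, your version is slightly cleaner, since you explicitly invoke \lemref{affwts-same} where the paper jumps directly from the \affset equality to the \affwts equality, and you spell out the $(1+C)(\htcts{j}{H2}-\htcts{i}{H1})$ computation that the paper leaves implicit.
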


\begin{proof}
Here we are given that $T_j$ is live in $H2$. Since $T_i$ is \finen in $H1$, we get that it is \cdsen in $H1$ as well. Combining this with the conditions given in the lemma statement, we have that, \\ 

\begin{equation}
\label{eq:fin-given}
\begin{split}
\langle (H1 \sqsubseteq H2) \land (T_i \in \live{H1}) \land (T_j \in \live{H2}) \land (T_i \in \incs{j}{H2}) \land (\htcts{i}{H1} < \htcts{j}{H2}) \\
\land (\cdsenb{i}{H1}) \rangle
\end{split}
\end{equation}

Combining \eqnref{fin-given} with \lemref{cds-fut}, we get that $T_j$ is \cdsen in $H2$, i.e., $\cdsenb{j}{H2}$. Now, in order to show that $T_j$ is \finen in $H2$ it remains for us to show that $\htwts{j}{H2} > \haffwts{j}{H2}$.

We are given that $T_j$ is live in $H2$ which in turn implies that $T_j$ is in $\txns{H2}$. Thus changing this in \eqnref{fin-given}, we get the following 
\begin{equation}
\label{eq:mod-given}
\begin{split}
\langle (H1 \sqsubseteq H2) \land (T_j \in \txns{H2}) \land (T_i \in \incs{j}{H2}) \land (\htcts{i}{H1} < \htcts{j}{H2}) \\
\land (\cdsenb{i}{H1}) \rangle
\end{split}
\end{equation}

\noindent Combining \eqnref{mod-given} with \lemref{aff-same} we get that 
\begin{equation}
\label{eq:affs-eq}
\haffwts{i}{H1} = \haffwts{j}{H2}
\end{equation}

\noindent We are given that $\htcts{i}{H1} < \htcts{j}{H2}$. Combining this with the definition of \wts, we get 
\begin{equation}
\label{eq:titj-wts}
\htwts{i}{H1} < \htwts{j}{H2}
\end{equation}

\noindent Since $T_i$ is \finen in $H1$, we have that \\
$\htwts{i}{H1} > \haffwts{i}{H1} \xrightarrow{\eqnref{titj-wts}} \htwts{j}{H2} > \haffwts{i}{H1} \xrightarrow{\eqnref{affs-eq}} \htwts{j}{H2} > \\
\haffwts{j}{H2}$

\end{proof}

\noindent Now, we show that a transaction that is \finen will eventually commit. 


\begin{lemma}
\label{lem:enbd-ct}
Consider a live transaction $T_i$ in a history $H1$. Suppose $T_i$ is \finen in $H1$ and $\tval{i}$ is true in $H1$. Then there exists an extension of $H1$, $H3$ in which $T_i$ is committed. Formally, $\langle H1, T_i: (T_i \in \live{H1}) \land (\htval{i}{H1}) \land (\finenb{i}{H1}) \implies (\exists H3: (H1 \sqsubset H3) \land (T_i \in \comm{H3})) \rangle$. 
\end{lemma}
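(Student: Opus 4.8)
The plan is to argue by contradiction, assuming that $T_i$ never commits. Since $T_i$ is live in $H1$ with $\htval{i}{H1}$ true, Theorem~\ref{thm:trans-com|abt} guarantees that $T_i$ terminates in some extension $H3$ of $H1$; under the contradiction hypothesis this termination must be an abort, so $T_i \in \aborted{H3}$. The goal is to show that this abort is incompatible with $T_i$ being \finen, by exhibiting inside $T_i$'s affect-set a transaction whose \wts exceeds $\htwts{i}{}$.

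First I would unpack the enablement hypotheses. Because $T_i$ is \finen in $H1$, it is \cdsen in $H1$ by the definition of \finen, and hence \itsen in $H1$ by \lemref{its-enb}. This is exactly the precondition of \lemref{its-wts}. Applying that lemma to the abort of $T_i$ in $H3$ yields an intermediate history $H2$ with $H1 \sqsubseteq H2 \sqsubset H3$ such that (i) $T_i$ is still live in $H2$ and (ii) there is a transaction $T_j$ in $H2$ with $\htwts{i}{H2} < \htwts{j}{H2}$.

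The heart of the argument is to place this $T_j$ inside $T_i$'s affect-set in $H2$ and then derive the contradiction. Since $T_i$ is live in $H2$ and $\htwts{j}{H2} \ge \htwts{i}{H2}$, \lemref{wts-its} bounds $\htits{j}{H2}$, giving $T_j \in \haffset{i}{H2}$; consequently, using \obsref{max-wts}, $\haffwts{i}{H2} \ge \hmaxwts{j}{H2} \ge \htwts{j}{H2} > \htwts{i}{H2}$. On the other hand, $T_i$ is \cdsen in $H1$ and, being trivially its own \inc, is live in $H2$, so \lemref{aff-same} (with $T_j := T_i$) gives $\haffset{i}{H1} = \haffset{i}{H2}$ and then \lemref{affwts-same} gives $\haffwts{i}{H1} = \haffwts{i}{H2}$. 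Because \wts is fixed once a transaction begins, $\htwts{i}{H1} = \htwts{i}{H2}$. Chaining these facts with the \finen inequality $\htwts{i}{H1} > \haffwts{i}{H1}$ produces $\htwts{i}{H2} = \htwts{i}{H1} > \haffwts{i}{H1} = \haffwts{i}{H2} \ge \htwts{j}{H2} > \htwts{i}{H2}$, i.e.\ $\htwts{i}{H2} > \htwts{i}{H2}$, a contradiction. Hence $T_i$ cannot abort and must commit in some extension $H3$.

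The main obstacle I anticipate is the boundary case in the affect-set membership: \lemref{wts-its} as stated only delivers the non-strict bound $\htits{j}{H2} \le \htits{i}{H2} + 2L$, whereas $\haffset{i}{H2}$ is defined with a strict inequality. I would close this gap by re-running the last step of the derivation inside the proof of \lemref{wts-its} with the strict hypothesis $\htwts{i}{H2} < \htwts{j}{H2}$ that \lemref{its-wts} supplies, which turns the conclusion into the strict bound $\htits{j}{H2} < \htits{i}{H2} + 2L$. A secondary point requiring care is the invariance of $\haffset{i}{}$ and $\haffwts{i}{}$ between $H1$ and $H2$; this is precisely where keeping $T_i$ live in $H2$ (a conclusion, not an assumption, of \lemref{its-wts}) is essential, since it is what lets me instantiate \lemref{aff-same} and \lemref{affwts-same} with $T_j := T_i$.
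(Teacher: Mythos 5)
Your proposal is correct and follows the same skeleton as the paper's own proof: assume an abort for contradiction, descend from \finen to \cdsen to \itsen via \lemref{its-enb}, apply \lemref{its-wts} to obtain an intermediate history $H2$ and a transaction $T_j$ with $\htwts{i}{H2} < \htwts{j}{H2}$, use \lemref{wts-its} to place $T_j$ in $\haffset{i}{H2}$, and then contradict the \finen inequality via \obsref{max-wts}. The one place you genuinely diverge is in transporting the \finen inequality from $H1$ to $H2$. The paper does this by invoking \lemref{fin-fut} with the same transaction playing both roles; but that lemma's hypotheses require $T_j$ to be a strictly later incarnation with $\htcts{i}{H1} < \htcts{j}{H2}$, which is unsatisfiable when $T_j = T_i$ (the same-transaction variant of that lemma exists only as a commented-out, unproved statement in the source). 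Your route instead inlines the propagation: \lemref{aff-same} instantiated with $T_j := T_i$ --- legitimate, since $T_i \in \incs{i}{H2}$ by the definition of \incset --- followed by \lemref{affwts-same}, together with the invariance of \wts after \begt, yields exactly $\htwts{i}{H2} > \haffwts{i}{H2}$, which is all the contradiction needs; so your version is sounder at this step. You also correctly identify and repair a boundary gap that the paper's proof silently steps over: \lemref{wts-its} delivers only $\htits{j}{H2} \leq \htits{i}{H2} + 2*L$, while membership in $\haffset{i}{H2}$ requires the strict inequality, and rerunning the computation inside \lemref{wts-its} with the strict hypothesis $\htwts{i}{H2} < \htwts{j}{H2}$ indeed yields $\htits{j}{H2} < \htits{i}{H2} + 2*L$. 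In short: same architecture as the paper, but your writeup closes two small soundness holes present in the printed proof.
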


\begin{proof}
Consider a history $H3$ such that its \syst being greater than $\tcts{i} + L$. We will prove this lemma using contradiction. Suppose $T_i$ is aborted in $H3$. 

Now consider $T_i$ in $H1$: $T_i$ is live; its \val flag is true; and is \enbd. From the definition of \finen, we get that it is also \cdsen. From \lemref{its-enb}, we get that $T_i$ is \itsen in $H1$. Thus from \lemref{its-wts}, we get that there exists an extension of $H1$, $H2$ such that (1) Transaction $T_i$ is live in $H2$; (2) there is a transaction $T_j$ in ${H2}$; (3) $\htwts{j}{H2}$ is greater than $\htwts{i}{H2}$; (4) $T_j$ is committed in $H3$. Formally, 

\begin{equation}
\label{eq:its-wts-ant}
\begin{split}
\langle (\exists H2, T_j: (H1 \sqsubseteq H2 \sqsubset H3) \land (T_i \in \live{H2}) \land (T_j \in \txns{H2}) \land (\htwts{i}{H2} < \htwts{j}{H2}) \\
\land (T_j \in \comm{H3})) \rangle
\end{split}
\end{equation}

Here, we have that $H2$ is an extension of $H1$ with $T_i$ being live in both of them and $T_i$ is \finen in $H1$. Thus from \lemref{fin-fut}, we get that $T_i$ is \finen in $H2$ as well. Now, let us consider $T_j$ in $H2$. From \eqnref{its-wts-ant}, we get that $(\htwts{i}{H2} < \htwts{j}{H2})$. Combining this with the observation that $T_i$ being live in $H2$, \lemref{wts-its} we get that $(\htits{j}{H2} \leq \htits{i}{H2} + 2*L)$.


This implies that $T_j$ is in \affset of $T_i$ in $H2$, i.e., $(T_j \in \haffset{i}{H2})$. From the definition of \affwts, we get that 

\begin{equation}
\label{eq:max-affwts}
(\haffwts{i}{H2} \geq \hmaxwts{j}{H2}) 
\end{equation}

Since $T_i$ is \finen in $H2$, we get that $\twts{i}$ is greater than \affwts of $T_i$ in $H2$. 
\begin{equation}
\label{eq:wts-affwts}
(\htwts{i}{H2} > \haffwts{i}{H2}) 
\end{equation}

Now combining Equations \ref{eq:max-affwts}, \ref{eq:wts-affwts} we get,
\begin{align*}
\htwts{i}{H2} & > \haffwts{i}{H2} \geq \hmaxwts{j}{H2} & \\
& > \haffwts{i}{H2} \geq \hmaxwts{j}{H2} \geq \htwts{j}{H2}  & [\text{From \obsref{max-wts}}] \\
& > \htwts{j}{H2}
\end{align*}

But this equation contradicts with \eqnref{its-wts-ant}. Hence our assumption that $T_i$ will get aborted in $H3$ after getting \finen is not possible. Thus $T_i$ has to commit in $H3$.
\end{proof}

\noindent Next we show that once a transaction $T_i$ becomes \itsen, it will eventually become \finen as well and then committed. We show this change happens in a sequence of steps. We first show that Transaction $T_i$ which is \itsen first becomes \cdsen (or gets committed). We next show that $T_i$ which is \cdsen becomes \finen or get committed. On becoming \finen, we have already shown that $T_i$ will eventually commit. 

Now, we show that a transaction that is \itsen will become \cdsen or committed. To show this, we introduce a few more notations and definitions. We start with the notion of \emph{\depits (dependent-its)} which is the set of \its{s} that a transaction $T_i$ depends on to commit. It is the set of \its of all the transactions in $T_i$'s \cdset in a history $H$. Formally, 

\begin{equation*}
\hdep{i}{H} = \{\htits{j}{H}|T_j \in \hcds{i}{H}\}
\end{equation*}

\noindent We have the following lemma on the \depits of a transaction $T_i$ and its future \inc $T_j$ which states that \depits of a $T_i$ either reduces or remains the same. 

\begin{lemma}
\label{lem:depits-fut}
Consider two histories $H1$ and $H2$ with $H2$ being an extension of $H1$. Let  $T_i$ and $T_j$ be two transactions which are live in $H1$ and $H2$ respectively and $T_i$ is an \inc of $T_j$. In addition, we also have that $\tcts{i}$ is greater than $\tits{i} + 2*L$ in $H1$. Then, we get that $\hdep{j}{H2}$ is a subset of $\hdep{i}{H1}$. Formally, $\langle H1, H2, T_i, T_j: (H1 \sqsubseteq H2) \land (T_i \in \live{H1}) \land (T_j \in \live{H2}) \land (T_i \in \incs{j}{H2}) \land (\htcts{i}{H1} \geq \htits{i}{H1} + 2*L) \implies (\hdep{j}{H2} \subseteq \hdep{i}{H1}) \rangle$. 
\end{lemma}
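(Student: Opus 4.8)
The plan is to prove $\hdep{j}{H2} \subseteq \hdep{i}{H1}$ by an element-wise argument. Since both sides are sets of \its values harvested from the respective \cdset{s}, I would fix an arbitrary $\tau \in \hdep{j}{H2}$ and show $\tau \in \hdep{i}{H1}$. By the definition of \depits, there is a transaction $T_k \in \hcds{j}{H2}$ with $\htits{k}{H2} = \tau$; unfolding the definition of \cdset, this means $T_k \in \haffset{j}{H2}$ and $\neg\inct{k}{H2}$. The goal then becomes to exhibit a transaction $T_l \in \hcds{i}{H1}$ whose \its also equals $\tau$, which by definition places $\tau$ in $\hdep{i}{H1}$.

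The crucial step is to pull $T_k$ back into $H1$. Because $T_i$ is an \inc of $T_j$, their \its{s} coincide, i.e. $\htits{i}{H1} = \htits{j}{H2}$, and the hypothesis $\htcts{i}{H1} \geq \htits{i}{H1} + 2*L$ supplies exactly the precondition needed to invoke \lemref{aff-tkinc-h1}. Since $T_k \in \haffset{j}{H2}$, that lemma guarantees an \inc $T_l$ of $T_k$ that already lives in $\txns{H1}$. As $T_l$ and $T_k$ are \inc{s} of each other, their \its{s} agree, so $\htits{l}{H1} = \htits{k}{H2} = \tau$.

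It then remains to verify the two membership conditions for $T_l \in \hcds{i}{H1}$. For the \affset condition I would chain $\htits{l}{H1} = \htits{k}{H2} < \htits{j}{H2} + 2*L = \htits{i}{H1} + 2*L$, where the final equality again uses that $T_i, T_j$ are \inc{s}; together with $T_l \in \txns{H1}$ this yields $T_l \in \haffset{i}{H1}$. For the \incct condition I would argue by contradiction: if $\inct{l}{H1}$ held, then since $H1 \sqsubseteq H2$, $T_k \in \txns{H2}$, and $T_l \in \incs{k}{H2}$ (both sharing the same \its), \obsref{inct-fut} would force $\inct{k}{H2}$, contradicting $\neg\inct{k}{H2}$. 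Hence $\neg\inct{l}{H1}$, so $T_l \in \hcds{i}{H1}$ and $\tau = \htits{l}{H1} \in \hdep{i}{H1}$, completing the inclusion.

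The main obstacle is the pull-back step, namely ruling out that $T_k$ is a genuinely fresh transaction in $H2$ with no counterpart in $H1$; this is precisely where the hypothesis $\htcts{i}{H1} \geq \htits{i}{H1} + 2*L$ earns its keep, through \lemref{aff-tkinc-h1} (whose own argument relies on \corref{cts-syst} to show that any transaction started afresh after $H1$ would carry too large an \its to belong to the \affset). One caveat I would flag is the discrepancy between the strict inequality $>$ appearing in \lemref{aff-tkinc-h1} and the non-strict $\geq$ assumed here; I would resolve it either by re-checking the boundary case directly or by noting that the proof of \lemref{aff-tkinc-h1} goes through unchanged when $>$ is weakened to $\geq$.
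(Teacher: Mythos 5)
Your proof is correct, but it is organized differently from the paper's, and the comparison is instructive. The paper argues by contradiction: it picks an \its value in $\hdep{j}{H2} \setminus \hdep{i}{H1}$, asserts that the corresponding transaction $T_k$ must then have started afresh after $H1$, and uses \corref{cts-syst} to conclude $\htits{k}{H2} > \htits{j}{H2} + 2*L$, contradicting $T_k \in \haffset{j}{H2}$ (incidentally, the paper writes $\hdep{j}{H1}$ where it means $\hdep{i}{H1}$). Your route is a direct element-wise inclusion: you pull $T_k$ back to an \inc $T_l \in \txns{H1}$ by invoking \lemref{aff-tkinc-h1} as a black box (its proof packages exactly the freshness-versus-\corref{cts-syst} argument that the paper inlines), and then you verify both defining conditions of $T_l \in \hcds{i}{H1}$. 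The substantive gain is that your version closes a gap the paper leaves open: the paper's inference ``$\htits{k}{H2} \notin \hdep{i}{H1}$ implies $T_k$ starts afresh after $H1$'' ignores the other two ways membership in $\hdep{i}{H1}$ can fail, namely that an \inc of $T_k$ exists in $\txns{H1}$ but falls outside $\haffset{i}{H1}$ (impossible, as your \its chain shows, since $\htits{i}{H1} = \htits{j}{H2}$ makes the two \affset bounds identical), or that such an \inc lies inside $\haffset{i}{H1}$ but already has \incct true in $H1$ --- the case you rule out via \obsref{inct-fut}, since \incct true in $H1$ would propagate to $\inct{k}{H2}$ and contradict $T_k \in \hcds{j}{H2}$. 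Your handling of the $>$ versus $\geq$ mismatch in \lemref{aff-tkinc-h1} is also sound: in that lemma's proof the strictness originates from \corref{cts-syst}, giving $\htits{k}{H3} > \hsyst{H1} \geq \htcts{i}{H1} \geq \htits{i}{H1} + 2*L$ under the weakened hypothesis, so the strict conclusion survives --- which is precisely how the paper itself chains the inequalities inside its own proof of this lemma.
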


\begin{proof}
Suppose $\hdep{j}{H2}$ is not a subset of $\hdep{i}{H1}$. This implies that there is a transaction $T_k$ such that $\htits{k}{H2} \in \hdep{j}{H2}$ but $\htits{k}{H1} \notin \hdep{j}{H1}$. This implies that $T_k$ starts afresh after $H1$ in some history say $H3$ such that $H1 \sqsubset H3 \sqsubseteq H2$. Hence, from \corref{cts-syst} we get the following

\noindent
\begin{math}
\htits{k}{H3} > \hsyst{H1} \xrightarrow{\lemref{cts-syst}} \htits{k}{H3} > \htcts{i}{H1} \implies \htits{k}{H3} > \htits{i}{H1} + 2*L \xrightarrow{\htits{i}{H1} = \htits{j}{H2}} \htits{k}{H3} > \htits{j}{H2} + 2*L \xrightarrow[definitions]{\affset, \depits} \htits{k}{H2} \notin \hdep{j}{H2}
\end{math}

We started with $\tits{k}$ in $\hdep{j}{H2}$ and ended with $\tits{k}$ not in $\hdep{j}{H2}$. Thus, we have a contradiction. Hence, the lemma follows.

\end{proof}

\noindent Next we denote the set of committed transactions in $T_i$'s \affset in $H$ as \emph{\cis (commit independent set)}. Formally, 

\begin{equation*}
\hcis{i}{H} = \{T_j| (T_j \in \haffset{i}{H}) \land (\inct{j}{H}) \}
\end{equation*}

\noindent In other words, we have that $\hcis{i}{H} = \haffset{i}{H} - \hcds{i}{H}$. Finally, using the notion of \cis we denote the maximum of \maxwts of all the transactions in $T_i$'s \cis as \emph{\pawts} (partly affecting \wts). It turns out that the value of \pawts affects the commit of $T_i$ which we show in the course of the proof. Formally, \pawts is defined as 

\begin{equation*}
\hpawts{i}{H} = max\{\hmaxwts{j}{H}|(T_j \in \hcis{i}{H})\}
\end{equation*}

\noindent Having defined the required notations, we are now ready to show that a \itsen transaction will eventually become \cdsen. 

\begin{lemma}
\label{lem:its-cds}
Consider a transaction $T_i$ which is live in a history $H1$ and $\tcts{i}$ is greater than or equal to $\tits{i} + 2*L$. If $T_i$ is \itsen in $H1$ then there is an extension of $H1$, $H2$ in which an \inc $T_i$, $T_j$ (which could be same as $T_i$), is either committed or \cdsen. Formally, $\langle H1, T_i: (T_i \in \live{H1}) \land (\htcts{i}{H1} \geq \htits{i}{H1} + 2*L) \land (\itsenb{i}{H1}) \implies (\exists H2, T_j: (H1 \sqsubset H2) \land (T_j \in \incs{i}{H2}) \land ((T_j \in \comm{H2}) \lor (\cdsenb{j}{H2}))) \rangle$. 
\end{lemma}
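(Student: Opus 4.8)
The plan is to prove the statement by strong induction on $n = |\hdep{i}{H1}|$, the size of the dependent-\its set of $T_i$. The base case is when this set is as small as it can be while $T_i$ is live, which turns out to coincide exactly with $\cdsenb{i}{H1}$; the inductive step shows that whenever the incarnation set of $T_i$ is not yet \cdsen, one further \emph{distinct} application-transaction in $T_i$'s \affset is forced to commit, strictly shrinking the dependent set, so that the induction hypothesis applies to the next incarnation.

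Before the induction I would record two preliminaries. First, since $T_i$ is live it is not committed, and no earlier incarnation committed (otherwise the application would not have invoked $T_i$), so $\neg\inct{i}{H1}$; as $T_i \in \haffset{i}{H1}$ this gives $\tits{i} \in \hdep{i}{H1}$, hence $n \geq 1$. Second, I would show that $\htval{i}{H1}$ is true. The flag \gval of a live transaction is set to false only when some live transaction $T_a$ with strictly smaller \its aborts it (\Lineref{addAbl-lar}, \Lineref{addAbl-sml} and \Lineref{gval-set} of \tryc). But $\itsenb{i}{H1}$ forces every transaction with \its below $\tits{i}$ to have \incct true, and by \obsref{cmt-noinc} such an application-transaction has no live incarnation, so no such $T_a$ exists. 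This validity fact is what lets me later invoke \lemref{its-wts}, which requires a live, valid, \itsen transaction.

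For the base case $n = 1$, the single member of $\hdep{i}{H1}$ is $\tits{i}$, so every application-transaction in $\haffset{i}{H1}$ other than $T_i$ has already committed; together with $T_i$ live and $\htcts{i}{H1} \geq \htits{i}{H1} + 2L$ this is precisely $\cdsenb{i}{H1}$. I then take a strict extension $H2$ via \obsref{hist-future} and distinguish three cases: if an incarnation of $T_i$ has committed in $H2$ we are done with $T_j$ that incarnation; if $T_i$ is still the live incarnation then \cdsen persists, since its \cts, \its and therefore \affset are unchanged while commits are permanent so the dependent set cannot grow; and if $T_i$ aborted and re-incarnated as some live $T_j$ then $\cdsenb{j}{H2}$ follows from \lemref{cds-fut} (here $\htcts{i}{H1} < \htcts{j}{H2}$ holds because \cts strictly increases across incarnations).

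For the inductive step $n > 1$, I would first extend $H1$ until $T_i$'s current incarnation terminates, which happens in finite time by \thmref{trans-com|abt}. If it commits we are done. Otherwise it aborts in some $H3$, and \lemref{its-wts} (applicable by the validity fact above) yields a transaction $T_j$ that is live in an intermediate $H2$ and committed in $H3$ with $\htwts{j}{H2} > \htwts{i}{H2}$; by \lemref{wts-its} this gives $\htits{j}{H2} < \htits{i}{H2} + 2L$, so $T_j \in \haffset{i}$. The step I expect to be the main obstacle is establishing $\tits{j} \neq \tits{i}$: were they equal, $T_j$ and the aborting incarnation of $T_i$ would be two simultaneously live incarnations of the \emph{same} application-transaction, which the invocation model forbids. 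Granting this, $\tits{j}$ lay in $\hdep{i}{H1}$ but, $T_j$'s application-transaction now having committed, is absent from the dependent set of the retried incarnation; naming that retry $T_l$ (live in $H3$, or in a further extension obtained from \obsref{abort-retry}), \lemref{depits-fut} gives $\hdep{l}{H3} \subseteq \hdep{i}{H1}$ together with $\tits{j} \notin \hdep{l}{H3}$, so $|\hdep{l}{H3}| < n$. Finally $T_l$ is again live, \itsen (its \its equals $\tits{i}$ and by \obsref{inct-fut} the transactions below it stay committed), and satisfies $\htcts{l}{H3} \geq \htits{l}{H3} + 2L$ since \cts only grows, so the induction hypothesis applied to $T_l$ in $H3$ delivers the required committed-or-\cdsen incarnation of $T_i$.
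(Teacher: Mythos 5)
Your proposal follows the same route as the paper's proof: strong induction on $|\hdep{i}{H1}|$, with \lemref{its-wts} supplying a committed conflicting transaction whenever an \inc aborts, that commit shrinking the dependent set, and the induction hypothesis applied to the retry via \lemref{depits-fut}. Two of your refinements are sound and in fact slightly more careful than the paper: placing the base case at $n=1$ rather than $n=0$ (you are right that a live $T_i$ always contributes $\tits{i}$ to $\hdep{i}{H1}$, so \cdsen must be read as excluding $T_i$'s own \incset), and explicitly verifying $\htval{i}{H1}$ before invoking \lemref{its-wts}. However, the step you single out as ``the main obstacle,'' namely $\tits{j}\neq\tits{i}$, is actually a non-issue: if $\tits{j}=\tits{i}$ then $T_j$ is a committed \inc of $T_i$ and the lemma's conclusion holds outright, no contradiction needed.

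The genuine gap is the unjustified sentence ``$\tits{j}$ lay in $\hdep{i}{H1}$,'' and it is exactly where the induction can stall. \lemref{its-wts} only guarantees $T_j\in\txns{H2}$ with $\htwts{j}{H2}>\htwts{i}{H2}$ and $T_j\in\comm{H3}$; nothing prevents $T_j$ from being a transaction that had \emph{already committed in $H1$} (in the proof of \lemref{its-wts}, e.g.\ the cases of \Lineref{prev-nil} and \Lineref{its-chk1}, the culprit is any committed creator or reader of a higher-timestamped version, which may have committed arbitrarily long ago). For such a $T_j$, $\inct{j}{H1}$ is already true, so $T_j\in\hcis{i}{H1}$ and $\tits{j}\notin\hdep{i}{H1}$; its commit removes nothing from the dependent set, $|\hdep{l}{H3}|$ may still equal $n$, and successive retries can keep aborting against long-committed transactions without your induction measure ever decreasing. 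This is precisely why the paper's induction step first invokes \lemref{wts-great} with $\alpha=\hpawts{i}{H1}$: it passes to an \inc $T_l$ whose \wts exceeds $\alpha$, so that any later culprit $T_m$ (whose \wts exceeds $\twts{l}$, hence exceeds $\alpha$) cannot be an \inc of any member of $\hcis{i}{H1}$ --- by \obsref{max-wts} and \lemref{inct-diff} every such \inc lies in $\txns{H1}$ and has \wts at most $\alpha$ --- and must therefore correspond to a member of $\hcds{i}{H1}$, whose commit does strictly shrink the set. A second omission is bundled into the same sentence: you also need some \inc of $T_j$ to exist in $\txns{H1}$ at all, since $\haffset{i}{H1}$ ranges only over $\txns{H1}$; that is the paper's \lemref{aff-tkinc-h1}, which rules out a fresh start after $H1$ using the hypothesis $\htcts{i}{H1}\geq\htits{i}{H1}+2L$. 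With the \wts-boosting step and \lemref{aff-tkinc-h1} supplied, your argument becomes the paper's.
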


\begin{proof}
We prove this by inducting on the size of $\hdep{i}{H1}$, $n$. For showing this, we define a boolean function $P(k)$ as follows:

\begin{math}
P(k) = \begin{cases}
True & \langle H1, T_i: (T_i \in \live{H1}) \land (\htcts{i}{H1} \geq \htits{i}{H1} + 2*L) \land (\itsenb{i}{H1}) \land \\
& (k \geq |\hdep{i}{H1}|) \implies (\exists H2, T_j: (H1 \sqsubset H2) \land (T_j \in \incs{i}{H2}) \land \\
& ((T_j \in \comm{H2}) \lor (\cdsenb{j}{H2}))) \rangle \\
False & \text{otherwise}
\end{cases}
\end{math}

As can be seen, here $P(k)$ means that if (1) $T_i$ is live in $H1$; (2) $\tcts{i}$ is greater than or equal to $\tits{i} + 2*L$; (3) $T_i$ is \itsen in $H1$ (4) the size of $\hdep{i}{H1}$ is less than or equal to $k$;  then there exists a history $H2$ with a transaction $T_j$ in it which is an \inc of $T_i$ such that $T_j$ is either committed or \cdsen in $H2$. We show $P(k)$ is true for all (integer) values of $k$ using induction. 

\vspace{1mm}
\noindent
\textbf{Base Case - $P(0)$:} Here, from the definition of $P(0)$, we get that $|\hdep{i}{H1}| = 0$. This in turn implies that $\hcds{i}{H1}$ is null. Further, we are already given that $T_i$ is live in $H1$ and $\htcts{i}{H1} \geq \htits{i}{H1} + 2*L$. Hence, all these imply that $T_i$ is \cdsen in $H1$. 

\vspace{1mm}
\noindent
\textbf{Induction case - To prove $P(k+1)$ given that $P(k)$ is true:} If $|\hdep{i}{H1}| \leq k$, from the induction hypothesis $P(k)$, we get that $T_j$ is either committed or \cdsen in $H2$. Hence, we consider the case when 

\begin{equation}
\label{eq:hdep-assn}
|\hdep{i}{H1}| = k + 1
\end{equation}

Let $\alpha$ be $\hpawts{i}{H1}$. Suppose $\htwts{i}{H1} < \alpha$. Then from \lemref{wts-great}, we get that there is an extension of $H1$, say $H3$ in which an \inc of $T_i$, $T_l$ (which could be same as $T_i$) is committed or is live in $H3$ and has \wts greater than $\alpha$. If $T_l$ is committed then $P(k+1)$ is trivially true. So we consider the latter case in which $T_l$ is live in $H3$. In case $\htwts{i}{H1} \geq \alpha$, then in the analysis below follow where we can replace $T_l$ with $T_i$.

Next, suppose $T_l$ is aborted in an extension of $H3$, $H5$. Then from \lemref{its-wts}, we get that there exists an extension of $H3$, $H4$ in which (1) $T_l$ is live; (2) there is a transaction $T_m$ in $\txns{H4}$; (3) $\htwts{m}{H4} > \htwts{l}{H4}$ (4) $T_m$ is committed in $H5$. 

Combining the above derived conditions (1), (2), (3) with \lemref{ti|tltl-comt} we get that in $H4$, 

\begin{equation}
\label{eq:ml-tits}
\htits{m}{H4} \leq \htits{l}{H4} + 2*L
\end{equation}

\eqnref{ml-tits} implies that $T_m$ is in $T_l$'s \affset. Here, we have that $T_l$ is an \inc of $T_i$ and we are given that $\htcts{i}{H1} \geq \htits{i}{H1} + 2*L$. Thus from \lemref{aff-tkinc-h1}, we get that there exists an \inc of $T_m$, $T_n$ in $H1$.

Combining \eqnref{ml-tits} with the observations (a) $T_n, T_m$ are \inc{s}; (b) $T_l, T_i$ are \inc{s}; (c) $T_i, T_n$ are in $\txns{H1}$, we get that $\htits{n}{H1} \leq \htits{i}{H1} + 2*L$. This implies that $T_n$ is in $\haffset{i}{H1}$. Since $T_n$ is not committed in $H1$ (otherwise, it is not possible for $T_m$ to be an \inc of $T_n$), we get that $T_n$ is in $\hcds{i}{H1}$. Hence, we get that $\htits{m}{H4} = \htits{n}{H1}$ is in $\hdep{i}{H1}$.  

From \eqnref{hdep-assn}, we have that $\hdep{i}{H1}$ is $k+1$. From \lemref{depits-fut}, we get that $\hdep{i}{H4}$ is a subset of $\hdep{i}{H1}$. Further, we have that transaction $T_m$ has committed. Thus $\htits{m}{H4}$ which was in $\hdep{i}{H1}$ is no longer in $\hdep{i}{H4}$. This implies that $\hdep{i}{H4}$ is a strict subset of $\hdep{i}{H1}$ and hence $|\hdep{i}{H4}| \leq k$. 

\noindent Since $T_i$ and $T_l$ are \inc{s}, we get that $\hdep{i}{H4} = \hdep{l}{H1}$. Thus, we get that 

\begin{equation}
\label{eqn:hdep-ilh4}
|\hdep{i}{H4}| \leq k \implies |\hdep{l}{H4}| \leq k
\end{equation}

\noindent Further, we have that $T_l$ is a later \inc of $T_i$. So, we get that

\begin{equation}
\label{eqn:cts-its}
\htcts{l}{H4} > \htcts{i}{H4} \xrightarrow{given} \htcts{l}{H4} > \htits{i}{H4} + 2*L \xrightarrow{\htits{i}{H4} = \htits{l}{H4}} \htcts{l}{H4} > \htits{l}{H4} + 2*L
\end{equation}

We also have that $T_l$ is live in $H4$. Combining this with Equations \ref{eqn:hdep-ilh4}, \ref{eqn:cts-its} and given the induction hypothesis that $P(k)$ is true, we get that there exists a history extension of $H4$, $H6$ in which an \inc of $T_l$ (also $T_i$), $T_p$ is either committed or \cdsen. This proves the lemma.
\end{proof}

\begin{lemma}
\label{lem:cds-fin}
Consider a transaction $T_i$ in a history $H1$. If $T_i$ is \cdsen in $H1$ then there is an extension of $H1$, $H2$ in which an \inc $T_i$, $T_j$ (which could be same as $T_i$), is either committed or \finen. Formally, $\langle H1, T_i: (T_i \in \live{H}) \land (\cdsenb{i}{H1}) \implies (\exists H2, T_j: (H1 \sqsubset H2) \land (T_j \in \incs{i}{H2}) \land ((T_j \in \comm{H2}) \lor (\finenb{j}{H2})) \rangle$. 
\end{lemma}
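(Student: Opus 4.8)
The plan is to exploit the fact that, once $T_i$ is \cdsen, the \emph{only} defining condition of \finen it can still be missing is the working-timestamp condition $\htwts{}{ } > \haffwts{}{ }$, and that this gap can be closed simply by letting $T_i$ retry until its \wts overtakes a now-frozen bound. So the first thing I would establish is that the quantity $\alpha = \haffwts{i}{H1}$ is effectively immutable. Since $T_i$ is \cdsen in $H1$, every application-transaction in $T_i$'s \affset has already committed an incarnation, so by \obsref{cmt-noinc} none of them can spawn a new live incarnation, and by the monotonicity of \incct (\obsref{inct-fut}) they remain committed in every extension. Hence for any incarnation $T_j$ of $T_i$ in an extension $H2$, \lemref{aff-same} gives $\haffset{i}{H1} = \haffset{j}{H2}$, and then \lemref{affwts-same} gives $\haffwts{j}{H2} = \haffwts{i}{H1} = \alpha$.

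Next I would reduce the goal to producing, in some strict extension $H2$, an incarnation $T_j$ of $T_i$ that either commits or is live with $\htwts{j}{H2} > \alpha$. This is precisely the conclusion of \lemref{wts-great}: applying it with the constant $\alpha$ yields an extension $H2$ and an incarnation $T_j$ such that either $T_j \in \comm{H2}$ (in which case we are immediately done, taking this $T_j$ as the required committed incarnation) or $T_j \in \live{H2}$ with $\htwts{j}{H2} > \alpha$. The underlying intuition is that \wts strictly increases with each retry (because \cts does), and by \thmref{trans-com|abt} each incarnation terminates in bounded time, so after finitely many aborts either an incarnation commits or the \wts crosses $\alpha$.

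Finally, in the live case I would verify the three defining conditions of \finen for $T_j$ in $H2$. Condition (1), that $T_j$ is live, is delivered directly by \lemref{wts-great}. Condition (2), that $T_j$ is \cdsen, follows from \lemref{cds-fut}: $T_i$ is \cdsen in $H1$ and $T_j$ is a later incarnation with $\htcts{i}{H1} < \htcts{j}{H2}$, so \cdsen is inherited. Condition (3), $\htwts{j}{H2} > \haffwts{j}{H2}$, follows by combining $\htwts{j}{H2} > \alpha$ with the frozen-\affwts identity $\haffwts{j}{H2} = \alpha$ from the first step. Hence $T_j$ is \finen in $H2$, which completes the argument.

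The step I expect to be the main obstacle is justifying that $\alpha$ is genuinely frozen, i.e. that the repeated retries of $T_i$ do not themselves inflate the bound $T_i$ must beat. This is exactly where the strength of the \cdsen hypothesis is needed: it guarantees that every member of the \affset other than $T_i$'s own (live/aborted) incarnations is already committed and immutable, so only $T_i$'s own incarnations are in flux, and \lemref{aff-same}/\lemref{affwts-same} package this stability cleanly. A secondary care-point is the edge case in which $T_i$'s current \wts already exceeds the contributions of the other \affset members (so \lemref{wts-great} does not directly apply); there I would instead let $T_i$ run to termination under \thmref{trans-com|abt} --- it either commits (done) or aborts, and \obsref{abort-retry} then produces a strictly later live incarnation with even larger \wts, which is \finen by the very same three-condition verification.
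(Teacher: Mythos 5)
Your proposal is correct and follows essentially the same route as the paper's own proof: fix $\alpha = \haffwts{i}{H1}$, invoke \lemref{wts-great} to obtain an extension containing an \inc of $T_i$ that is either committed or live with \wts above $\alpha$, and in the live case combine \lemref{cds-fut} (inheritance of \cdsen) with \lemref{aff-same} and \lemref{affwts-same} (the \affwts bound is frozen) to conclude \finen. The only differences are the cosmetic ordering of steps and your explicit handling of the edge case where $\htwts{i}{H1}$ already meets $\alpha$ --- a hypothesis of \lemref{wts-great} that the paper applies without checking --- which makes your version slightly more careful than the original.
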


\begin{proof}
In $H1$, suppose $\haffwts{i}{H1}$ is $\alpha$. From \lemref{wts-great}, we get that there is a extension of $H1$, $H2$ with a transaction $T_j$ which is an \inc of $T_i$. Here there are two cases: (1) Either $T_j$ is committed in $H2$. This trivially proves the lemma; (2) Otherwise, $\twts{j}$ is greater than $\alpha$. 

\noindent In the second case, we get that 

\begin{equation}
\label{eq:ext}
\begin{split}
(T_i \in \live{H1}) \land (T_j \in \live{H2}) \land (\cdsenb{i}{H}) \land (T_j \in \incs{i}{H2}) \land \\
(\htwts{i}{H1} < \htwts{j}{H2})
\end{split}
\end{equation}

\noindent Combining the above result with \lemref{cts-wts}, we get that $\htcts{i}{H1} < \htcts{j}{H2}$. Thus the modified equation is 

\begin{equation}
\label{eq:new-ext}
\begin{split}
(T_i \in \live{H1}) \land (T_j \in \live{H2}) \land (\cdsenb{i}{H1}) \land (T_j \in \incs{i}{H2}) \land \\
(\htcts{i}{H1} < \htcts{j}{H2})
\end{split}
\end{equation}

\noindent Next combining \eqnref{new-ext} with \lemref{aff-same}, we get that 
\begin{equation}
\label{eq:affs-ij}
\haffset{i}{H1} = \haffset{j}{H2}
\end{equation}

\noindent Similarly, combining \eqnref{new-ext} with \lemref{cds-fut} we get that $T_j$ is \cdsen in $H2$ as well. Formally, 
\begin{equation}
\label{eq:th-cdsen}
\cdsenb{j}{H2}
\end{equation}

Now combining \eqnref{affs-ij} with \lemref{affwts-same}, we get that 
\begin{equation}
\label{eq:affwts-same}
\haffwts{i}{H1} = \haffwts{j}{H2}
\end{equation}

From our initial assumption we have that $\haffwts{i}{H1}$ is $\alpha$. From \eqnref{affwts-same}, we get that $\haffwts{j}{H2} = \alpha$. Further, we had earlier also seen that $\htwts{j}{H2}$ is greater than $\alpha$. Hence, we have that $\htwts{j}{H2} > \haffwts{j}{H2}$. 

\noindent Combining the above result with \eqnref{th-cdsen}, $\cdsenb{j}{H2}$, we get that $T_j$ is \finen, i.e., $\finenb{j}{H2}$. 
\end{proof}

\noindent Next, we show that every live transaction eventually become \itsen. 

\begin{lemma}
\label{lem:live-its}
Consider a history $H1$ with $T_i$ be a transaction in $\live{H1}$. Then there is an extension of $H1$, $H2$ in which an \inc of $T_i$, $T_j$ (which could be same as $T_i$) is either committed or is \itsen. Formally, $\langle H1, T_i: (T_i\in \live{H}) \implies (\exists T_j, H2: (H1 \sqsubset H2) \land (T_j \in \incs{i}{H2}) \land (T_j \in \comm{H2}) \lor (\itsenb{i}{H})) \rangle$. 
\end{lemma}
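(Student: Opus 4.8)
The plan is to argue by (strong) induction on the number of \emph{blocking} \aptr{s} of $T_i$ in $H1$: the \its values $v$ with $v < \tits{i}$ whose carrying \incset has not yet committed. Two structural facts make this a sound induction measure. First, the count is finite and cannot grow: by \lemref{cts-syst} and \corref{cts-syst}, any transaction appearing only in a strict extension of $H1$ has \cts (hence \its, if it starts afresh) larger than $\hsyst{H1} \geq \tcts{i} \geq \tits{i}$, so no new transaction with \its below $\tits{i}$ can ever arise. Second, once an \aptr with a given \its commits, its \incct flag stays true in every extension (\obsref{inct-fut}) and no further \inc of it is invoked (\obsref{cmt-noinc}); hence the blocking count only decreases along extensions.

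For the base case (zero blockers), every transaction with \its below $\tits{i}$ already has \incct true, so $T_i$, being live, is \itsen in $H1$ by the definition of \itsen (Definition~\ref{defn:itsen}). To meet the requirement that $H2$ be a \emph{strict} extension, I would invoke \thmref{trans-com|abt}: $T_i$ terminates in finite time in some extension. If it commits, the first disjunct holds with $T_j = T_i$. Otherwise $T_i$ aborts, and by \obsref{abort-retry} a live \inc $T_j$ of $T_i$ exists in a further strict extension $H2$; since the lower \incct flags persist (\obsref{inct-fut}) and no new low-\its transaction can appear, $T_j$ is again \itsen in $H2$ (consistent with \lemref{itsen-future}).

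For the induction step ($k+1$ blockers), I would single out the blocker whose \its $v$ is smallest, say carried by $T_\ell$. Relative to $T_\ell$ the blocking set is empty (all \its below $v$ belong to committed \aptr{s}), so the base-case reasoning gives a live \inc of this \aptr that is \itsen. I then force the \aptr with \its $v$ to commit: retrying it strictly increases its \cts (\obsref{abort-retry}), so either it commits first, or some \inc $T_m$ eventually satisfies $\tcts{m} \geq \tits{m} + 2L$; for such an \inc the escalation lemmas apply in sequence --- \lemref{its-cds} gives committed or \cdsen, \lemref{cds-fin} gives committed or \finen, and \lemref{enbd-ct} (its \val flag being true as a freshly invoked live \inc) gives committed. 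Hence in some extension $H''$ the \aptr with \its $v$ is committed, and the blocking count of $T_i$ has dropped to at most $k$. If $T_i$ committed in the meantime we are done; otherwise a live \inc $T_j$ of $T_i$ exists in an extension (\obsref{abort-retry}), and the induction hypothesis applied to $T_j$ yields the required $H2$.

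The main obstacle I anticipate is the bookkeeping across incarnations and extensions in the induction step: I must show that driving the smallest-\its blocker to commit permanently removes exactly its \its from the blocking set (via \obsref{inct-fut}, \obsref{cmt-noinc}) while introducing no new blocker for $T_i$ (via \corref{cts-syst}), so that the measure genuinely decreases. A secondary technical point is the ``retry until $\tcts{m} \geq \tits{m} + 2L$'' argument, which relies on \cts being unbounded over infinitely many retries; this should follow from the strict increase of \cts in \obsref{abort-retry} together with the perpetual extendability of histories (\obsref{hist-future}), but it must be stated carefully so that the escalation lemmas are invoked on a legitimate \inc. Finally, because this lemma is placed after \lemref{its-cds}, \lemref{cds-fin}, and \lemref{enbd-ct}, using them here introduces no circularity.
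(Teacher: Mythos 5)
Your proposal is correct, and it takes a genuinely different --- and substantially more complete --- route than the paper. The paper's own proof inducts on the numeric value of \its: it dispatches the base case $\tits{i}=1$ (the first transaction is trivially \itsen) and then stops, leaving the induction step as a one-line stub that merely states the hypothesis for transactions with smaller \its. You induct instead on the number of uncommitted lower-\its \aptr{s} (your blocking count). That measure is well founded for exactly the reasons you give: it is finite, no transaction with \its below $\tits{i}$ can ever appear in an extension (\lemref{cts-syst}, \corref{cts-syst}), and committed \aptr{s} stay committed and are never reincarnated (\obsref{inct-fut}, \obsref{cmt-noinc}), so the count is non-increasing along extensions. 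Your induction step --- isolate the smallest-\its blocker, observe that it is itself \itsen by the base-case reasoning, pump the \cts of its live incarnation past its \its plus $2L$ by repeated retries, and then drive it to commit through \lemref{its-cds}, \lemref{cds-fin} and \lemref{enbd-ct} --- is precisely the content missing from the paper's induction case; the paper only performs this escalation chain later, in \thmref{hwtm-com}, where \itsen-ness is already assumed. Your measure is arguably the better one: an induction on the \its value would still need an inner iteration that commits the smaller-\its \aptr{s} one at a time, and your blocking count makes that iteration the induction itself. What the paper's skeleton buys is only a slightly simpler base case; what your argument buys is an actual proof, with no circularity, since none of the three escalation lemmas depends on \lemref{live-its}.

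Two points of rigor, both inherited from the paper rather than introduced by you. First, \lemref{enbd-ct} carries the precondition that the \val flag of the \finen transaction is true; your parenthetical justification (``freshly invoked live \inc'') is not accurate, because the \finen incarnation delivered by \lemref{cds-fin} need not be fresh, and its \val flag may already have been set to false by a conflicting \tryc executed before that transaction became \itsen. The paper's own \thmref{hwtm-com} chains the same three lemmas with the same omission, so your proof meets the paper's standard, but a fully rigorous version should add that a \finen transaction whose \val flag is false simply aborts, and its next incarnation is again \finen (\lemref{fin-fut}) with \val initially true. Second, your ``retry until $\tcts{m} \geq \tits{m} + 2L$'' step should be anchored to the fact that each reincarnation draws a strictly larger integer from \gtcnt, so \cts grows without bound over infinitely many retries; this is the same unboundedness argument the paper uses inside \lemref{wts-great}, and stating it explicitly closes the only soft spot you flagged yourself.
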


\begin{proof}
\noindent We prove this lemma by inducting on \its. 

\vspace{1mm}
\noindent
\textbf{Base Case - $\tits{i} = 1$:} In this case, $T_i$ is the first transaction to be created. There are no transactions with smaller \its. Thus $T_i$ is trivially \itsen. 

\vspace{1mm}
\noindent
\textbf{Induction Case:} Here we assume that for any transaction $\tits{i} \leq k$ the lemma is true. 
\end{proof}

Combining these lemmas gives us the result that for every live transaction $T_i$ there is an incarnation $T_j$ (which could be the same as $T_i$) that will commit. This implies that every \aptr eventually commits. The follow lemma captures this notion.

\begin{theorem}
\label{thm:hwtm-com}
Consider a history $H1$ with $T_i$ be a transaction in $\live{H1}$. Then there is an extension of $H1$, $H2$ in which an \inc of $T_i$, $T_j$ is committed. Formally, $\langle H1, T_i: (T_i\in \live{H}) \implies (\exists T_j, H2: (H1 \sqsubset H2) \land (T_j \in \incs{i}{H2}) \land (T_j \in \comm{H2})) \rangle$. 
\end{theorem}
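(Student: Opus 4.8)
The plan is to prove this theorem by chaining the sequence of ``enabled'' lemmas established above, following a single \aptr (equivalently, the \incset of $T_i$) as it advances through progressively stronger enabled states until one of its \inc{s} commits. Concretely, I would show that, starting from $T_i \in \live{H1}$, some \inc passes through the chain live $\to$ \itsen $\to$ \cdsen $\to$ \finen $\to$ committed, where each arrow is either an immediate commit (which already proves the theorem via the existence of the commit history) or a promotion justified by one of the lemmas, and where hitting the ``committed'' case at any stage short-circuits the entire argument.

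First I would invoke \lemref{live-its} on $T_i$ in $H1$: this yields an extension $H1 \sqsubset H_a$ and an \inc $T_a \in \incs{i}{H_a}$ that is either committed (done) or \itsen in $H_a$. Assuming the latter, I must next meet the side condition $\tcts{a} \geq \tits{a} + 2L$ required by \lemref{its-cds}. If $T_a$ already satisfies it, good; otherwise I would apply \obsref{abort-retry} repeatedly, each abort producing a new \inc with strictly larger \cts while \its is unchanged, so after finitely many retries some \inc $T_b$ has $\tcts{b} \geq \tits{b} + 2L$. Throughout these retries the \itsen property is preserved: the set of transactions with smaller \its is fixed, and once their \incct flags become true they remain true (\obsref{inct-fut}), which is exactly the condition for $T_b$ to stay \itsen. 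Applying \lemref{its-cds} then gives an extension with an \inc that is committed or \cdsen, and \lemref{cds-fin} promotes a \cdsen \inc to one that is committed or \finen.

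Finally, I would apply \lemref{enbd-ct} to conclude commit from \finen. Its hypothesis requires the \gval flag to be true; I would discharge this by noting that a \finen \inc is \cdsen, hence \itsen by \lemref{its-enb}, so there is no live transaction of smaller \its that could set its \gval flag to false (the same reasoning used in the first abort case of the proof of \lemref{its-wts}). Since a freshly invoked \inc begins with \gval $=T$ and this cannot be falsified while the \inc remains \itsen, the precondition holds, and \lemref{enbd-ct} delivers a committed \inc. The existence of the extensions needed at each hand-off is underwritten by \obsref{hist-future}.

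The main obstacle I anticipate is not any individual lemma but the bookkeeping that glues them together: each lemma is phrased for ``an \inc $T_j$ of $T_i$ in some extension,'' so I must carry the enabled predicate forward across both history extensions and reincarnations while re-verifying the \cts-ordering and \its-equality hypotheses at each step. The persistence results (\lemref{itsen-future}, \lemref{cds-fut}, \lemref{fin-fut}) are precisely what make the composition sound, guaranteeing that an enabled state attained for one \inc is inherited by the later, higher-\cts \inc to which the next lemma is applied; the delicate point is confirming that these hypotheses are satisfied at every junction so that the chain terminates on a committed \inc rather than producing an unbounded sequence of ever-higher incarnations.
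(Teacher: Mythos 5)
Your proposal is correct and takes essentially the same route as the paper's own proof: the identical chain live $\to$ \itsen $\to$ \cdsen $\to$ \finen $\to$ committed, applying \lemref{live-its}, \lemref{its-cds}, \lemref{cds-fin} and \lemref{enbd-ct} in that order, with an immediate commit short-circuiting the argument at every stage. You are in fact slightly more thorough than the paper, which invokes \lemref{its-cds} and \lemref{enbd-ct} without explicitly discharging their hypotheses (the \cts bound $\cts \geq \its + 2L$ and the \gval flag); your bookkeeping for these side conditions, via repeated reincarnation and the persistence lemmas, is sound and consistent with the reasoning already used inside \lemref{its-wts}.
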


\begin{proof}
Here we show the states that a transaction $T_i$ (or one of it its \inc{s}) undergoes before it commits. In all these transitions, it is possible that an \inc of $T_i$ can commit. But to show the worst case, we assume that no \inc of $T_i$ commits. Continuing with this argument, we show that finally an \inc of $T_i$ commits. 

Consider a live transaction $T_i$ in $H1$. Then from \lemref{live-its}, we get that there is a history $H2$, which is an extension of $H1$, in which $T_j$ an \inc of $T_i$ is either committed or \itsen. If $T_j$ is \itsen in $H2$, then from \lemref{its-cds}, we get that $T_k$, an \inc of $T_j$, will be \cdsen in a extension of $H2$, $H3$ (assuming that $T_k$ is not committed in $H3$). 

From \lemref{cds-fin}, we get that there is an extension of $H3$, $H4$ in which an \inc of $T_k$, $T_l$ will be \finen assuming that it is not committed in $H4$. Finally, from \lemref{enbd-ct}, we get that there is an extension of $H4$ in which $T_m$, an \inc of $T_l$, will be committed. This proves our theorem.
\end{proof}

\section{Discussion and Conclusion}
\label{sec:conc}


In this paper, we propose a $K$ version \emph{\stf} STM system, \emph{\ksftm}. The algorithm ensures that if an \emph{aborted} transaction is retried successively, then it will eventually commit. The algorithm maintains $K$ versions where $K$ can range from between one to infinity. For correctness, we show \ksftm{} satisfies strict-serializability \cite{Papad:1979:JACM} and local opacity \cite{KuzSat:NI:ICDCN:2014, KuzSat:NI:TCS:2016}. To the best of our knowledge, this is the first work to explore \emph{\stfdm}  with \mvstm{s}. 

Our experiments show that \ksftm performs better than single-version STMs (ESTM, Norec STM) under high contention and also single-version \emph{\stf}  STM \svsftm developed based on the principle of priority. On the other hand, its performance is comparable or slightly worse than multi-version STM, \pkto (around 2\%). This is the cost of the overhead required to achieve \emph{\stfdm}  which we believe is a marginal price. 

In this document, we have not considered a transactional solution based on two-phase locking (2PL) and its multi-version variants \cite{WeiVoss:2002:Morg}. With the carefully designed 2PL solution, one can ensure that none of the transactions abort \cite{WeiVoss:2002:Morg}. But this will require advance knowledge of the code of the transactions which may not always be available with the STM library. Without such knowledge, it is possible that a 2PL solution can deadlock and cause further aborts which will, raise the issue of \emph{\stfdm}  again. 

\noindent
Since we have considered \stsble as one of the \emph{correctness-criteria}, this algorithm can be extended to databases as well. In fact, to the best of our knowledge, there has been no prior work on \emph{\stfdm} in the context of database concurrency control.



\bibliography{citations}

\end{document}